

\documentclass[11pt]{article}

\usepackage{graphicx}
\usepackage{natbib}
\usepackage{amsmath,amssymb}
\usepackage{color}
\usepackage{textcomp}
\usepackage{subfigure}
\usepackage{algorithmic,algorithm}
\usepackage{placeins}
\usepackage{slashbox}
\usepackage{hyperref}


\setlength{\textheight}{8.6in} \setlength{\textwidth}{6.2in}
\setlength{\topmargin}{0.0in} \setlength{\oddsidemargin}{0.280in}
\setlength{\evensidemargin}{0.280in}
\setlength{\voffset}{-0.2in}




\hyphenation{Navier-Stokes}

\def\C{{\mathcal{C}}}

\def\H{{\mathcal{H}}}

\def\R{{\mathcal{R}}}
\def\E{{\mathcal{E}}}

\def\K{{\mathcal{K}}}

\def\RR{{\mathbb{R}}}
\def\CC{{\mathbb{C}}}

\def\ZZ{{\mathbb{Z}}}

\def\k{{\mathbf k}}
\def\x{{\bm x}}

\def\u{{\bm u}}

\def\x{{\mathbf x}}

\def\u{{\mathbf u}}

\def\0{{\mathbf 0}}

\def\bomega{\boldsymbol{\omega}}
\def\bnabla{\boldsymbol{\nabla}}

\def\Dpartial#1#2{ {\partial #1 \over \partial #2} }

\def\Dpartialn#1#2#3{ {\partial^{#3} #1 \over \partial #2^{#3}} }

\def\Bmp#1{ \begin{minipage}{#1} }
\def\Emp{ \end{minipage} }
\def\Bmpc#1{ \begin{minipage}[c]{#1} }
\def\Bmpt#1{ \begin{minipage}[t]{#1} }
\def\Bmpb#1{ \begin{minipage}[b]{#1} }

\def\tTE{\widetilde{T}_{\E_0}}

\newcommand{\uvec}{\mathbf{u}}

\newcommand{\laplacian}{\Delta}

\newcommand{\tomega}{\widetilde{\omega}}
\newcommand{\tuE}{\widetilde{\mathbf{u}}_{\E_0}}

\newcommand{\tuET}{\widetilde{\uvec}_{0;\E_0,T}}
\newcommand{\tuEtT}{\widetilde{\uvec}_{0;\E_0,\tTE}}

\newcommand{\tuu}{\widetilde{u}_0}
\newcommand{\tuuE}{\widetilde{u}_{0;\E_0}}
\newcommand{\tuuET}{\widetilde{u}_{0;\E_0,T}}

\newcommand{\argmax}{\operatorname{argmax}}

\newcommand{\supp}{\operatorname{supp}}

\newtheorem{theorem}{Theorem}
\newtheorem{proof}{Proof}

\begin{document}
\title{Alignments of Triad Phases \\ in 1D Burgers and 3D Navier-Stokes Flows}

\author{Di Kang$^1$, Bartosz Protas$^1$\thanks{Email address for correspondence: bprotas@mcmaster.ca} \ and Miguel D.~Bustamante$^2$ 
\\ \\
$^1$ Department of Mathematics and Statistics, McMaster University \\
 Hamilton, ON, Canada \\ \\ 
$^2$ School of Mathematics and Statistics, University College Dublin \\
Belfield, Dublin 4, Ireland 
}

\date{\today}

\maketitle

\maketitle

\begin{abstract}
  The goal of this study is to analyze the fine structure of nonlinear
  modal interactions in different 1D Burgers and 3D Navier-Stokes
  flows. This analysis is focused on preferential alignments
  characterizing the phases of Fourier modes participating in triadic
  interactions, which are key to determining the nature of energy
  fluxes between different scales.  We develop novel diagnostic tools
  designed to probe the level of coherence among triadic interactions
  realizing different flow scenarios. We consider extreme (in the
  sense of maximizing the growth of enstrophy in finite time) 1D
  viscous Burgers flows and 3D Navier-Stokes flows which are
  complemented by singularity-forming inviscid Burgers flows as well
  as viscous Burgers flows and Navier-Stokes flows corresponding to
  generic turbulent and simple unimodal initial data, such as the
  Taylor-Green vortex. The main finding is that while the extreme
  viscous Burgers and Navier-Stokes flows reveal the same relative
  level of enstrophy amplification by nonlinear effects, this
  behaviour is realized via modal interactions with vastly different
  levels of coherence. In the viscous Burgers flows the flux-carrying
  triads have phase values which saturate the nonlinearity thereby
  maximizing the energy flux towards small scales. On the other hand,
  in 3D Navier-Stokes flows with the extreme initial data the energy
  flux to small scales is realized by a very small subset of helical
  triads, with the flux-carrying triads showing again a high level of
  coherence, but with their phases taking a broader range of preferred
  values {which form} complex time-dependent patterns. The second main
  finding concerns the role of initial coherence.  Comparison of the
  flows {resulting} from the extreme and generic initial
  conditions shows striking similarities between these two types of
  flows, for the 1D viscous Burgers equation as well as the 3D
  Navier-Stokes equation.  The third main finding concerns 3D
  Navier-Stokes flows {resulting} from the Taylor-Green initial
  condition. On the one hand, as expected, the fluxes in this case are
  several times smaller than in the extreme or generic cases. On the
  other hand, the flux-carrying triads in this case display much more
  rigid patterns of phase coherence, with persistent narrow bands of
  preferred phase values.  Finally, we find that {the paradigm
    based on the instability assumption due to \cite{Waleffe1992}}
  holds true at all scales for the 3D Navier-Stokes flows, with some
  variations. The flows {resulting} from the extreme or generic
  initial conditions show a quantitative agreement {with
    simulations of statistically stationary turbulence by
    \citet{alexakis2018cascades} concerning the inverse-cascade
    behaviour at the inertial range of homochiral triads (i.e., triads
    with the same helical content, or Class I in Waleffe's notation).}
  In contrast, in the flow {resulting} from the Taylor-Green
  initial condition, the inverse-cascade behaviour is dominated by
  heterochiral triads of Class II in Waleffe's notation.
\end{abstract}


\section{Introduction}
\label{sec:intro}

Understanding and quantifying the most extreme forms of behaviour
allowed for by the Navier-Stokes system describing the motion of
viscous incompressible fluids remains a key outstanding problem in
theoretical fluid dynamics. By ``extreme behaviour'' we mean
situations where certain flow quantities attain very large or very
small values approaching or saturating mathematically rigorous bounds
on these quantities. Providing answers to such questions is important
for our understanding of fundamental performance limitations inherent
in elementary physical processes underlying fluid motion such as
transport, mixing, etc. A special case of this class of problems,
which is particularly important from the theoretical point of view,
concerns the question whether or not solutions to the
three-dimensional (3D) Navier-Stokes system and some other related
models may form singularities in finite time \citep{d09}. Singularity
formation implies that the solution develops features preventing it
from satisfying the equation in the classical sense, i.e., pointwise
in space and in time. Such features are usually hypothesized to have
the form of non-differentiable accumulations of vorticity. Needless to
say, should such singularities indeed form, this would invalidate the
equation as a physically consistent model for fluid flow. In
recognition of its significance, the Navier-Stokes regularity problem
has been named by the Clay Mathematics Institute one of its seven
``Millennium Problems'' \citep{f00}. However, progress on this problem
has been rather slow \citep{Robinson2020}.

As highlighted by \citet{Robinson2020} in one of his concluding
remarks, progress concerning possibly singular behaviour in the
Navier-Stokes and related flow models will require a better
understanding of the fine structure of interactions between modes. The
goal of the present study is thus to provide new insights about this
problem by analyzing both generic and extreme flow evolutions using
diagnostic tools specifically designed to shed light on the nature of
triadic interactions. As described in more detail below, the extreme
flows we consider extremize the enstrophy as a physically relevant
quantity which also plays the role of an indicator of the regularity
of solutions.

All models we consider are defined on spatially periodic domains
$\Omega := [0,1]^d$, where $d = 1, 3$ is the dimension and ``$:=$''
means ``equal to by definition'', such that they are subject to
periodic boundary conditions.  Flows of viscous incompressible fluids
in 3D are governed by the Navier-Stokes system
\begin{subequations}
\label{eq:3DNS_u}
\begin{alignat}{2}
\left(\Dpartial{}{t} - \nu \bnabla^2\right) \mathbf{{u}}(\mathbf{x},t) &= - \bnabla p(\mathbf{x},t) - \mathbf{{u}}(\mathbf{x},t) \cdot\bnabla\mathbf{{u}}(\mathbf{x},t) \qquad\qquad & & \x \in \Omega, \ t > 0 \label{eq:3DNS_uA} \\   
\bnabla \cdot \mathbf{{u}}(\mathbf{x},t) &= 0 & & \x \in \Omega, \ t > 0 \label{eq:3DNS_uB} \\
\u(\x,0) &= \u_0(\x) & & \x \in \Omega, \ t = 0,
\end{alignat}
\end{subequations}
where $\u(\x,t)$ and $p(\x,t)$ are the velocity vector field and the
scalar pressure field {whereas $\nu$ is the kinematic viscosity.}
Equations \eqref{eq:3DNS_uA} and \eqref{eq:3DNS_uB} represent,
respectively, the conservation of momentum and mass, whereas $\u_0$ is
the initial condition. Key quantities characterizing solutions of the
Navier-Stokes system \eqref{eq:3DNS_u} include the kinetic energy
$\K(\u)$, enstrophy $\E(\u)$ and helicity $\H(\u)$ defined as
\begin{subequations}
\label{eq:KEH}
\begin{align}
\K(\u) &:= \frac{1}{2} \int_{\Omega} |\u|^2  \, \mathrm{d}\x,  \label{eq:K} \\ 
\E(\u) &:= \frac{1}{2} \int_{\Omega} |\bomega|^2 \, \mathrm{d}\x {= \frac{1}{2}\int_{\Omega} |\bnabla \u|^2 \mathrm{d}\x,} \label{eq:E} \\  
\H(\u) &:= \int_{\Omega} \u \cdot \bomega \, \mathrm{d}\x,  \label{eq:H}
\end{align}
\end{subequations}
where $\boldsymbol{\omega} := \bnabla \times \mathbf{{u}}$ is the
vorticity. With a slight abuse of notations, we will sometimes write
$\E(t) = \E(\u(t))$. We will also use the Sobolev space $H^1(\Omega)$
of functions with square-integrable gradients with the norm defined as
$\| \u \|_{H^1}^2 := \int_{\Omega} |\u|^2 + |\bnabla \u|^2 \, \mathrm{d}\x$
\citep{af05}.

As a commonly-used simplified model for the Navier-Stokes system
\eqref{eq:3DNS_u}, we will also consider the one-dimensional (1D)
viscous Burgers system
\begin{subequations}
\label{eq:Burgers}
\begin{alignat}{2}
\Dpartial{u(x,t)}{t}+u(x,t)\Dpartial{u(x,t)}{x} - \nu \Dpartialn{u(x,t)}{x}{2}&= 0 
\qquad\qquad & & x \in \Omega, \ t > 0,  \label{eq:Burgersa} \\
u(x,0)&=u_0(x) & & x \in \Omega, \ t = 0 \label{eq:Burgersb,} 
\end{alignat}
\end{subequations}
and its inviscid version
\begin{subequations}
\label{eq:Burgers0}
\begin{alignat}{2}
\Dpartial{u(x,t)}{t}+u(x,t)\Dpartial{u(x,t)}{x}&= 0 
\qquad\qquad & & x \in \Omega, \ t > 0,  \label{eq:Burgers0a} \\
u(x,0)&=u_0(x) & & x \in \Omega, \ t = 0, \label{eq:Burgers0b} 
\end{alignat}
\end{subequations}
where $u_0$ is the initial condition. In the Burgers flows the kinetic
energy and the enstrophy are defined analogously to \eqref{eq:K} and
\eqref{eq:E}.

The question of the existence of solutions to the 1D Burgers problems
is well understood \citep{kl04}.  While the viscous problem
\eqref{eq:Burgers} is globally well-posed, the inviscid Burgers
problem \eqref{eq:Burgers0} is known to produce a finite-time blow-up
for all nonconstant initial data $u_0 \in H^1(\Omega)$ (more
specifically, the solutions blow up by developing an infinitely steep
front). On the other hand, as already indicated above, the question of
existence of classical (smooth) solutions to the 3D Navier-Stokes
system \eqref{eq:3DNS_u} remains open \citep{d09,Robinson2020}. One of
the most useful conditional regularity results asserts that the
solution $\u(t)$ of \eqref{eq:3DNS_u} remains smooth on the time
interval $[0,T]$ provided $\E(t) < \infty$, $\forall t \in [0,T]$
\citep{ft89}, making the enstrophy a convenient indicator of the
regularity of solutions. The evolution of the kinetic energy and the
enstrophy is governed by the system \citep{d09}
\begin{subequations}
\label{eq:dKdtdEdt}
\begin{align}
\frac{d\K(\u(t))}{dt} & =  -2\nu \E(\u(t)), \label{eq:dKdt_system}\\
\frac{d\E(\u(t))}{dt} & =  \R(\u(t)) 
\leq -\nu \frac{\E(\u(t))^2}{\K(\u(t))} + \frac{c}{\nu^3} \E(\u(t))^3, \label{eq:dEdt_system}
\end{align}
\end{subequations}
where $\R(\u(t)) := -\nu\int_\Omega |\laplacian\u|^2\,\mathrm{d}\x +
\int_{\Omega} \u\cdot\nabla\u\cdot\laplacian\u\, d\x$ and $c$ is a
known constant. Using \eqref{eq:dEdt_system}, one can then derive the estimate 
\begin{equation}
\E(\u(t)) \leq \frac{\E_0}{\sqrt{1 - \frac{27}{4\,\pi^4\,\nu^3}\,\E_0^2\, t}},
\label{eq:Et}
\end{equation}
where $\E_0 := \E(\u_0)$, which is one of the best results of this
type available to-date. However, since the upper bound on the
right-hand side (RHS) of this estimate becomes unbounded when $t
\rightarrow 4\,\pi^4\,\nu^3 / (27\,\E_0^2)$, finite-time blow-up
cannot be ruled out based on this estimate.

In order to probe the question whether the enstrophy might become
unbounded in finite time in 3D Navier-Stokes flows, the following
family of PDE-constrained optimization problems was solved by
\citet{KangYumProtas2020}
\begin{equation}
\tuET =  \argmax_{\u_0 \in H^1(\Omega), \ \bnabla\cdot\u_0 = 0} \E(\u(T)) \quad 
\textrm{subject to} \
\left\{\begin{aligned}
& \mbox{{\textrm System}} \ \eqref{eq:3DNS_u} \\
& \E(\u_0) = \E_0 
\end{aligned}\right. ,
\label{eq:maxE}
\end{equation}
where, for specified values of $T > 0$ and $\E_0 > 0$, the {\em
  extreme} initial conditions $\tuET$ with fixed enstrophy $\E_0$ were
found, such that the corresponding enstrophy $\E(\u(T))$ at the final
time $T$ is maximal. Optimization problems of this type are
efficiently solved numerically using adjoint-based gradient methods
\citep{pbh04}.  While no evidence was found for unbounded growth of
enstrophy in such extreme Navier-Stokes flows, solving problem
\eqref{eq:maxE} for a broad range of $\E_0$ and $T$ revealed the
following empirical relation characterizing how the largest attained
enstrophy scales with the initial enstrophy $\E_0$ in the most extreme
scenarios \citep{KangYumProtas2020}
\begin{equation}
\max_{T>0} \E(T)  \ \sim \ \E_0^{3/2}.
\label{eq:maxT_vs_E0}
\end{equation}

While blow-up is ruled out in viscous Burgers flows, the question
whether the corresponding a priori estimates on the growth of
enstrophy are sharp or have room for improvement is quite pertinent as
they are obtained in a similar way to \eqref{eq:Et}. In order to
address this question, a family of optimization problems of the type
\eqref{eq:maxE} was solved for the 1D viscous Burgers system
\eqref{eq:Burgers} by \citet{ap11a}. Interestingly, the extreme
behaviour found in this way also obeys relation \eqref{eq:maxT_vs_E0}.
Our present study is motivated by the observation that extreme
behaviour in 1D Burgers systems and 3D Navier-Stokes systems must be
characterized by nonlinear energy transfers towards modes
corresponding to small spatial scales. We aim to understand and
compare the fine structure of the nonlinear interactions between modes
which give rise to the extreme behaviour described by relation
\eqref{eq:maxT_vs_E0} in 1D Burgers and 3D Navier-Stokes flows. We
will also compare these flows to the flow evolutions corresponding to
generic and some simple initial data, as well as to inviscid Burgers
flows which do form singularities in finite time.

Since the seminal work by \cite{kraichnan_1959}, it has been
established that nonlinear triad interactions play a pivotal role in
the mechanisms of energy transfer across spatial scales in 2D or 3D
Navier-Stokes flows, leading to cascades of energy, enstrophy and
helicity with direct or inverse directions, depending on the
dimension. Specifically for 3D Navier-Stokes flows, the work by
\cite{Waleffe1992} based on analysis of helical triads established a
paradigm termed ``the instability assumption'' whereby unstable modes
in helical triads determine the statistically expected directions of
spectral energy fluxes across scales. A recent result by
\cite{moffatt_2014} shows that a Galerkin-truncated system consisting
of a single triad does not produce a reasonable time evolution when
compared with the evolution of the real system. It thus indicates that
triad interactions must always be considered in the context of a
network of multiple triad interactions, where energy exchanges in all
directions are in principle allowed, as otherwise a non-physical
recurrent behaviour would be obtained.  Waleffe's paradigm has been
confirmed in several works \citep{PhysRevFluids.2.054607,
  alexakis2017helically, alexakis2018cascades} and has motivated new
research concerning the effect of triad interactions on the directions
of the cascades of energy and other invariants. These questions were
recently investigated by
\citet{PhysRevLett.108.164501,biferale2013split,sahoo2015disentangling}
who used alternative models to numerically ``probe'' the effect of
triad interactions by arbitrarily eliminating some modes or some triad
interactions from the equations, depending on various criteria such as
the relative helicities of the modes involved.

A vast majority of the works cited above have focused on theoretical
studies and/or on {analyzing the energy and helicity spectra of
  numerically computed flows,} along with their corresponding spectral
fluxes.  As the spectral fluxes are the main quantities of interest
when looking at energy cascades across scales, it is interesting to
note the following research gap: an analysis of these spectral fluxes
in terms of the dynamically evolving spectral phases (i.e., the
arguments of the complex spectral coefficients) has been absent from
practically all previous works, except perhaps in shell models where a
relevant quantity that somehow {bridges phases and fluxes} is the
three-point correlator which provides a good diagnostic of flux
cascades \citep{de2015inverse, rathmann2016role}. This lack of studies
of phase dynamics (or statistics thereof) in the literature
{persists} despite the fact that researchers are well aware of
the fact that any physical field whose spectral phases are
{random and uniformly} distributed over time displays Gaussian
statistics along with strictly zero energy transfer
\citep{alexakis2018cascades}, which is equivalent to the statement
that the presence of energy cascades and intermittency implies
nontrivial phase correlations.  This research gap not only applies to
the study of the 2D or 3D Navier-Stokes equations, but also to many
other systems, including the 1D Burgers equations.  In our present
study we intend to close this gap by introducing and studying relevant
quantities related to the spectral phases. {One of us was a
  collaborator on a number of preliminary studies of the 1D Burgers
  equation and models thereof \citep{buzzicotti2016phase,
    MurrayBustamante2018},} where the spectral fluxes of energy were
obtained explicitly in terms of the magnitudes of the Fourier
coefficients along with the less known Fourier triad phases which are
linear combinations of the three phases of the complex Fourier modes
involved in a triad interaction.  In these works it was evident that
the main mechanism responsible for the strong energy fluxes across
scales, observed in the 1D Burgers system, is the so-called
\emph{alignment} of triad phases, namely, the tendency of triad phases
to cluster around values that maximize the spectral fluxes towards
small scales. This alignment is dynamical and is controlled in part by
the presence of stable fixed points in the evolution equations {for
  the phases}.  Moreover, in the 1D Burgers system this alignment
seems to entail sustained \emph{synchronizations} of the triad phases,
namely, a collective behaviour whereby a large proportion of triad
phases align over a wide range of spatial scales and over long time
intervals. Going beyond that, \citet{MurrayPhDthesisNew} applied this
idea to the study of 3D Navier-Stokes flows with stochastic forcing at
large spatial scales.  These analyses showed that, statistically,
there is a slight preference for helical triad phases to align so as
to maximize the fluxes towards small scales.  Specifically, when
restricting attention to $5\%$ of the most energetic modes at each
reciprocal length scale, the probability density functions of helical
triad phases show wide alignment peaks {with values of only 7\% above
  the uniform value.}  Remarkably, such a slight preference for
alignment {of triad phases} is enough to produce significant energy
flux cascades.  A possible explanation for this is that the coherent
structures associated with aligned triad phases have a low
dimensionality, perhaps even close to one.

Our present study aims to understand, from the point of view of triad
phases, {the fine structure of mode interactions arising} in extreme
transient solutions where strong energy transfers are observed. Our
most important result is that while the extreme 1D Burgers and 3D
Navier-Stokes flows obey the same scaling relation
\eqref{eq:maxT_vs_E0} for the maximum growth of enstrophy, this
behaviour is achieved through vastly different levels of
synchronization in the two cases.  In 1D Burgers flows all triads
involving energy-containing Fourier modes align so as to maximize the
energy flux towards small scales.  On the other hand, in 3D
Navier-Stokes flows energy transfer to small scales is realized by
only a small subset of the triads corresponding to preferred phase
angles forming complex spatio-temporal patterns. The second main
finding is that removing the spatial coherence from the extreme
initial data in both 1D Burgers and 3D Navier-Stokes flows (by
randomizing the Fourier phases while retaining the magnitudes of the
Fourier coefficients) does not profoundly change the nature of triadic
interactions and the resulting fluxes in these flows.

The structure of the paper is as follows: the specific flow problems
we will analyze, specified in terms of their initial data, are defined
in the next section; then, in \S\,\ref{sec:diagnostics}, we
introduce the diagnostics we will employ to  characterize the
structure of the modal interactions; our results are presented in
\S\,\ref{sec:results}, whereas discussion and final conclusions
are deferred to \S\,\ref{sec:final}; some technical material is
collected in three appendices.

\section{Flow problems}
\label{sec:problems}

In this section we introduce the flow problems we will investigate.
Each of these problems is defined by a combination a governing system
(1D viscous or inviscid Burgers
\eqref{eq:Burgers}--\eqref{eq:Burgers0}, or the 3D Navier-Stokes
system \eqref{eq:3DNS_u}) and one of the initial conditions given
below. Information about which initial conditions are used with the
different models is summarized in table \ref{tab:cases}. Some
combinations of governing systems and initial data are not considered
as they do not provide additional interesting insights. {In
  agreement with the studies in which the different initial conditions
  were originally obtained, the viscosity coefficient will be equal to
  $\nu = 0.001$ in the 1D viscous Burgers system \eqref{eq:Burgers}
  and to $\nu = 0.01$ in the 3D Navier-Stokes system
  \eqref{eq:3DNS_u}.}

\subsection{Unimodal initial conditions}
\label{sec:unimodal}

The simplest initial conditions have the form of the product of single
Fourier harmonics with low wavenumbers in each spatial variable. For
the 1D Burgers equation, both inviscid and viscous, it will therefore
take the form
\begin{equation}
u_0(x) = A \sin(2 \pi x), \quad x \in \Omega,
\label{eq:Bu0sin}
\end{equation}
where $A > 0$ is a constant adjusted such that the initial condition
has prescribed enstrophy $\E_0$. We note that for the inviscid Burgers
equation \eqref{eq:Burgers0} the constant $A$ plays no role, since the
solution corresponding to $A = 1$ is obtained by rescaling the time
$t$ as $A^{-1} t$.

For the 3D Navier-Stokes system \eqref{eq:3DNS_u} the corresponding
unimodal initial condition will have the form of the Taylor-Green
vortex \citep{tg37} with the velocity components given by
\begin{subequations}
\label{eq:TG}
\begin{align} 
u_1(x_1,x_2,x_3) & =  A\sin(2\pi x_1)\cos(2\pi x_2)\cos(2\pi x_3), \\
u_2(x_1,x_2,x_3) & =  -A\cos(2\pi x_1)\sin(2\pi x_2)\cos(2\pi x_3), \qquad [x_1,x_2,x_3]^T \in \Omega, \\
u_3(x_1,x_2,x_3) & =  0, 
\end{align}
\end{subequations}
where $A > 0$ is again an adjustable constant. We remark that this
initial condition has a long history in studies of extreme behaviour in
incompressible flows \citep{bmonmu83,b91,bb12,ap16}.

\subsection{Extreme  initial conditions}
\label{sec:ext}

For the Navier-Stokes systems \eqref{eq:3DNS_u}, the ``extreme''
initial condition is defined as the initial data which for a given
initial enstrophy $\E_0$ produces the largest possible growth of
enstrophy in finite time. It is thus given by $\tuE = \max_{T>0}
\tuET$, where $\tuET$ is a solution of the optimization problem
\eqref{eq:maxE} with fixed $T$ and $\E_0$.  For each value of $\E_0$,
$\tTE := \argmax_{T>0} \tuuET$ is then the time when the largest
growth of enstrophy is achieved. An example of such an extreme initial
condition $\tuE$ obtained at $\E_0 = 500$ is shown in figure
\ref{fig:NSext} \citep{KangYumProtas2020}. We see that this field has
the form of three perpendicular pairs of antiparallel vortex tubes
and, as discussed in detail by \cite{KangYumProtas2020}, the resulting
flow evolution (which maximizes the enstrophy at the final time
$\tTE$) is marked by a series of reconnection events.  Our analysis of
the 3D Navier-Stokes flows in \S\,\ref{sec:results3D} will focus on
the case with $\E_0 = 250$.  The extreme initial conditions for the 1D
viscous Burgers system \eqref{eq:Burgers} are obtained analogously as
$\tuuE = \max_{T>0} \tuuET$, where $\tuuET$ are the solutions of a
suitably adapted optimization problem \eqref{eq:maxE} with fixed
$\E_0$ and $T$, cf.~figure \ref{fig:Bu0ext0} \citep{ap11a}.  Our
analysis of the 1D Burgers flows in \S\,\ref{sec:results1D} will focus
on the case with $\E_0 = 100$.

\begin{figure}
\begin{center}
\mbox{\subfigure[$\tomega_1$]{\includegraphics[width=0.3\textwidth]{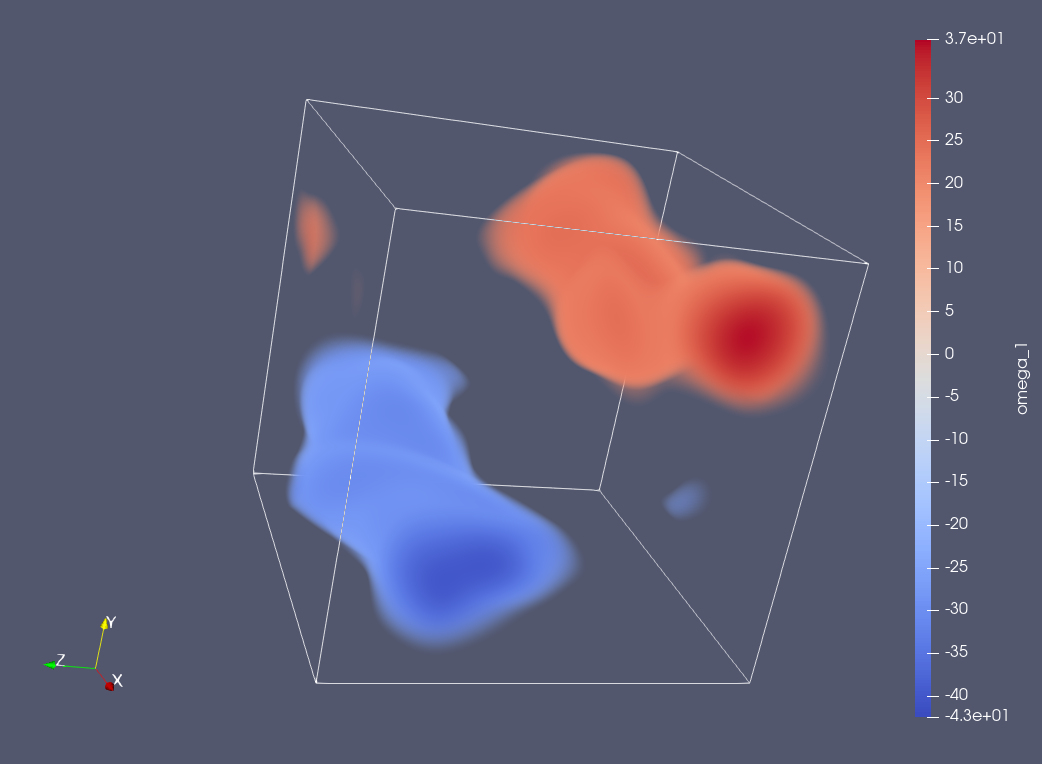}}\qquad
\subfigure[$\tomega_2$]{\includegraphics[width=0.3\textwidth]{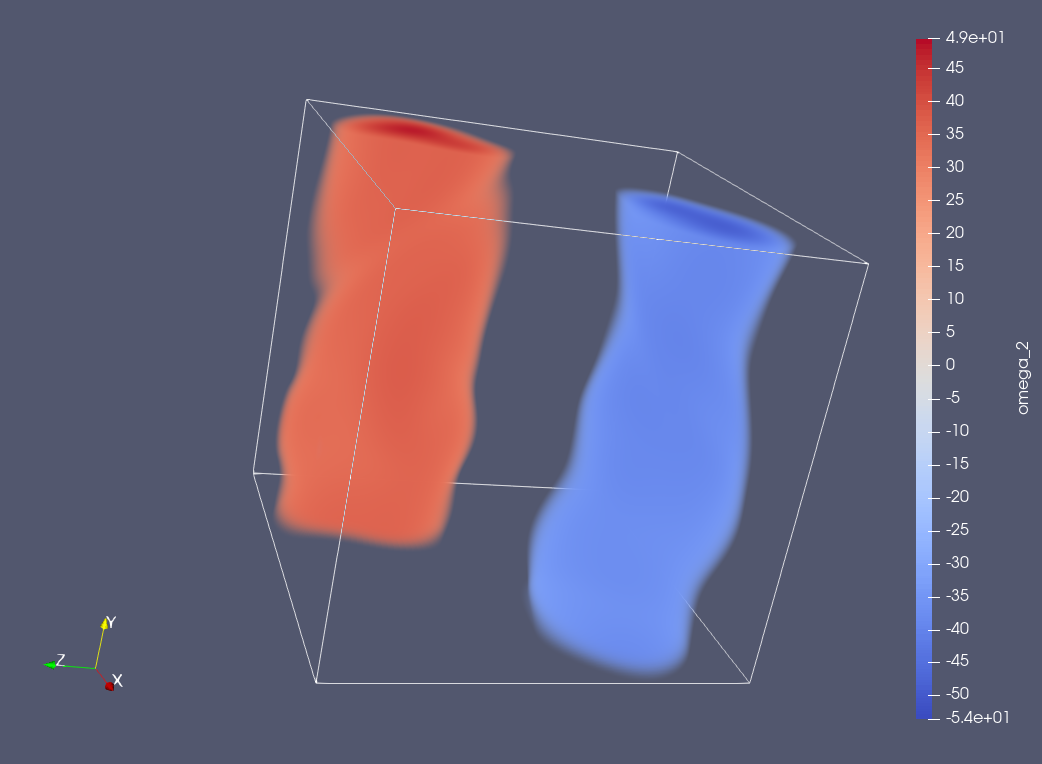}}\qquad
\subfigure[$\tomega_3$]{\includegraphics[width=0.3\textwidth]{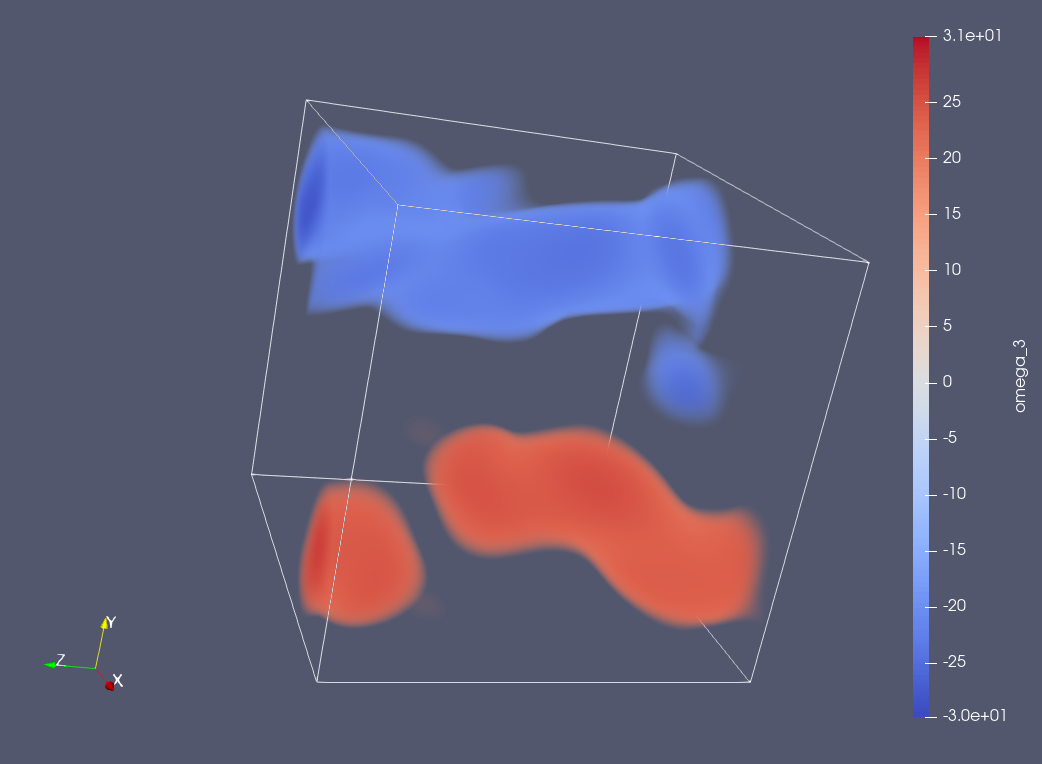}}}
\caption{Vorticity components of the optimal initial condition
  $\tuEtT$ obtained by solving the finite-time optimization problem
  \eqref{eq:maxE} for the initial enstrophy $\E_0 = 500$ and the
  corresponding optimal length $\tTE = 0.17$ of the time interval
  \citep{KangYumProtas2020}.}
\label{fig:NSext}
\bigskip
\includegraphics[width=0.6\textwidth]{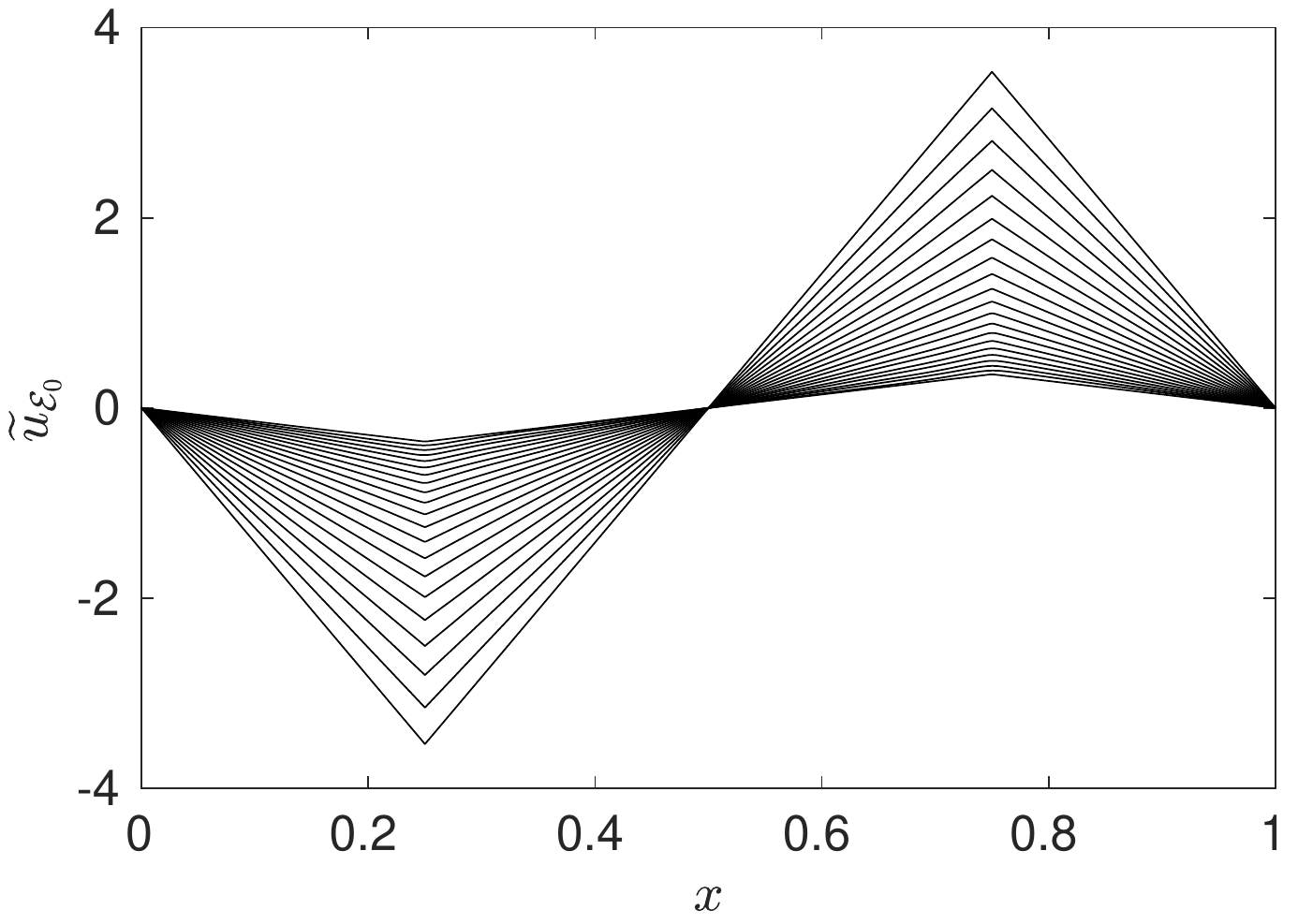}
\caption{Extreme initial conditions $\tuuE$ for the 1D Burgers
  equation \eqref{eq:Burgers} obtained by solving a family of
  finite-time optimization problems analogous to \eqref{eq:maxE} for
  different values of the initial enstrophy $\E_0 \in [1,100]$ and the
  corresponding optimal time windows $\tTE$ (larger values of $\E_0$
  correspond to increased magnitude of $\tuuE$, \citet{ap11a}).}
\label{fig:Bu0ext0}
\end{center}
\end{figure}

Since for all nonconstant initial conditions $u_0$ the inviscid Burgers
system \eqref{eq:Burgers0} develops a singularity in finite time, in
the sense that $\E(t) \rightarrow \infty$ as $t \nearrow t^*$, where
$t^*$ is the blow-up time \citep{kl04}, an optimization problem of the
type \eqref{eq:maxE} is not well-defined for this system. As a
counterpart to the optimization problem \eqref{eq:maxE} for the
inviscid Burgers system \eqref{eq:Burgers0}, one might therefore think
of the problem of finding initial condition $u_0 \in H^1(\Omega)$ with
prescribed enstrophy $\E_0 > 0$ such that the corresponding solution
blows up in the shortest time $\tilde{t}^*$.  However, as shown in
Appendix \ref{sec:Bu0ext0}, such a problem is not in fact well posed
since $\inf_{u_0} t^*(u_0) = 0$ with $\E(u_0) = \E_0 < \infty$. However,
formally extrapolating the extreme initial data $\tuuE$ shown in
figure \ref{fig:Bu0ext0} to the limit $\E_0 \rightarrow \infty$ leads
to the following extreme initial data for the inviscid Burgers system
\begin{equation}
\tuu(x) = A \left( 4\left| x - \frac{1}{2} \right| - 1 \right), \quad x \in \Omega.
\label{eq:Bu0ext0}
\end{equation}
We note that $\tuu$ in \eqref{eq:Bu0ext0} is piecewise-linear, hence
it is a $H^1$ function.

\subsection{Generic initial conditions}
\label{sec:generic}

For the viscous Burgers system \eqref{eq:Burgers}, the ``generic''
initial condition is obtained from the extreme initial condition
$\tuuE$ by decomposing this field in a Fourier series
\eqref{eq:Fourier1D} and randomizing the phases of the Fourier
coefficients. This is done by replacing the original phase in each of
the Fourier coefficients with a random number uniformly distributed
over $[0, 2\pi]$ while retaining the magnitude of the coefficient. For
the Navier-Stokes system \eqref{eq:3DNS_u} the generic initial
condition is obtained from the extreme initial data $\tuE$ in a
similar manner, except that randomization is performed on the phases
of the coefficients $u_{\k}^{\pm}$ of the helical-wave decomposition
\eqref{eq:heli_decomp}, rather than using the Fourier decomposition
\eqref{eq:Fourier3D}, which ensures that the resulting field remains
divergence-free (see \S\,\ref{sec:helical} for additional
details).

{We note that, as a result of the parity symmetry
  \eqref{eq:oddity} characterizing the three initial conditions for
  the 3D Navier-Stokes system, the helicity remains identically zero,
  $\H(t)=0, \; \forall t\ge 0$, in the resulting flows. As shown in
  Appendix \ref{sec:conjugate_triads}, this observation will
  significantly simplify our analysis of the results in
  \S\,\ref{sec:results3D}.}

\begin{table}
  \begin{center}
    \begin{tabular}{|l|c|c|c|} \hline
      \backslashbox{initial data}{model}   & \Bmp{2.5cm} \centering 1D inviscid \\ Burgers \eqref{eq:Burgers0} \Emp  &  \Bmp{2.5cm} \centering 1D viscous Burgers \eqref{eq:Burgers} \Emp  & 3D Navier-Stokes \eqref{eq:3DNS_u} \\ \hline
      unimodal \eqref{eq:Bu0sin} &  $\checkmark$  & & {$\checkmark$} \\ 
      extreme \eqref{eq:maxE}   &  $\checkmark$   & $\checkmark$  & $\checkmark$  \\ 
      generic                   &                 & $\checkmark$  & $\checkmark$  \\ \hline
    \end{tabular}
  \end{center}
  \caption{Summary information specifying which initial conditions are used with different
    governing systems.}
  \label{tab:cases}
\end{table}

\section{Diagnostics}
\label{sec:diagnostics}

In this section we introduce a number of diagnostic quantities which
will allow us to characterize the dynamics of energy transfers in
terms of the evolution of triad phases in our model problems. These
derivations are based on \citet{MurrayBustamante2018} for 1D Burgers
flows (\S\,\ref{sec:diagn1D}) and on \citet{MurrayPhDthesisNew} for
3D Navier-Stokes flows (\S\,\ref{sec:diagn3D}).

\subsection{Fourier Triad Phases and Fluxes in 1D Burgers Flows}
\label{sec:diagn1D}

The solution $u(x,t)$ of the 1D Burgers equation, either inviscid
\eqref{eq:Burgers0} or viscous \eqref{eq:Burgers}, on a periodic
domain can be conveniently represented in terms of its Fourier series
\begin{equation}
u(x,t) =\sum_{k \in \mathbb{Z}} e^{2\pi i k x} \widehat {u}_{k}(t),
\label{eq:Fourier1D}
\end{equation}
where $k$ are the wavenumbers restricted to the set of integers
$\mathbb{Z}$ due to the assumed periodicity.  The dynamic content of
the $k$th mode is expressed by its complex-valued Fourier coefficient
$\widehat {u}_{k}(t)$ which, given that the solution $u(x,t)$ is
real-valued, is subject to the condition of conjugate symmetry $\widehat
{u}_{-k}(t) = \widehat {u}_{k}^*(t)$, $\forall k \in \mathbb{Z}$, where
${}^*$ denotes complex conjugation. Using this representation, the
governing partial differential equation (PDE) can be decomposed into a
set of ordinary differential equations (ODEs) which describe the time
evolution of each individual Fourier mode. We focus our discussion
here on the viscous problem \eqref{eq:Burgers} with the results for
the inviscid problem \eqref{eq:Burgers0} obtained simply by setting
$\nu = 0$. For each Fourier mode we use the polar representation $\widehat
u_k(t) = a_k(t) e^{i \phi_k(t)}$, where $a_k(t) = |\widehat u_k(t)|$ is
the mode amplitude and $\phi_{k}(t) = \arg [\widehat{u}_{k}(t)]$ the mode
phase. In terms of the Fourier-space representation, equation
\eqref{eq:Burgersa} then takes the form $\frac{\partial \widehat
  {u}_k}{\partial t} = -{\pi i k} \sum_{k_1}\, \widehat {u}_{k_1} \widehat
{u}_{k-k_1} - 4 \nu \pi^2 k^2 \widehat {u}_k$. The core of the dynamics is
the quadratic convolution term, a term that globally conserves energy
by redistributing it amongst Fourier modes via \emph{triad}
interactions, i.e., interactions between groups of 3 modes.  In order
to elucidate the role of these triadic interactions, especially their
phase dynamics, in the mechanisms of energy transfers, we derive
evolution equations for the mode amplitudes $a_k$ and the
corresponding phases $\phi_k$:
\begin{eqnarray}
\hspace{-0.1cm}
\label{eq:ampl_Burg}
\frac{d{a}_k}{dt} &=& \pi {k}  \sum_{k_1,k_2} a_{k_1} a_{k_2} \, \sin(\varphi_{k_1,k_2}^{k}) \, \delta_{k_1+k_2,k} - 4 \nu \pi^2 k^2 a_{k} \, , \quad k \in \ZZ^+, \\
\label{eq:phas_Burg}
\frac{d{\phi}_k}{dt} &=& - \pi {k}  \sum_{k_1,k_2} \frac{a_{k_1} a_{k_2}}{a_k} \,\cos(\varphi_{k_1,k_2}^{k}) \, \delta_{k_1+k_2,k},
\end{eqnarray}
where $\delta_{kj}$ is the Kronecker symbol and in the light of the
conjugate symmetry of the Fourier coefficients the wavenumber $k$ is
restricted to positive integers only (however, the indices $k_1$ and
$k_2$ in the sums run over both positive and negative values).

\subsubsection{Key degrees of freedom and energy budget} 

Upon examination of equations
\eqref{eq:ampl_Burg}--\eqref{eq:phas_Burg}, the key dynamical degrees
of freedom consist of the modes' real amplitudes $a_k(t)$ along with
the \emph{triad phases}:
\begin{equation}
{\varphi}_{k_1,\,k_2}^{k_3}(t) = {\phi}_{k_1}(t) + {\phi}_{k_2}(t) - {\phi}_{k_3}(t),
\end{equation}
where the wavenumbers $k_1$, $k_2$, $k_3$ satisfy a ``closed-triad''
condition: $k_1 + k_2 = k_3$ (in other words, modes with wavenumbers
satisfying this condition form a ``triad'' within which they can
exchange energy). An explicit form of the evolution equations for the
triad phases is obtained simply as appropriate linear combinations of
equation (\ref{eq:phas_Burg}) for the relevant wavenumbers.

Let us define the energy spectrum and the rate-of-change of its budget
up to wavenumber $k$ as:
\begin{equation}
{\K}_{k}(t) :=  \widehat {u}_{k}(t) \widehat {u}_{-k}(t) = a_k^2(t)\,, \qquad \frac{d}{dt}\left(\sum_{k'=1}^k {\K}_{k'}(t)\right) =  - \Pi(k,t) - \mathcal{D}(k,t)\,,
\label{eq:def_spec} 
\end{equation}
where $\mathcal{D}(k,t) := 8 \nu \pi^2 \sum_{k'=1}^k (k')^2 {\K}_{k'}$ is the
dissipation rate and $\Pi(k,t)$ is the flux across wavenumber $k$
towards large wavenumbers.  An explicit expression for the flux
$\Pi(k)$ in terms of the variables $a_k$ and
${\varphi}_{k_1,\,k_2}^{k_3}$ was derived by
\citet{buzzicotti2016phase}
\begin{equation}
\Pi(t,k) = \sum_{k_1=1}^k \sum_{k_3=k+1}^{\infty} 4 \pi k_1 \, {a}_{k_1} {a}_{k_2} {a}_{k_3} \,  \sin(\varphi_{k_1,k_2}^{k_3}). 
\label{eq:flux_sin}
\end{equation}
We remark here that each term on the right-hand side (RHS) in this
equation is due to a single triad interaction between modes with
wavenumbers $k_1, k_2$ and $k_3$, such that $ k_1 + k_2 = k_3$. This
form of $\Pi(k,t)$ encodes the particular wavenumber ordering $0 < k_1
< k_3, \quad 0 < k_2 < k_3$. While this choice of ordering is only a
convention, it is useful because the triad phases
$\varphi_{k_1,k_2}^{k_3}$, \emph{with such ordering}, are found to
take values preferentially near $\pi/2 \pmod {2\pi}$. In our analysis
below, it will be informative to split the flux given in
\eqref{eq:flux_sin} into its positive and negative parts $\Pi(t,k) =
\Pi^+(t,k) + \Pi^-(t,k)$, where $\Pi^+(t,k) \ge 0$ and $\Pi^-(t,k) <
0$, $\forall t,k$.

As discussed in detail by \citet{buzzicotti2016phase}, equation
\eqref{eq:flux_sin} demonstrates that robust energy flux towards small
scales must be accompanied by triad phases preferentially aligning
around the angle $\pi / 2$. More specifically, since $ k_1 \,
a_{k_1}a_{k_2}a_{k_3} >0$, at any given time a positive contribution
to the energy flux is maximized if the triad phase takes the value
$\pi/2 \pmod{ 2\pi}$. Time-dependent distributions of triad phases
occurring in inviscid and viscous Burgers flows corresponding to
different initial conditions, cf.~table \ref{tab:cases}, will be
analyzed in \S\,\ref{sec:results1D}. As the contribution of a
given triad to flux \eqref{eq:flux_sin} is proportional to $4 \pi k_1
\, a_{k_1}a_{k_2}a_{k_3}$, in \S\,\ref{sec:results1D} we will also
consider probability distributions of triad phases weighted by this
factor.


\subsection{Helical triad phases and fluxes in 3D Navier-Stokes flows}
\label{sec:diagn3D}

The definition of triad phases in 3D incompressible flows needs to be
modified since we no longer deal with a scalar field as in the 1D
case. In order to account for the incompressibility condition, we will
employ a helical decomposition \citep{craya1957contribution,
  lesieur1972decomposition, herring1974approach,
  constantin1988beltrami,Waleffe1992} to represent the velocity field
resulting in two complex degrees of freedom at each wavevector. From
this we can deduce two \textit{helical} Fourier phases and when triad
interactions are considered, there will be 8 distinct types of helical
triads. However, only four of them are independent in the case of
zero-helicity flows (cf.~Appendix \ref{sec:conjugate_triads}) {and we
  also identify two important subtypes of ``boundary'' triads. In
  regard to 3D Navier-Stokes flows,} our interest will be mostly in
the forward energy cascade from large to small scales. We present a
more detailed derivation of a formula obtained by
\citet{MurrayPhDthesisNew} for the energy flux into a subset of modes
and find that its efficiency depends on the phases of these different
triad types. Then, in our analysis of the results in
\S\,\ref{sec:results3D}, the behaviour of each triad type will be
considered leading to distinct differences in the degree {of coherence
  they exhibit in different flows.}

\subsubsection{Helical-wave decomposition}
\label{sec:helical}

To examine the dynamics of system \eqref{eq:3DNS_u} at different
scales it is natural to begin with the Fourier decomposition of the
velocity field
\begin{equation}
\mathbf{{u}}(\mathbf{x},t) = \sum_{\mathbf{k}\in \mathbb{Z}^3\setminus \mathbf{0}} \mathbf{\widehat{u}}_{\mathbf{k}}(t) \exp(2\pi i \mathbf{k}\cdot\mathbf{x})\, ,
\label{eq:Fourier3D}
\end{equation}
where $\widehat{\u}_{\k}(t) \in \CC^3$ are the Fourier coefficients.
Since representation \eqref{eq:Fourier3D} constructed using arbitrary
Fourier coefficients satisfying the condition of conjugate symmetry
$\widehat{\u}_{-\k}(t) = \widehat{\u}_{\k}^*(t)$ will {\em not} in
general satisfy the incompressibility condition \eqref{eq:3DNS_uB}, we
consider the helical basis proposed by
\citet{constantin1988beltrami,Waleffe1992} and recently employed by
\citet{chen2003joint,biferale2013split, alexakis2017helically,
  sahoo2018energy}. Representation of vector fields in terms of the
helical basis satisfies the incompressibility condition by
construction and therefore involves the smallest number of independent
degrees of freedom consistent with this constraint. The helical
decomposition is applied in Fourier space as follows:
\begin{equation}
\label{eq:heli_decomp}
\mathbf{\widehat{u}}_{\mathbf{k}}(t) = \mathbf{h}_{\mathbf{k}}^{+} u_{\mathbf{k}}^{+}(t)+ \mathbf{h}_{\mathbf{k}}^- u_{\mathbf{k}}^-(t) \equiv \sum_{s \in \{ \pm\}} \mathbf{h}_{\mathbf{k}}^s u_{\mathbf{k}}^s(t)\,, \qquad \mathbf{k}\in \mathbb{Z}^3\setminus \mathbf{0}, 
\end{equation}
where $u_\k^+(t), u_\k^-(t) \in \CC$ are coefficients and the helical
basis vectors $\mathbf{h}_{\mathbf{k}}^s$, $s \in \{+,-\}$, have
complex components and are defined as the set of eigenmodes of the
curl operator:
\begin{equation*}
i \mathbf{k} \times \mathbf{h}_{\mathbf{k}}^s = s\, k \,\mathbf{h}_{\mathbf{k}}^s\,, \qquad k := |\mathbf{k}|\,. 
\end{equation*}
One of the key aspects behind this formulation is that due to the
choice of the basis vectors, the incompressibility condition
\eqref{eq:3DNS_uB} is satisfied automatically.  Since the basis
elements $\mathbf{h}_\k^s$ are not defined uniquely, we use the
proposal given by \citet{Waleffe1992}
\begin{equation}
\mathbf{h}_{\mathbf{k}}^s =\frac{1}{k} \mathbf{v} \times \mathbf{k} + i s \mathbf{v}\,, \qquad \mathbf{v} \equiv \frac{\mathbf{z}\times\mathbf{k}}{|\mathbf{z}\times\mathbf{k}|}\,,
\label{eq:hks}
\end{equation}
where $\mathbf{z} \in \RR^3$ is an arbitrary fixed vector. We note
here that there are now only four dynamical degrees of freedom per
wavevector in equation \eqref{eq:heli_decomp}, represented by the two
complex scalar helical modes $u_{\mathbf{k}}^{+}(t)$ and
$u_{\mathbf{k}}^{-}(t)$.

We now note the following identities:
\begin{equation}
\mathbf{h}_{-\mathbf{k}}^s = \mathbf{h}_{\mathbf{k}}^{-s} = [\mathbf{h}_{\mathbf{k}}^s]^*\,, \qquad \mathbf{h}_{\mathbf{k}}^{s_1} \cdot [\mathbf{h}_{\mathbf{k}}^{s_2}]^* = 2 \delta_{s_1 s_2}, 
\quad s,s_1,s_2 \in \{+,-\} \, .
\label{eq:hks_sym}
\end{equation}
From these identities along with the reality of the original 3D
velocity field it follows that the helical modes $u_{\mathbf{k}}^s(t)$
satisfy
\begin{equation*}
u_{-\mathbf{k}}^s(t) = [u_{\mathbf{k}}^s(t)]^*.
\end{equation*}

Another key aspect of the helical formulation is that the energy
  and helicity per wave-vector $\mathbf{k}$ are obtained in a
  ``diagonal'' form in terms of the helical modes as
\begin{subequations}
\begin{align}
{\K}_{\mathbf{k}}(t) & = u_{\mathbf{k}}^+(t) [u_{\mathbf{k}}^+(t)]^* + u_{\mathbf{k}}^-(t) [u_{\mathbf{k}}^-(t)]^* = \sum_{s \in \{ \pm\}} u_{\mathbf{k}}^s(t) [u_{\mathbf{k}}^s(t)]^*, \label{eq:heli_E_k} \\ 
{\H}_{\mathbf{k}}(t) & = k\big(u_{\mathbf{k}}^+(t) [u_{\mathbf{k}}^+(t)]^* - u_{\mathbf{k}}^-(t) [u_{\mathbf{k}}^-(t)]^* \big) = k \sum_{s \in \{ \pm\}} s u_{\mathbf{k}}^s(t) [u_{\mathbf{k}}^s(t)]^*. \label{eq:heli_H_k}
\end{align}
\end{subequations}
Based on \eqref{eq:heli_E_k}, one can define the energy spectrum as 
\begin{equation}
e(k) :=  \sum_{\substack{\mathbf{k} \in \mathbb{Z}^3\\ k-1/2 < |\mathbf{k}| < k+1/2}} {\mathcal{K}}_{\mathbf{k}}
\label{eq:ek}
\end{equation}
and the helicity spectrum can be defined in an analogous manner using
\eqref{eq:heli_H_k}. Energy and helicity are important because in
unforced inviscid flows the total energy and helicity (i.e., the
respective sums of ${\K}_{\mathbf{k}}(t)$ and ${\H}_{\mathbf{k}}(t)$
over all wave-vectors $\mathbf{k}$) are constants of motion
\citep{Moffatt1969}.  It is known that on average helicity flows from
small to large scales and energy has a flux from large to small
scales. We will focus on the energy flux from here on.

To find how the energy spectrum evolves in time we need to know the
evolution equation for the helical modes. Following \citet{Waleffe1992}
we get:
\begin{equation*}
\left(\Dpartial{}{t} + 4 \nu \pi^2 k^2\right) [u_{\mathbf{k}}^s(t)]^* = - \frac{\pi}{2} \!\!\sum_{ \substack{\mathbf{k}_1, \mathbf{k}_2 \in \mathcal{U}\\ \mathbf{k}_1 + \mathbf{k}_2 + \mathbf{k} = 0}} \!\!\sum_{s_{1},s_{2} \in \{\pm\}}\!\!(s_{1} k_1 - s_{2} k_2) [\mathbf{h}_{\mathbf{k}_1}^{s_{1}} \times \mathbf{h}_{\mathbf{k}_2}^{s_{2}} \cdot \mathbf{h}_{\mathbf{k}}^{s}] \,  u_{\mathbf{k}_1}^{s_{1}}(t) \, u_{\mathbf{k}_2}^{s_{2}}(t)\,,
\end{equation*}
where $\mathcal{U} = \mathbb{Z}^3\setminus \mathbf{0} \, $.  Let
$\mathcal{C} \subset \mathcal{U}$ be a subset of wavevectors such that
$-\mathcal{C} =\mathcal{C}$, where $- \mathcal{C} := \{-\mathbf{k}|
\mathbf{k} \in \mathcal{C}\}$ (notice that
$-\mathcal{U}=\mathcal{U}$). Define the energy in the set
$\mathcal{C}$ as
\begin{equation}
\label{eq:clus_E_k}
{\K}_{\mathcal{C}}(t) \equiv \sum_{\mathbf{k} \in \mathcal{C}} {\K}_{\mathbf{k}}(t) = \sum_{\substack{\mathbf{k} \in \mathcal{C} \\ s \in \{ \pm\}}} \, u_{\mathbf{k}}^{s}(t) [u_{\mathbf{k}}^{s}(t)]^*\,.
\end{equation}

\subsubsection{Fluxes}

We consider the rate of change of energy in the set $\mathcal{C}$.  As
the nonlinear term globally conserves energy, using
Eq.~\eqref{eq:clus_E_k} we obtain the time derivative of the energy
${\K}_{\mathcal{C}}(t)$ as:
\begin{equation*}
\dot{{\K}}_{\mathcal{C}}(t) = \Pi_{\mathcal{C}} - \mathcal{\epsilon}_{\mathcal{C}}\,,
\end{equation*}
where $\Pi_{\mathcal{C}}$ and $\mathcal{\epsilon}_{\mathcal{C}}$ are,
respectively, the nonlinear energy flux into the subset of modes
$\mathcal{C}$ and the energy dissipation in $\mathcal{C}$.  We now
examine $\Pi_{\mathcal{C}}$ more closely:
\begin{equation}
\Pi_{\mathcal{C}} =\frac{\pi}{2} \!\!\!\!\sum_{\substack{\mathbf{k}_3 \in \mathcal{C}\\ \mathbf{k}_1, \mathbf{k}_2 \in \mathcal{U}\\ \mathbf{k}_1 + \mathbf{k}_2 + \mathbf{k}_3 = 0}}  \!\!\sum_{s_1, s_2, s_3 \in \{\pm\}}\!\!(s_{2} k_2 - s_{1} k_1) [\mathbf{h}_{\mathbf{k}_1}^{s_{1}} \times \mathbf{h}_{\mathbf{k}_2}^{s_{2}} \cdot \mathbf{h}_{\mathbf{k}_3}^{s_{3}}]  u_{\mathbf{k}_1}^{s_{1}}(t) u_{\mathbf{k}_2}^{s_{2}}(t)u_{\mathbf{k}_3}^{s_{3}}(t)+ \mathrm{c.c.}
\label{eq:engy_flux_basic}
\end{equation}
To obtain a more detailed expression for this flux we use the
amplitude-phase representation of helical modes:
\begin{equation*}
u_{\mathbf{k}}^{s} = a_{\mathbf{k}}^{s} \exp(i \phi_{\mathbf{k}}^{s})\,,
\end{equation*}
where $a_{\mathbf{k}}^{s} := |u_{\mathbf{k}}^{s}| \geq 0$ is the
helical mode amplitude and $\phi_{\mathbf{k}}^{s} :=
\arg[u_{\mathbf{k}}^{s}] \in [0,2\pi]$ is the helical mode phase.
Notice that the reality of the original velocity field gives rise to
the identities:
\begin{equation*}
a_{-\mathbf{k}}^{s} = a_{\mathbf{k}}^{s}\,, \qquad \phi_{-\mathbf{k}}^{s} = -\phi_{\mathbf{k}}^{s} \,, \qquad  s \in \{ +, - \}\,.
\end{equation*}
We thus obtain
\begin{equation*}
\Pi_{\mathcal{C}} = \frac{\pi}{2} \!\!\!\!\sum_{\substack{\mathbf{k}_3 \in \mathcal{C}\\ \mathbf{k}_1, \mathbf{k}_2 \in \mathcal{U}\\ \mathbf{k}_1 + \mathbf{k}_2 + \mathbf{k}_3 = 0}}  \!\!\sum_{s_1, s_2, s_3 \in \{\pm\}}\!\!(s_{2} k_2 - s_{1} k_1) [\mathbf{h}_{\mathbf{k}_1}^{s_{1}} \times \mathbf{h}_{\mathbf{k}_2}^{s_{2}} \cdot \mathbf{h}_{\mathbf{k}_3}^{s_{3}}]   a_{\mathbf{k}_1}^{s_{1}} a_{\mathbf{k}_2}^{s_{2}} a_{\mathbf{k}_3}^{s_{3}}  \mathrm{e}^{i \varphi_{\mathbf{k}_1\mathbf{k}_2\mathbf{k}_3}^{s_{1} s_{2} s_{3}}}+ \mathrm{c.c.}\, ,
\end{equation*}
where we define the \textit{helical triad} phase as
\begin{equation*}
\varphi_{\mathbf{k}_1\mathbf{k}_2\mathbf{k}_3}^{s_1 s_2 s_3} := \phi_{\mathbf{k}_1}^{s_{1}} + \phi_{\mathbf{k}_2}^{s_{2}} + \phi_{\mathbf{k}_3}^{s_{3}}\,.
\end{equation*}
From the fact that the prefactors $(s_{2} k_2 - s_{1} k_1)
[\mathbf{h}_{\mathbf{k}_1}^{s_{1}} \times
\mathbf{h}_{\mathbf{k}_2}^{s_{2}} \cdot
\mathbf{h}_{\mathbf{k}_3}^{s_{3}}]$ cancel out under total
symmetrization between the indices $1,2,3$ while the terms involving
amplitudes and phases are completely symmetric, we deduce that the
terms in the sum such that $\mathbf{k}_1,\mathbf{k}_2,\mathbf{k}_3 \in
\mathcal{C}$ do not contribute to the flux into $\mathcal{C}$.
Therefore, we can decompose the resultant flux into a sum of three terms:
\begin{align*}
\Pi_{\mathcal{C}} &= \frac{\pi}{2} \!\!\!\!\sum_{\substack{\mathbf{k}_3 \in \mathcal{C}\\ \mathbf{k}_1, \mathbf{k}_2 \in \mathcal{U}\setminus\mathcal{C}\\ \mathbf{k}_1 + \mathbf{k}_2 + \mathbf{k}_3 = 0}}  \!\!\sum_{s_1, s_2, s_3 \in \{\pm\}}\!\!(s_{2} k_2 - s_{1} k_1) [\mathbf{h}_{\mathbf{k}_1}^{s_{1}} \times \mathbf{h}_{\mathbf{k}_2}^{s_{2}} \cdot \mathbf{h}_{\mathbf{k}_3}^{s_{3}}]  a_{\mathbf{k}_1}^{s_{1}} a_{\mathbf{k}_2}^{s_{2}} a_{\mathbf{k}_3}^{s_{3}}  \mathrm{e}^{i \varphi_{\mathbf{k}_1\mathbf{k}_2\mathbf{k}_3}^{s_{1} s_{2} s_{3}}}+ \mathrm{c.c.}\\
&+ \frac{\pi}{2} \!\!\!\! \sum_{\substack{\mathbf{k}_2, \mathbf{k}_3 \in \mathcal{C}\\ \mathbf{k}_1 \in \mathcal{U}\setminus\mathcal{C}\\ \mathbf{k}_1 + \mathbf{k}_2 + \mathbf{k}_3 = 0}}  \!\!\sum_{s_1, s_2, s_3 \in \{\pm\}}\!\!(s_{2} k_2 - s_{1} k_1) [\mathbf{h}_{\mathbf{k}_1}^{s_{1}} \times \mathbf{h}_{\mathbf{k}_2}^{s_{2}} \cdot \mathbf{h}_{\mathbf{k}_3}^{s_{3}}]  a_{\mathbf{k}_1}^{s_{1}} a_{\mathbf{k}_2}^{s_{2}} a_{\mathbf{k}_3}^{s_{3}}  \mathrm{e}^{i \varphi_{\mathbf{k}_1\mathbf{k}_2\mathbf{k}_3}^{s_{1} s_{2} s_{3}}}+ \mathrm{c.c.}\\
&+ \frac{\pi}{2} \!\!\!\! \sum_{\substack{\mathbf{k}_1, \mathbf{k}_3 \in \mathcal{C}\\ \mathbf{k}_2 \in \mathcal{U}\setminus\mathcal{C}\\ \mathbf{k}_1 + \mathbf{k}_2 + \mathbf{k}_3 = 0}}  \!\!\sum_{s_1, s_2, s_3 \in \{\pm\}}\!\!(s_{2} k_2 - s_{1} k_1) [\mathbf{h}_{\mathbf{k}_1}^{s_{1}} \times \mathbf{h}_{\mathbf{k}_2}^{s_{2}} \cdot \mathbf{h}_{\mathbf{k}_3}^{s_{3}}]  a_{\mathbf{k}_1}^{s_{1}} a_{\mathbf{k}_2}^{s_{2}} a_{\mathbf{k}_3}^{s_{3}}  \mathrm{e}^{i \varphi_{\mathbf{k}_1\mathbf{k}_2\mathbf{k}_3}^{s_{1} s_{2} s_{3}}}+ \mathrm{c.c.}
\end{align*}
and, because the summands above are symmetric under the permutation
between the indices $1$ and $2$, the last two sums combine into one,
leading to:
\begin{align*}
\Pi_{\mathcal{C}} &= \frac{\pi}{2} \!\!\!\!\sum_{\substack{\mathbf{k}_3 \in \mathcal{C}\\ \mathbf{k}_1, \mathbf{k}_2 \in \mathcal{U}\setminus\mathcal{C}\\ \mathbf{k}_1 + \mathbf{k}_2 + \mathbf{k}_3 = 0}}  \!\!\sum_{s_1, s_2, s_3 \in \{\pm\}}\!\!(s_{2} k_2 - s_{1} k_1) [\mathbf{h}_{\mathbf{k}_1}^{s_{1}} \times \mathbf{h}_{\mathbf{k}_2}^{s_{2}} \cdot \mathbf{h}_{\mathbf{k}_3}^{s_{3}}]  a_{\mathbf{k}_1}^{s_{1}} a_{\mathbf{k}_2}^{s_{2}} a_{\mathbf{k}_3}^{s_{3}}  \mathrm{e}^{i \varphi_{\mathbf{k}_1\mathbf{k}_2\mathbf{k}_3}^{s_{1} s_{2} s_{3}}}+ \mathrm{c.c.}\\
&+ {\pi} \!\!\!\! \sum_{\substack{\mathbf{k}_2, \mathbf{k}_3 \in \mathcal{C}\\ \mathbf{k}_1 \in \mathcal{U}\setminus\mathcal{C}\\ \mathbf{k}_1 + \mathbf{k}_2 + \mathbf{k}_3 = 0}}  \!\!\sum_{s_1, s_2, s_3 \in \{\pm\}}\!\!(s_{2} k_2 - s_{1} k_1) [\mathbf{h}_{\mathbf{k}_1}^{s_{1}} \times \mathbf{h}_{\mathbf{k}_2}^{s_{2}} \cdot \mathbf{h}_{\mathbf{k}_3}^{s_{3}}]  a_{\mathbf{k}_1}^{s_{1}} a_{\mathbf{k}_2}^{s_{2}} a_{\mathbf{k}_3}^{s_{3}}  \mathrm{e}^{i \varphi_{\mathbf{k}_1\mathbf{k}_2\mathbf{k}_3}^{s_{1} s_{2} s_{3}}}+ \mathrm{c.c.} \ .
\end{align*}
Finally, noting that the summation domain in the second sum above is
symmetric under the permutation between the indices $2$ and $3$, we
symmetrize this second sum to get:
\begin{equation}
\begin{aligned}
\Pi_{\mathcal{C}} &= \frac{\pi}{2} \!\!\!\!\sum_{\substack{\mathbf{k}_3 \in \mathcal{C}\\ \mathbf{k}_1, \mathbf{k}_2 \in \mathcal{U}\setminus\mathcal{C}\\ \mathbf{k}_1 + \mathbf{k}_2 + \mathbf{k}_3 = 0}}  \!\!\sum_{s_1, s_2, s_3 \in \{\pm\}}\!\!(s_{2} k_2 - s_{1} k_1) [\mathbf{h}_{\mathbf{k}_1}^{s_{1}} \times \mathbf{h}_{\mathbf{k}_2}^{s_{2}} \cdot \mathbf{h}_{\mathbf{k}_3}^{s_{3}}]  a_{\mathbf{k}_1}^{s_{1}} a_{\mathbf{k}_2}^{s_{2}} a_{\mathbf{k}_3}^{s_{3}}  \mathrm{e}^{i \varphi_{\mathbf{k}_1\mathbf{k}_2\mathbf{k}_3}^{s_{1} s_{2} s_{3}}}+ \mathrm{c.c.}\\
&+\frac{\pi}{2} \!\!\!\! \sum_{\substack{\mathbf{k}_2, \mathbf{k}_3 \in \mathcal{C}\\ \mathbf{k}_1 \in \mathcal{U}\setminus\mathcal{C}\\ \mathbf{k}_1 + \mathbf{k}_2 + \mathbf{k}_3 = 0}}  \!\!\sum_{s_1, s_2, s_3 \in \{\pm\}}\!\!(s_{2} k_2 - s_{3} k_3) [\mathbf{h}_{\mathbf{k}_1}^{s_{1}} \times \mathbf{h}_{\mathbf{k}_2}^{s_{2}} \cdot \mathbf{h}_{\mathbf{k}_3}^{s_{3}}]  a_{\mathbf{k}_1}^{s_{1}} a_{\mathbf{k}_2}^{s_{2}} a_{\mathbf{k}_3}^{s_{3}}  \mathrm{e}^{i \varphi_{\mathbf{k}_1\mathbf{k}_2\mathbf{k}_3}^{s_{1} s_{2} s_{3}}}+ \mathrm{c.c.} \ .
\end{aligned}
\label{eq:PiC}
\end{equation}
We note here a similarity between \eqref{eq:PiC} and expression
\eqref{eq:flux_sin} describing the flux for 1D Burgers flows. However,
in the present case there are 8 distinct helical triad phases
resulting from the permutations of $s_{1}$, $s_{2}$ and $s_{3}$.

Looking at the static coefficients (of interaction) in the terms in
the flux equation \eqref{eq:PiC}, we note that, due to the fact that
basis vectors are complex valued, these coefficients will be complex
too. Thus, to move all dependence of the sign of the contribution of
each term onto the helical triad phase we must define a correction to
the phase accounting for the phase of the complex coefficient.
Therefore, for a triad $\mathbf{k}_1 + \mathbf{k}_2 + \mathbf{k}_3 =
\mathbf{0}$, we define the {\em generalized helical triad phases} as
follows:
\begin{equation}
\Phi_{\mathbf{k}_1\mathbf{k}_2\mathbf{k}_3}^{s_{1} s_{2} s_{3}} = \varphi_{\mathbf{k}_1\mathbf{k}_2\mathbf{k}_3}^{s_{1} s_{2} s_{3}} + \Delta_{\mathbf{k}_1\mathbf{k}_2\mathbf{k}_3}^{s_{1} s_{2} s_{3}}, 
\label{eq:GHTP}
\end{equation}
where
\begin{equation}
\label{eq:delta_ENS}
\Delta_{\mathbf{k}_1\mathbf{k}_2\mathbf{k}_3}^{s_{1} s_{2} s_{3}} := \begin{cases}
\arg\Big((s_{2} k_2 - s_{1} k_1) [\mathbf{h}_{\mathbf{k}_1}^{s_{1}} \times \mathbf{h}_{\mathbf{k}_2}^{s_{2}} \cdot \mathbf{h}_{\mathbf{k}_3}^{s_{3}}] \Big), & \text{if} \quad \mathbf{k}_3 \in \mathcal{C}; \mathbf{k}_1, \mathbf{k}_2 \in \mathcal{U}\setminus \mathcal{C}  \\
 \arg\Big( (s_{2} k_2 - s_{3} k_3)[ \mathbf{h}_{\mathbf{k}_1}^{s_{1}} \times \mathbf{h}_{\mathbf{k}_2}^{s_{2}} \cdot \mathbf{h}_{\mathbf{k}_3}^{s_{3}}] \Big), & \text{if} \quad \mathbf{k}_3, \mathbf{k}_2 \in \mathcal{C}; \mathbf{k}_1 \in \mathcal{U}\setminus \mathcal{C} \,\end{cases}
\end{equation}
is the required correction.  This means that the sign of the
contribution of each term in the flux expression will depend only on
$\cos(\Phi_{\mathbf{k}_1\mathbf{k}_2\mathbf{k}_3}^{s_{1} s_{2}
  s_{3}})$, and a value of
$\Phi_{\mathbf{k}_1\mathbf{k}_2\mathbf{k}_3}^{s_{1} s_{2} s_{3}} \in
(-\frac{\pi}{2},\frac{\pi}{2})$ will result in a positive contribution
(flux into $\mathcal{C}$), while
$\Phi_{\mathbf{k}_1\mathbf{k}_2\mathbf{k}_3}^{s_{1} s_{2} s_{3}} \in
(\frac{\pi}{2},\pi) \cup (-\pi,-\frac{\pi}{2})$ will produce a
negative contribution (flux out of $\mathcal{C}$).

All these observations lead to the following formula for the flux
whose direction depends only on the generalized helical triad phases:
\begin{equation}
\begin{aligned}
\Pi_{\mathcal{C}} &= {\pi}\!\!\!\!\sum_{\substack{\mathbf{k}_3 \in \mathcal{C} \\ \mathbf{k}_1, \mathbf{k}_2 \in \mathcal{U}\setminus\mathcal{C} \\ \mathbf{k}_1 + \mathbf{k}_2 + \mathbf{k}_3 = 0 \\ s_{1},s_{2},s_{3} \in \{ \pm \}}} \big|(s_{2} k_2 - s_{1} k_1) [\mathbf{h}_{\mathbf{k}_1}^{s_{1}} \times \mathbf{h}_{\mathbf{k}_2}^{s_{2}} \cdot \mathbf{h}_{\mathbf{k}_3}^{s_{3}}]\big|  a_{\mathbf{k}_1}^{s_{1}} a_{\mathbf{k}_2}^{s_{2}} a_{\mathbf{k}_3}^{s_{3}} \cos(\Phi_{\mathbf{k}_1\mathbf{k}_2\mathbf{k}_3}^{s_{1} s_{2} s_{3}}) \\
&+ {\pi}\!\!\!\!\sum_{\substack{\mathbf{k}_2, \mathbf{k}_3 \in \mathcal{C} \\ \mathbf{k}_1 \in \mathcal{U}\setminus\mathcal{C} \\ \mathbf{k}_1 + \mathbf{k}_2 + \mathbf{k}_3 = 0 \\ s_{1},s_{2},s_{3}\in \{ \pm \}}} \big|(s_{2} k_2 - s_{3} k_3) [\mathbf{h}_{\mathbf{k}_1}^{s_{1}} \times \mathbf{h}_{\mathbf{k}_2}^{s_{2}} \cdot \mathbf{h}_{\mathbf{k}_3}^{s_{3}}]\big|   a_{\mathbf{k}_1}^{s_{1}} a_{\mathbf{k}_2}^{s_{2}} a_{\mathbf{k}_3}^{s_{3}} \cos(\Phi_{\mathbf{k}_1\mathbf{k}_2\mathbf{k}_3}^{s_{1} s_{2} s_{3}})\,. \\
\end{aligned}
\label{eq:engy_flux_final}
\end{equation}
While in practice flux $\Pi_{\mathcal{C}}$ can be evaluated more
efficiently using the pseudospectral representation of the nonlinear
advection term in \eqref{eq:3DNS_uA}, {as is usually done in the
  numerical solution of system \eqref{eq:3DNS_u},} expression
\eqref{eq:engy_flux_final} is interesting as it explicates the
dependence of the flux on the (generalized) triad phases for phases of
different types. We observe that for a given triad $\k_1 + \k_2 + \k_3
= \0$, its contribution to the flux in \eqref{eq:engy_flux_final} is a
sum of 8 distinct parts corresponding to all possible combinations of
$s_1,s_2,s_3 \in \{+,-\}$. However, as demonstrated in Appendix
\ref{sec:conjugate_triads}, under the assumption of oddity under
parity transformation \eqref{eq:oddity}, which the Navier-Stokes flows
discussed in \S\,\ref{sec:problems} satisfy, there are only 4
independent classes of mode interactions associated with each triad,
namely, $\{+ + +,+ - -,+ - +,+ + -\}$, because the ``conjugate''triads
$\{- - -,- + +,- + -,- - +\}$ obtained by changing $s_j \rightarrow
-s_j$, $j=1,2,3$, produce the same fluxes and are thus
indistinguishable form the original ones.

How does this reduction to 4 helical triad types relate to Waleffe's
analysis \citep{Waleffe1992} in terms of helical triad types?  It turns
out that we recover Waleffe's classes I to IV, plus two new ``boundary''
classes. To begin with, Waleffe's analysis of helical triad types not
only considers the helicity of each mode, but also the relative size
of the wavevectors in each mode. By performing a direct resummation of
equation \eqref{eq:engy_flux_final} and using the results from
Appendix \ref{sec:conjugate_triads}, we can write the flux as a sum of
6 terms:
$$\Pi_\mathcal{C} = \Pi_\mathcal{C}^{+++} + \Pi_\mathcal{C}^{+--}  + \Pi_\mathcal{C}^{+-+} + \Pi_\mathcal{C}^{++-} + \Pi_\mathcal{C}^{(+-)+} + \Pi_\mathcal{C}^{+(+-)},$$
where the first four helical fluxes in the above sum are defined by
\begin{equation}
\begin{aligned}
\Pi_\mathcal{C}^{s_1 s_2 s_3} := &2{\pi}\!\!\!\!\sum_{\substack{\mathbf{k}_3 \in \mathcal{C} \\ \mathbf{k}_1, \mathbf{k}_2 \in \mathcal{U}\setminus\mathcal{C} \\ \mathbf{k}_1 + \mathbf{k}_2 + \mathbf{k}_3 = 0 \\
|\mathbf{k}_1| < |\mathbf{k}_2| < |\mathbf{k}_3|}} \big|(s_{2} k_2 - s_{1} k_1) [\mathbf{h}_{\mathbf{k}_1}^{s_{1}} \times \mathbf{h}_{\mathbf{k}_2}^{s_{2}} \cdot \mathbf{h}_{\mathbf{k}_3}^{s_{3}}]\big|  a_{\mathbf{k}_1}^{s_{1}} a_{\mathbf{k}_2}^{s_{2}} a_{\mathbf{k}_3}^{s_{3}} \cos(\Phi_{\mathbf{k}_1\mathbf{k}_2\mathbf{k}_3}^{s_{1} s_{2} s_{3}}) \\
+ &2{\pi}\!\!\!\!\sum_{\substack{\mathbf{k}_2, \mathbf{k}_3 \in \mathcal{C} \\ \mathbf{k}_1 \in \mathcal{U}\setminus\mathcal{C} \\ \mathbf{k}_1 + \mathbf{k}_2 + \mathbf{k}_3 = 0 \\
|\mathbf{k}_1| < |\mathbf{k}_2| < |\mathbf{k}_3|}} \big|(s_{2} k_2 - s_{3} k_3) [\mathbf{h}_{\mathbf{k}_1}^{s_{1}} \times \mathbf{h}_{\mathbf{k}_2}^{s_{2}} \cdot \mathbf{h}_{\mathbf{k}_3}^{s_{3}}]\big|   a_{\mathbf{k}_1}^{s_{1}} a_{\mathbf{k}_2}^{s_{2}} a_{\mathbf{k}_3}^{s_{3}} \cos(\Phi_{\mathbf{k}_1\mathbf{k}_2\mathbf{k}_3}^{s_{1} s_{2} s_{3}})\,,
\end{aligned}
\label{eq:engy_flux_triad}
\end{equation}
while the last two helical fluxes correspond to the boundary classes,
with the symbol $(+-)$ representing the corresponding ambiguity
between two parameters:
\begin{equation}
\begin{aligned}
\Pi_\mathcal{C}^{(+-)+} := &2{\pi}\!\!\!\!\sum_{\substack{\mathbf{k}_3 \in \mathcal{C} \\ \mathbf{k}_1, \mathbf{k}_2 \in \mathcal{U}\setminus\mathcal{C} \\ \mathbf{k}_1 + \mathbf{k}_2 + \mathbf{k}_3 = 0 \\
|\mathbf{k}_1| = |\mathbf{k}_2| < |\mathbf{k}_3| \\
s_{1} \in \{ \pm \}}} 2 k_1 \big|[\mathbf{h}_{\mathbf{k}_1}^{s_{1}} \times \mathbf{h}_{\mathbf{k}_2}^{-s_{1}} \cdot \mathbf{h}_{\mathbf{k}_3}^{+}]\big|  a_{\mathbf{k}_1}^{s_{1}} a_{\mathbf{k}_2}^{-s_{1}} a_{\mathbf{k}_3}^{+} \cos(\Phi_{\mathbf{k}_1\mathbf{k}_2\mathbf{k}_3}^{s_{1}, -s_{1}, +}) \\
+ & 2{\pi}\!\!\!\!\sum_{\substack{\mathbf{k}_2, \mathbf{k}_3 \in \mathcal{C} \\ \mathbf{k}_1 \in \mathcal{U}\setminus\mathcal{C} \\ \mathbf{k}_1 + \mathbf{k}_2 + \mathbf{k}_3 = 0 \\
|\mathbf{k}_1| = |\mathbf{k}_2| < |\mathbf{k}_3| \\
s_{1} \in \{ \pm \}}} \big|(s_{1} k_1 + k_3) [\mathbf{h}_{\mathbf{k}_1}^{s_{1}} \times \mathbf{h}_{\mathbf{k}_2}^{-s_{1}} \cdot \mathbf{h}_{\mathbf{k}_3}^{+}]\big|   a_{\mathbf{k}_1}^{s_{1}} a_{\mathbf{k}_2}^{-s_{1}} a_{\mathbf{k}_3}^{+} \cos(\Phi_{\mathbf{k}_1\mathbf{k}_2\mathbf{k}_3}^{s_{1}, -s_{1}, +})\,,
\end{aligned}
\label{eq:engy_flux_triad_(PM)P}
\end{equation}
\begin{equation}
\begin{aligned}
\Pi_\mathcal{C}^{+ (+-)} := &2{\pi}\!\!\!\!\sum_{\substack{\mathbf{k}_3 \in \mathcal{C} \\ \mathbf{k}_1, \mathbf{k}_2 \in \mathcal{U}\setminus\mathcal{C} \\ \mathbf{k}_1 + \mathbf{k}_2 + \mathbf{k}_3 = 0 \\
|\mathbf{k}_1| < |\mathbf{k}_2| = |\mathbf{k}_3| \\
s_{2} \in \{ \pm \}}} \big|(s_{2} k_2 - k_1) [\mathbf{h}_{\mathbf{k}_1}^{+} \times \mathbf{h}_{\mathbf{k}_2}^{s_{2}} \cdot \mathbf{h}_{\mathbf{k}_3}^{-s_{2}}]\big|  a_{\mathbf{k}_1}^{+} a_{\mathbf{k}_2}^{s_{2}} a_{\mathbf{k}_3}^{-s_{2}} \cos(\Phi_{\mathbf{k}_1\mathbf{k}_2\mathbf{k}_3}^{+, s_{2}, -s_{2}}) \\
+ & 2{\pi}\!\!\!\!\sum_{\substack{\mathbf{k}_2, \mathbf{k}_3 \in \mathcal{C} \\ \mathbf{k}_1 \in \mathcal{U}\setminus\mathcal{C} \\ \mathbf{k}_1 + \mathbf{k}_2 + \mathbf{k}_3 = 0 \\
|\mathbf{k}_1| < |\mathbf{k}_2| = |\mathbf{k}_3| \\
s_{2} \in \{ \pm \}}} 2 k_2 \big|[\mathbf{h}_{\mathbf{k}_1}^{+} \times \mathbf{h}_{\mathbf{k}_2}^{s_{2}} \cdot \mathbf{h}_{\mathbf{k}_3}^{-s_{2}}]\big|   a_{\mathbf{k}_1}^{+} a_{\mathbf{k}_2}^{s_{2}} a_{\mathbf{k}_3}^{-s_{2}} \cos(\Phi_{\mathbf{k}_1\mathbf{k}_2\mathbf{k}_3}^{+, s_{2}, -s_{2}})\,,
\end{aligned}
\label{eq:engy_flux_triad_P(PM)}
\end{equation}
We remark that flux contributions corresponding to these last two
boundary classes normally have fewer terms compared with the other 4
classes, although details will depend on the size of the set
$\mathcal{C}$. To further simplify the notation, hereafter we will use
``PPP'' instead of $+++$, etc.  In terms of Waleffe's notation
\citep{Waleffe1992}, our $+++$ (PPP) class thus corresponds to Class I
triads, while $+--$ (PMM) corresponds to Class II, $+-+$ (PMP) to
Class III and $++-$ (PPM) to Class IV.  See figure \ref{fig:Waleffe}
for a depiction of this classification, including the expected
directions of energy transfer according to Waleffe's ``instability
assumption''. The new boundary class $+(+-)$ (or P(PM)) is at the
boundary between PPM and PMP, namely, between Class III and Class IV.
The new boundary class $(+-)+$ (or (PM)P) is at the boundary between
PMP and PMM, namely, between Class II and Class III. For each of these
6 types of helical fluxes it is useful to
define the following time-dependent diagnostic quantities: \\
\noindent (i) The probability density function (PDF for short)  ${\mathcal{P}}_{\mathcal{C}}^{s_1 s_2 s_3}(\Phi)(t)$ of the triad phases, representing a normalized histogram of the phases that contribute to the flux in the above sums.\\
\noindent (ii) The ``flux density'' $F_{\mathcal{C}}^{s_1 s_2 s_3}(\Phi)(t)$ representing the contribution from generalized helical triads with  phase angle  $\Phi$ to the flux $\Pi_\mathcal{C}^{s_1 s_2 s_3}(t)$ as given by formulas \eqref{eq:engy_flux_triad}--\eqref{eq:engy_flux_triad_P(PM)}, namely,
\begin{equation}
\Pi_\mathcal{C}^{s_1 s_2 s_3}(t) = \int_{-\pi}^\pi F_{\mathcal{C}}^{s_1 s_2 s_3}(\Phi)(t) \cos(\Phi) \mathrm{d}\Phi\,.
\label{eq:FC}
\end{equation}
\noindent (iii) The weighted PDF (wPDF for short) ${\mathcal{W}}_{\mathcal{C}}^{s_1 s_2 s_3}(\Phi)(t)$, representing a normalized version of the flux densities
\begin{equation}
{\mathcal{W}}_{\mathcal{C}}^{s_1 s_2 s_3}(\Phi)(t) = \frac{{F}_{\mathcal{C}}^{s_1 s_2 s_3}(\Phi)(t)}{\int_{-\pi}^{\pi} {F}_{\mathcal{C}}^{s_1 s_2 s_3}(\Phi')(t)\mathrm{d}\Phi'}\,.
\label{eq:WC}
\end{equation}
By analyzing these densities, in \S\,\ref{sec:results3D} we will
assess the relative significance of the 6 distinct classes of helical
triad interactions for the transfer of energy in 3D Navier-Stokes
flows with different initial conditions, cf.~table \ref{tab:cases}.

\begin{figure}
\begin{center}
\includegraphics[width=\textwidth]{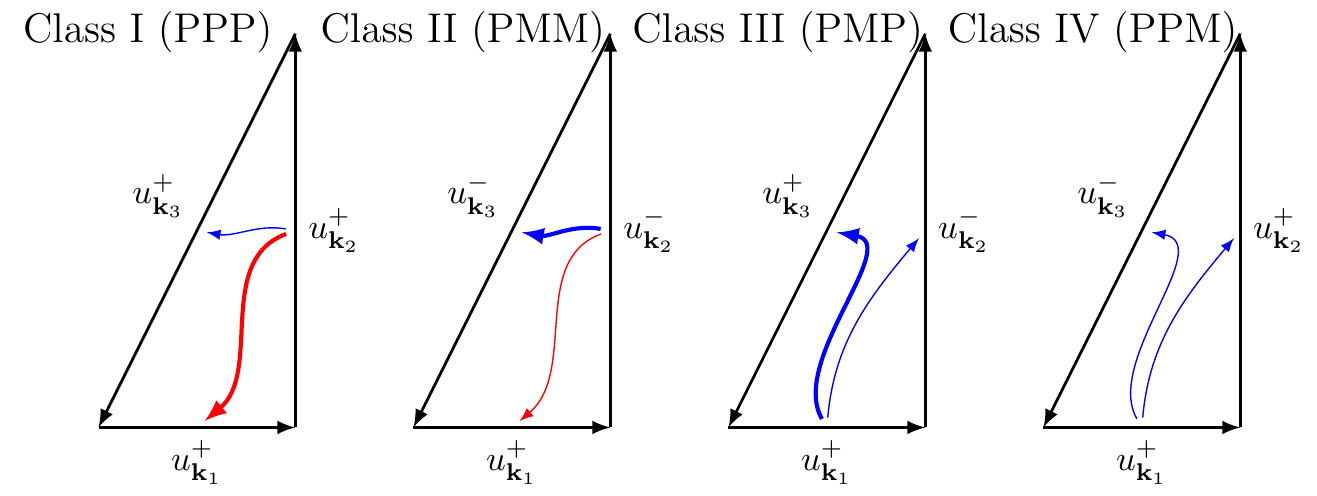}
\caption{Depiction of Waleffe's classification of helical triads along
  with the new letter notation used in this paper for each class. In
  our convention, the wavevectors are ordered by size, i.e.,
  $|\mathbf{k}_1| < |\mathbf{k}_2| < |\mathbf{k}_3|$ (just like in
  Waleffe's {convention}), and the helicity of the smallest
  wavevector is always ``$+$''. This is possible as we are dealing
  with fields that are odd under parity and therefore one can identify
  triads that are related by a global helicity reversal. The letter
  notation {``P'' and ``M'' indicates, respectively, ``$+$'' and
    ``$-$'' helicity.}  The blue and red arrows indicate the preferred
  directions for energy fluxes according to {the ``instability
    assumption'' of \citet{Waleffe1992}:} blue represents the forward
  cascade (from large to small spatial scales) while red represents
  the inverse cascade (from small to large scales); the {line
    thickness reflects the} relative strength of the energy flux. The
  new boundary-type classes P(PM) and (PM)P, not shown {in the
    figure}, correspond, respectively, to the intersection between
  Class III and Class IV ($|\mathbf{k}_2| = |\mathbf{k}_3|$) and the
  intersection between Class II and Class III ($|\mathbf{k}_1| =
  |\mathbf{k}_2|$).}
\label{fig:Waleffe}
\end{center}
\end{figure}

\section{Results}
\label{sec:results}


{In this section we employ the diagnostics developed in
  \S\,\ref{sec:diagnostics} to analyze the 1D Burgers and 3D
  Navier-Stokes flows with the different initial conditions specified
  in table \ref{tab:cases}. Solutions to the inviscid Burgers system
  \eqref{eq:Burgers0} are obtained analytically with the
  Fourier-Lagrange formula described in Appendix \ref{sec:FL}, whereas
  the viscous Burgers and Navier-Stokes systems \eqref{eq:Burgers} and
  \eqref{eq:3DNS_u} are solved numerically with standard
  pseudo-spectral approaches for which details are provided by
  \citet{ap11a,KangYumProtas2020}.  }

\subsection{Results for 1D Burgers flows}
\label{sec:results1D}

We begin the presentation of the results for the 1D Burgers flows by
briefly describing the evolution of the solutions corresponding to
different initial conditions, cf.~table \ref{tab:cases}, in the
physical and spectral space. We also present the time history of the
enstrophy $\E(t)$ for the different cases.

The time evolution of the inviscid Burgers system \eqref{eq:Burgers0}
in the physical space with the unimodal \eqref{eq:Bu0sin} and extreme
\eqref{eq:Bu0ext0} initial conditions is presented in figures
\ref{fig:B0ut}(a) and \ref{fig:B0ut}(b), respectively. While the
formation of a steep front where the derivative of the solution
$u(x,t)$ with respect to $x$ becomes unbounded can be observed in both
cases, we note that in the case with the extreme initial condition the
solution also becomes discontinuous as the blow-up time is approached.
This latter effect can be attributed to the simultaneous collapse of
characteristics corresponding to initial points in a set of measure
greater than zero (in contrast, the singularity evident in figure
\ref{fig:B0ut}(a) is the result of the crossing of characteristics
corresponding to points with an infinitesimal separation). From the
evolution of the enstrophy $\E(t)$ shown in figure \ref{fig:B0sEt}(b)
we conclude that in the latter case the singularity formation, marked
by an unbounded growth of enstrophy, occurs much faster.  As expected
from the piecewise-linear form of the extreme initial condition in
figure \ref{fig:B0ut}(b), cf.~equation \eqref{eq:Bu0ext0}, its Fourier
coefficients behave as $\left|
  \left[\widehat{\widetilde{u}}_0\right]_k \right| \sim k^{-2}$,
whereas at the time of blow-up $t = t^*$ the spectrum of the solution
has the form $|\widehat{u}_k(t^*)| \sim k^{-1}$, see figure
\ref{fig:B0sEt}(a).

\begin{figure}
\begin{center}
\mbox{\subfigure[]{\includegraphics[width=0.45\textwidth]{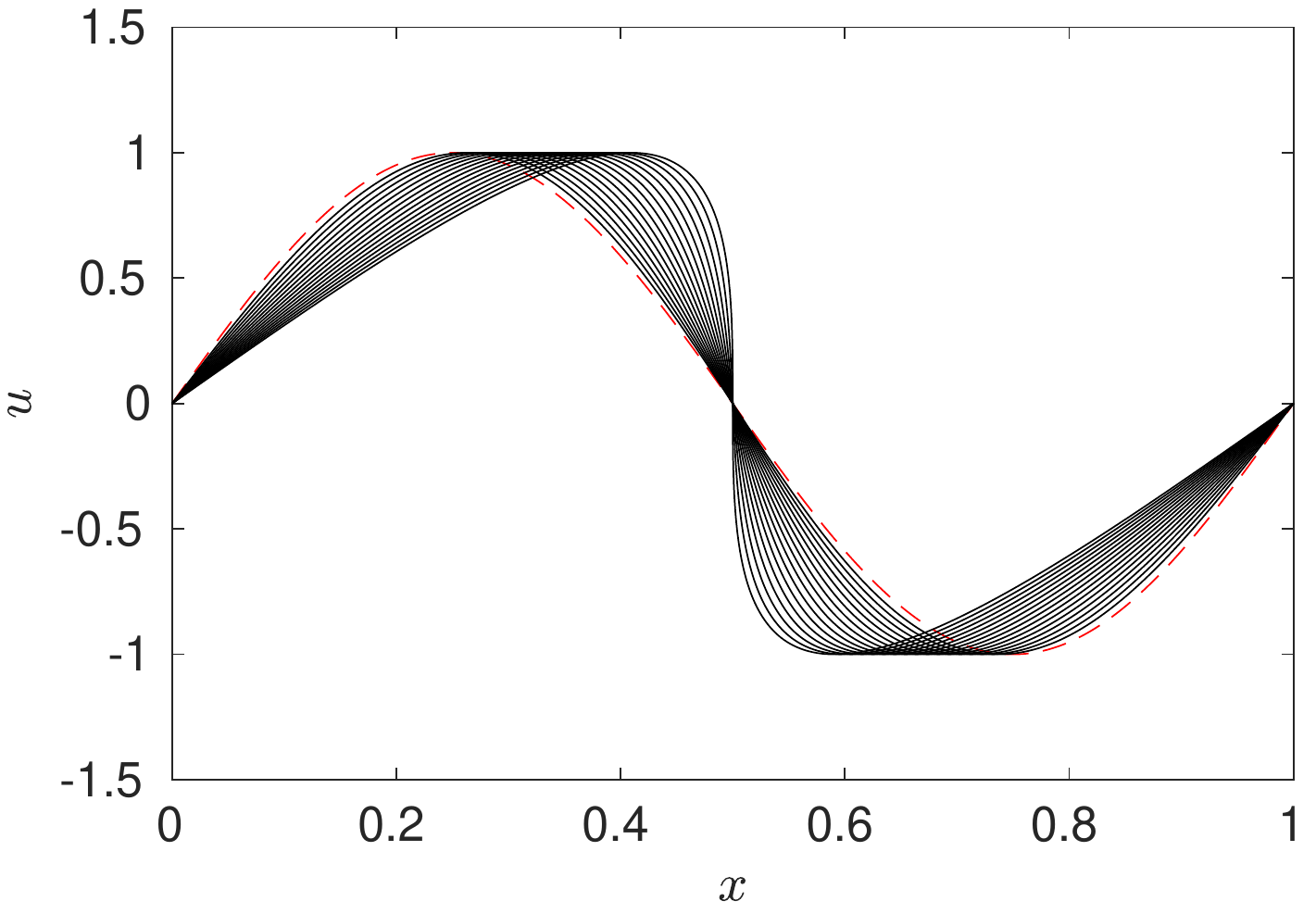}}\qquad
\subfigure[]{\includegraphics[width=0.45\textwidth]{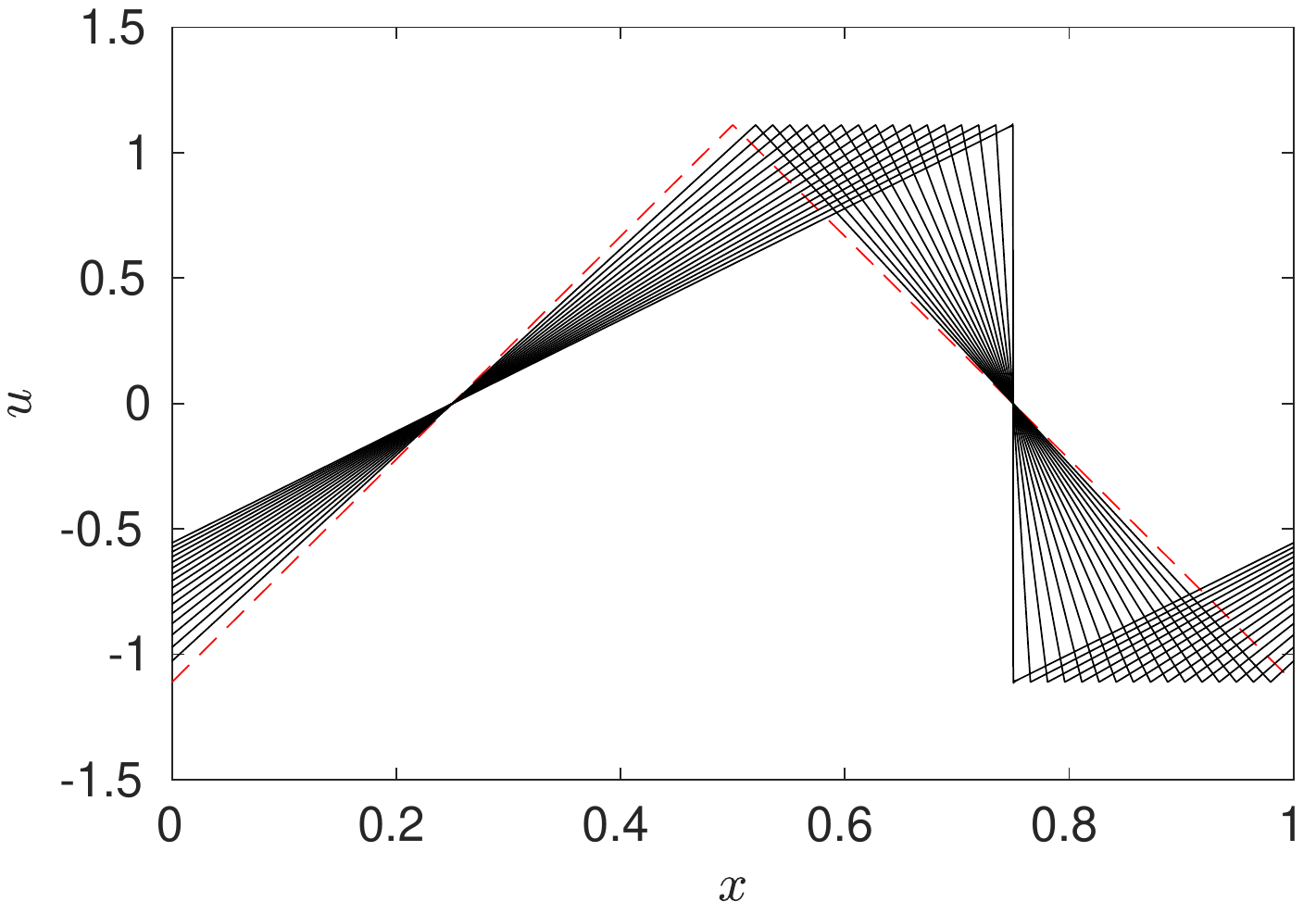}}}
\caption{Solution of the inviscid Burgers equation \eqref{eq:Burgers0}
  with (a) the unimodal initial condition \eqref{eq:Bu0sin} and (b)
  the extreme initial condition \eqref{eq:Bu0ext0} for increasing
  times $t \in [0,t^*]$. The initial conditions are marked with red
  dashed lines and the solutions develop progressively steeper fronts
  as the time increases.}
\label{fig:B0ut}
\end{center}
\end{figure}

\begin{figure}
\begin{center}
\mbox{\subfigure[]{\includegraphics[width=0.45\textwidth]{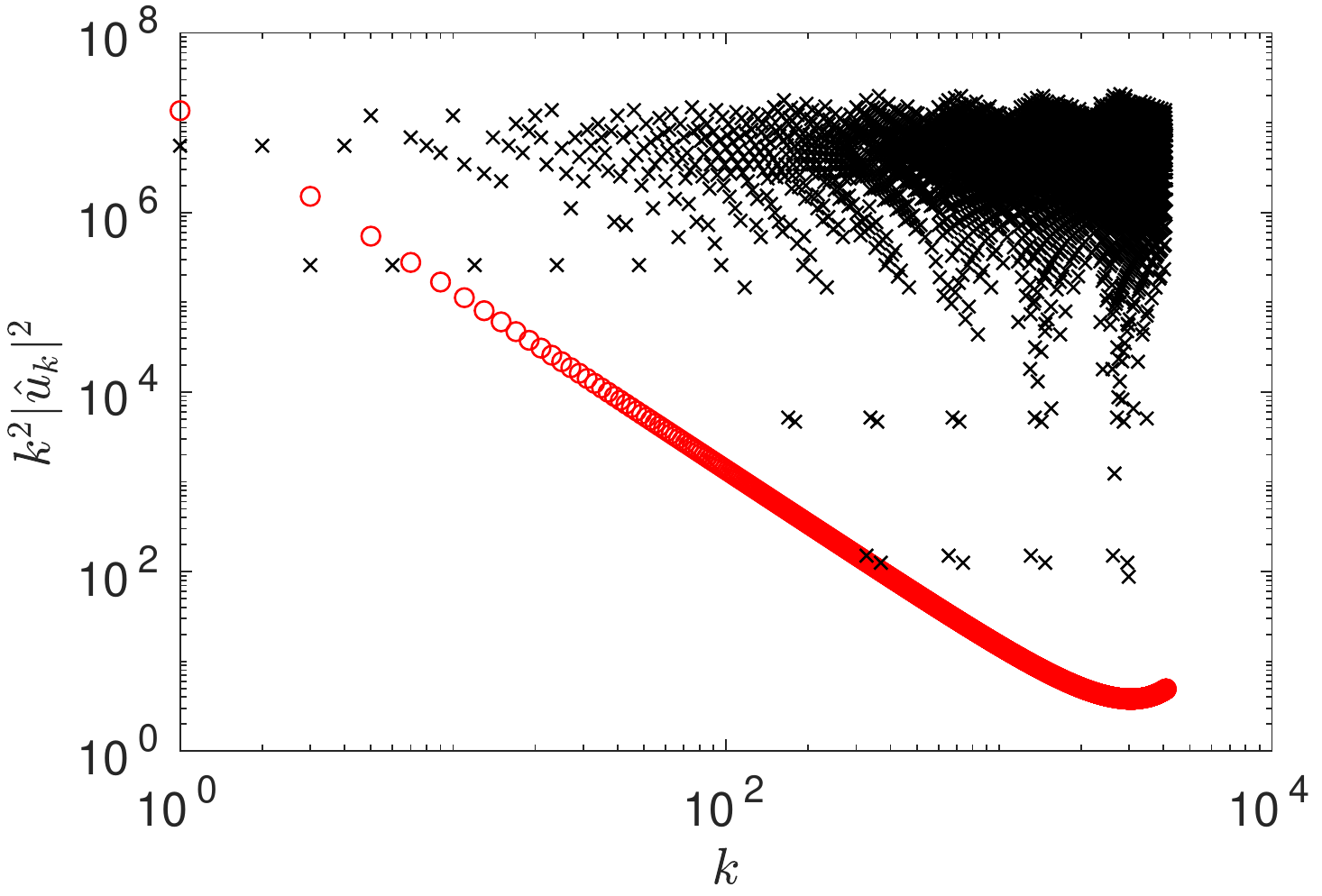}}\qquad
\subfigure[]{\includegraphics[width=0.5\textwidth]{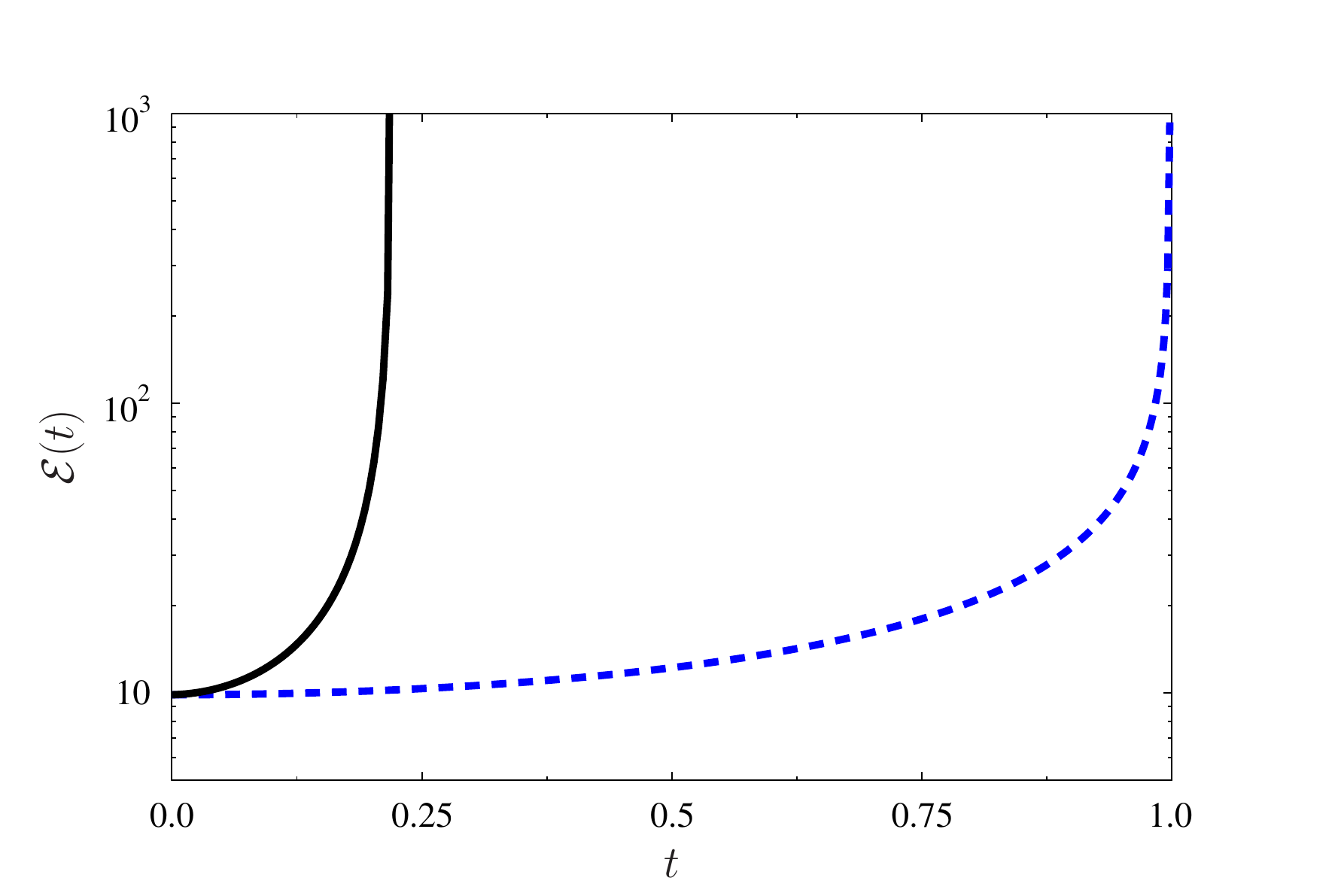}}}
\caption{(a) Compensated Fourier spectra in the solution of the
  inviscid Burgers equation \eqref{eq:Burgers0} with the extreme
  initial condition \eqref{eq:Bu0ext0} at (red circles) the initial
  time $t=0$ and (black crosses) the blow-up time $t^*$. (b) The time
  history of the enstrophy $\E(t)$ in the solution of the inviscid
  Burgers equation \eqref{eq:Burgers0} with (blue dashed line) the
  unimodal initial condition \eqref{eq:Bu0sin} and (black solid line)
  the extreme initial condition \eqref{eq:Bu0ext0}; in both cases the
  constant $A$ is chosen such that $\E(0) = \pi^2$. }
\label{fig:B0sEt}
\end{center}
\end{figure}

The evolution of the viscous Burgers flow with the extreme initial
condition $\tuuE$ shown in the physical space in figure
\ref{fig:But}(b) is qualitatively similar to the inviscid case,
cf.~figure \ref{fig:B0ut}(b), except that a singularity does not form.
This behaviour is reflected in the spectrum of the solution which at
all times $t>0$ decays exponentially for large wavenumbers $k$ (see
figure \ref{fig:BsEt}(a)) and in the fact that the corresponding
enstrophy $\E(t)$ starts to decrease after attaining a maximum at $t =
\tTE$ (see figure \ref{fig:BsEt}(b)). In figure \ref{fig:BsEt}(a) we
also observe that, as expected \citep{bk07}, for intermediate
wavenumbers $k$ the Fourier spectrum scales as $|\widehat{u}_k(t)|
\sim k^{-1}$, $t>0$.  As regards the viscous Burgers flow
corresponding to the generic initial condition shown in the physical
space in figure \ref{fig:But}(a), we observe that while a steep front
is still formed, the maximum attained enstrophy is smaller than in the
flow with the extreme initial condition $\tuuE$ (see figure
\ref{fig:BsEt}(b)).  The ``irregular'' behaviour of the solution
apparent in figure \ref{fig:But}(a) is due to the random form of the
generic initial condition (cf.~\S\,\ref{sec:generic}).

\begin{figure}
\begin{center}
\mbox{\subfigure[]{\includegraphics[width=0.45\textwidth]{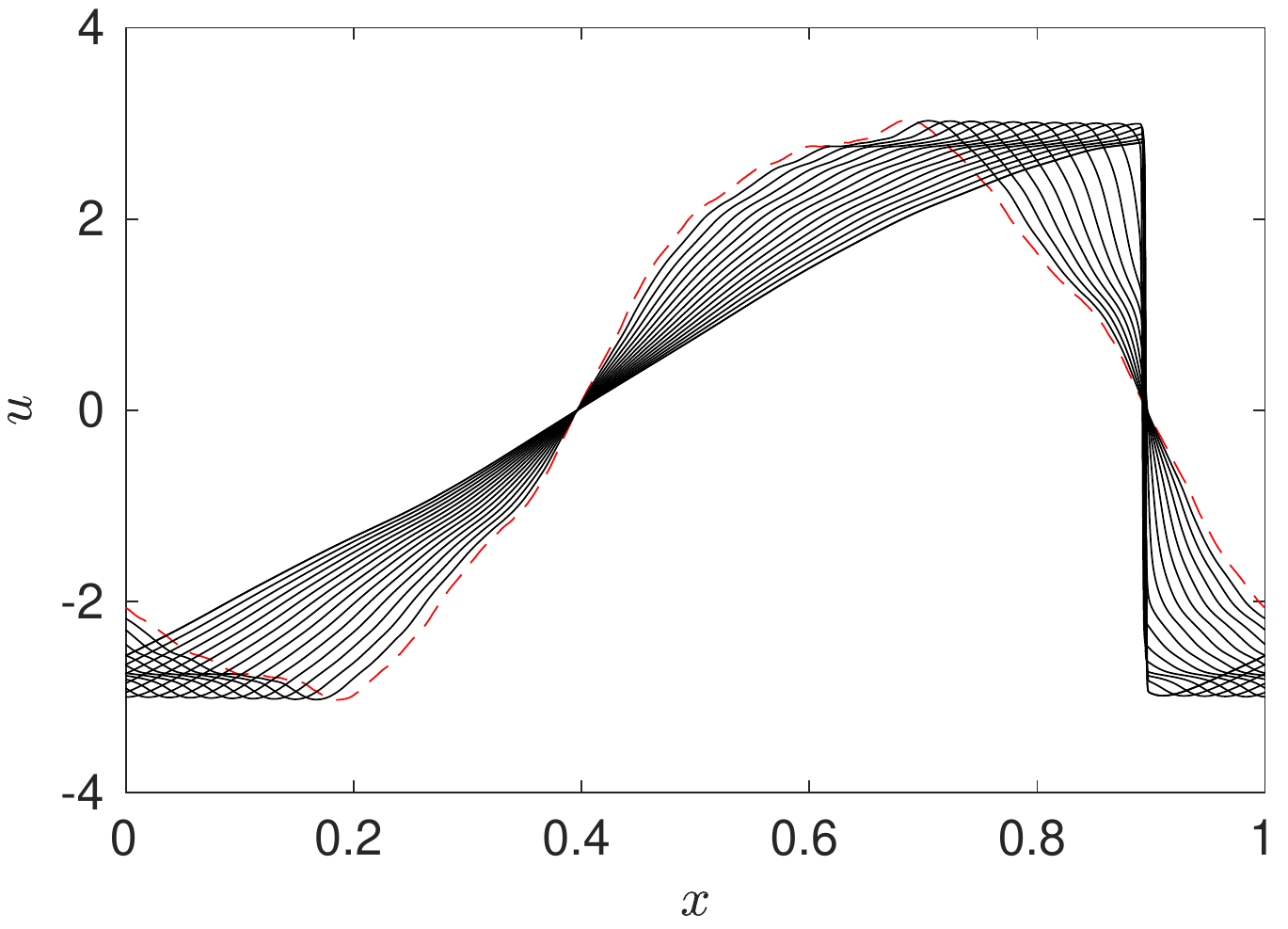}}\qquad
\subfigure[]{\includegraphics[width=0.45\textwidth]{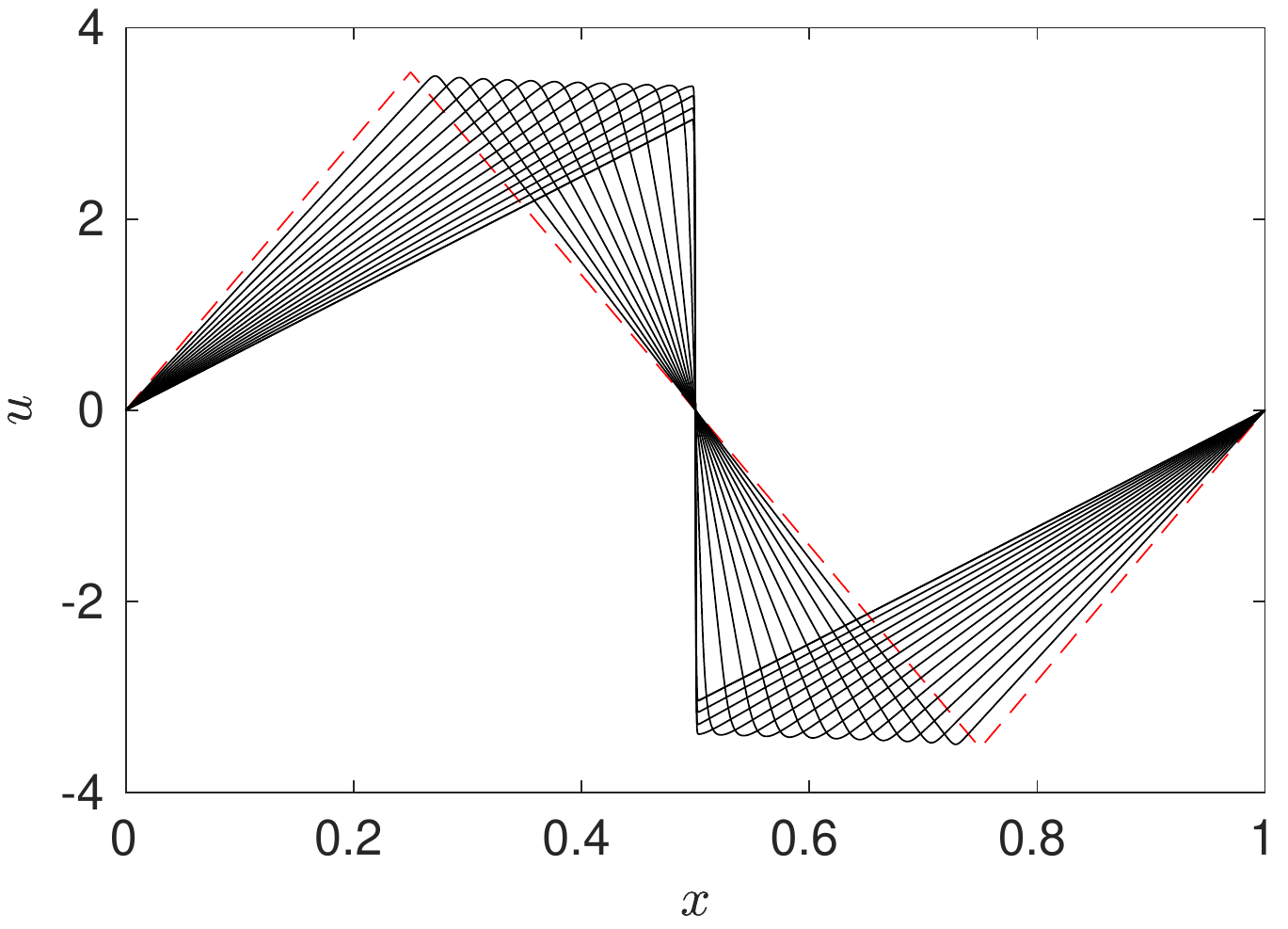}}}
\caption{Solution of the viscous Burgers equation \eqref{eq:Burgers}
  with the (a) generic and (b) extreme initial condition $\tuuE$ for
  increasing times $t \in [0,1.25\, \tTE ]$. The initial conditions
  are marked with red dashed lines and the solutions develop
  progressively steeper fronts as the time increases.}
\label{fig:But}
\end{center}
\end{figure}

\begin{figure}
\begin{center}
\mbox{\subfigure[]{\includegraphics[width=0.45\textwidth]{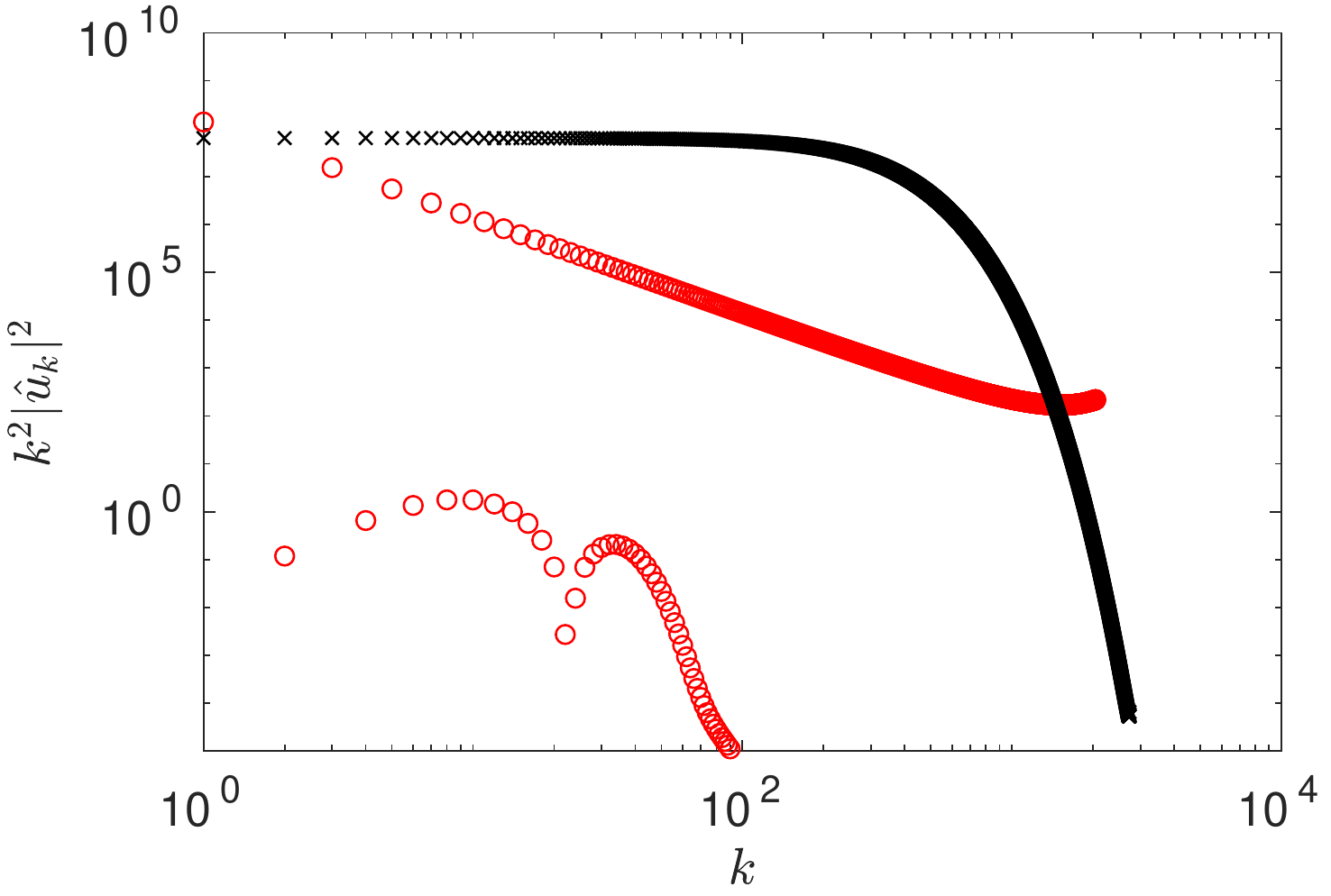}}\qquad
\subfigure[]{\includegraphics[width=0.5\textwidth]{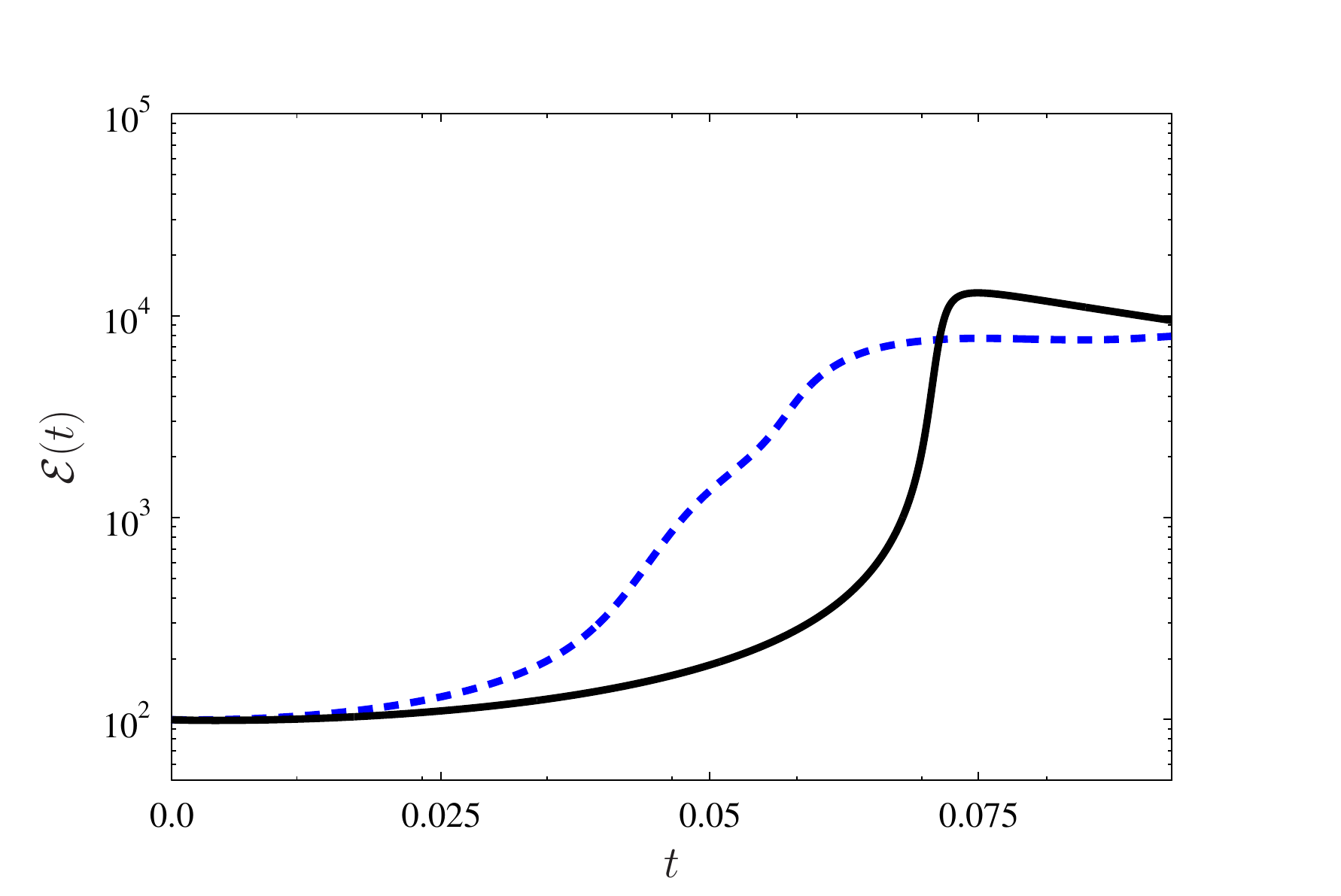}}}
\caption{(a) Compensated Fourier spectra in the solution of the
  viscous Burgers equation \eqref{eq:Burgers} with the extreme initial
  condition $\tuuE$ at (red circles) the initial time $t=0$ and (black
  crosses) the time $\tTE$ when the largest enstrophy is achieved for
  $\E_0 = 100$. (b) The time history of the enstrophy $\E(t)$ in the
  the solution of the viscous Burgers equation \eqref{eq:Burgers} with
  (blue dashed line) the generic and (black solid line) extreme
  initial condition $\tuuE$.}
\label{fig:BsEt}
\end{center}
\end{figure}

We now go on to analyze these flows in terms of the diagnostics
focusing on fluxes and triad interactions introduced in
\S\,\ref{sec:diagn1D}. We begin by noting in figure \ref{fig:B0Pip}
that the solutions of the inviscid Burgers system \eqref{eq:Burgers0}
with the unimodal and extreme initial conditions \eqref{eq:Bu0sin} and
\eqref{eq:Bu0ext0} exhibit only forward energy cascade with
non-negative flux $\Pi(t,k) \ge 0$ at all times $t$ and for all
wavenumbers $k$. In the flow corresponding to the unimodal initial
data \eqref{eq:Bu0sin} the flux $\Pi(t,k)$ is essentially independent
of the wavenumber $k$ and rapidly increases from a small initial value
to a nearly constant level. The small initial values of the flux and
its independence of the wavenumber can be attributed to the fact that
the unimodal initial condition \eqref{eq:Bu0sin} contains a single
Fourier component only. On the other hand, for the extreme initial
data \eqref{eq:Bu0ext0} the flux is an increasing function of the
wavenumber $k$ at early times $t$. Moreover, in this case the flux
$\Pi(t,k)$ increases more rapidly with time and reaches larger values
as the blow-up time $t^*$ is approached.

The PDFs of triad phases $\varphi_{k_1,k_2}^{k_3}$ in inviscid Burgers
flows are shown in figure \ref{fig:B0pdf}.  We notice that in the
flows corresponding to both the unimodal and extreme initial
conditions \eqref{eq:Bu0sin} and \eqref{eq:Bu0ext0} a significant
fraction of these phases is aligned at the angle $-\pi/2$, which in
the light of formula \eqref{eq:flux_sin} indicates that these triads
may provide negative contributions to the flux. In fact, in the
solution corresponding to the extreme initial condition
\eqref{eq:Bu0ext0}, alignment at $\pm \pi /2$ is close to being
equally probable for times $t \ll t^*$, but all triad phases collapse
at $\pi/2$ as $t \rightarrow t^*$, cf.~figure \ref{fig:B0pdf}(a).
However, the weighted PDF presented in figure \ref{fig:B0Wpdf}(a)
indicates that the triads with phases aligned at $-\pi / 2$ do not in
fact participate in the energy transfer. The reason is that these
triads involve modes with vanishing Fourier coefficients, such that
their contributions to flux \eqref{eq:flux_sin} is effectively zero.
On the other hand, the weighted PDF for the flow with the extreme
initial condition \eqref{eq:Bu0ext0} provides evidence for inverse
energy transfer occurring at the beginning of the flow evolution.

\begin{figure}
\begin{center}
\mbox{\subfigure[]{\includegraphics[width=0.45\textwidth]{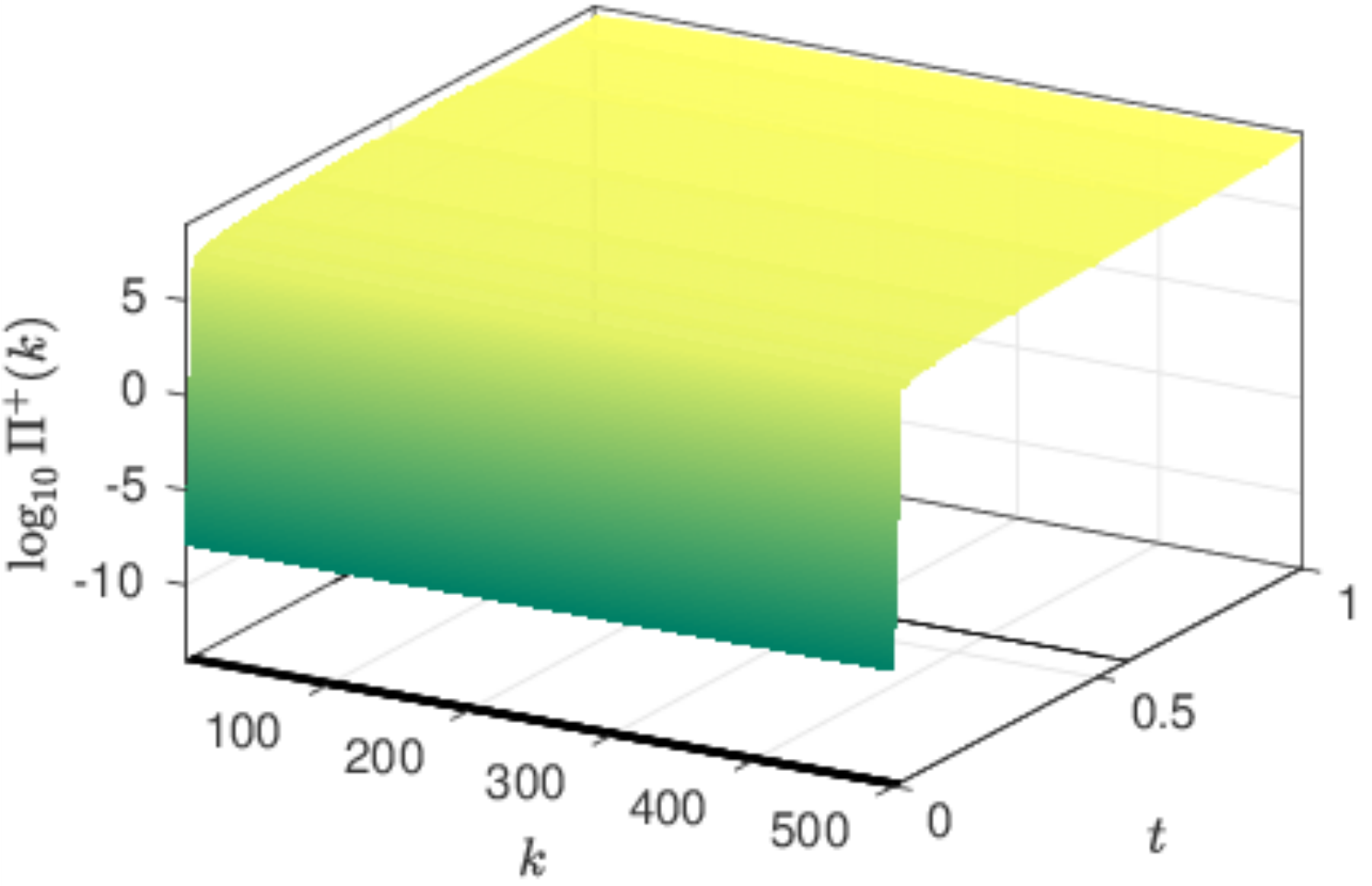}}\qquad
\subfigure[]{\includegraphics[width=0.45\textwidth]{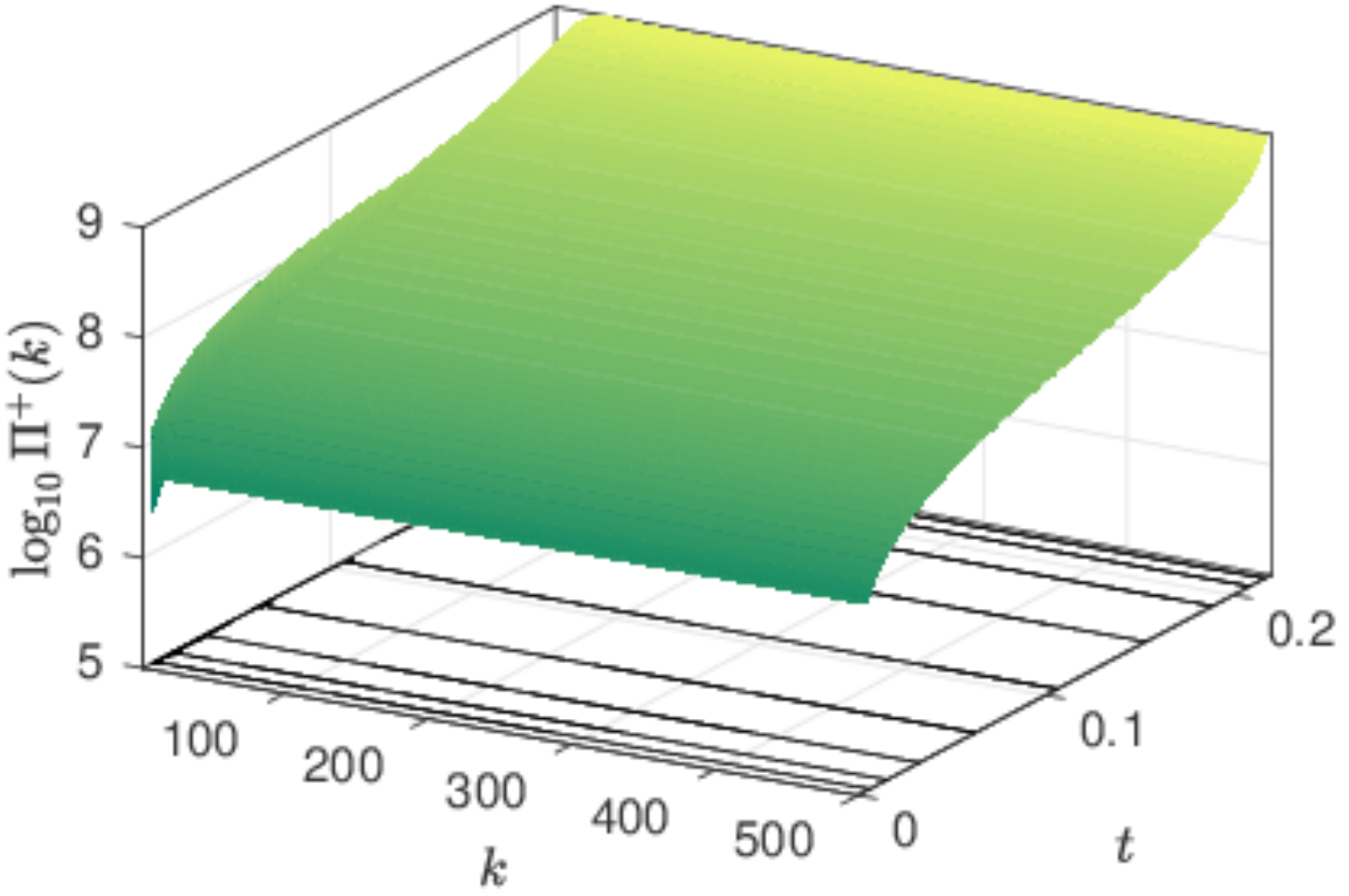}}}
\caption{Flux $\Pi(t,k)$ as function of the time $t \in [0,t^*]$ and the
  wavenumber $k$ in the solution of the inviscid Burgers equation
  \eqref{eq:Burgers0} with (a) the unimodal initial condition
  \eqref{eq:Bu0sin} and (b) the extreme initial condition
  \eqref{eq:Bu0ext0}. }
\label{fig:B0Pip}

\mbox{\subfigure[]{\includegraphics[width=0.45\textwidth]{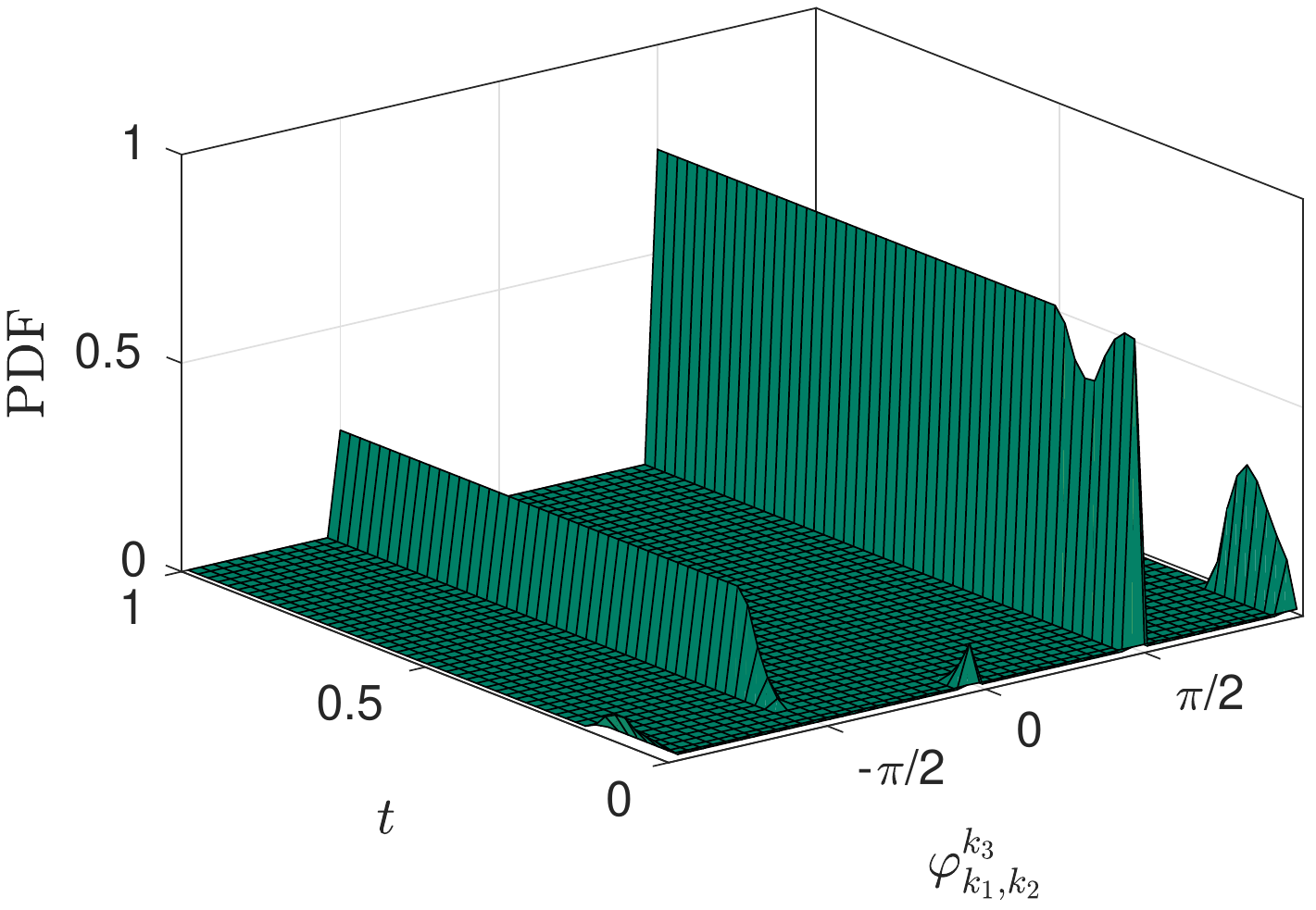}}\qquad
\subfigure[]{\includegraphics[width=0.45\textwidth]{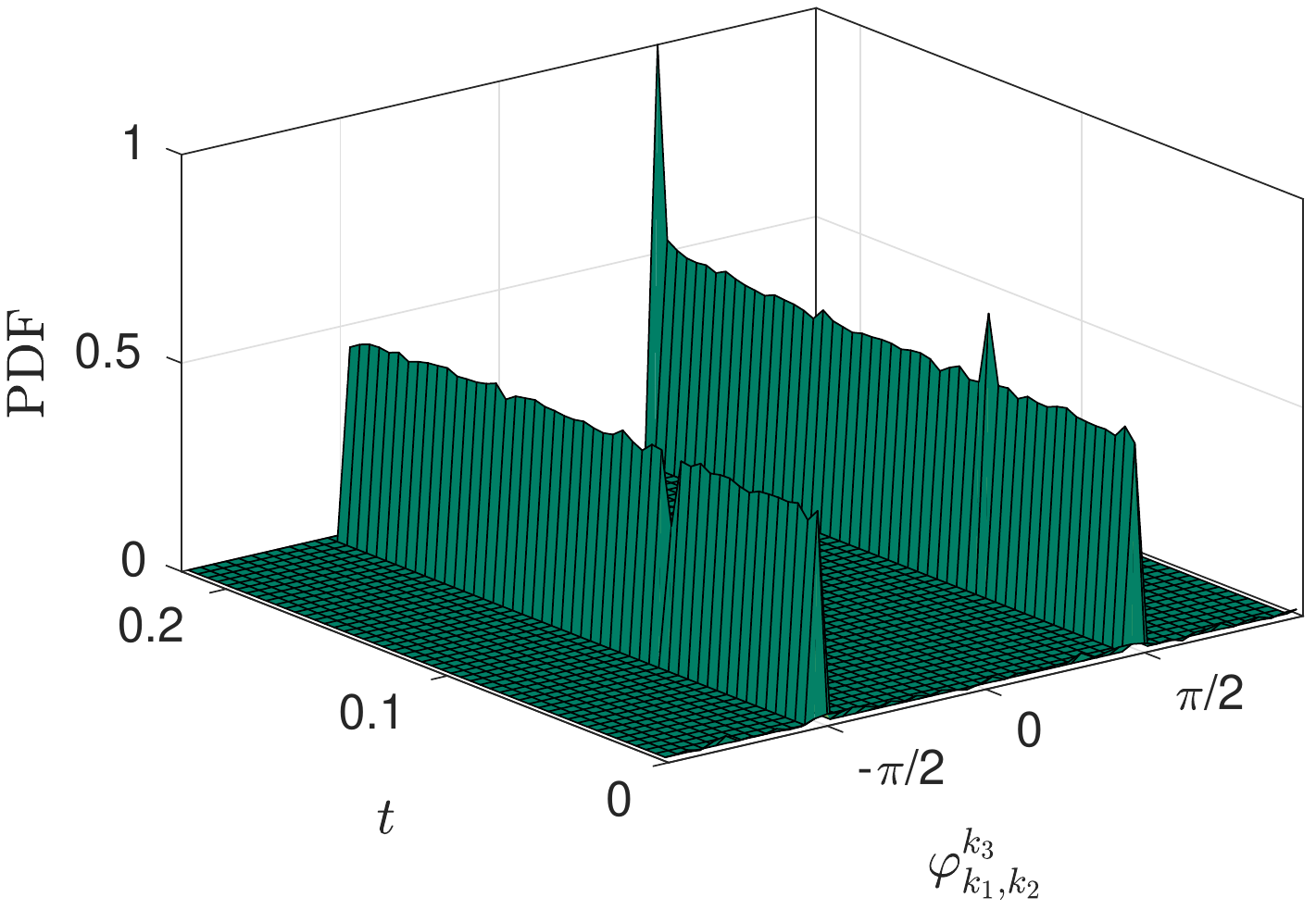}}}
\caption{Time evolution of the PDF of the triad phase
  $\varphi_{k_1,k_2}^{k_3}$ in the solution of the inviscid Burgers
  equation \eqref{eq:Burgers0} with (a) the unimodal initial condition
  \eqref{eq:Bu0sin} and (b) the extreme initial condition
  \eqref{eq:Bu0ext0} for  $t \in [0,t^*]$.}
\label{fig:B0pdf}
\mbox{\subfigure[]{\includegraphics[width=0.45\textwidth]{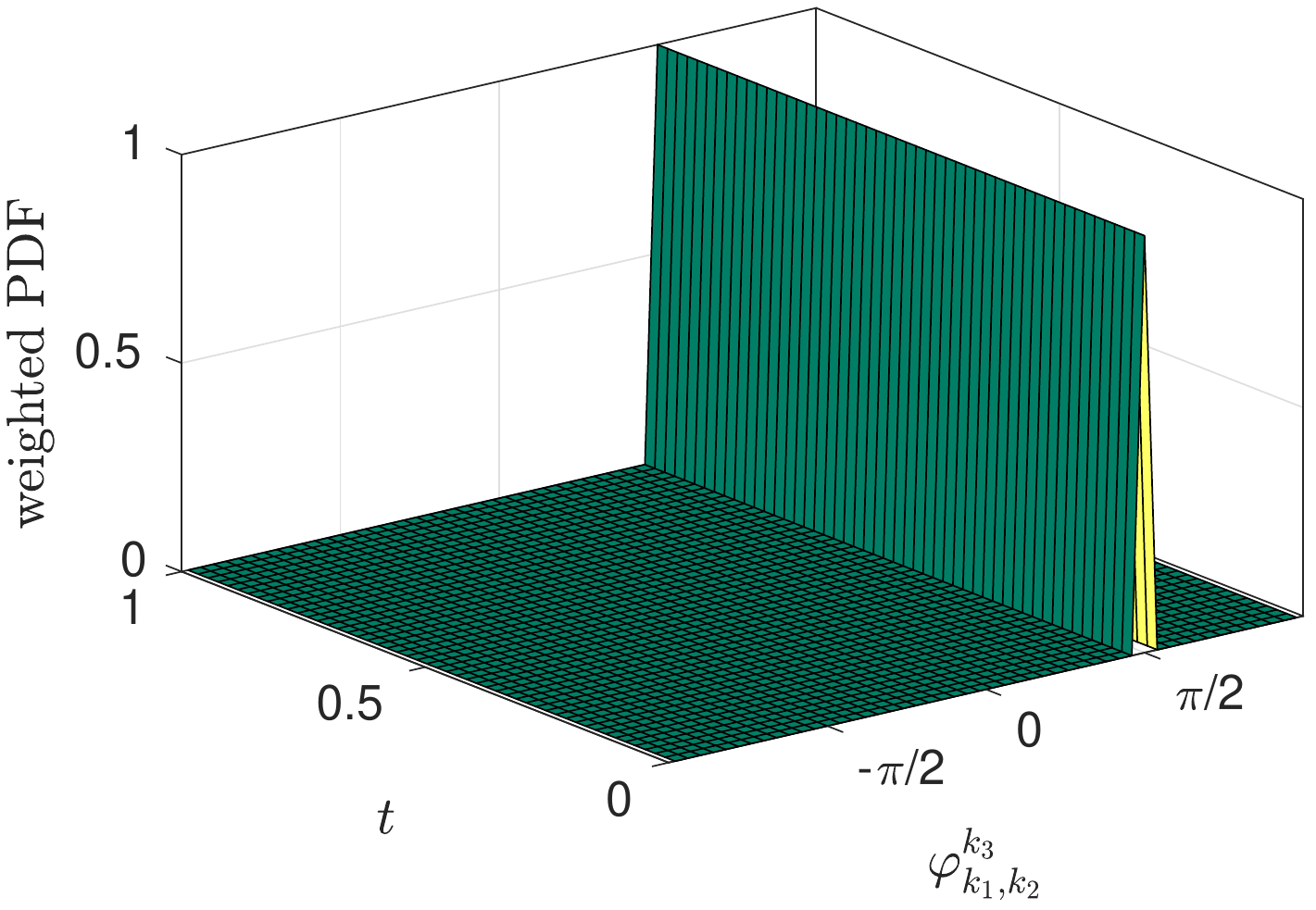}}\qquad
\subfigure[]{\includegraphics[width=0.45\textwidth]{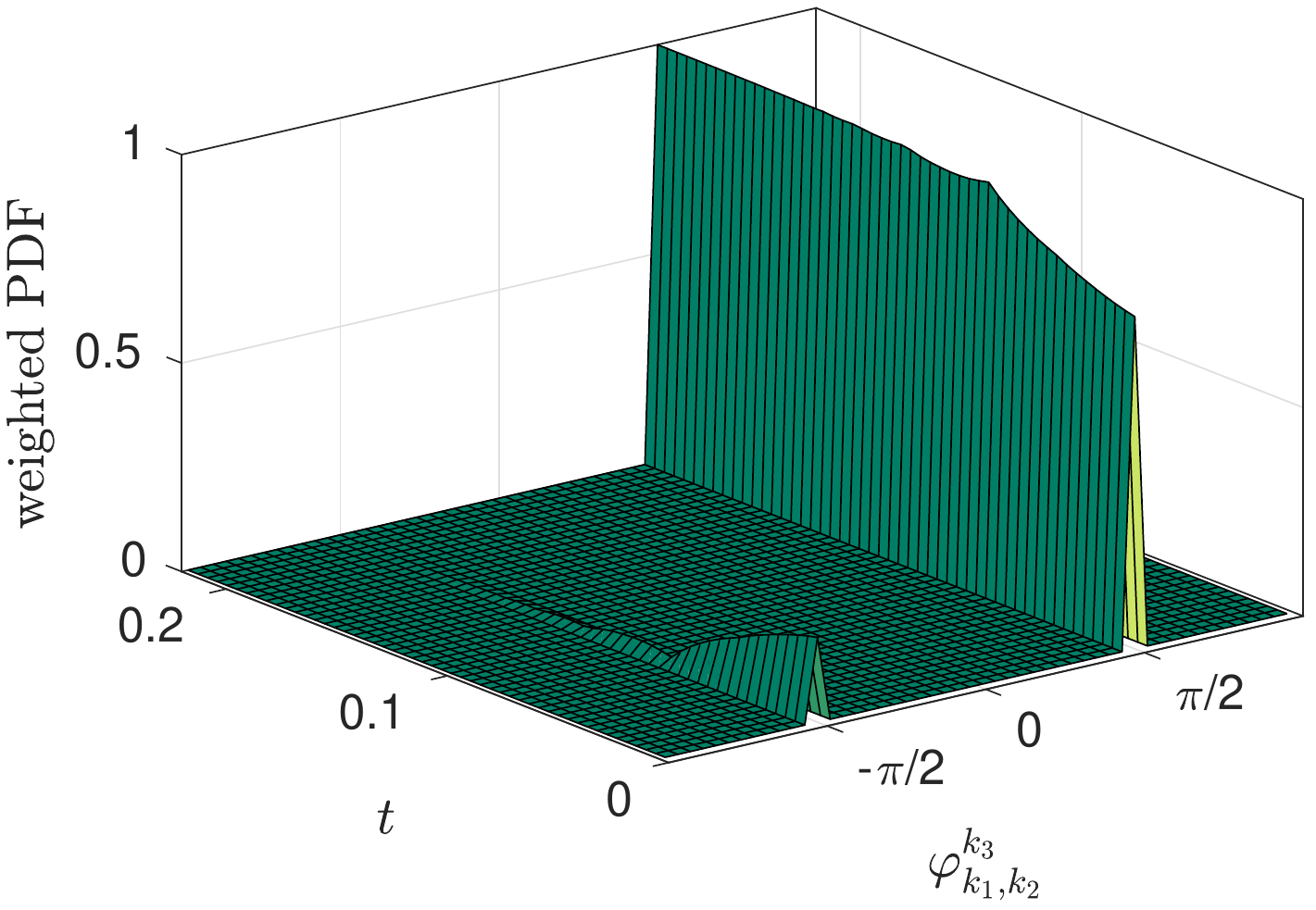}}}
\caption{Time evolution of the weighted PDF of the triad phase
  $\varphi_{k_1,k_2}^{k_3}$ in the solution of the inviscid Burgers
  equation \eqref{eq:Burgers0} with (a) the unimodal initial condition
  \eqref{eq:Bu0sin} and (b) the extreme initial condition
  \eqref{eq:Bu0ext0} for  $t \in [0,t^*]$.}
\label{fig:B0Wpdf}
\end{center}
\end{figure}

 
As regards solutions of the viscous Burgers system \eqref{eq:Burgers}
with both the generic and extreme initial conditions, in figure
\ref{fig:BPip} we note that the positive flux $\Pi^+(t,k)$ exhibits
nontrivial behaviour. It becomes less intense at intermediate times
and increases around the time $\tTE$ when the maximum enstrophy is
reached in both cases, at which point it also becomes almost
independent of the wavenumber $k$. {In the flow corresponding to
  the extreme initial condition the increase of the flux is more rapid
  which is also reflected in the steeper growth of enstrophy around
  that time, cf.~figure \ref{fig:BsEt}(b).}  Interestingly, Figure
\ref{fig:BPin} shows evidence of negative flux $\Pi^-(t,k)$ for small
wavenumbers $k$ and at early times $t$ which is stronger in the flow
with the extreme initial condition $\tuuE$. The time evolutions of the
PDFs and the weighted PDFs of the triad phase angle
$\varphi_{k_1,k_2}^{k_3}$ in the solutions of the viscous Burgers
system with the two initial conditions are overall similar to the
inviscid cases, cf.~figures \ref{fig:B0pdf} and \ref{fig:B0Wpdf}
versus figures \ref{fig:Bpdf} and \ref{fig:BWpdf}.  One noteworthy
difference is the virtual absence of triads with phases aligned at
$-\pi / 2$ in the viscous Burgers flow corresponding to the generic
initial condition, cf.~figure \ref{fig:Bpdf}(a). Instead, triads
initially exhibit a broad range of phase angles before suddenly
aligning at $\pi / 2$ which is accompanied by the emergence of a sharp
front evident in figure \ref{fig:But}(a). In addition, it is also
interesting to note that in the solution of the viscous problem with
the extreme initial condition $\tuuE$ all triad phases remain aligned
at $\pi/2$ for $t > \tTE$, i.e., after the enstrophy has reached its
maximum, cf.~figure \ref{fig:Bpdf}(b).  This mirrors the behaviour in
the inviscid Burgers equation with the extreme initial condition,
cf.~figure \ref{fig:B0pdf}(b).

\begin{figure}
\begin{center}
\mbox{\subfigure[]{\includegraphics[width=0.45\textwidth]{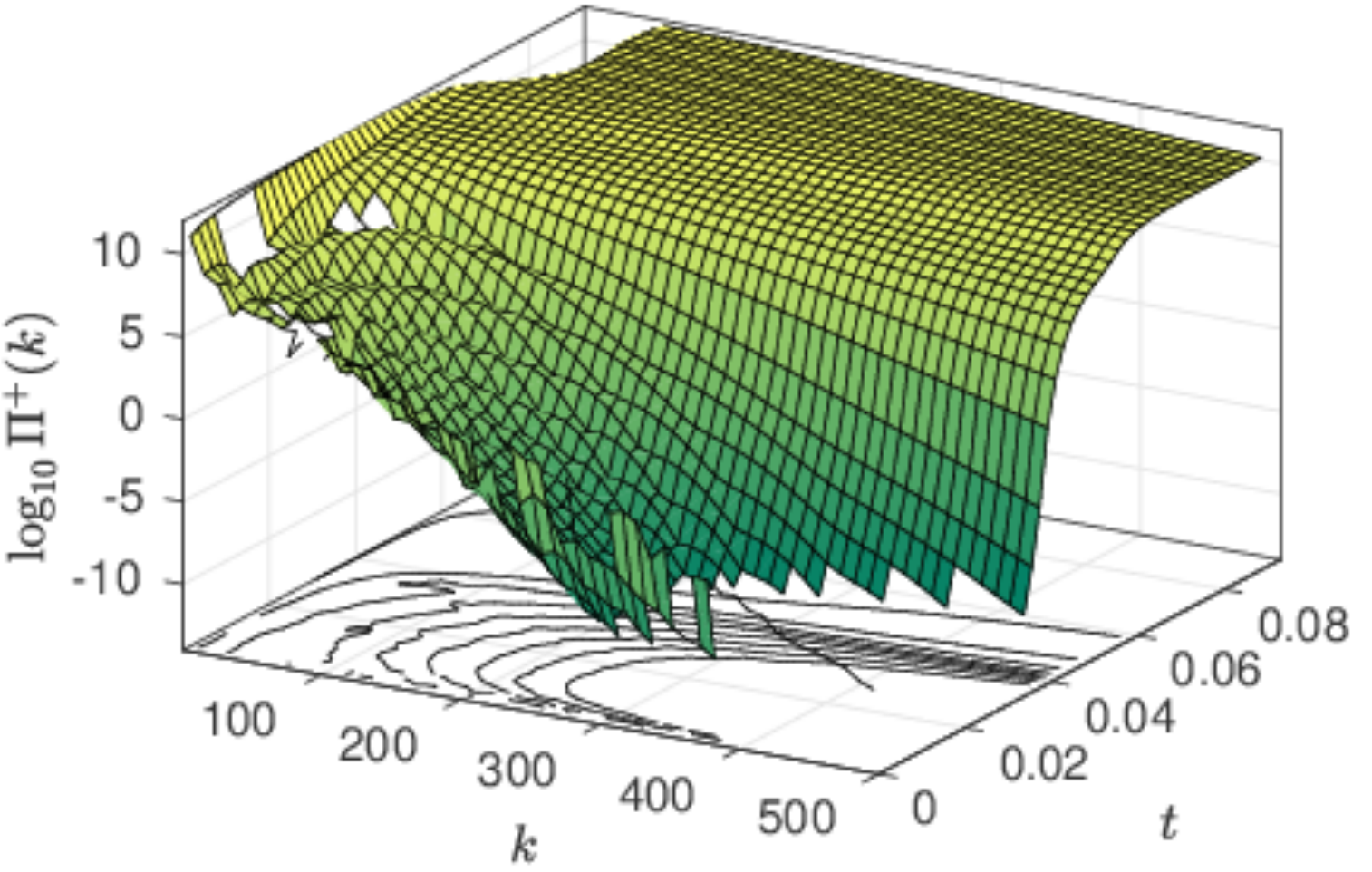}}\qquad
\subfigure[]{\includegraphics[width=0.45\textwidth]{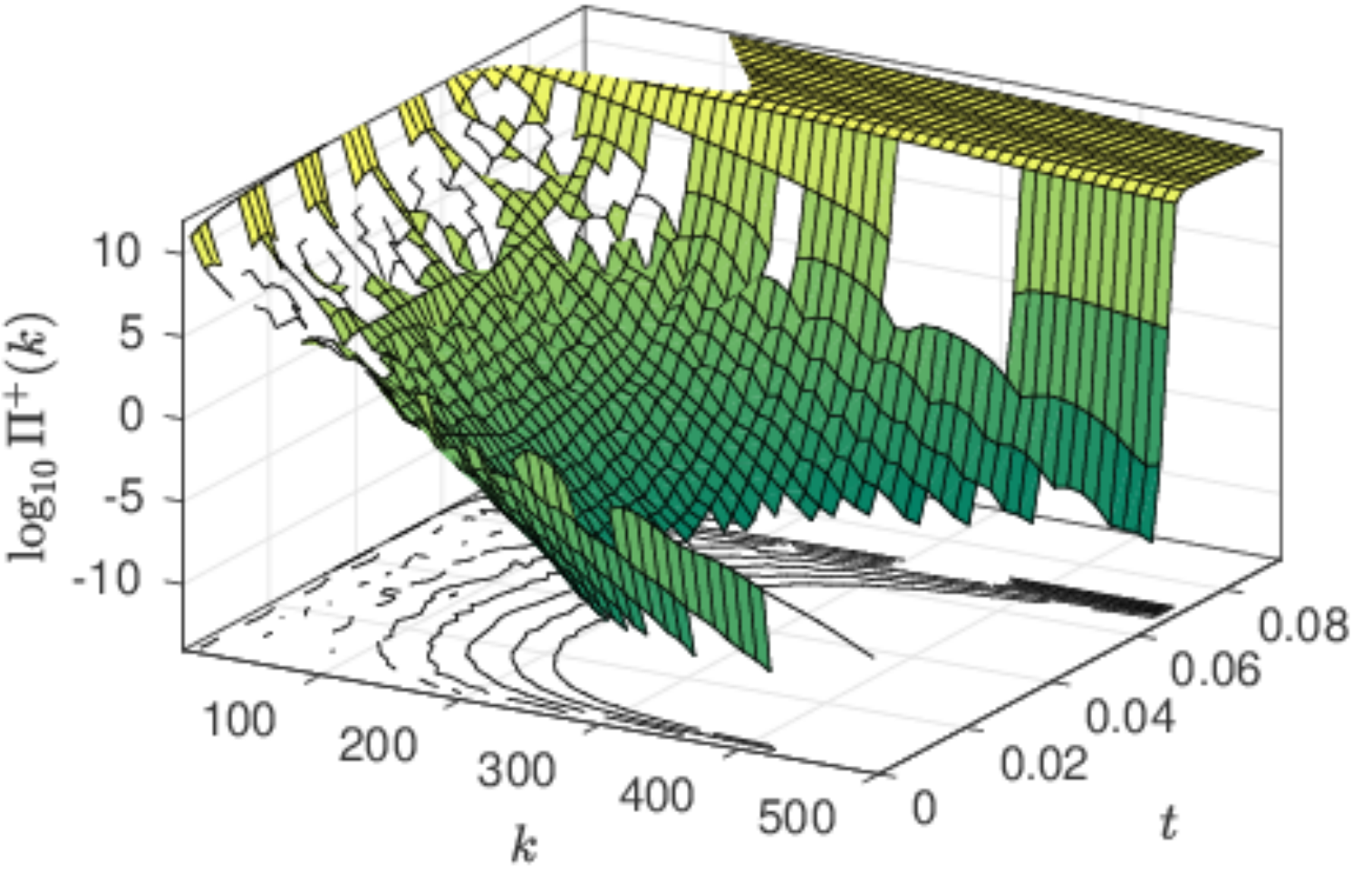}}}
\caption{Positive flux $\Pi^+(t,k)$ as function of the time $t \in [0,
  1.25 \, \tTE]$ and the wavenumber $k$ in the solution of the viscous
  Burgers equation \eqref{eq:Burgers} with the (a) generic and (b)
  extreme initial condition $\tuuE$. }
\label{fig:BPip}
\mbox{\subfigure[]{\includegraphics[width=0.45\textwidth]{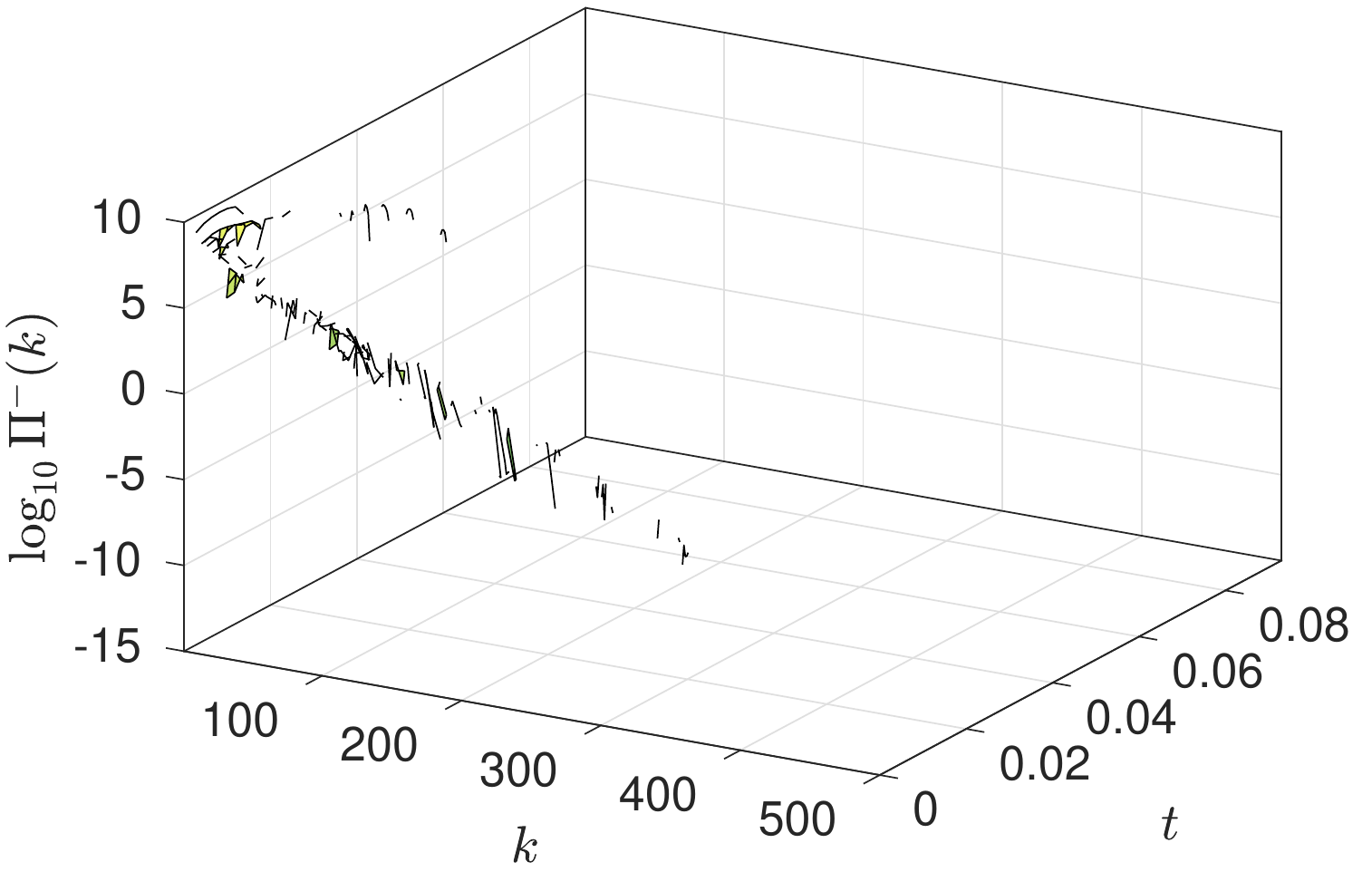}}\qquad
\subfigure[]{\includegraphics[width=0.45\textwidth]{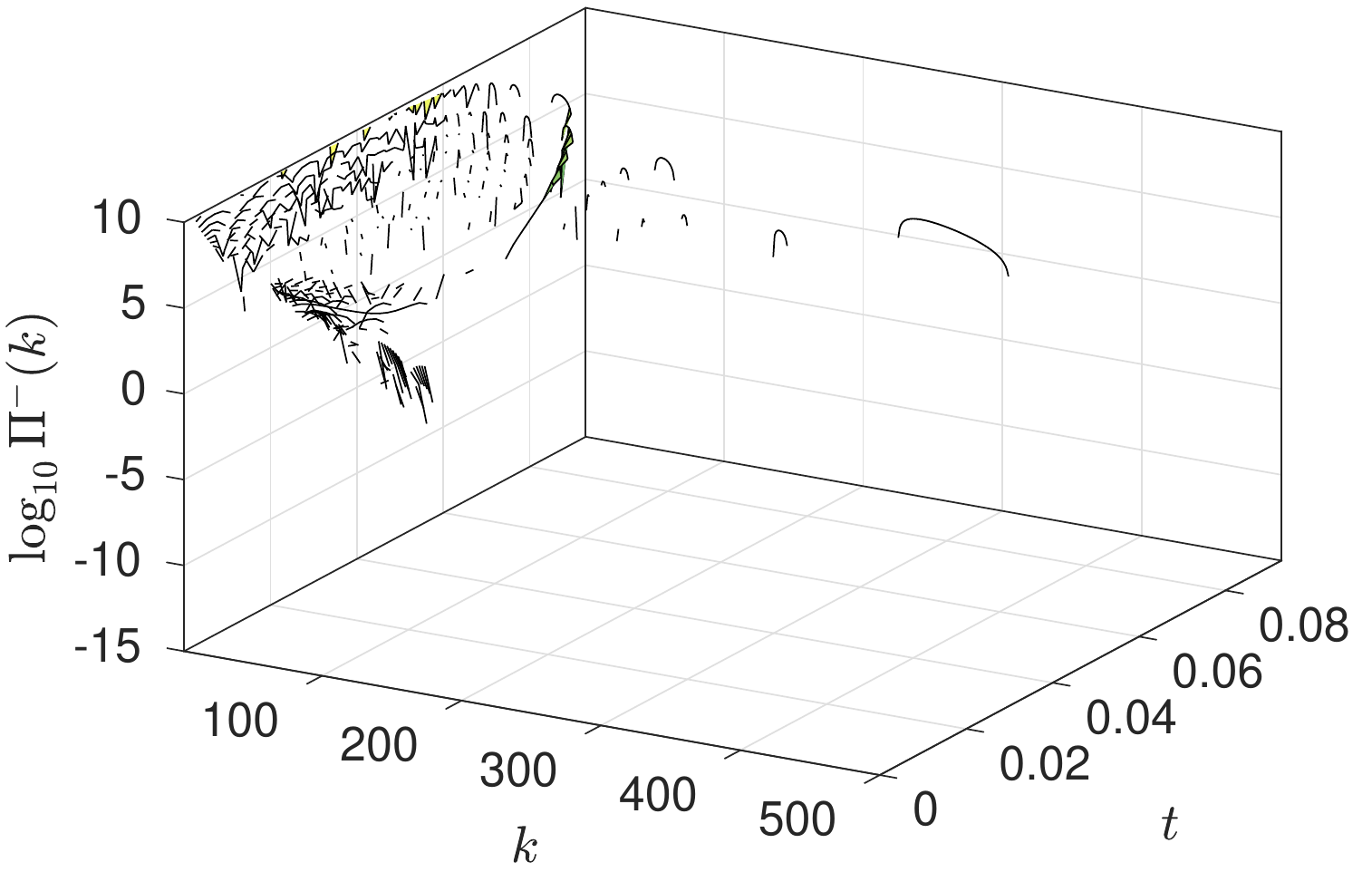}}}
\caption{Negative flux $\Pi^-(t,k)$ as function of the time $t \in [0,
  1.25 \, \tTE]$ and the wavenumber $k$ in the solution of the viscous
  Burgers equation \eqref{eq:Burgers} with the (a) generic and (b)
  extreme initial condition $\tuuE$. }
\label{fig:BPin}
\end{center}
\end{figure}

\begin{figure}
\begin{center}
\mbox{\subfigure[]{\includegraphics[width=0.45\textwidth]{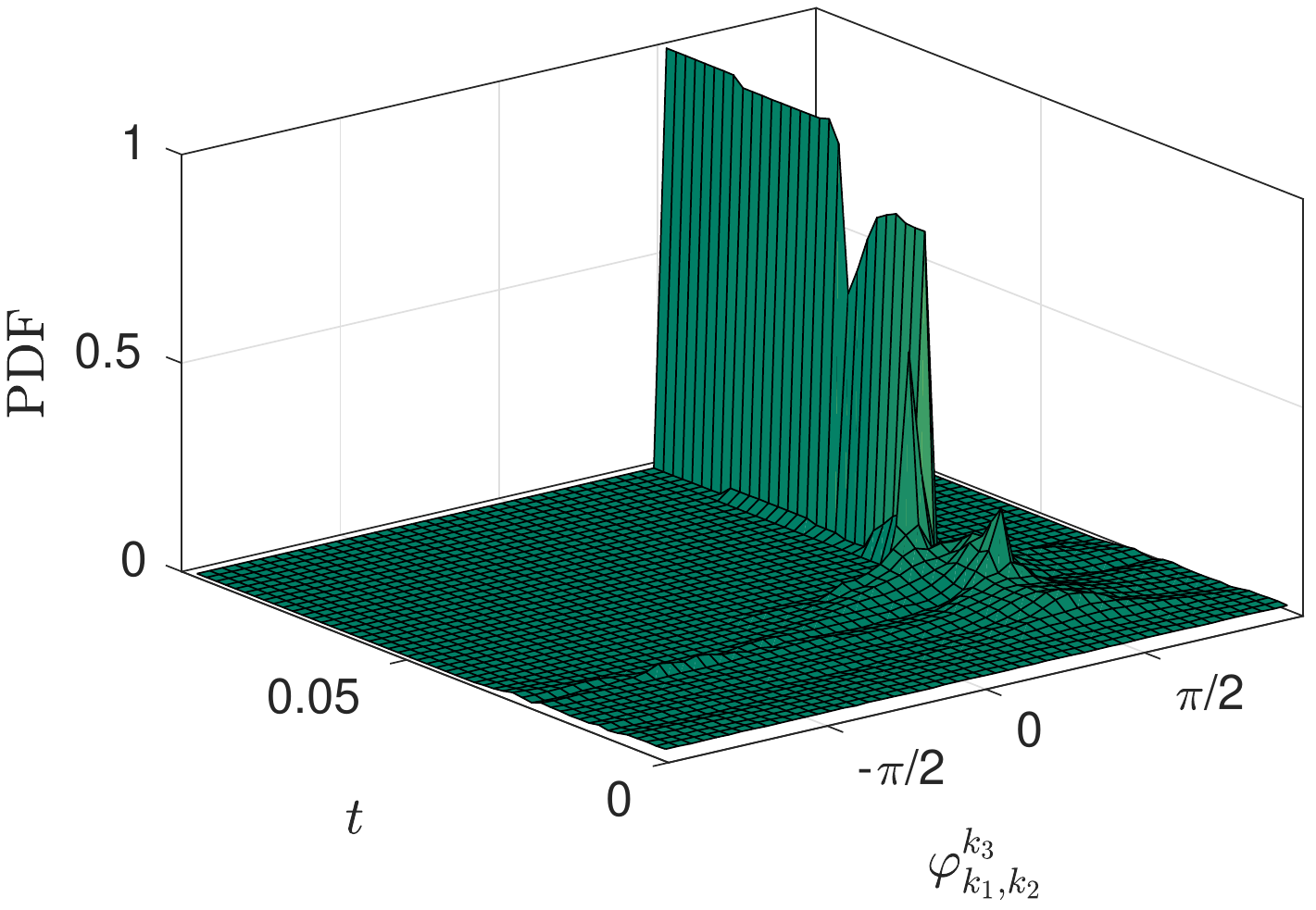}}\qquad
\subfigure[]{\includegraphics[width=0.45\textwidth]{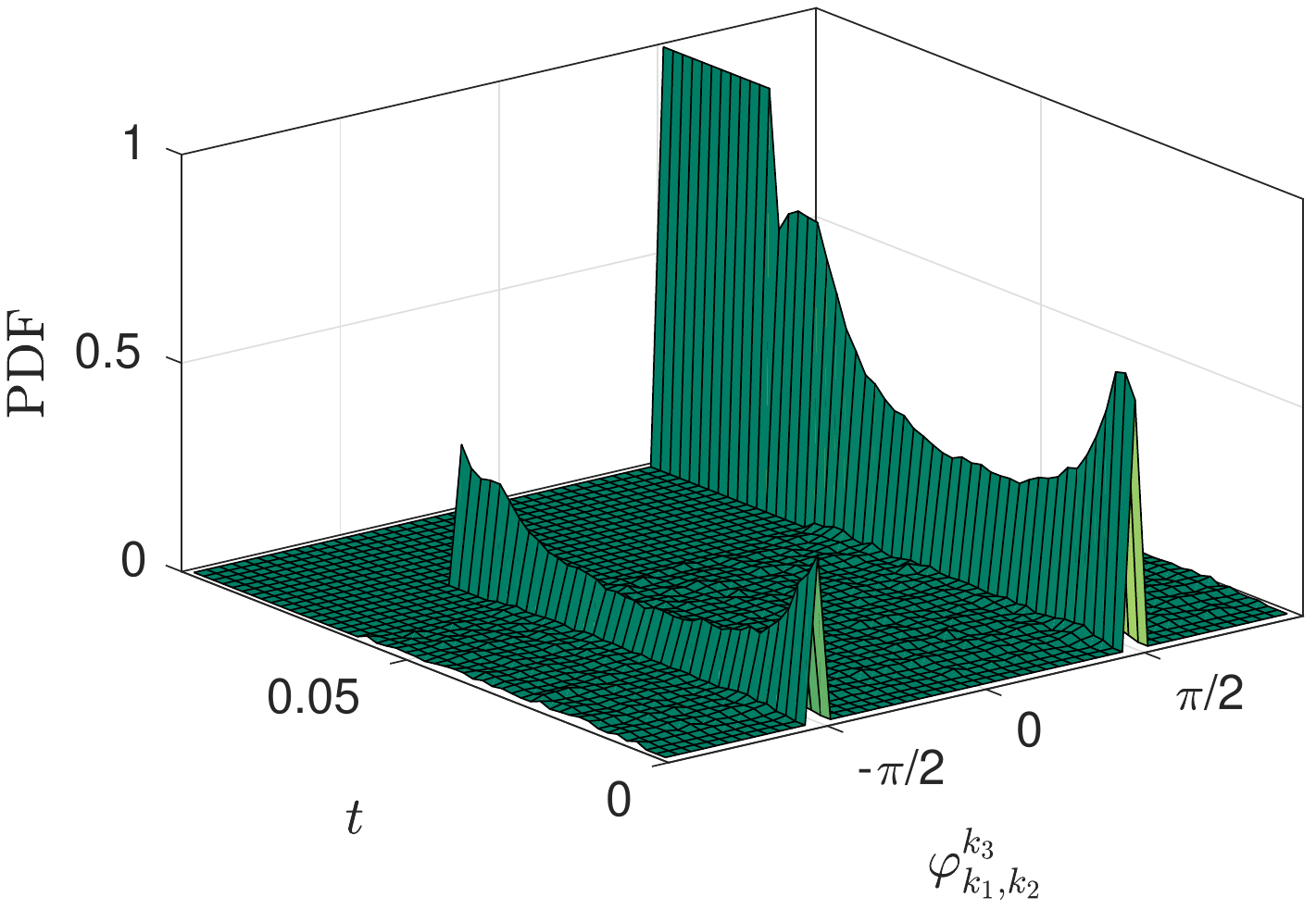}}}
\caption{Time evolution of the PDF of the triad phase
  $\varphi_{k_1,k_2}^{k_3}$ in the solution of the viscous Burgers
  equation \eqref{eq:Burgers} with the (a) generic and (b)
  extreme initial condition $\tuuE$.}
\label{fig:Bpdf}
\mbox{\subfigure[]{\includegraphics[width=0.45\textwidth]{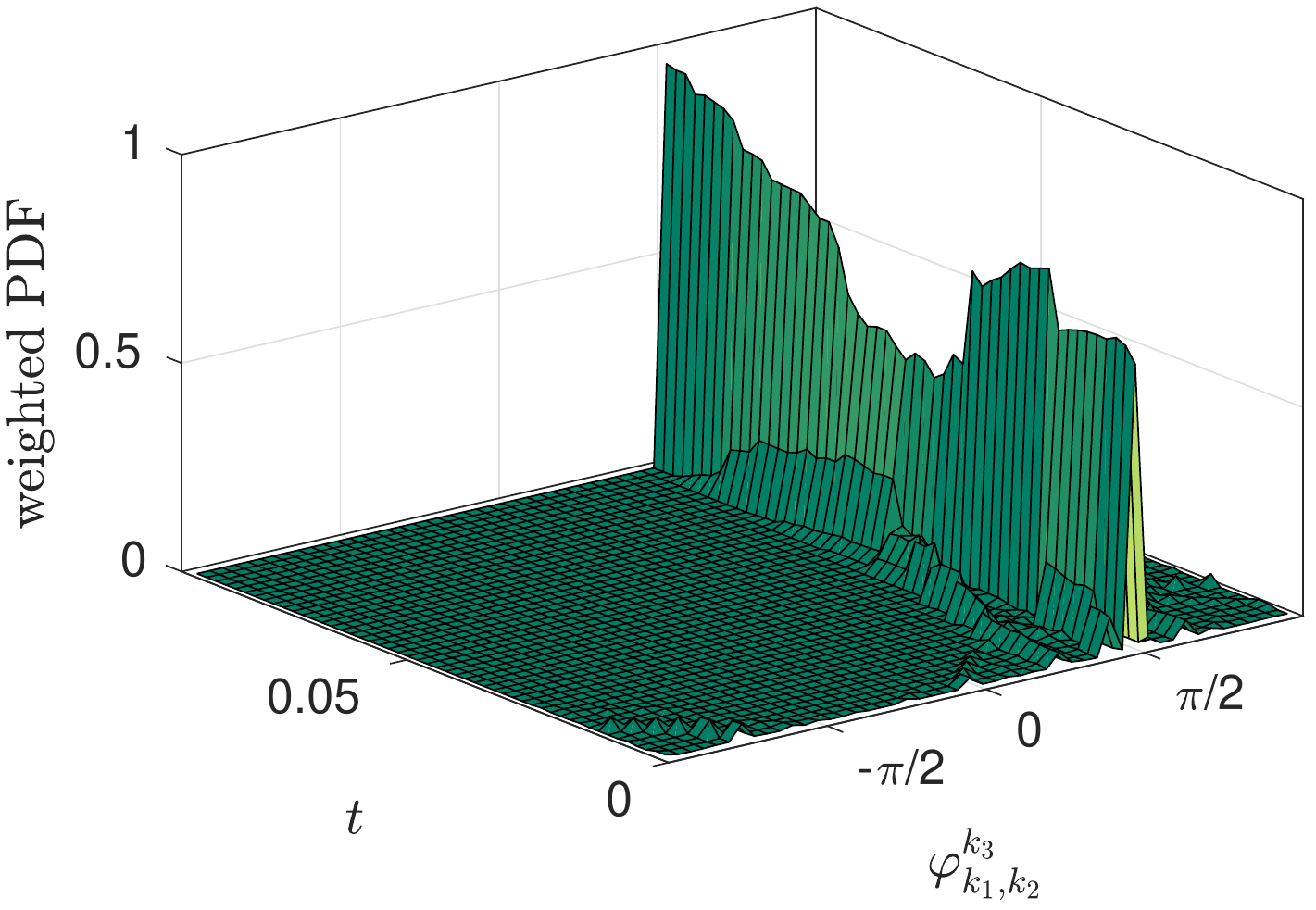}}\qquad
\subfigure[]{\includegraphics[width=0.45\textwidth]{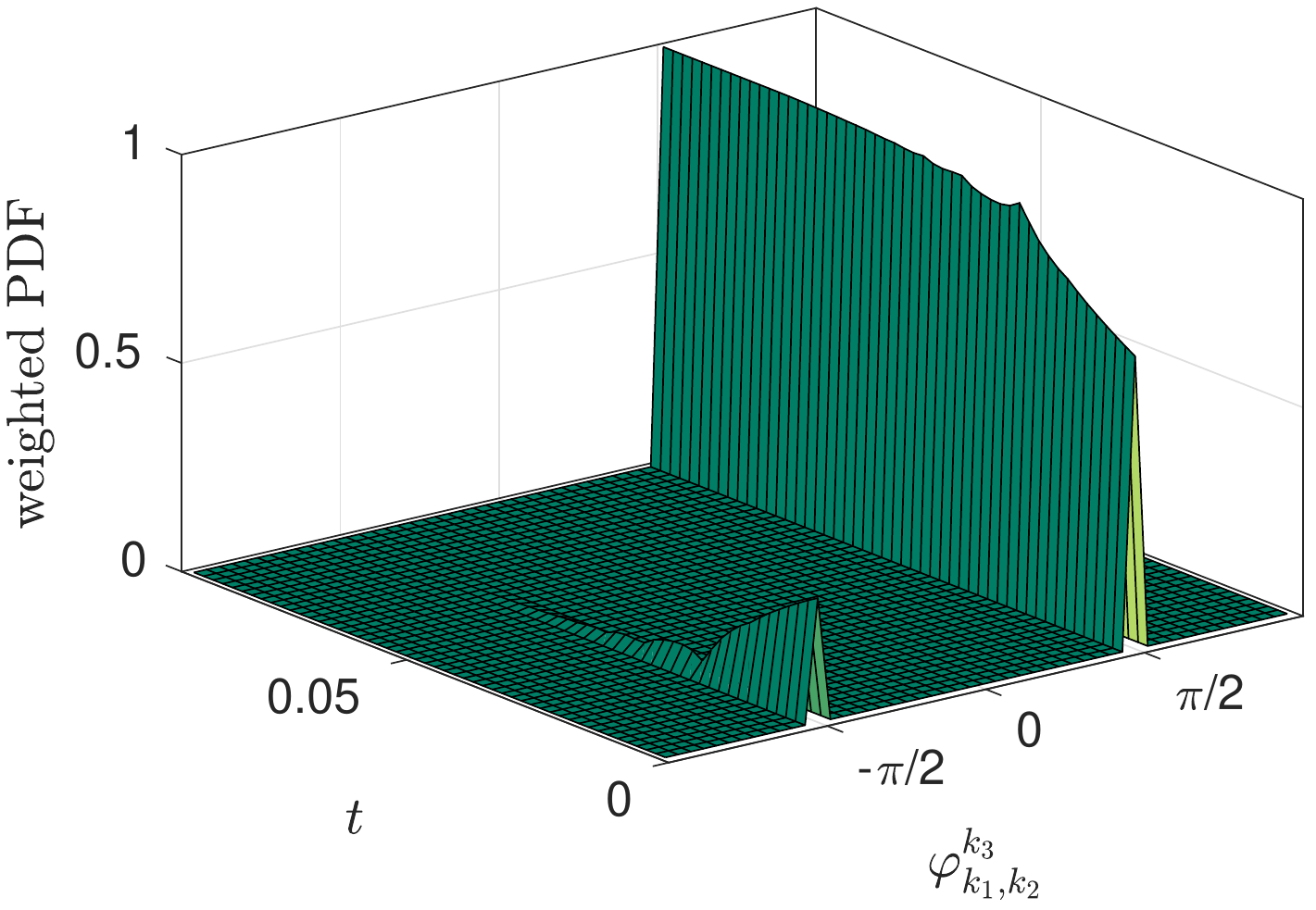}}}
\caption{Time evolution of the weighted PDF of the triad phase
  $\varphi_{k_1,k_2}^{k_3}$ in the solution of the viscous Burgers
  equation \eqref{eq:Burgers} with the (a) generic and (b) extreme
  initial condition $\tuuE$.}
\label{fig:BWpdf}
\end{center}
\end{figure}

\begin{figure}
\begin{center}
\mbox{\subfigure[$t_1 = 0.044$]{\includegraphics[width=0.5\textwidth]{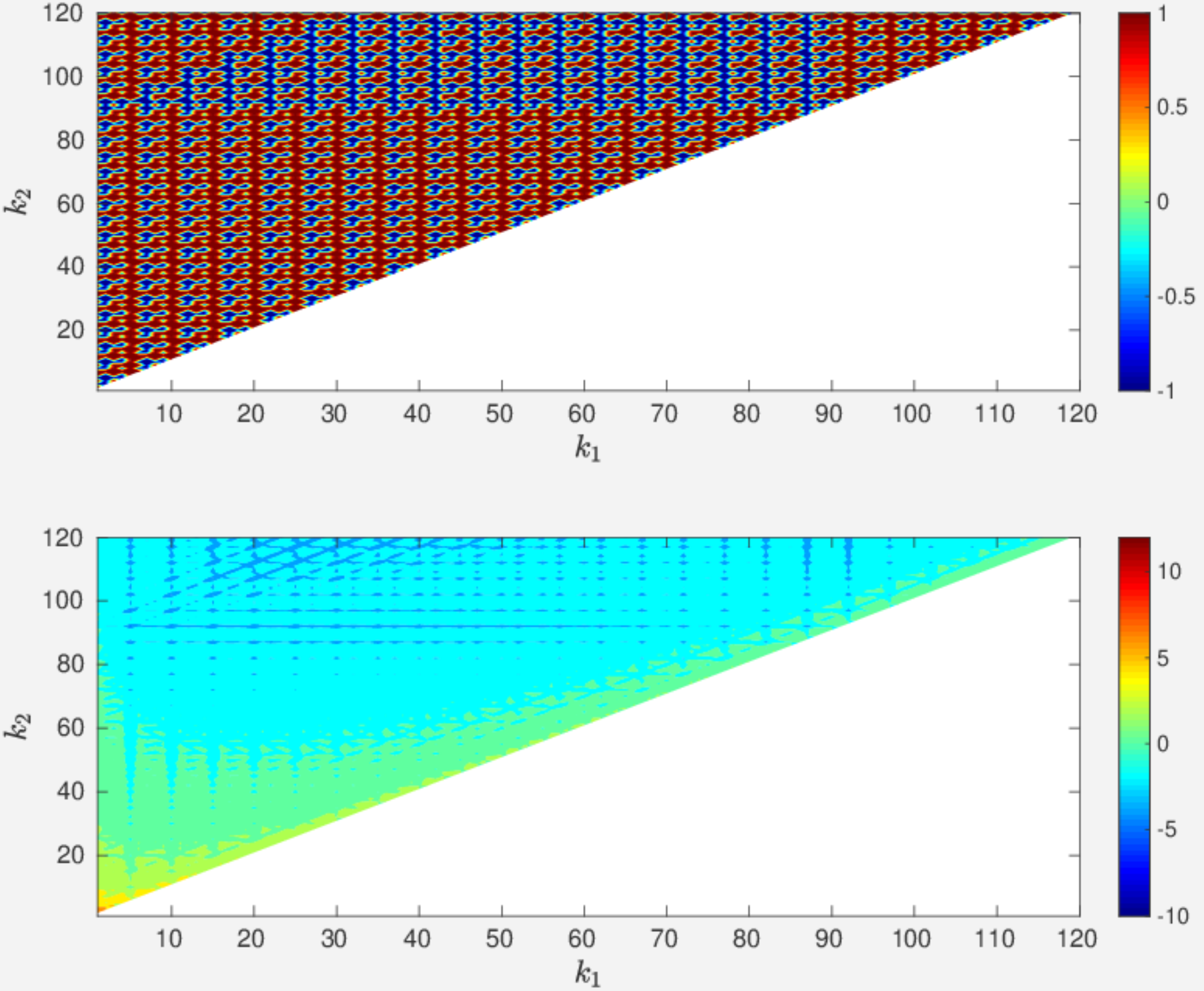}}\quad
\subfigure[$t_2 = 0.176$]{\includegraphics[width=0.5\textwidth]{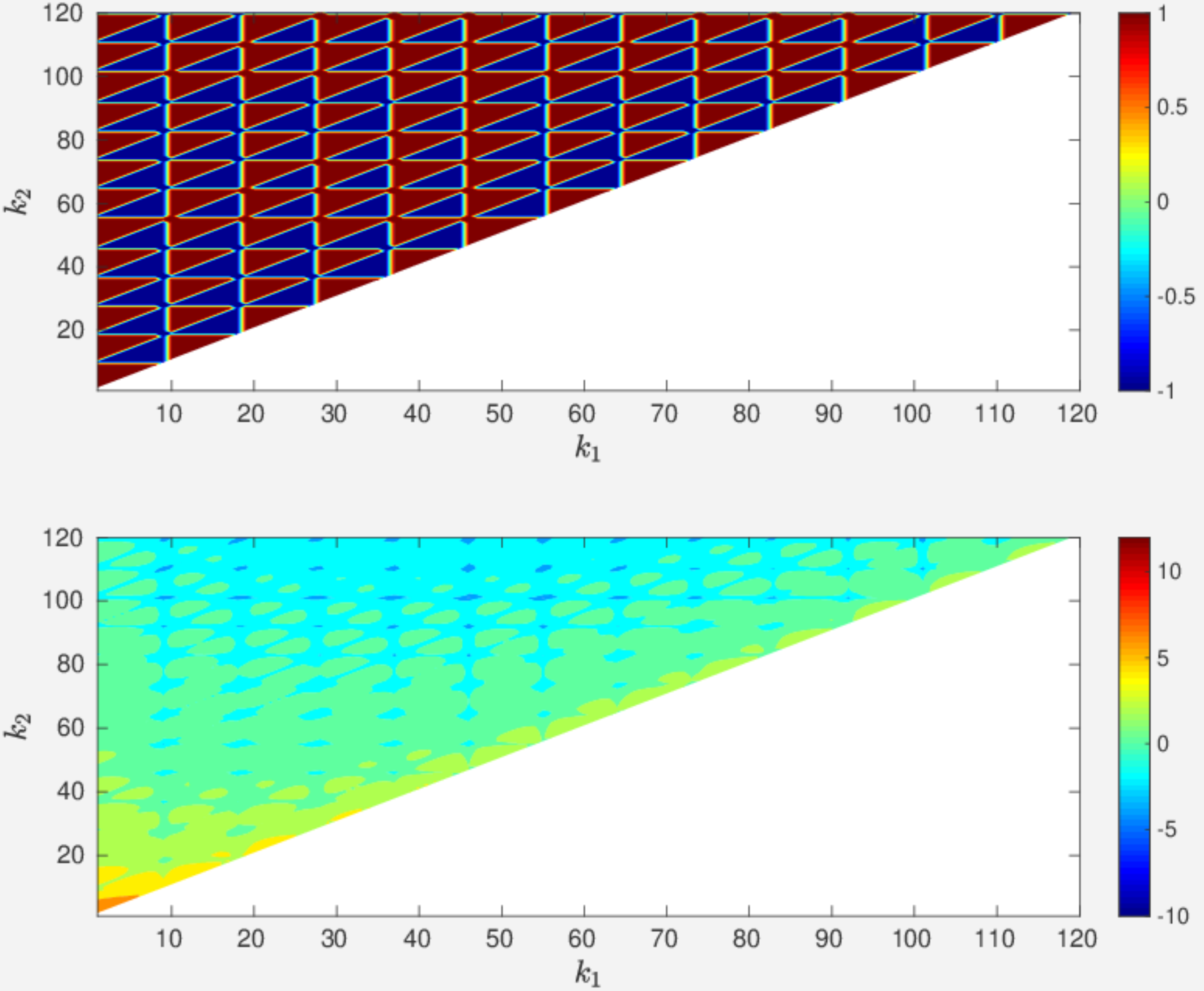}}} \\
\bigskip\bigskip
\mbox{\subfigure[$t_3 = 0.211$]{\includegraphics[width=0.5\textwidth]{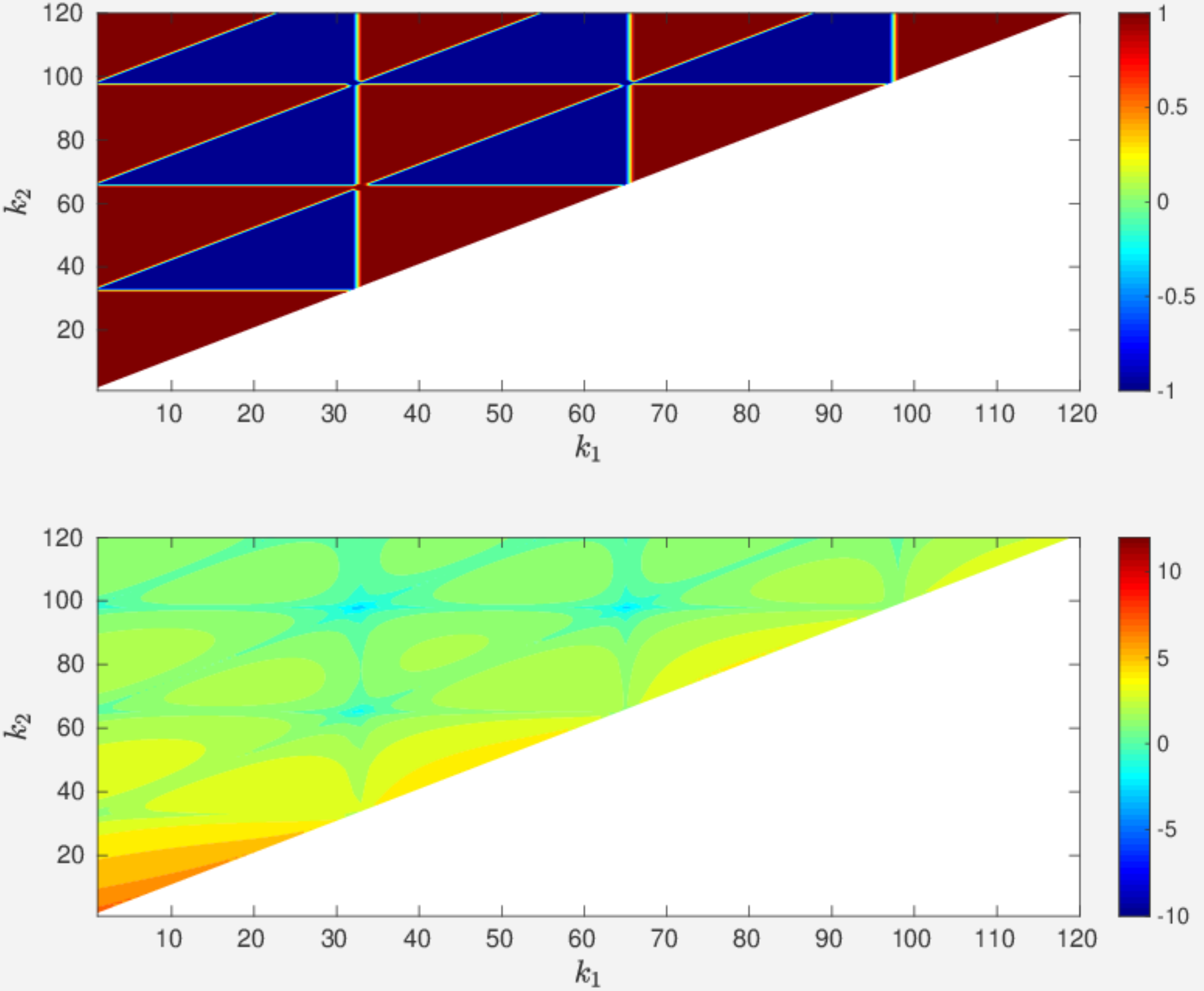}}\quad
\subfigure[$t_4 = 0.224$]{\includegraphics[width=0.5\textwidth]{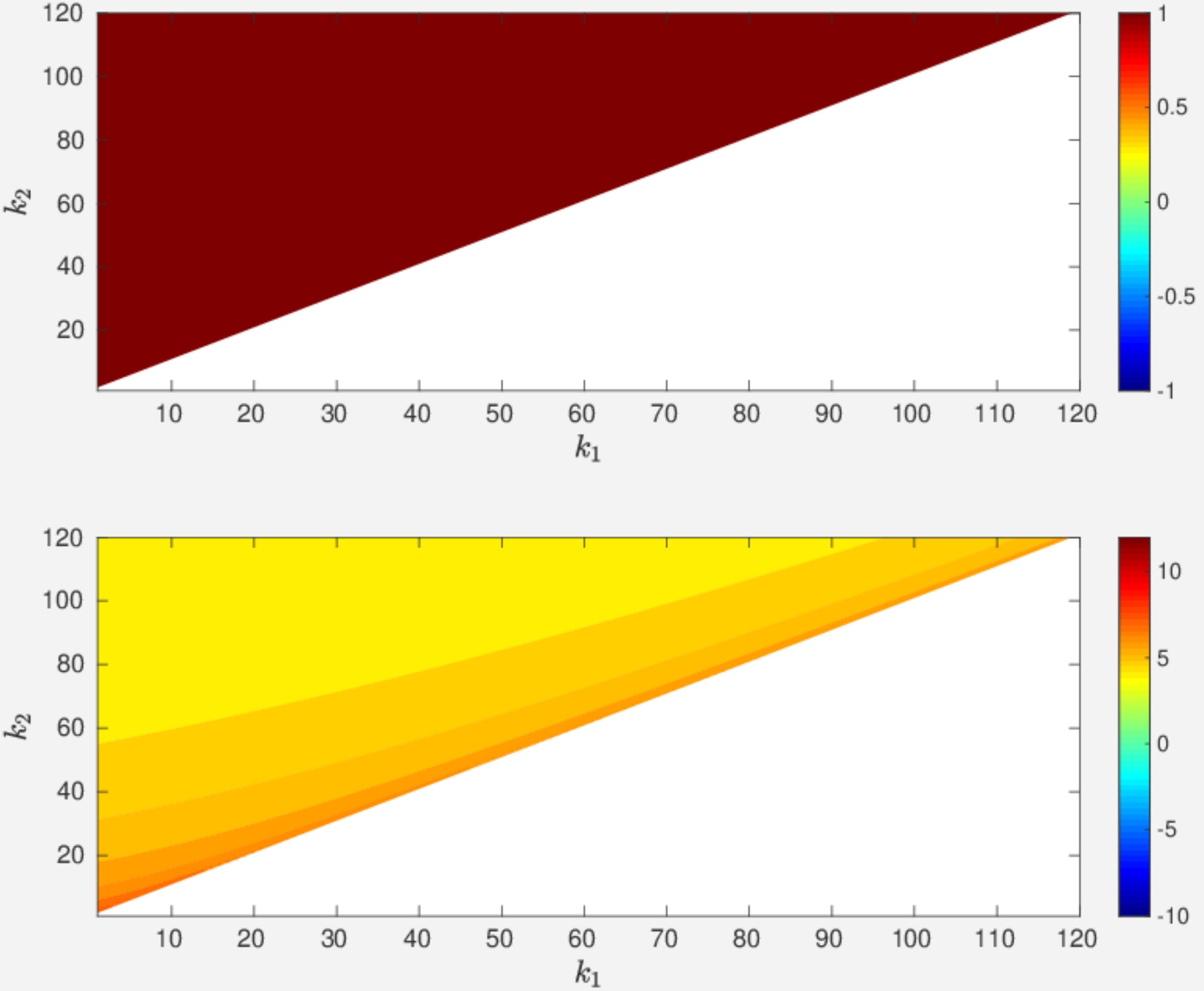}}}
\caption{Dependence of (top panels)
  $\sin\left(\varphi_{k_1,k_2}^{k_3}\right)$ and (bottom panels)
  $\log_{10} \left(4 \pi k_1 \, {a}_{k_1} {a}_{k_2} {a}_{k_3} \right)$
  on the wavenumbers $(k_1,k_2)$ at the indicated times $t_i \in
  (0,t^*)$, $i=1,2,3,4$, during the evolution of the inviscid Burgers
  flow with the extreme initial condition \eqref{eq:Bu0ext0}. An
  animated version of this figure is available
  \href{https://www.youtube.com/watch?v=wcSKBZhpHMc}{on-line}.}
\label{fig:tri_fl}
\end{center}
\end{figure}

In order to provide additional {interesting} insights about the
triadic interactions in the inviscid Burgers flow with the extreme
initial condition \eqref{eq:Bu0ext0}, in figure \ref{fig:tri_fl} we
show the quantity $\sin\left(\varphi_{k_1,k_2}^{k_3}\right)$ and the
logarithm of the corresponding amplitude-dependent factor $\log_{10}
\left(4 \pi k_1 \, {a}_{k_1} {a}_{k_2} {a}_{k_3} \right)$,
cf.~\eqref{eq:flux_sin}, as functions of the wavenumbers $k_1$ and
$k_2$ (where $k_3 = - (k_1 + k_2)$) at different times $t \in (0,t^*)$
(an animated version of this figure is available
\href{https://www.youtube.com/watch?v=wcSKBZhpHMc}{on-line}).  We note
that the triads with phases aligned such that $\varphi_{k_1,k_2}^{k_3}
= \pm \frac{\pi}{2}$ exhibit an intriguing ``triangular'' pattern in
the half plane $\{ (k_1,k_2), \ k_1 \in \ZZ^+, \ k_2 \ge k_1 \}$ as
the blow-up time $t^*$ is approached (for better resolution, we have
only shown the range $k_1,k_2 \in [1,120]$, however, the same pattern
is also evident for larger values of the wavenumbers). More
specifically, this pattern can be empirically described as
\begin{equation}
\forall k_1 \in \ZZ^+, \quad
\varphi_{k_1,k_2}^{k_3}(t) = \left\{ 
\begin{alignedat}{2}
& \frac{\pi}{2}, & \quad & k_2 \in \bigcup_{p=0}^{\infty} 
\left[k_1 + p\kappa(t),  \left\lfloor \frac{k_1}{\kappa(t)} \right\rfloor + (p+1)\kappa(t) \right] \\
- & \frac{\pi}{2}, & & \text{otherwise} 
\end{alignedat} \right. ,
\label{eq:triangle}
\end{equation}
where the "offset" $\kappa(t) \in \ZZ^+$ increases without bound as $t
\rightarrow t^*$ (it can be interpreted as the length of the side of
an elementary triangle in the pattern). This latter property signals
an increasing synchronization of the triads as the blow-up time is
approached. This pattern is also evident, albeit less pronounced due
to a wider spread of values, in the corresponding flux contributions
shown in the bottom panels in figure \ref{fig:tri_fl}. We add that
triad interactions in the viscous Burgers flow corresponding to the
extreme initial data $\tuuE$ (not shown here) exhibit similar trends
to those visible in figure \ref{fig:tri_fl}. On the other hand, the
inviscid Burgers flow with the unimodal initial condition
\eqref{eq:Bu0sin} (also not shown here) reveals a nearly instantaneous
synchronization with $\kappa(t)$ becoming unbounded very quickly, such
that the distributions of $\sin\left(\varphi_{k_1,k_2}^{k_3}\right)$
and $\log_{10} \left(4 \pi k_1 \, {a}_{k_1} {a}_{k_2} {a}_{k_3}
\right)$ at all times $t \in (0,t^*)$ resemble those in figure
\ref{fig:tri_fl}(d). We remark that similar patterns have also been
observed in viscous Burgers flows in the presence of stochastic
forcing \citep{MurrayPhDthesisNew,MurrayBustamante2018}.

\FloatBarrier

\subsection{Results for 3D Navier-Stokes flows}
\label{sec:results3D}


{We present the results for 3D Navier-Stokes flows by first briefly
  discussing classical diagnostics, namely, the time evolution of
  Fourier spectra, enstrophy and energy fluxes, applied to flows with
  the three initial conditions (extreme, generic and Taylor-Green,
  cf.~\S\,\ref{sec:problems}) before moving on to analyze these flows
  in terms of the diagnostics based on the triad interactions
  introduced in \S\,\ref{sec:diagn3D}.}  Figure \ref{fig:NSEt}(a)
shows the initial ($t=0$) and final ($t=\tTE$) {energy spectra
  \eqref{eq:ek}} for the three initial conditions. As for the initial
spectra, the extreme case has a distribution across a range of
wavenumbers and the spectrum of the initial condition in the generic
case is by definition identical. On the other hand, the initial
spectrum in the Taylor-Green case has energy concentrated in a small
subset only of low wavenumbers. As for the final spectra, in all cases
the energy has been transferred to higher wavenumbers, with the
Taylor-Green case showing a less dramatic transfer to high
wavenumbers; the extreme and generic cases look quite similar, but the
extreme case shows slightly larger transfers to high wavenumbers.
However, none of these energy spectra exhibits a wide inertial range
which is due to a relatively low Reynolds number. It is evident from a
comparison with the $k^{-5/3}$ line in figure \ref{fig:NSEt}(a) that
an inertial range spans the wavenumbers $k=1$ to $k=3$, while a
dissipative range involves wavenumbers $k\geq10$ (this {will be
  further justified by analysis of the fluxes shown} in figure
\ref{fig:FluxTf}).  The helicity spectra for the three types of flows
considered are identically zero due to the fact that the flows are odd
under the parity transformation. Figure \ref{fig:NSEt}(b) shows the
time evolution of the total enstrophy for each of the three cases. It
is evident that the Taylor-Green case shows a {modest only growth of
  enstrophy}, while the generic case is quite close to the extreme
case.

\begin{figure}
\begin{center}
\mbox{\subfigure[]{\includegraphics[width=0.45\textwidth]{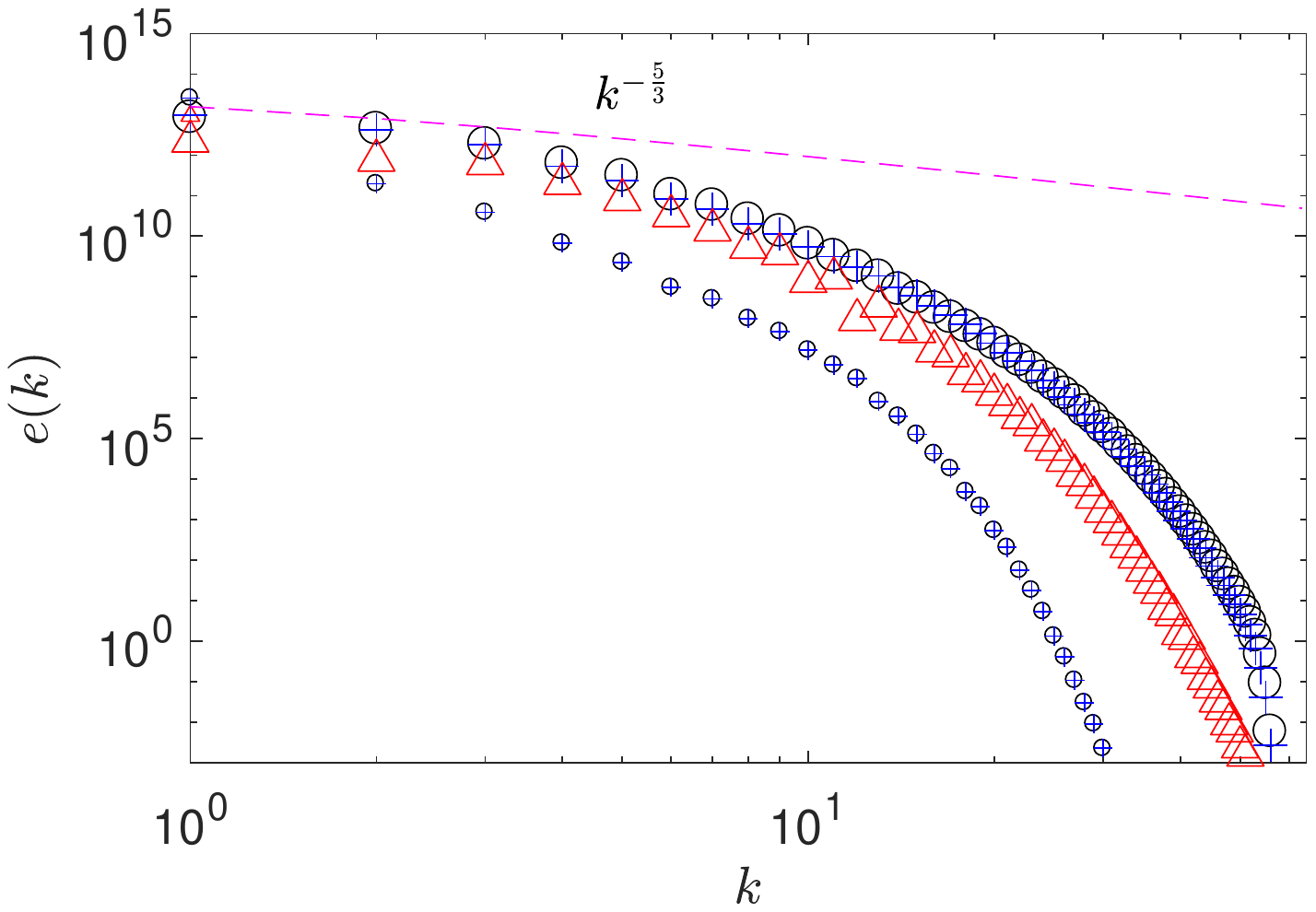}}\qquad
\subfigure[]{\includegraphics[width=0.5\textwidth]{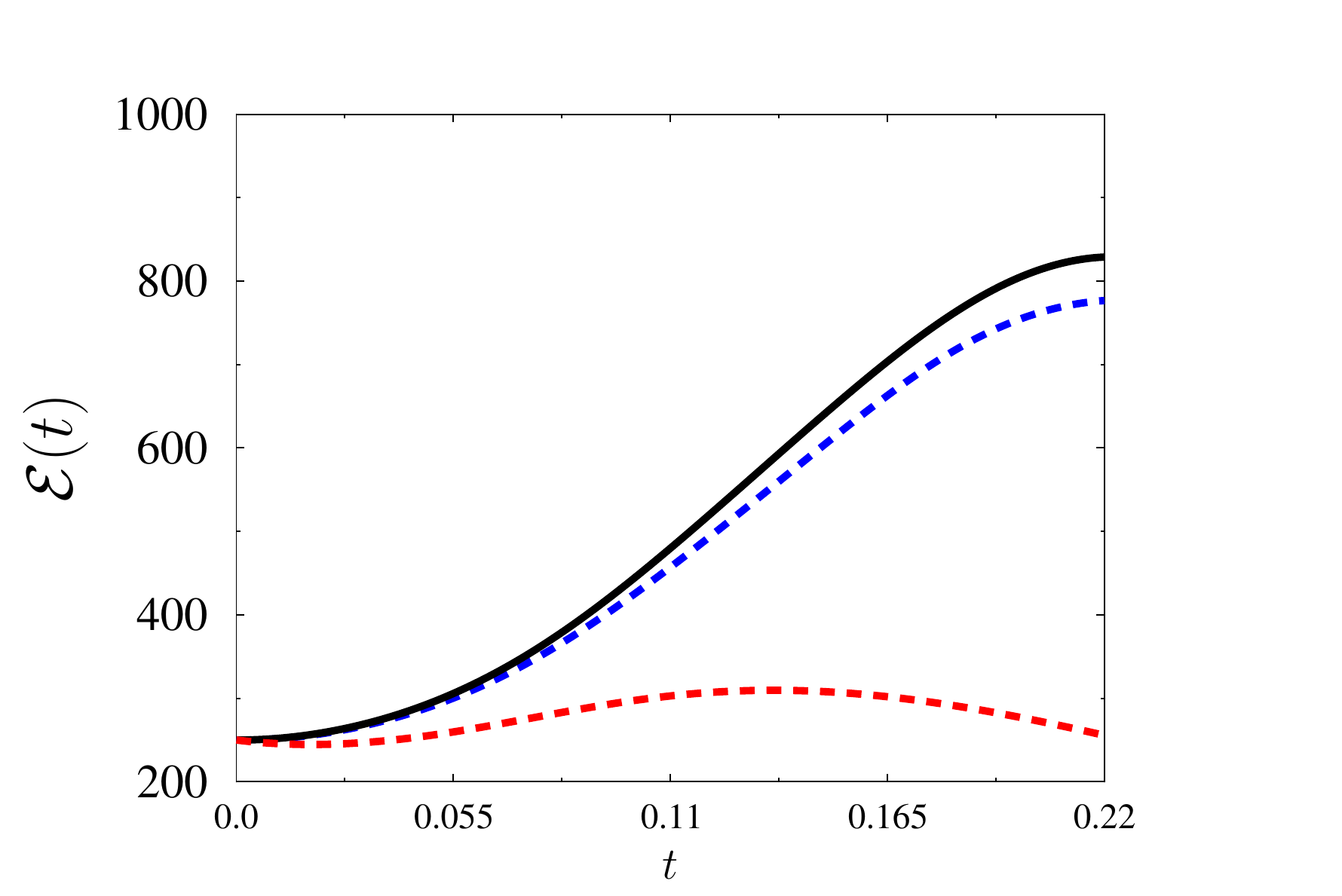}}}
\caption{(a) {Energy spectra \eqref{eq:ek}} of the solution of
  the Navier-Stokes equation \eqref{eq:3DNS_u} with the extreme
  initial condition (black circles), the generic initial condition
  (blue crosses) and the Taylor-Green initial condition (red
  triangles) at the initial time $t=0$ (small symbols) and the time
  $t=\tTE$ (large symbols) when the largest enstrophy is achieved for
  $\E_0 = 250$.  {The straight line represents the expression
    $k^{-5/3}$.} (b) The time history of the enstrophy $\E(t)$ in the
  solution of the Navier-Stokes equation \eqref{eq:3DNS_u} with the
  extreme initial condition (black solid line), the generic initial
  condition (blue dashed line) and the Taylor-Green initial condition
  (red dashed line) for $\E_0 = 250$.}
\label{fig:NSEt}
\end{center}
\end{figure}

{We now discuss the time evolution of the energy flux in the
  flows with the three initial conditions considered.  In each case
  the spectral flux of energy is evaluated as}
\begin{equation}
\Pi(t,k) := \Pi_{{\mathcal C}_k}(t), \qquad {\mathcal C}_k = \{\mathbf{k} \in \mathbb{Z}^3 \setminus \{0\} \quad | \quad 
|\mathbf{k}| > k \}\,,
\label{eq:Ck}
\end{equation}
where $\Pi_{{\mathcal C}}$ is defined in its basic form in equation
\eqref{eq:engy_flux_basic}. To separate positive and negative
contributions {representing, respectively, the direct and inverse
  energy cascades,} we decompose the flux as $\Pi(t,k) = \Pi^+(t,k) -
\Pi^-(t,k)$, where $\Pi^+(t,k), \Pi^-(t,k) \geq 0$ and $\Pi^+(t,k)
\Pi^-(t,k) = 0$ for all wavenumbers $k$ and times $t$. These
quantities are plotted in figures \ref{fig:NSflux},
\ref{fig:NSflux_generic} and \ref{fig:NSflux_TG} for the extreme,
generic and Taylor-Green case.  {We remark that for practical reasons
  the data in these plots is obtained directly by evaluating the
  nonlinear term in the Fourier representation of the Navier-Stokes
  equation \eqref{eq:3DNS_uA} rather than by computing the sums in
  \eqref{eq:engy_flux_basic}.}

\begin{figure}
\begin{center}
\mbox{\subfigure[]{\includegraphics[width=0.45\textwidth]{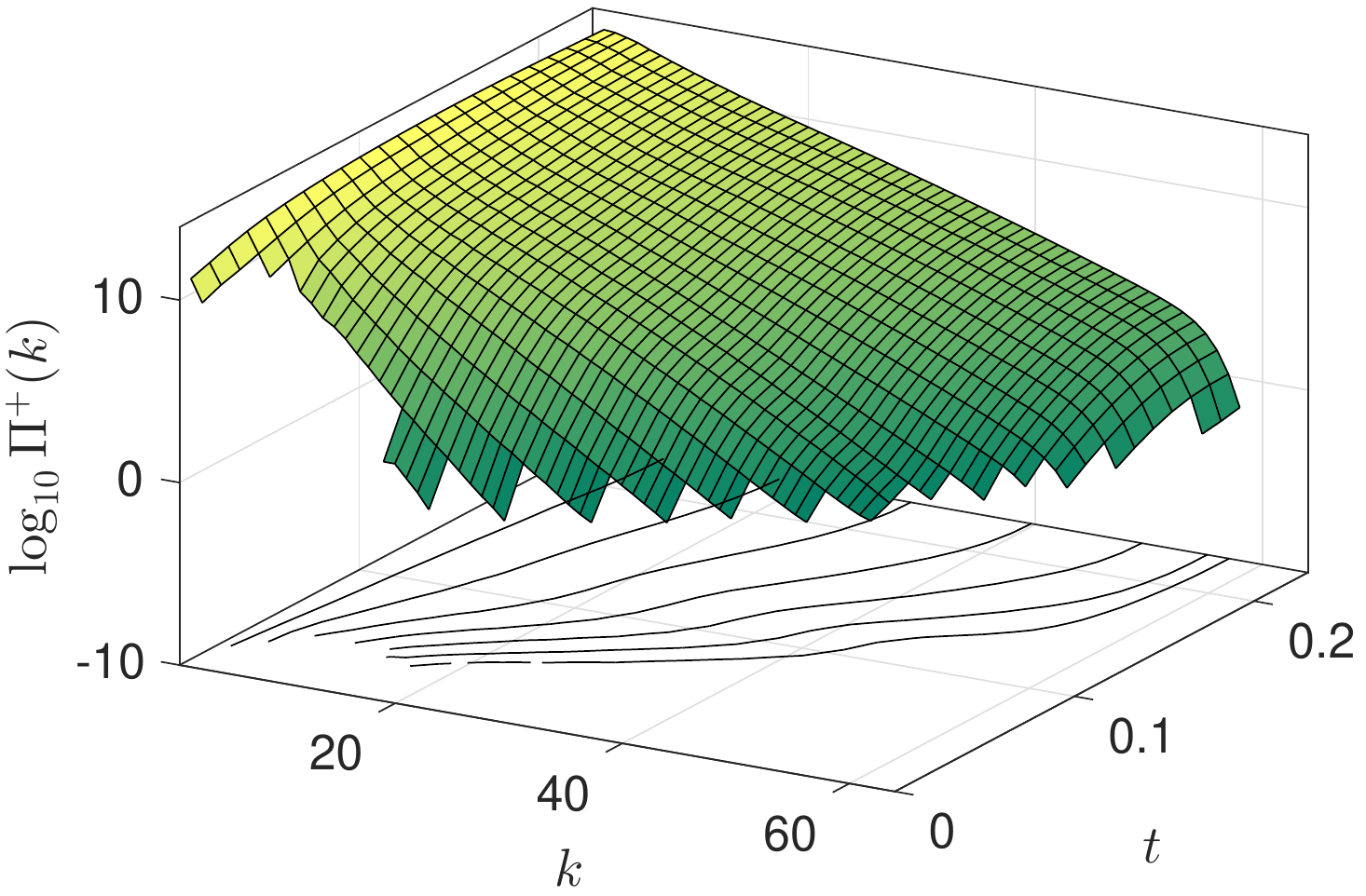}}\qquad
\subfigure[]{\includegraphics[width=0.45\textwidth]{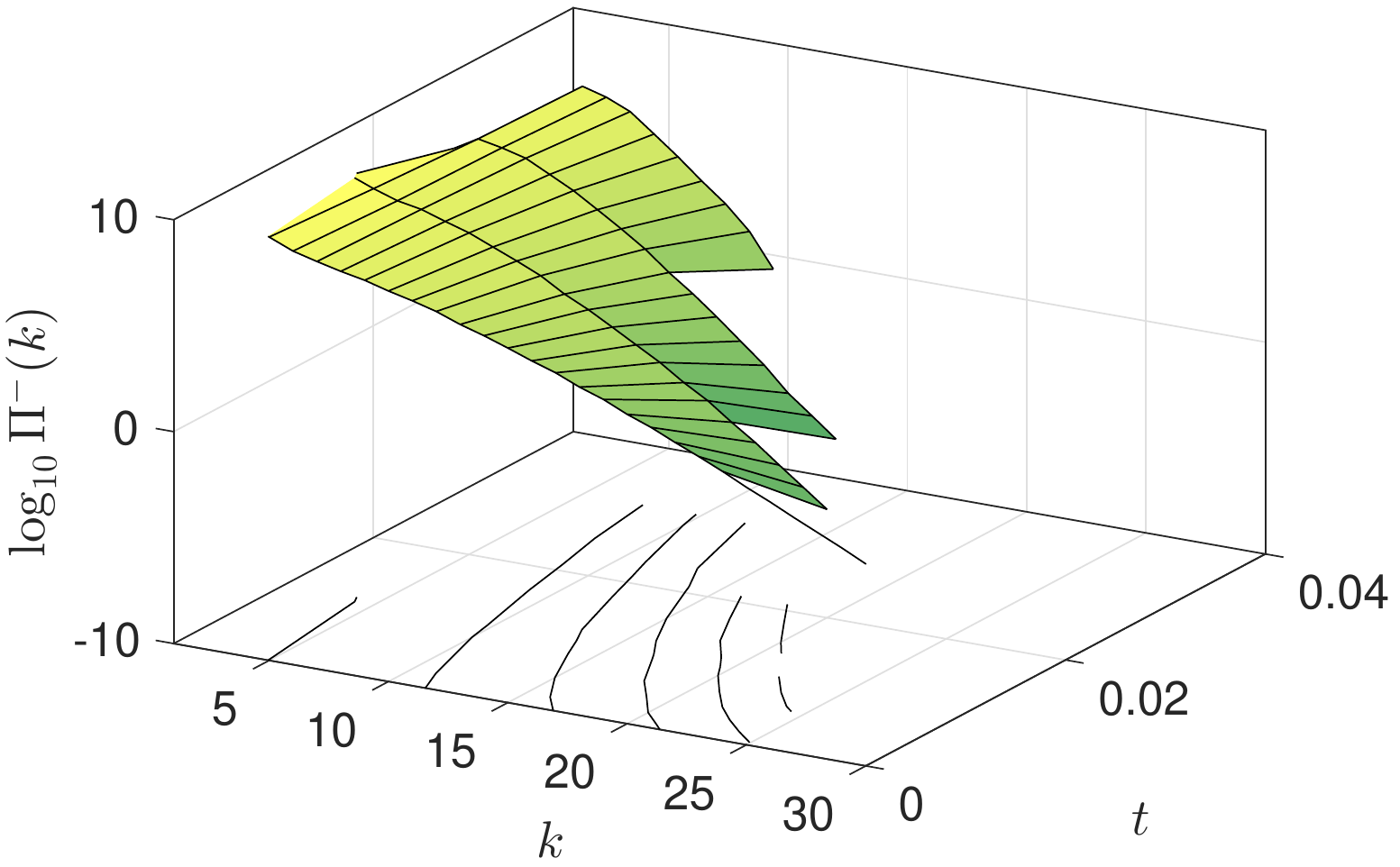}}}
\caption{{[Extreme case:]} (a) Positive and (b) negative part of
  flux $\Pi(t,k)$ as function of the time $t \in [0,\tTE]$ and the
  wavenumber $k$ in the solution of the Navier-Stokes equation with
  the indicated initial condition. {Contours in the $(t,k)$ plane
    are the level sets of $\Pi^+(t,k)$ and $\Pi^-(t,k)$ corresponding
    to the values $10^0,10^2,\dots,10^{12}$.}}
\label{fig:NSflux}
\medskip
\mbox{\subfigure[]{\includegraphics[width=0.45\textwidth]{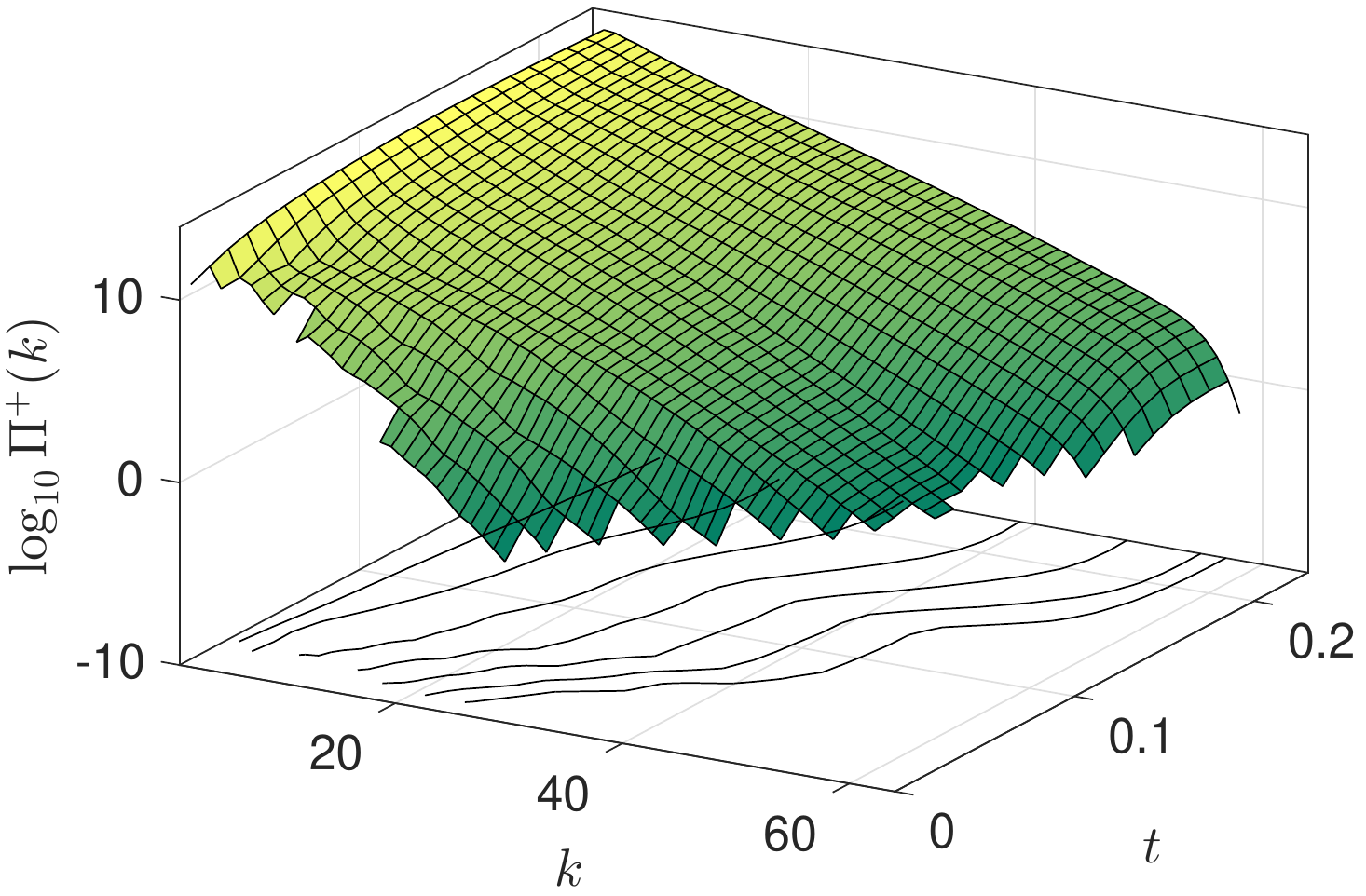}}\qquad
\subfigure[]{\includegraphics[width=0.45\textwidth]{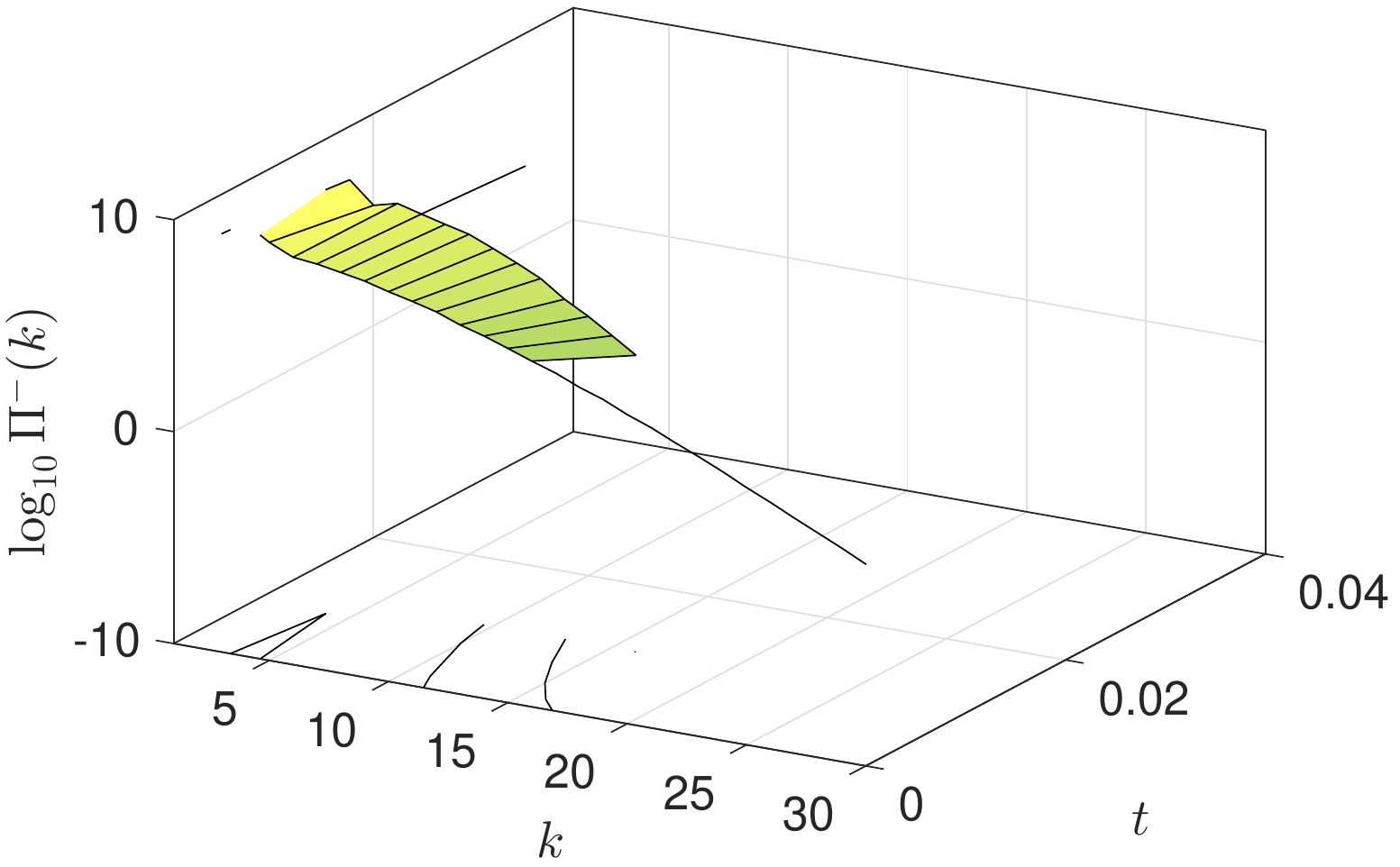}}}
\caption{{[Generic case:]} See the caption of figure \ref{fig:NSflux} for details.}
\label{fig:NSflux_generic}
\medskip
\includegraphics[width=0.45\textwidth]{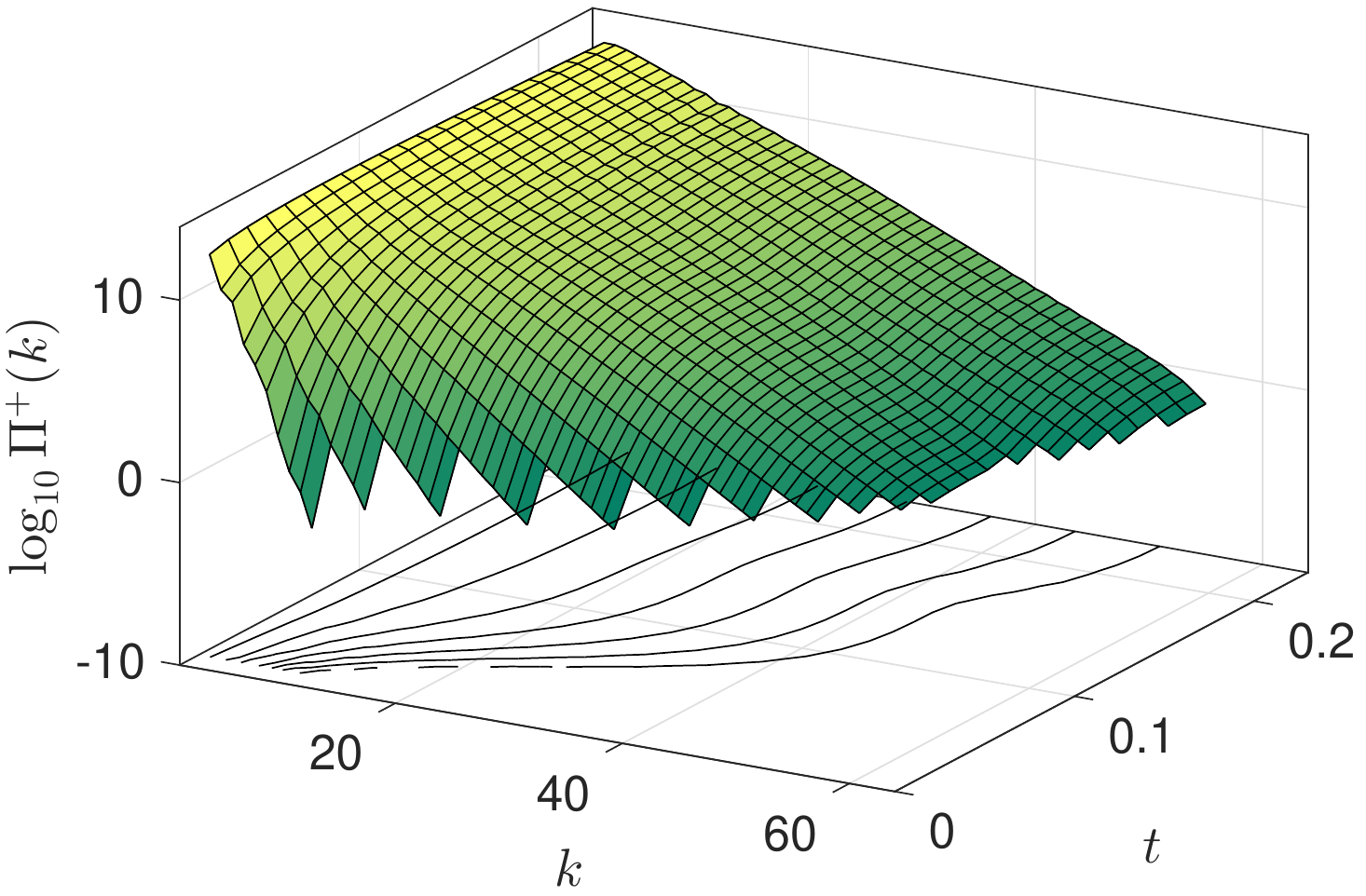}
\caption{{[Taylor-Green case:]} Positive part of flux $\Pi(t,k)$
  as function of the time $t \in [0, {\tTE}]$ and the wavenumber $k$ in
  the solution of the Navier-Stokes equation with the Taylor-Green
  initial condition. The negative part of flux $\Pi(t,k)$ vanishes for
  the Taylor-Green case. {Contours in the $(t,k)$ plane are the
    level sets of $\Pi^+(t,k)$ corresponding to the values
    $10^0,10^2,\dots,10^{12}$.}}
\label{fig:NSflux_TG}
\end{center}
\end{figure}

Figures \ref{fig:NSflux}(a) and \ref{fig:NSflux}(b) show,
respectively, positive and negative fluxes for the case of the extreme
initial condition. It is evident that at early times there is a
coherent negative flux at small wavenumbers.
For a fixed time $t$, closely packed level sets represent a strong
dependence on the wavenumber $k$.
It is evident that, as the time approaches the end of the time window,
{the level sets become more spread out which corresponds to fluxes
shifting towards larger wavenumbers $k$ and to a tendency towards a
uniform flux.}

\begin{figure}
\begin{center}
\includegraphics[width=0.45\textwidth]{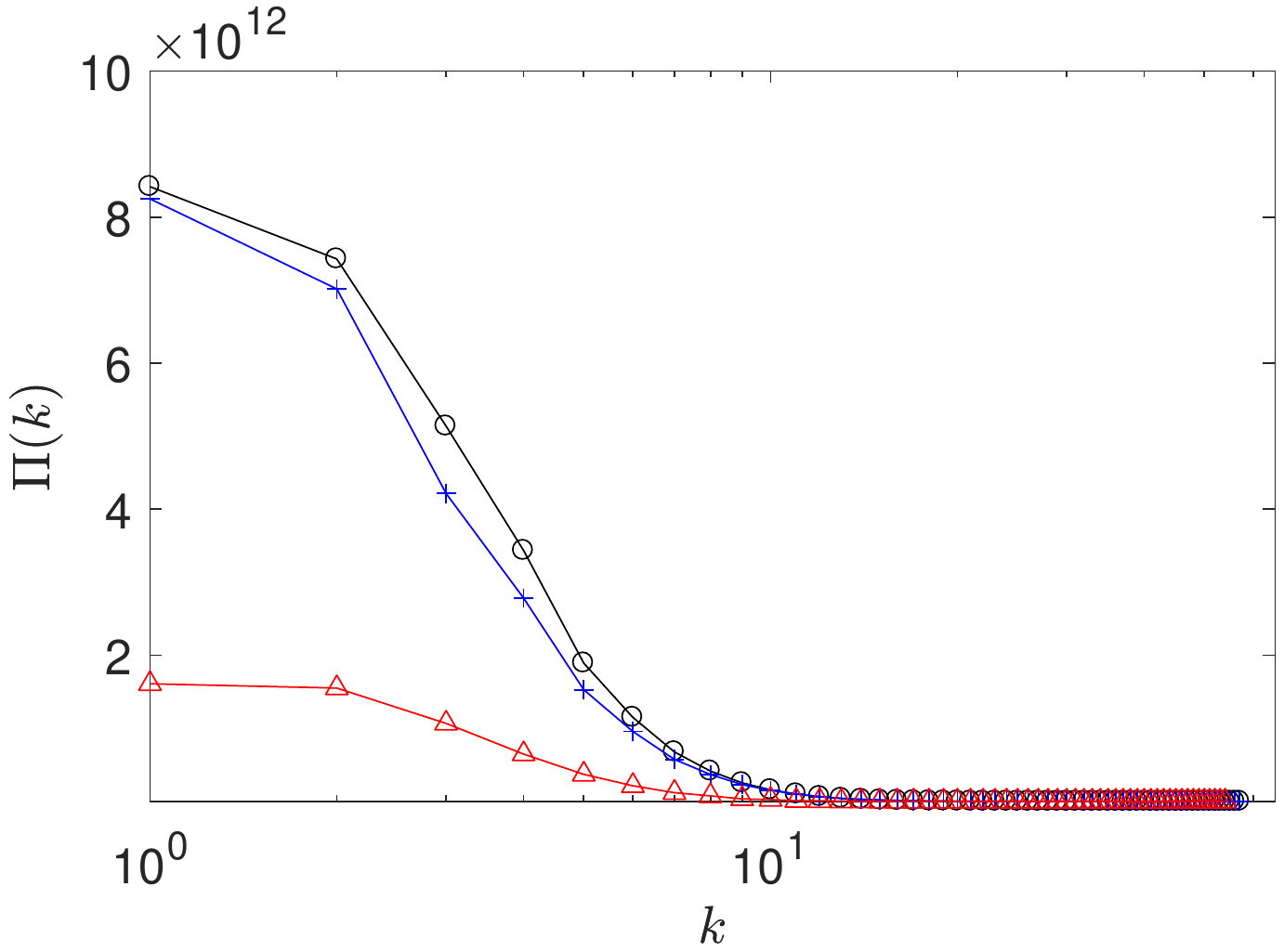}
\caption{{Flux $\Pi(k)$ as a function of the wavenumber $k$ at
    the final time $t=\tTE$ in the solution of the Navier-Stokes
    equation \eqref{eq:3DNS_u} with the extreme initial condition
    (black circles), the generic initial condition (blue crosses) and
    the Taylor-Green initial condition (red triangles).}}
\label{fig:FluxTf}
\end{center}
\end{figure}

In a similar way, figures \ref{fig:NSflux_generic}(a) and
\ref{fig:NSflux_generic}(b) show, respectively, positive and negative
fluxes for the case of the generic initial condition. While the
behaviour of the fluxes in this case appears similar to the extreme
case, there are several differences of quantitative and qualitative
nature. Comparing the negative flux contributions in figures
\ref{fig:NSflux}(b) and \ref{fig:NSflux_generic}(b), it is evident
that while {both cases initially show} negative flux at small
wavenumbers, this flux is short-lived in the generic case.  Comparing
now the positive flux contributions in figures \ref{fig:NSflux}(a) and
\ref{fig:NSflux_generic}(a), we see that initially the contours are
more spread out in the generic case.  This is solely due to the
randomization of the phases of the generic initial condition,
cf.~\S\,\ref{sec:generic}, which takes away the coherence required to
{obtain localized fluxes}.  Second, {the dependence of the
  flux level sets on time} is more erratic in the generic case, while
in the extreme case there is a clear coherent {shift of the level
  sets} towards large wavenumbers.

Figure \ref{fig:NSflux_TG} shows the positive fluxes for the case of
the Taylor-Green initial conditions (there are no negative fluxes in
this case). One observes a rapid initial spreading of the level sets,
mainly due to the fact that the Taylor-Green initial condition has its
energy concentrated at small wavenumbers. After this initial phase,
an intermediate regime occurs characterized by a quasi-steady state
where the level sets change very little in time. Then, a final growth
phase occurs which is however much milder than in the previous two
cases. This can be inferred from the facts that: (i) the level set
representing the smallest flux values reaches a lower wavenumber, and
(ii) the level sets are less spread out at the end of the time window.

{From} a comparison of the time evolution of the positive fluxes
for all cases, cf.~figures \ref{fig:NSflux}(a),
\ref{fig:NSflux_generic}(a) and \ref{fig:NSflux_TG}, we notice that
both the generic and Taylor-Green cases display clear {periods of
  stagnation} in the time evolution of the level sets corresponding to
large wavenumbers.  They occur at intermediate times $t=0.75$--$0.125$
in the generic case and $t=0.1$--$0.15$ in the Taylor-Green case. This
is another piece of evidence for the irregularity of the flux
evolution across wavenumbers in these cases: when flux is steady
(instead of growing in time) {near a certain} wavenumber $k_0$, there
is more time for the energy to dissipate before it can flow to small
scales. In contrast, the extreme case shows a smooth flux evolution
across wavenumbers, without {stagnation}, which seems to facilitate
the late-time maximization of the enstrophy.

The fluxes at the final time $t=\tTE$ are compared between the three
cases in figure \ref{fig:FluxTf} which confirms that the flux in the
flow with the extreme initial condition is the most intense for all
values of $k$ with the flux in the flow with the generic initial
condition being only slightly smaller. The flux in the flow with the
Taylor-Green initial condition is much weaker, approximately by a
factor of four. The plots of flux in figure \ref{fig:FluxTf} confirm
our earlier assessment (cf.~figure \ref{fig:NSEt}) that an inertial
range is found between wavenumbers $k=1$ and $k=3$ while a dissipative
range is found for wavenumbers $k\geq 10$, as there is only about
$1\%$ of the original flux left at that {latter} scale.  More
quantitatively, the Kolmogorov length scale can be estimated via $\eta
:= (\nu^3/2\nu\mathcal{E}_0)^{1/4}$, with $\nu=0.01$ and
$\mathcal{E}_0=250$, giving $\eta \approx 0.02115$, and in terms of
wavenumbers, $k_{\eta} := 2 \pi / \eta \approx 297 = 3.48 k_{\max}$.
Thus, at $k=10$ we get a ratio $k_{\eta}/k \lessapprox 30$ which can
be considered in the dissipative range.
  
{We now proceed to discuss the flows considered in terms of the
  diagnostics based on the triad interactions introduced in
  \S\,\ref{sec:diagn3D}.  With reference to the fluxes towards small
  spatial scales shown in figures \ref{fig:NSflux},
  \ref{fig:NSflux_generic} and \ref{fig:NSflux_TG}, we choose two
  representative {cases: (i) fluxes towards the wavenumber region with
    $k>2$ corresponding to the inertial range of wavenumbers, i.e.,}
  those wavenumbers where energy is flowing due to nonlinear
  interactions and energy dissipation is negligible; (ii) fluxes
  towards {the wavenumber region with $k>10$ corresponding to the
    dissipative range of wavenumbers, i.e.,} those wavenumbers where
  energy is still flowing due to nonlinear interactions but energy
  dissipation dominates, with a significant proportion of the energy
  having already been dissipated along the way from $k=2$ to $k=10$.
  We begin with the extreme case and in figure \ref{fig:NSpdfkb2}}
show the time evolution of the PDFs $P_{\mathcal{C}_2}^{s_1 s_2
  s_3}(\Phi)$ of the generalized helical triad phases {participating}
in the fluxes out of the sphere {defined in the spectral space by
  $k=2$,} separately for each triad type $(s_1 s_2 s_3)$.  {These
  results should be viewed with reference to equation
  \eqref{eq:engy_flux_triad} providing an expression for
  $\Pi_{\mathcal{C}}^{s_1 s_2 s_3}$ which is the contribution of
  triads of type $(s_1 s_2 s_3)$ to the spectral energy flux into a
  set $\mathcal{C}$, here defined as $\mathcal{C} = \mathcal{C}_2 =
  \{\mathbf{k} \in \mathbb{Z}^3 \setminus \{0\} \quad | \quad
  |\mathbf{k}| > 2\}$.} The figure shows the time evolution of the
PDFs ${\mathcal{P}}_{\mathcal{C}_2}^{s_1 s_2 s_3}(\Phi)$ of four
selected generalized helical triad types, namely, the ones which
{exhibit variability of at least $\pm 5\%$ with respect to the uniform
  distribution equal} to $1/2\pi (\approx 0.16)$: $(s_1,s_2,s_3)=$
PPP, PMM, P(PM), (PM)P, where the last two triad types are of the
boundary types, corresponding to the restrictions
$|\mathbf{k}_1|=|\mathbf{k}_2|$ and $|\mathbf{k}_2|=|\mathbf{k}_3|$,
respectively.  For these triad types the PDFs vary slightly {over the
  range} from about $0.14$ to $0.18$, cf.~table
\ref{tab:PDFbounds_extreme}, third column. The only exception (and a
quite remarkable one) is the case of the boundary triad type (PM)P
whose PDF varies from about $0$ to $0.62$ (table
\ref{tab:PDFbounds_extreme}, third column and bottom row) and is
clearly concentrated near $\Phi=0$ (figure \ref{fig:NSpdfkb2}{(d)}).
{However, as figure \ref{fig:NSpdfkb2} shows clearly,} at later times
the PDFs change.  The most active one is again the boundary triad of
type (PM)P, {whose PDF} exhibits a coherent behaviour over time, with
some local maxima oscillating about $\Phi=0$ and other local maxima
oscillating about $\Phi \approx \pm 3\pi/4$. Also active are the triad
types PPP and PMM whose PDFs reveal a persistent late-time preference
for values near $\pi$ or $-\pi$ (figure \ref{fig:NSpdfkb2}(a)--(b)).
As explained in the text preceding equation
\eqref{eq:engy_flux_final}, {triad phases with values} near $0$ give
rise to positive contributions to the flux, whereas values near
$\pm\pi$ produce negative contributions to the flux. In the light of
this alone, it might appear from these figures that contributions to
the flux {from individual triads} have mixed signs, and thus the total
contribution to the flux may change sign as a function of time.
However, one has to {bear in mind that} contributions to the flux also
depend on {mode amplitudes which are constantly} changing as well.

\begin{table}
  \begin{center}
   \hspace*{-1.1cm}
    \begin{tabular}{|l|c|c|} \hline
       \backslashbox{triad}{cases}  & Fraction for $k=2$ &  Fraction for $k=10$ \\ \hline
     PPP  & $24.944\%$ & $24.986\%$  \\ 
      PPM & $24.944\%$ &  $24.986\%$ \\    
      PMP  &  $24.944\%$ &  $24.986\%$ \\
      PMM  & $24.944\%$  &  $24.986\%$ \\ 
	 P(PM)  & $0.224\%$ & $0.054\% $ \\
	 (PM)P  & $9.48\times10^{-5}\%$ & $1.03\times10^{-3}\%$ \\\hline
    \end{tabular}
  \end{center}
  \caption{{Proportion of triads of different types participating in energy flux to regions $\C_k$ in the wavenumber space, cf.~\eqref{eq:Ck}, with the indicated radii $k$. These fractions are functions of $k$ only and do not depend on the initial condition.}}
  \label{tab:percent}

  \begin{center}
   \hspace*{-1.1cm}
    \begin{tabular}{|l|c|c|c|c|} \hline
       \backslashbox{triad}{cases}  & $k=2$, PDF  & $k=2$, wPDF & $k=10$, PDF &  $k=10$, wPDF \\ \hline
     PPP  & [0.145,0.176]   &  [0.016,0.617] & \textcolor[rgb]{0.7,0.7,0.7}{[0.158,0.161]}  &  [0.068,0.561] \\ 
      PPM &  \textcolor[rgb]{0.7,0.7,0.7}{[0.158,0.161]}  &  [0.025,0.485] & \textcolor[rgb]{0.7,0.7,0.7}{[0.159,0.160]}  &  [0.086,0.355] \\    
      PMP  & \textcolor[rgb]{0.7,0.7,0.7}{[0.156,0.162]} &  [0.019,0.732] & \textcolor[rgb]{0.7,0.7,0.7}{[0.159,0.160]} &   $[0.100,0.283]$ \\
      PMM  & [0.148,0.171] & [0.008,0.780] & \textcolor[rgb]{0.7,0.7,0.7}{[0.157,0.162]} &  $[0.066,0.555]$  \\ 
	 P(PM)  & [0.150,0.167] &  [0.001,0.632] & \textcolor[rgb]{0.7,0.7,0.7}{[0.158,0.161]} &   $[0.027,0.577]$ \\
	 (PM)P  & [0.000,0.620] &  [0.000,1.537] & [0.150,0.170] &   $[0.040,0.550]$ \\\hline
    \end{tabular}
  \end{center}
  \caption{{[Extreme case:]} Upper and lower bounds on the PDFs and wPDFs of triad phases of the different types in the Navier-Stokes flow with the extreme initial data, cf.~figures \ref{fig:NSpdfkb2}--\ref{fig:NSWpdfkb10}. {Shaded} intervals (light gray) represent PDFs that are very close (within $\pm5\%$) to the uniform distribution $1/2\pi \approx 0.16$.}
  \label{tab:PDFbounds_extreme}
\end{table}


\begin{figure}
\begin{center}
\mbox{\subfigure[${\mathcal{P}}_{\mathcal{C}_2}^{PPP}(\Phi)(t)$]{\includegraphics[width=0.45\textwidth]{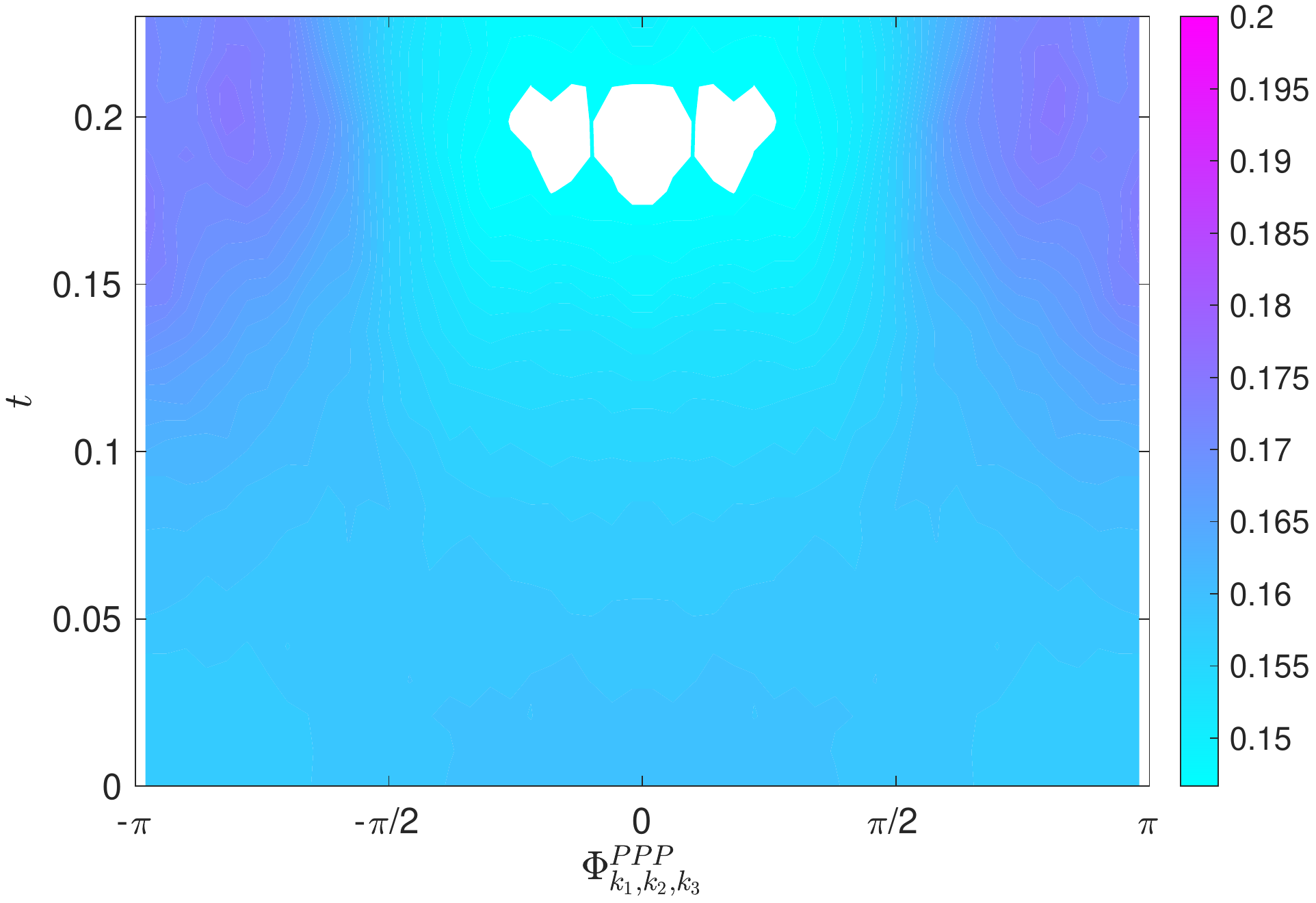}}\qquad
\subfigure[${\mathcal{P}}_{\mathcal{C}_2}^{PMM}(\Phi)(t)$]{\includegraphics[width=0.45\textwidth]{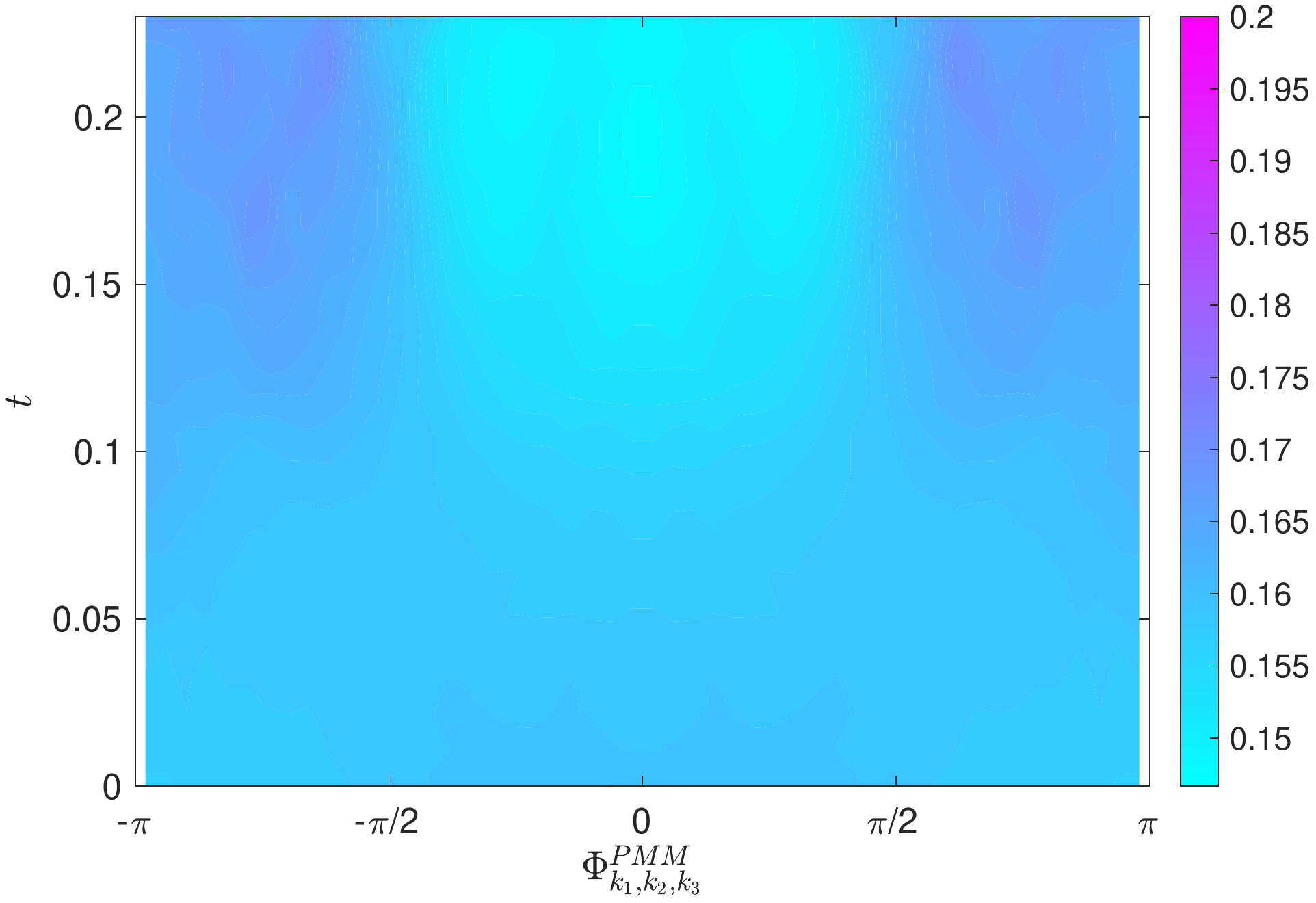}}}
\mbox{\subfigure[${\mathcal{P}}_{\mathcal{C}_2}^{P(PM)}(\Phi)(t)$]{\includegraphics[width=0.45\textwidth]{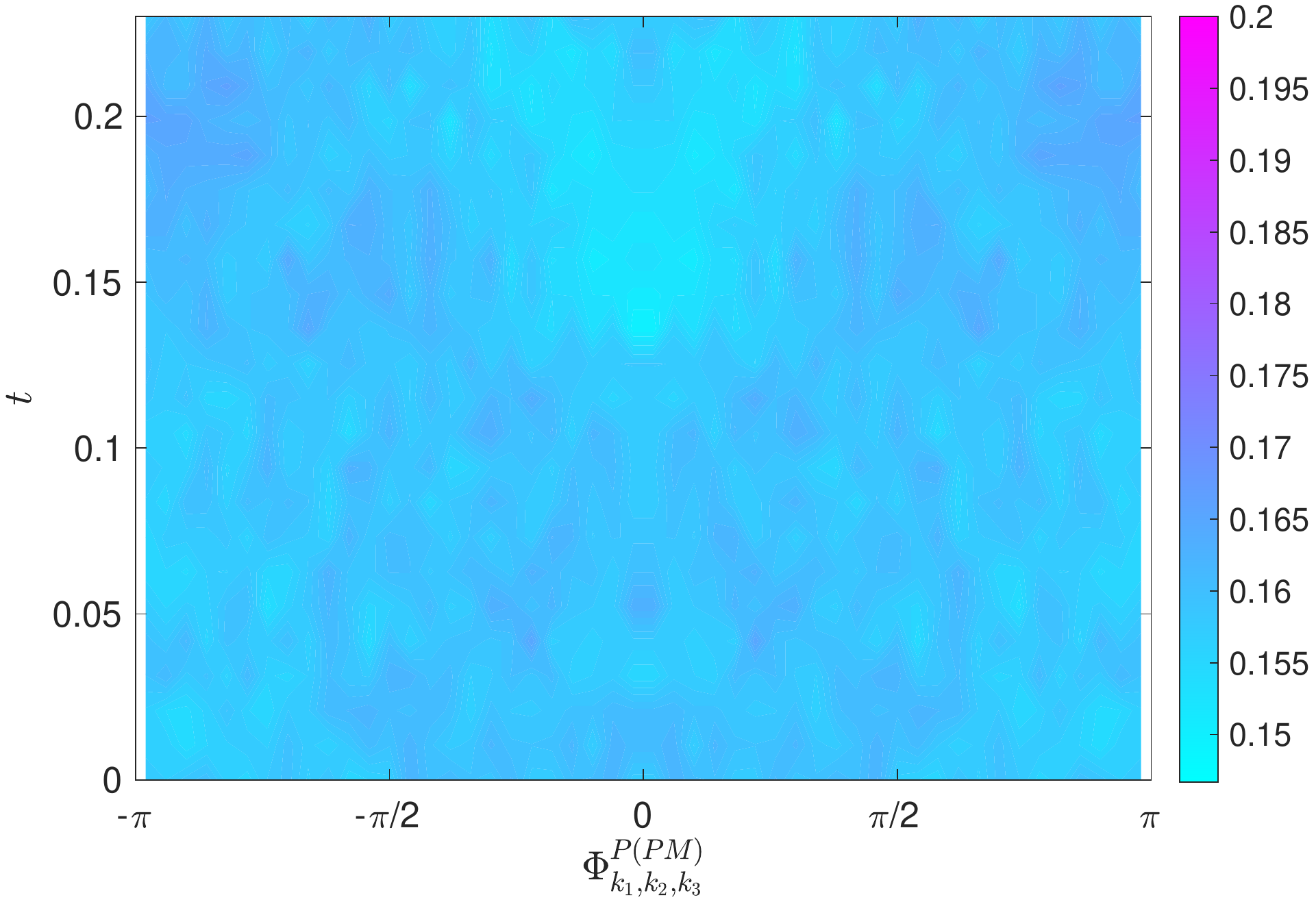}}\qquad
\subfigure[${\mathcal{P}}_{\mathcal{C}_2}^{(PM)P}(\Phi)(t)$]{\includegraphics[width=0.45\textwidth]{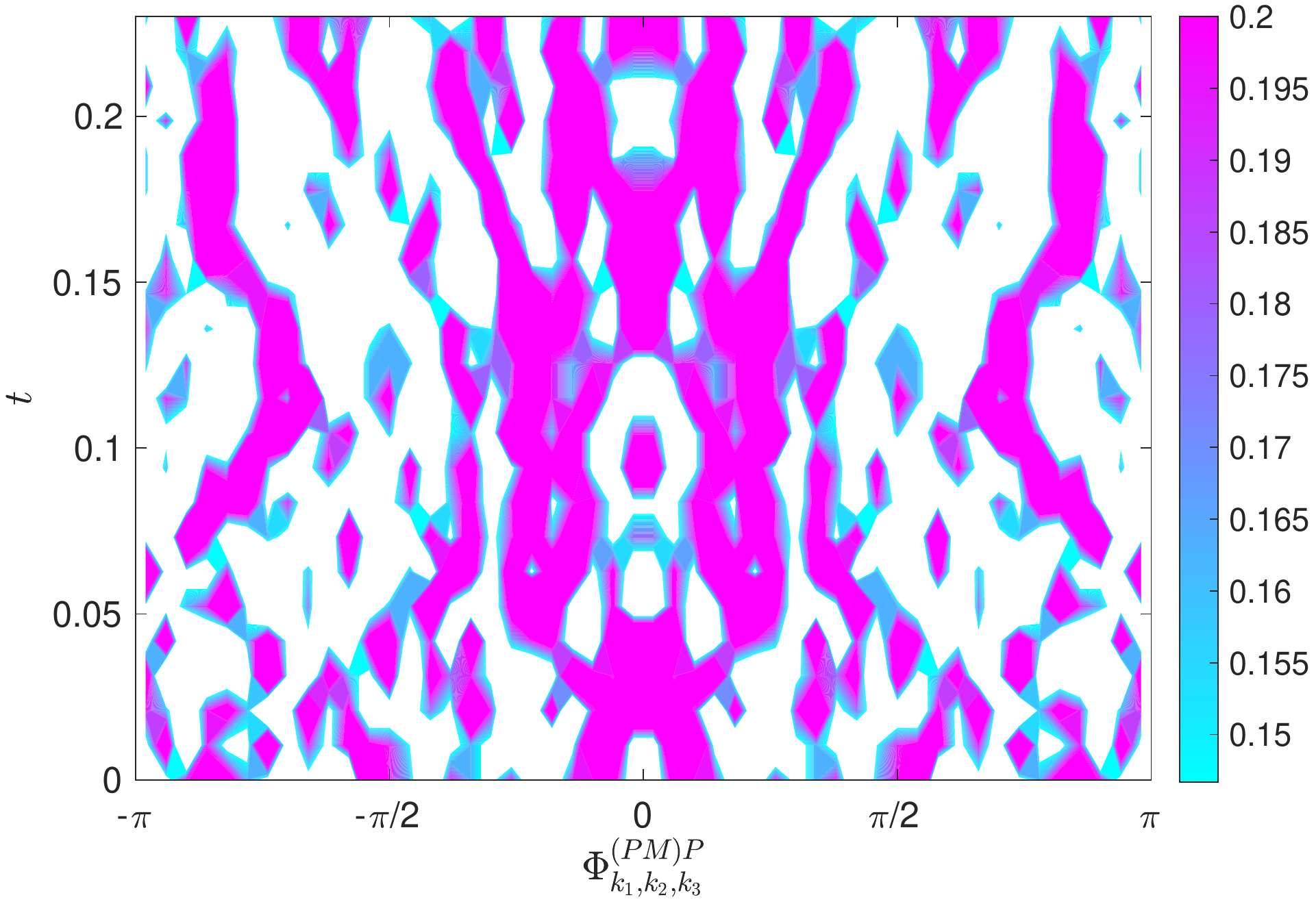}}}
\caption{{[Extreme case, $k>2$:]} Time evolution of the PDFs of the
  triad phase angle ${\mathcal{P}}_{\mathcal{C}_2}^{s_1 s_2
    s_3}(\Phi)$ {for triads of different types in the
    Navier-Stokes flow with the extreme initial data.  Only triad
    types with PDFs revealing variability of at least $\pm5\%$ with
    respect to the uniform distribution $1/2\pi \approx 0.16$
    (corresponding to light blue color in the plots), are shown:} (a)
  ``PPP'', (b) ``PMM'', (c) ``P(PM)'' and (d) ``(PM)P''.}
\label{fig:NSpdfkb2}
\end{center}
\end{figure}

\begin{figure}
\begin{center}
\mbox{\subfigure[Extreme case: Flux towards $k>2$]{\includegraphics[width=0.45\textwidth]{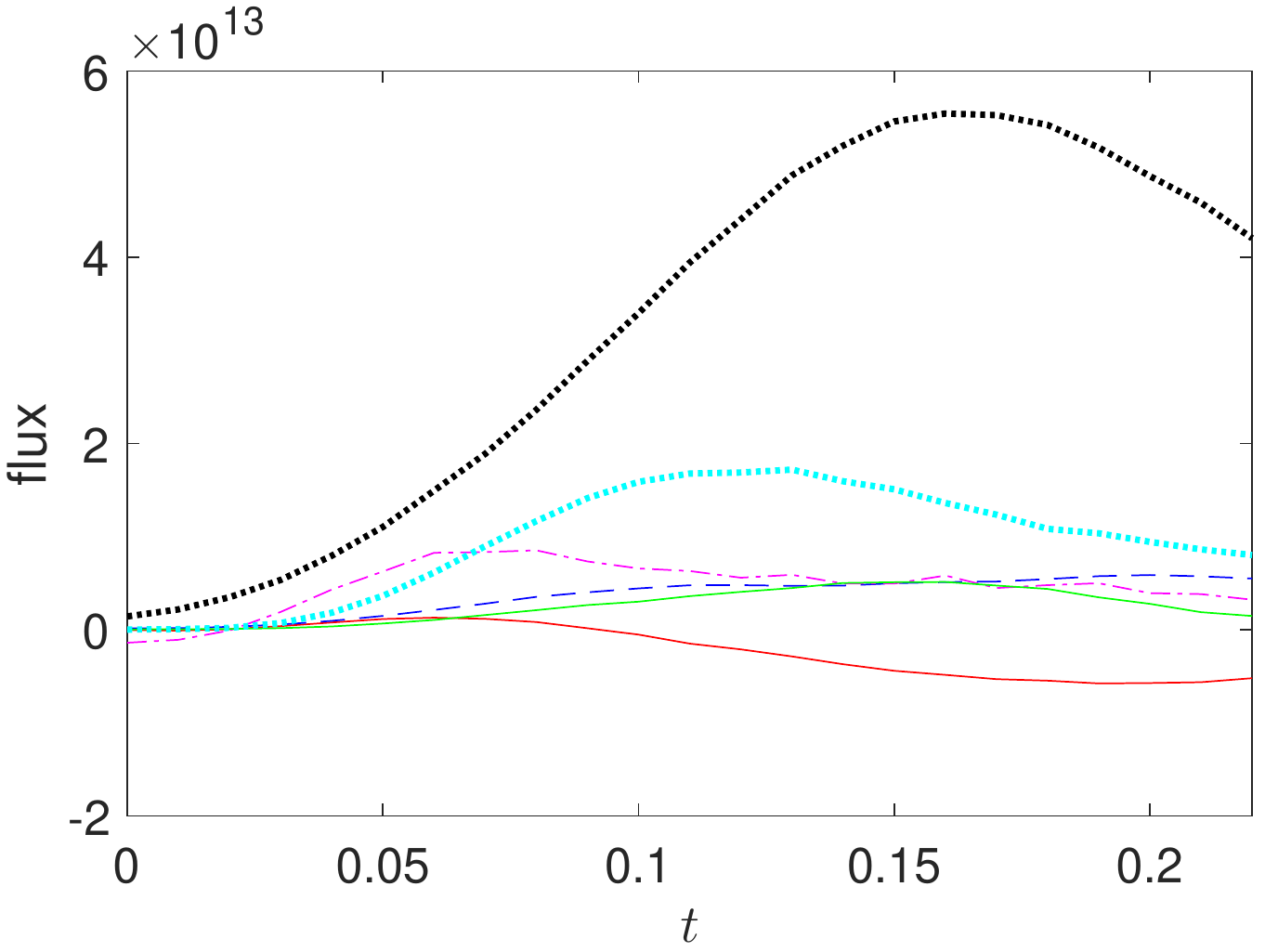}}\qquad
\subfigure[Extreme case: Flux towards $k>10$]{\includegraphics[width=0.45\textwidth]{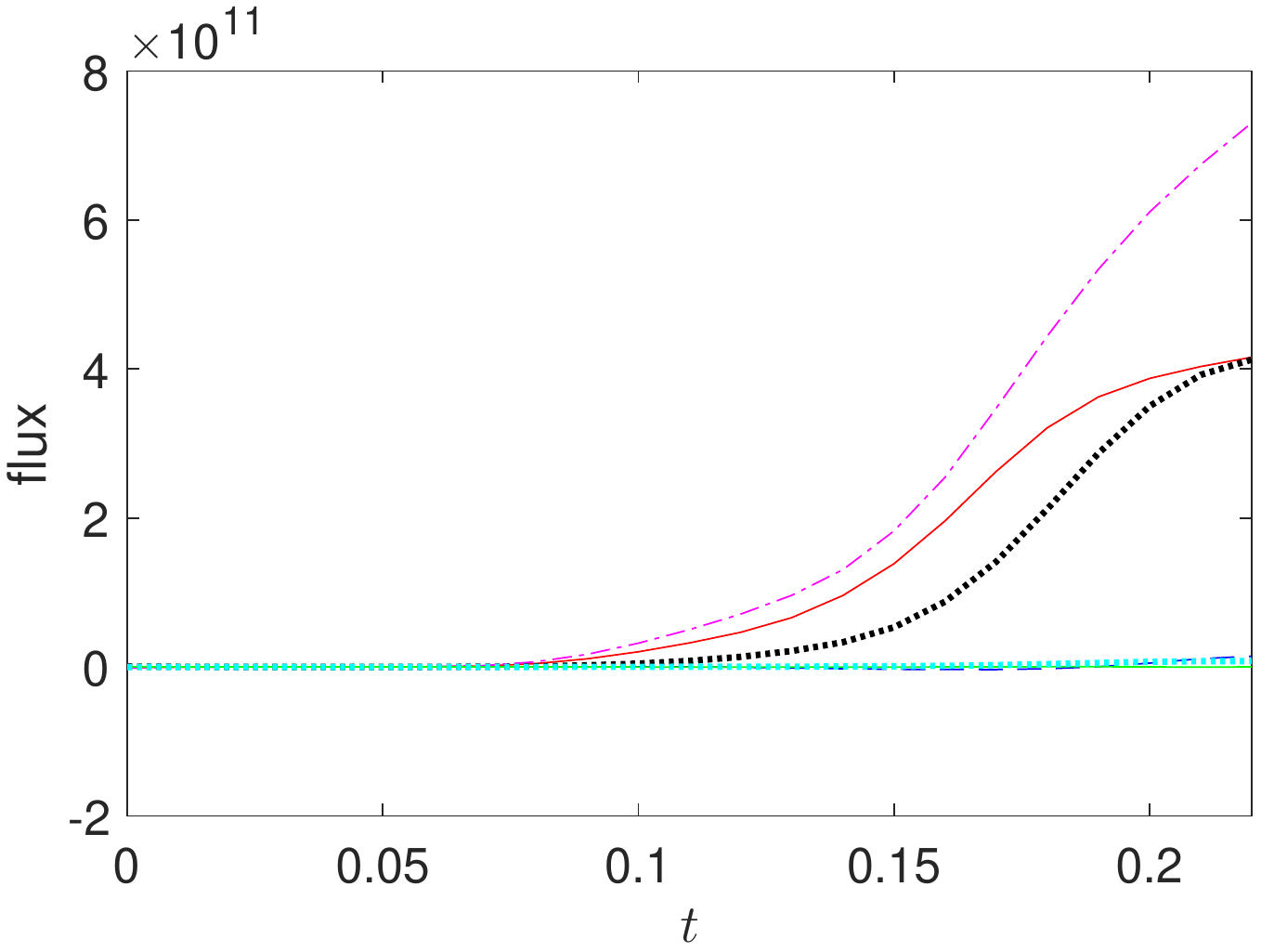}}}\\
\mbox{\subfigure[Generic case: Flux towards $k>2$]{\includegraphics[width=0.45\textwidth]{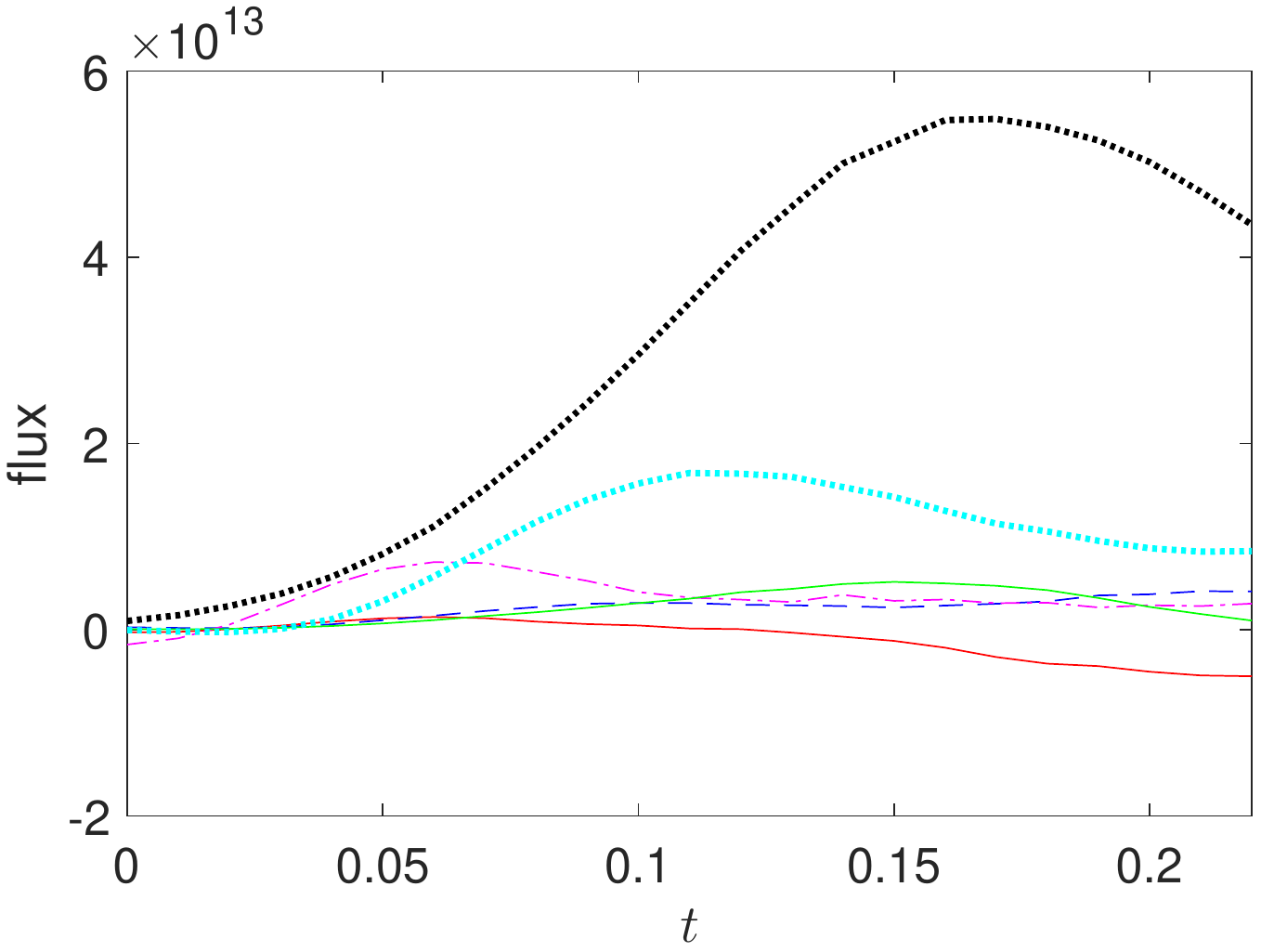}}\qquad
\subfigure[Generic case: Flux towards $k>10$]{\includegraphics[width=0.45\textwidth]{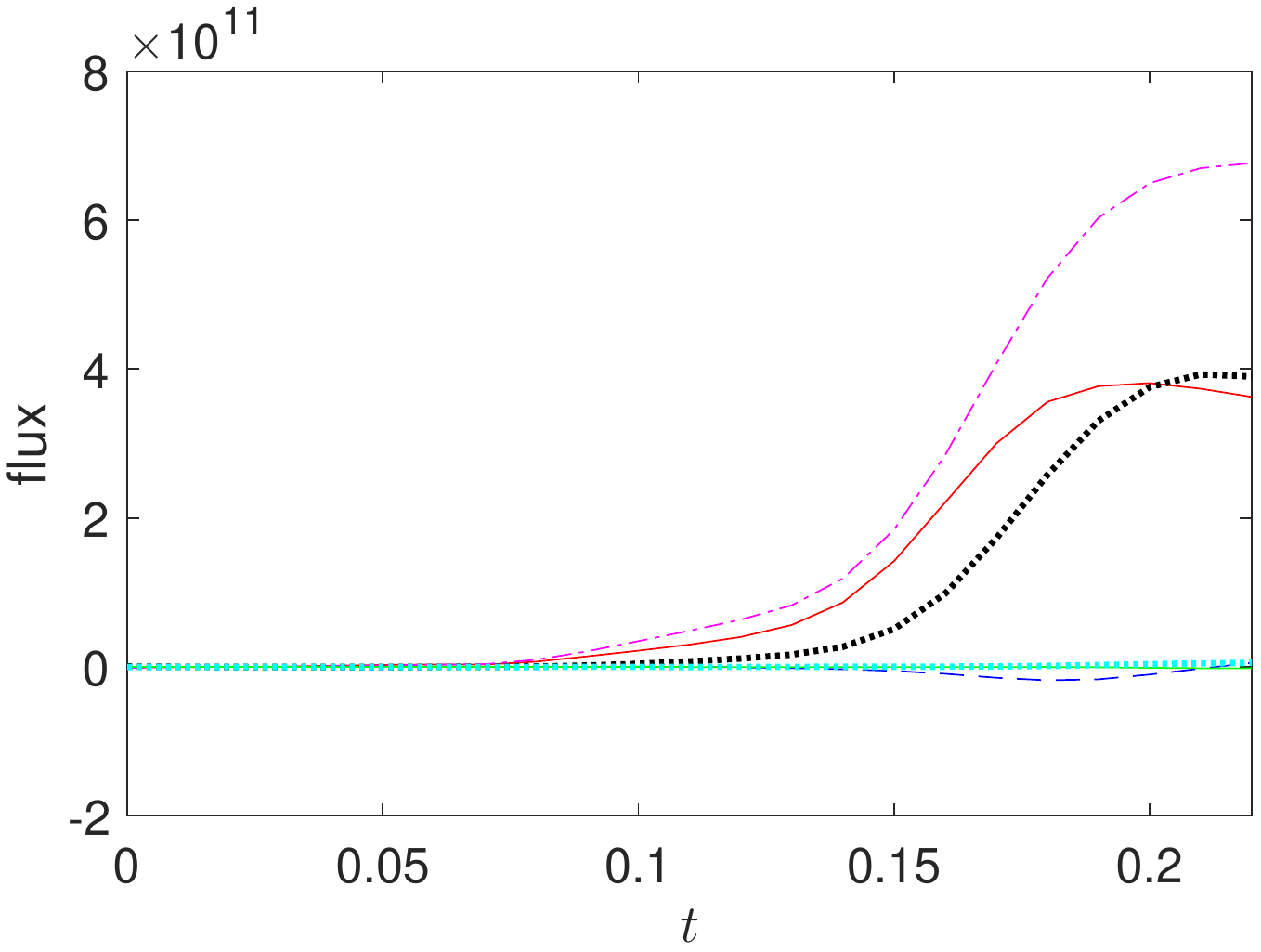}}}
\mbox{\subfigure[Taylor-Green case: Flux towards $k>2$]{\includegraphics[width=0.45\textwidth]{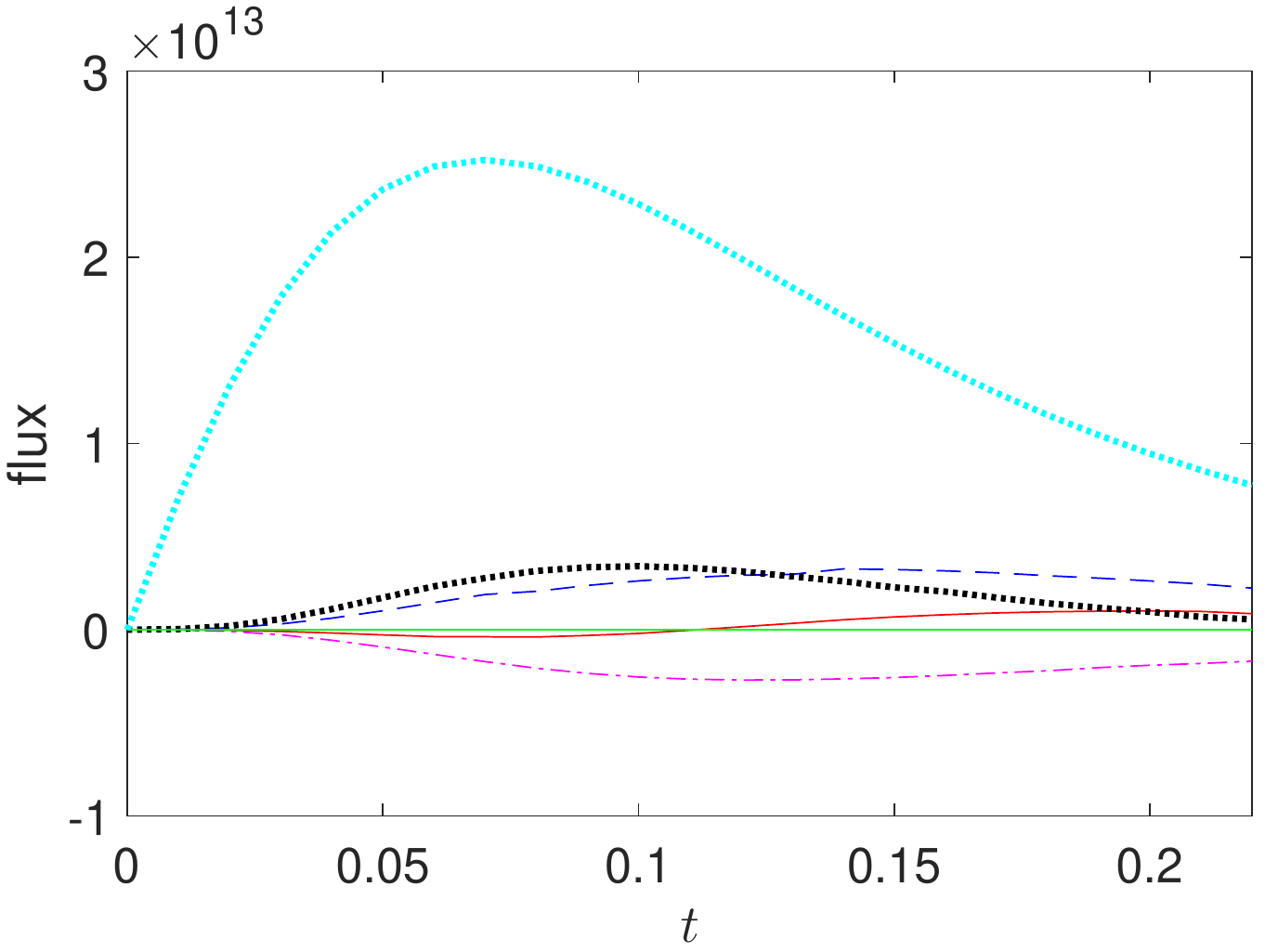}}\qquad
\subfigure[Taylor-Green case: Flux towards $k>10$]{\includegraphics[width=0.45\textwidth]{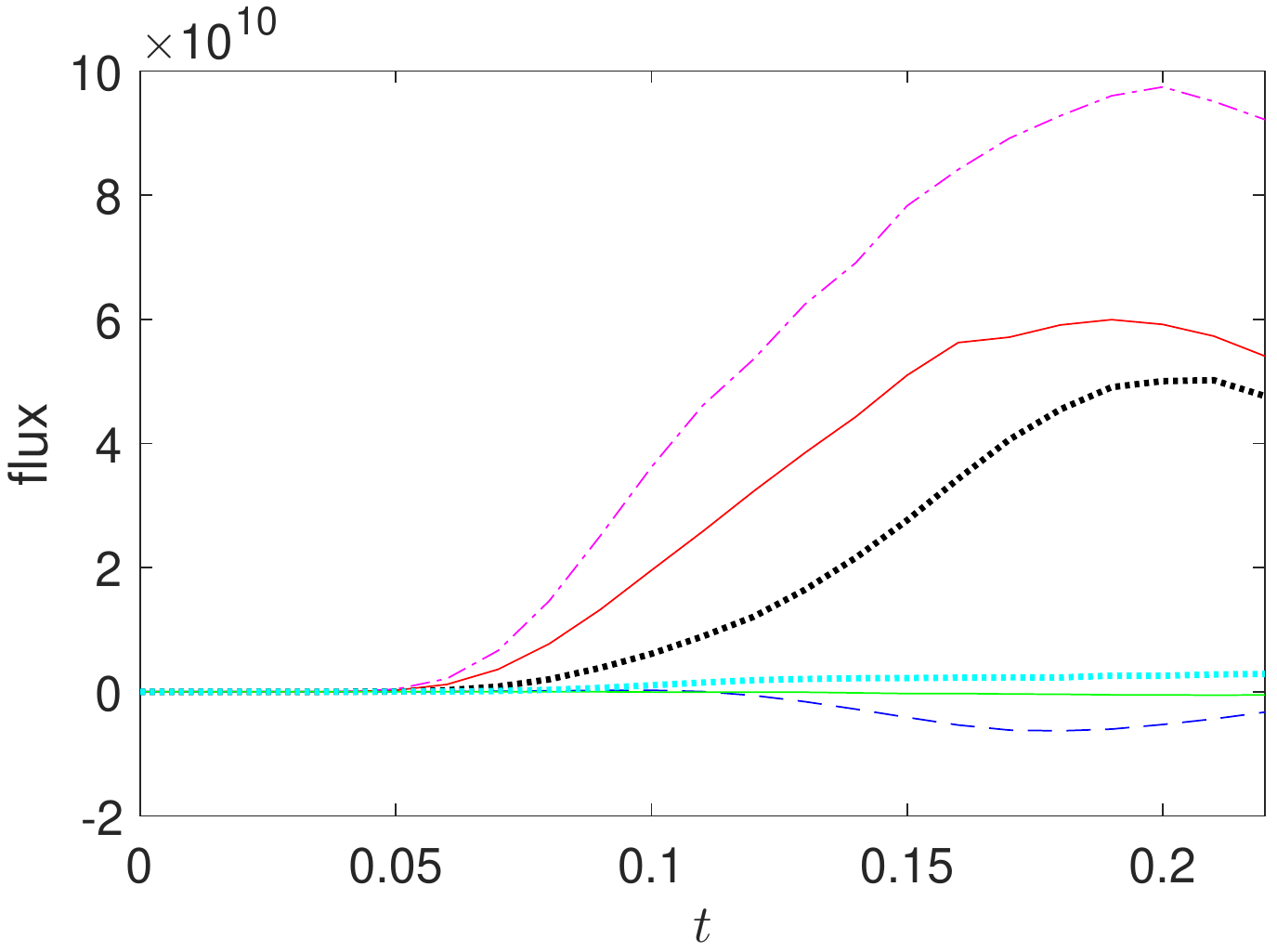}}}
\caption{Time evolution of the fluxes towards $\mathcal{C}_2$ ($k>2$)
  (left panels (a), (c) and (e)) and $\mathcal{C}_{10}$ ($k>10$)
  (right panels (b), (d) and (f)) in the solution of the Navier-Stokes
  system with the extreme initial condition (top panels (a) and (b)),
  generic initial condition (middle panels (c) and (d)) and
  Taylor-Green initial condition (bottom panels (e) and (f))
  contributed by the different triad types: ``PPP'' (red solid line),
  ``PPM'' (blue dashed line), ``PMP'' (black dotted line), ``PMM''
  (magenta dash-dotted line), ``P(PM)'' (green solid line), and
  ``(PM)P '' (cyan dotted line).}
\label{fig:NSFlux_cases}
\end{center}
\end{figure}

For the above reason we need to look at the positive-definite flux
densities $F_{\mathcal{C}_2}^{s_1 s_2 s_3}(\Phi)$ constructed in terms
of the contribution to the flux $\Pi_{\mathcal{C}_2}^{s_1 s_2 s_3}$
from generalized helical triad phases $\Phi_{\mathbf{k}_1 \mathbf{k}_2
  \mathbf{k}_3}^{s_1 s_2 s_3}$ {with values $\Phi$, cf.~relation
  \eqref{eq:FC}.}  The plot of these fluxes $\Pi_{\mathcal{C}_2}^{s_1
  s_2 s_3}$ as functions of time for each helical triad type is shown
in figure \ref{fig:NSFlux_cases}(a). It is evident from this plot that
there is an early energy transfer out of the sphere defined by $k=2$
due to the PMP triad type, which is the main contributor to the flux
having a maximum flux at a relatively late time ($t\approx 0.17$). The
second most important contributor to the positive flux is the boundary
triad type (PM)P, whose contribution is comparable to the one from the
PMP triads (somewhat surprisingly as the number of (PM)P triads is
relatively small), but has an earlier maximum flux, at about $t\approx
0.11$. The third contributor in importance is the PMM triad type, with
a less significant contribution to the positive flux except for early
times, with a maximum at about $t\approx 0.06$. A notable feature of
this PMM triad type is the evidence of an initially negative
contribution, which nevertheless does not show up in the negative flux
plot in figure \ref{fig:NSflux}(b) as the total flux is positive once
the contribution from the PMP triad type is added.

Having established that the different triad types exhibit different
time dynamics, we now analyze the detailed phase dependence of the
normalized flux densities {${\mathcal{W}}_{\mathcal{C}_2}^{s_1
    s_2 s_3}(\Phi)$, cf.~expression \eqref{eq:WC},} which are shown in
figure \ref{fig:NSWpdfkb2} for the six helical triad types.  The fact
that these plots provide, at any given time, normalized distributions,
is useful for the purpose of detecting phase coherence and complements
the flux plots in figure \ref{fig:NSFlux_cases}(a): in fact, the
normalizing factors above can be reconstructed from these
complementary figures. A striking coherent pattern emerges, showing
for the first time the evolution of the structure of the flux-carrying
helical triads during an extreme event {occurring in the} forward
energy cascade.  Qualitatively, a remarkable feature characterizing
all the six helical triad types is that salient peaks of the weighted
PDFs (darker, redder colours) seem to propagate in time in a
continuous manner, revealing distinctive braid patterns which suggest
that the sets of flux-carrying helical triads slowly change over time.
Although we will not attempt a thorough classification of these sets,
the case of the (PM)P boundary helical triad gives a clue for a
quantitative analysis. There, the weighted PDF (figure
\ref{fig:NSWpdfkb2}(f)) shows a clear and persistent concentration in
the positive-flux range $-\pi/2 < \Phi < \pi/2$, near the time of its
maximum flux ($t\approx 0.11$), and it is evident that these peaks are
well correlated with the peaks in the PDF (figure
\ref{fig:NSpdfkb2}{(d)}), leading to the conclusion that the
majority of the triads in this helical triad type contribute actively
to the flux.  Although these (PM)P triads represent less than the
$0.0001\%$ of all triads {participating in} the flux towards
modes with $k>2$, they contribute with over $20\%$ of the total flux
at $t \approx 0.11$ (these numbers can be calculated from an
assessment of the number of triads in table
\ref{tab:PDFbounds_extreme} (second column) and from the flux plot in
figure \ref{fig:NSFlux_cases}(a)).  The preference for positive flux
contributions is also evident for other triad types.  Figure
\ref{fig:NSWpdfkb2}(d) shows, for the PMM triad type, the strongest
concentration of the weighted PDF at positive-flux phases $\Phi
\approx \pm 0.25 \pi$, near the time of its maximum flux ($t\approx
0.06$).  Finally, the weighted PDF for the main contributor to the
positive flux (triad type PMP, figure \ref{fig:NSWpdfkb2}(c)) shows
strong local maxima near $\Phi=0$ at times $t\approx 0.06$ and
$t\approx0.11$, while a set of weaker local maxima {at
  positive-flux phases $\Phi \approx \pm 0.1 \pi$ occurs} at the time
of the maximum flux ($t\approx 0.17$). This latter case demonstrates
how important it is to complement the flux plots in figure
\ref{fig:NSFlux_cases}(a) with the weighted PDF plots in figure
\ref{fig:NSpdfkb2} {as the former plots do not contain
  information about the flux densities for} different triad types.
Even negative-flux contributions can be identified by these analyses.
Figure \ref{fig:NSWpdfkb2}(a) shows, for the PPP triad type, a
late-time preference for negative-flux contributions (weighted PDF
concentrated near $\Phi=\pm\pi$), which is evidenced in the plot of
the total PPP flux in figure \ref{fig:NSFlux_cases}(a), solid red
line.

\begin{figure}
\begin{center}
\mbox{\subfigure[${\mathcal{W}}_{\mathcal{C}_2}^{PPP}(\Phi)(t)$]{\includegraphics[width=0.45\textwidth]{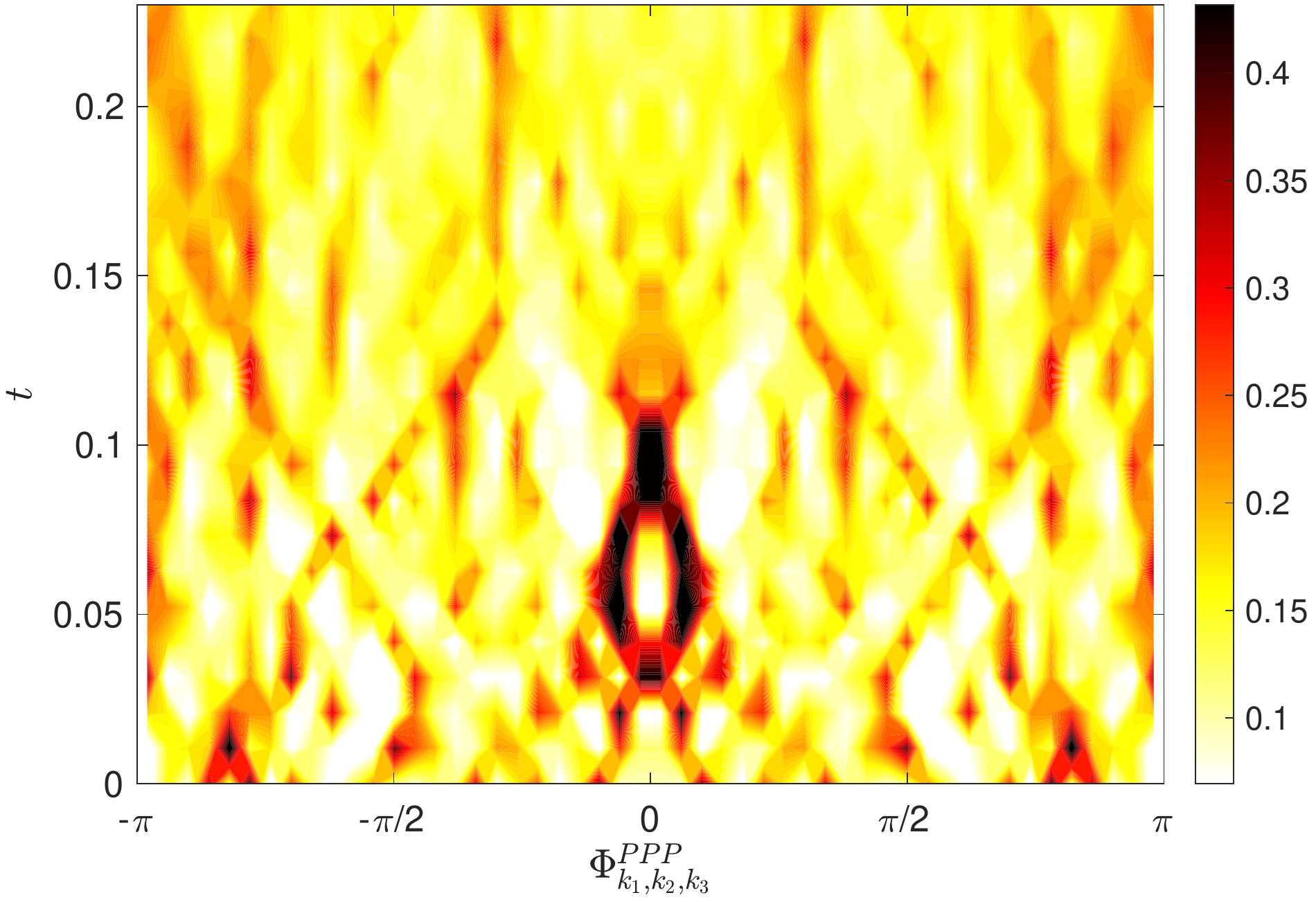}}\qquad
\subfigure[${\mathcal{W}}_{\mathcal{C}_2}^{PPM}(\Phi)(t)$]{\includegraphics[width=0.45\textwidth]{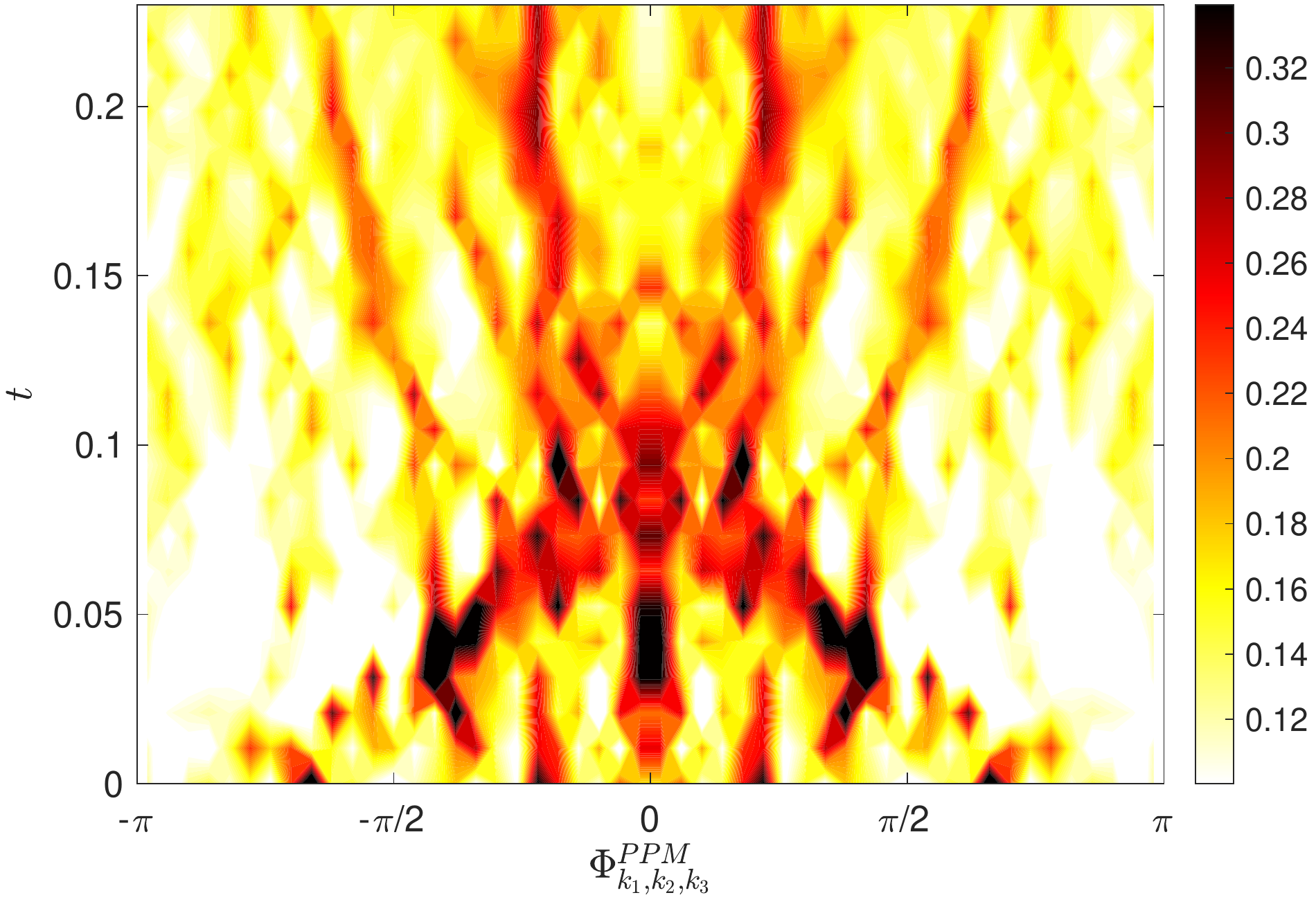}}}\\
\mbox{\subfigure[${\mathcal{W}}_{\mathcal{C}_2}^{PMP}(\Phi)(t)$]{\includegraphics[width=0.45\textwidth]{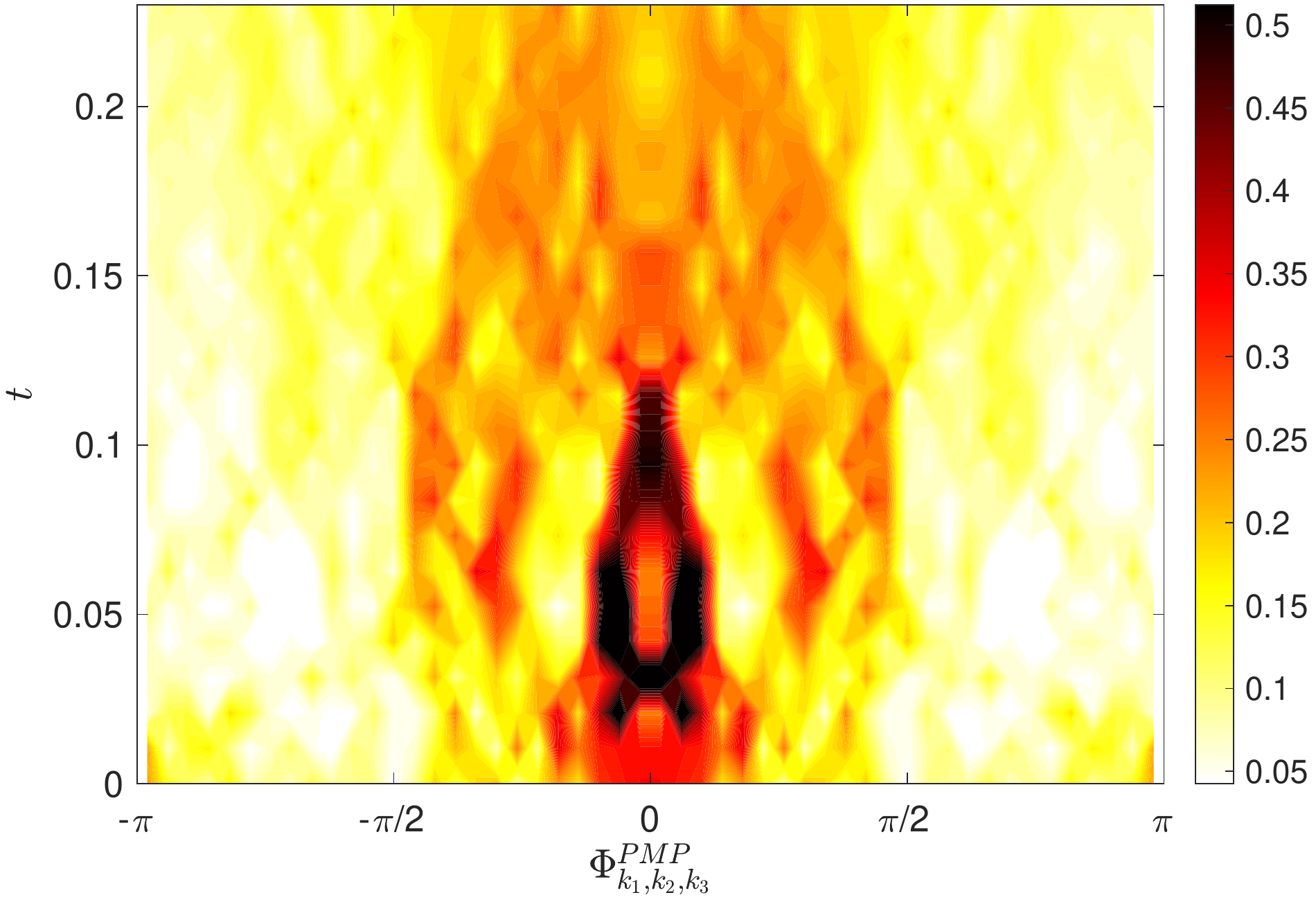}}\qquad
\subfigure[${\mathcal{W}}_{\mathcal{C}_2}^{PMM}(\Phi)(t)$]{\includegraphics[width=0.45\textwidth]{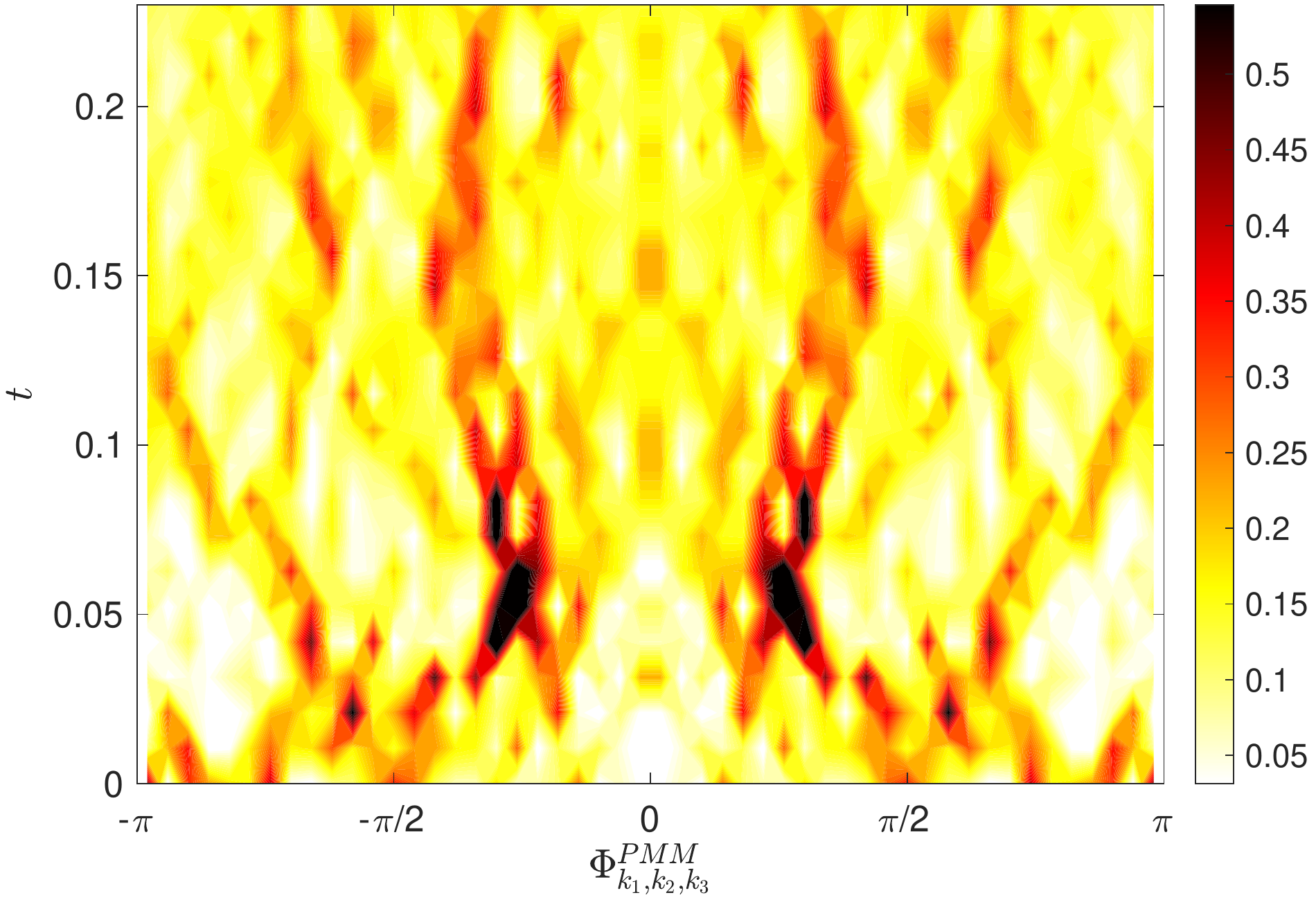}}}
\mbox{\subfigure[${\mathcal{W}}_{\mathcal{C}_2}^{P(PM)}(\Phi)(t)$]{\includegraphics[width=0.45\textwidth]{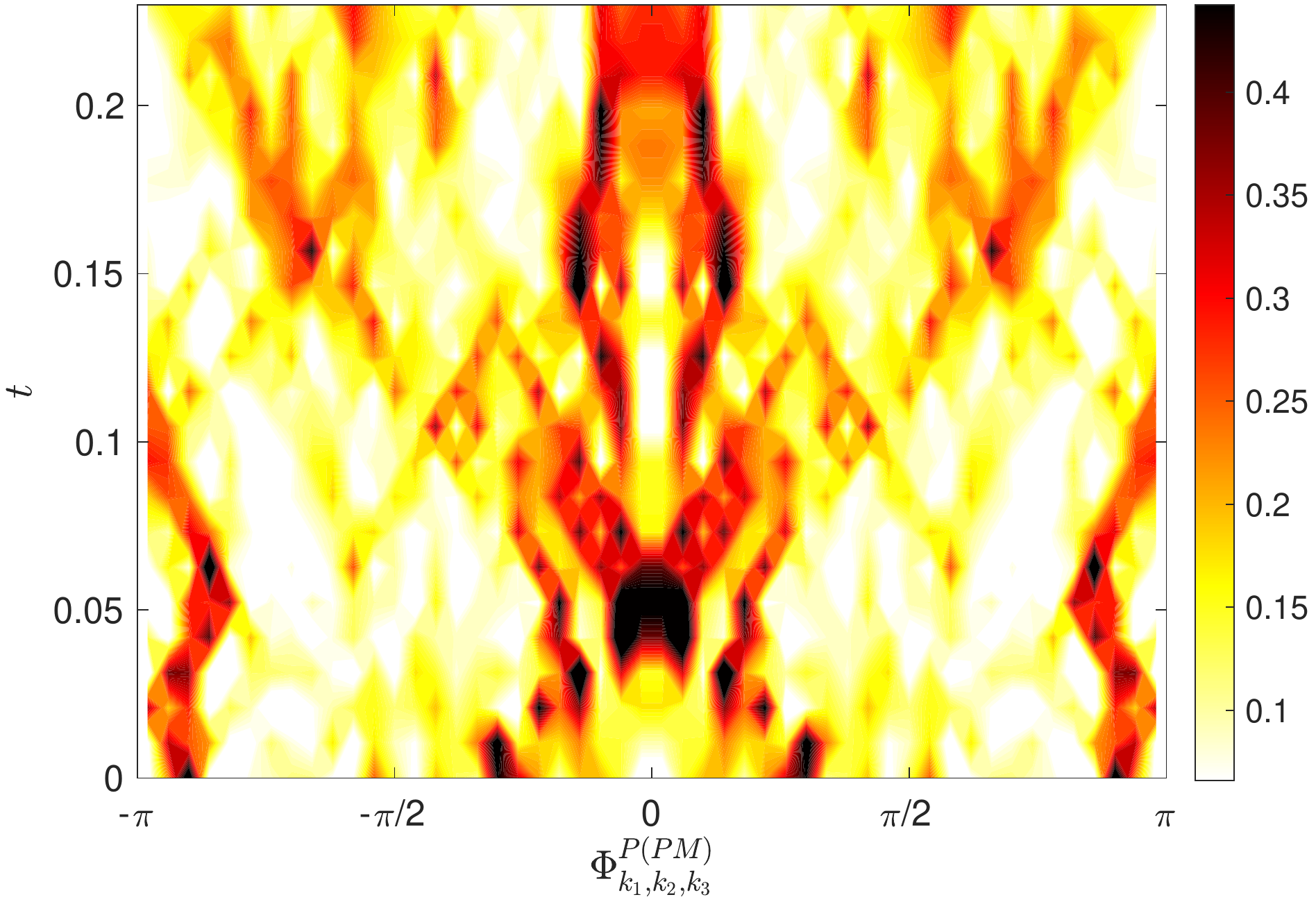}}\qquad
\subfigure[${\mathcal{W}}_{\mathcal{C}_2}^{(PM)P}(\Phi)(t)$]{\includegraphics[width=0.45\textwidth]{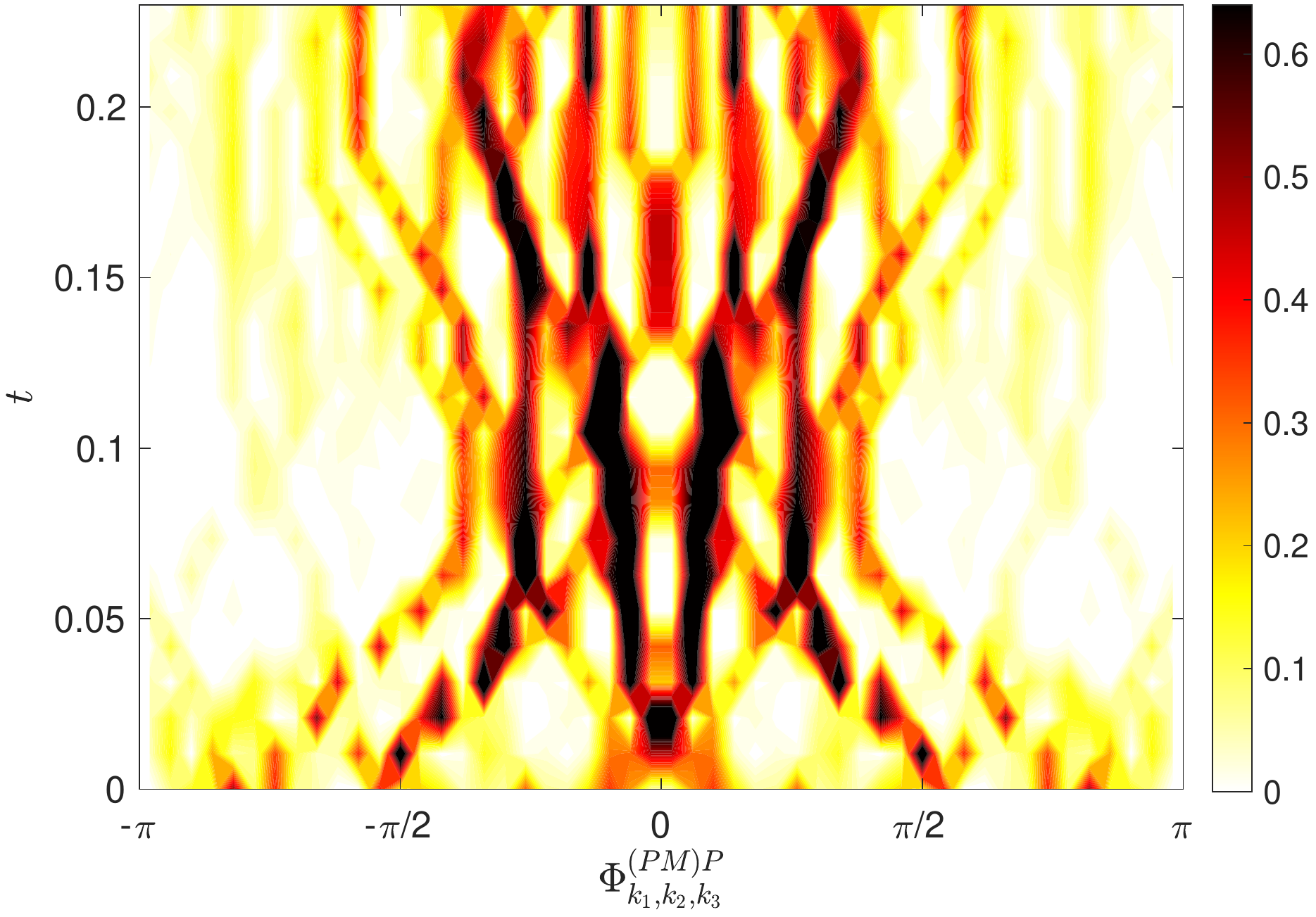}}}

\caption{{[Extreme case, $k>2$:]} Time evolution of the weighted
  PDF of the triad phase angle ${\mathcal{W}}_{\mathcal{C}_2}^{s_1 s_2
    s_3}(\Phi)$, cf.~expression \eqref{eq:WC}, in the Navier-Stokes
  flow with the extreme initial data for the triad types (a) ``PPP'',
  (b) ``PPM'', (c) ``PMP'' , (d) ``PMM'', (e) ``P(PM)'' and (f)
  ``(PM)P''. In these plots the uniform distribution corresponding to
  $1/2\pi \approx 0.16$ is depicted by yellow ($25\%$ of the colour
  bars).}
\label{fig:NSWpdfkb2}
\end{center}
\end{figure}

In order to obtain a better idea of how the energy leaving the sphere
$k=2$ is transported after reaching the dissipative range, we now
perform an analogous study to the one just described, but this time
regarding the energy flux out of the sphere $k=10$, namely towards the
set $\mathcal{C}_{10} = \{\mathbf{k} \in \mathbb{Z}^3 \setminus \{0\}
\quad | \quad |\mathbf{k}| > 10\}$.  Table \ref{tab:PDFbounds_extreme}
(sixth column) shows that there is only one helical triad type --- the
boundary triad (PM)P --- with a non-uniform PDF over the time window
considered, cf.~figure \ref{fig:NSpdfkb10}(a).  In contrast, when we
consider the flux densities and the corresponding weighted PDFs
$W_{\mathcal{C}_{10}}^{s_1 s_2 s_3}(\Phi)$, cf.~relations
\eqref{eq:FC} and \eqref{eq:WC}, shown in figure \ref{fig:NSWpdfkb10},
clear coherent patterns emerge, although they are not as sustained
over time as in the case with $k>2$; rather, they typically consist of
an initial burst of concentrated coherent flux followed by a more
disperse tail. Again, for a complete picture, these are to be
complemented with plots of the fluxes $\Pi_{\mathcal{C}_{10}}^{s_1 s_2
  s_3}$ shown in figure \ref{fig:NSFlux_cases}(b): now, the main
contributor to the flux towards $\mathcal{C}_{10}$ is triad type PMM,
followed by PPP and PMP.  The plot shows that a stage of sustained
growth begins at around $t=0.075$. Could this growth stage be related
to the coherent patterns observed in figure \ref{fig:NSWpdfkb10}?
Possibly yes, as coherent patterns develop near that time among
positive-flux phases, leading to a build-up of energy over time. As
for the main contributor to the energy flux, namely the set of PMM
helical triads, its weighted PDF is shown in figure
\ref{fig:NSWpdfkb10}(d), revealing a strong and coherent event near
$\Phi \approx \pm 0.075 \pi$ which occurs during $t=0.04$--$0.1$ and
retains coherence after that.  As for the second main contributor, the
set of PPP helical triads, its weighted PDF in figure
\ref{fig:NSWpdfkb10}(a) shows a similar, even more concentrated,
pattern near $\Phi=0$ during the same time range. Notably, the
weighted PDFs of both these triad types show early coherence
($t=0$--$0.03$) at triad phases corresponding to negative flux
contributions. At these early stages the actual fluxes are quite small
so one could conjecture that a kind of ``slingshot effect'' is at work
for this extreme initial condition, similarly to the extreme initial
condition in the Burgers case, where an initial stage of negative-flux
phase alignments (triad phases near $-\pi/2$ in the Burgers case, and
near $\pi$ or $-\pi$ in the Navier-Stokes case) is followed by a
rearrangement of the phases to {favor forward energy cascade} (triad
phases near $\pi/2$ in the Burgers case, and near $0$ in the
Navier-Stokes case).  As for the third contributor to the flux, the
set of PMP helical triads, its weighted PDF in figure
\ref{fig:NSWpdfkb10}(c) shows a clear initial positive-flux
concentration near $\Phi=0$, followed by a dip to a close-to-uniform
distribution at $t=0.04$, leading to another event when the positive
flux attains local maxima in the range $t=0.04 - 0.1$, to then become
gradually less concentrated for the rest of the time window.

\begin{figure}
\begin{center}
\mbox{
\subfigure[${\mathcal{P}}_{\mathcal{C}_{10}}^{(PM)P}(\Phi)(t)$ (extreme)]{\includegraphics[width=0.45\textwidth]{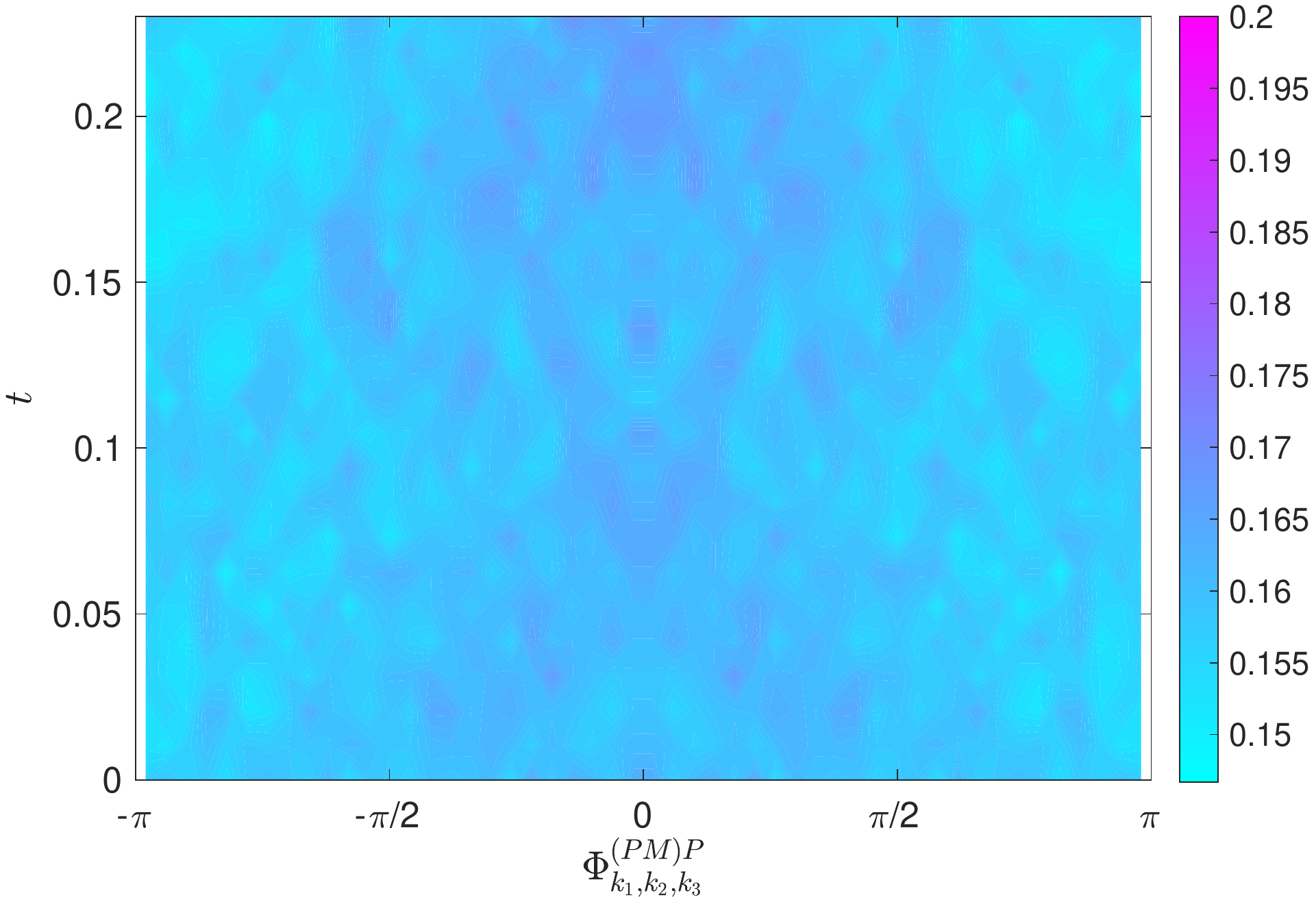}}}
\mbox{\subfigure[${\mathcal{P}}_{\mathcal{C}_{10}}^{(PM)P}(\Phi)(t)$ (generic)]{\includegraphics[width=0.45\textwidth]{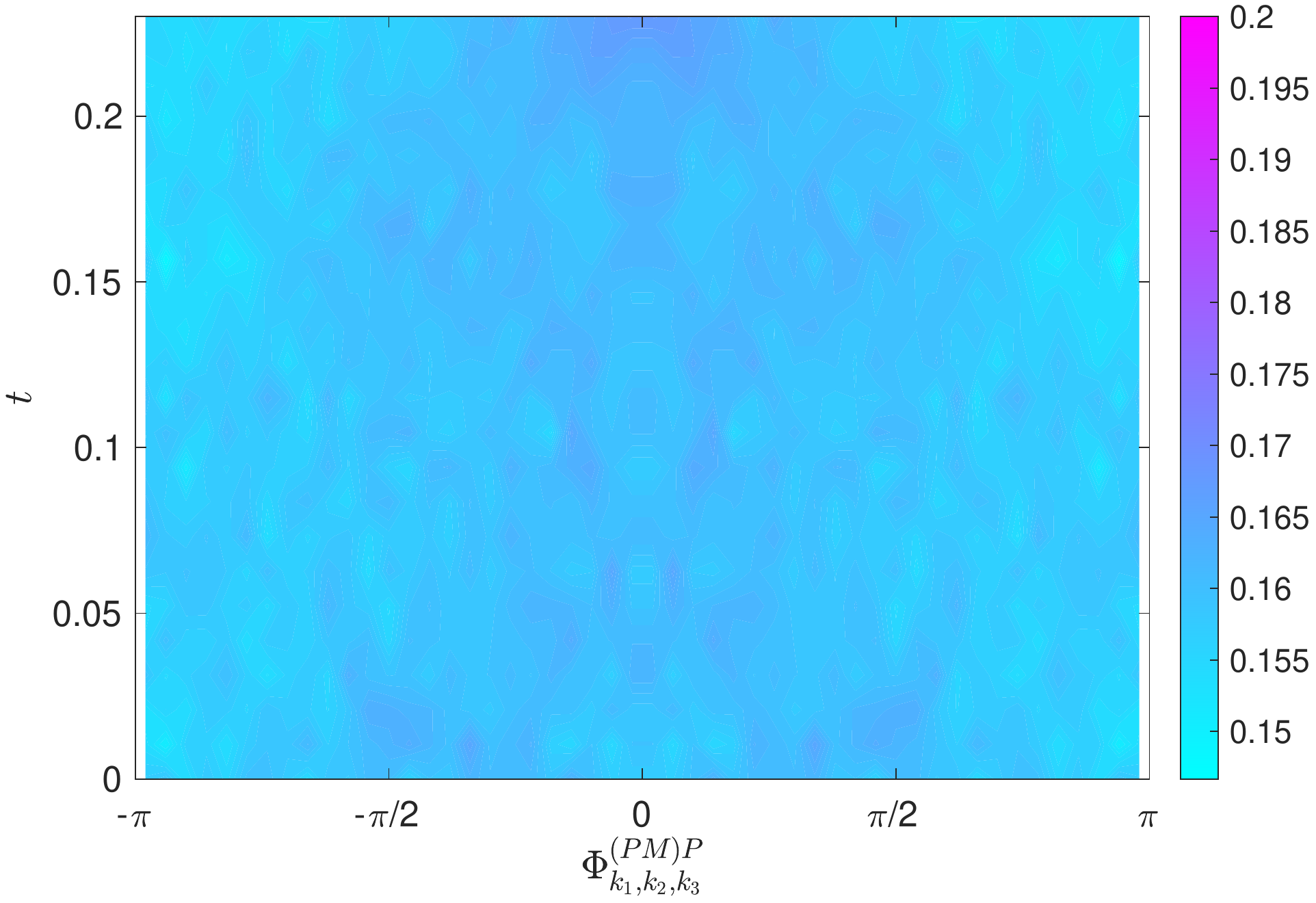}}\qquad
\subfigure[${\mathcal{P}}_{\mathcal{C}_{10}}^{(PM)P}(\Phi)(t)$ (Taylor-Green)]{\includegraphics[width=0.45\textwidth]{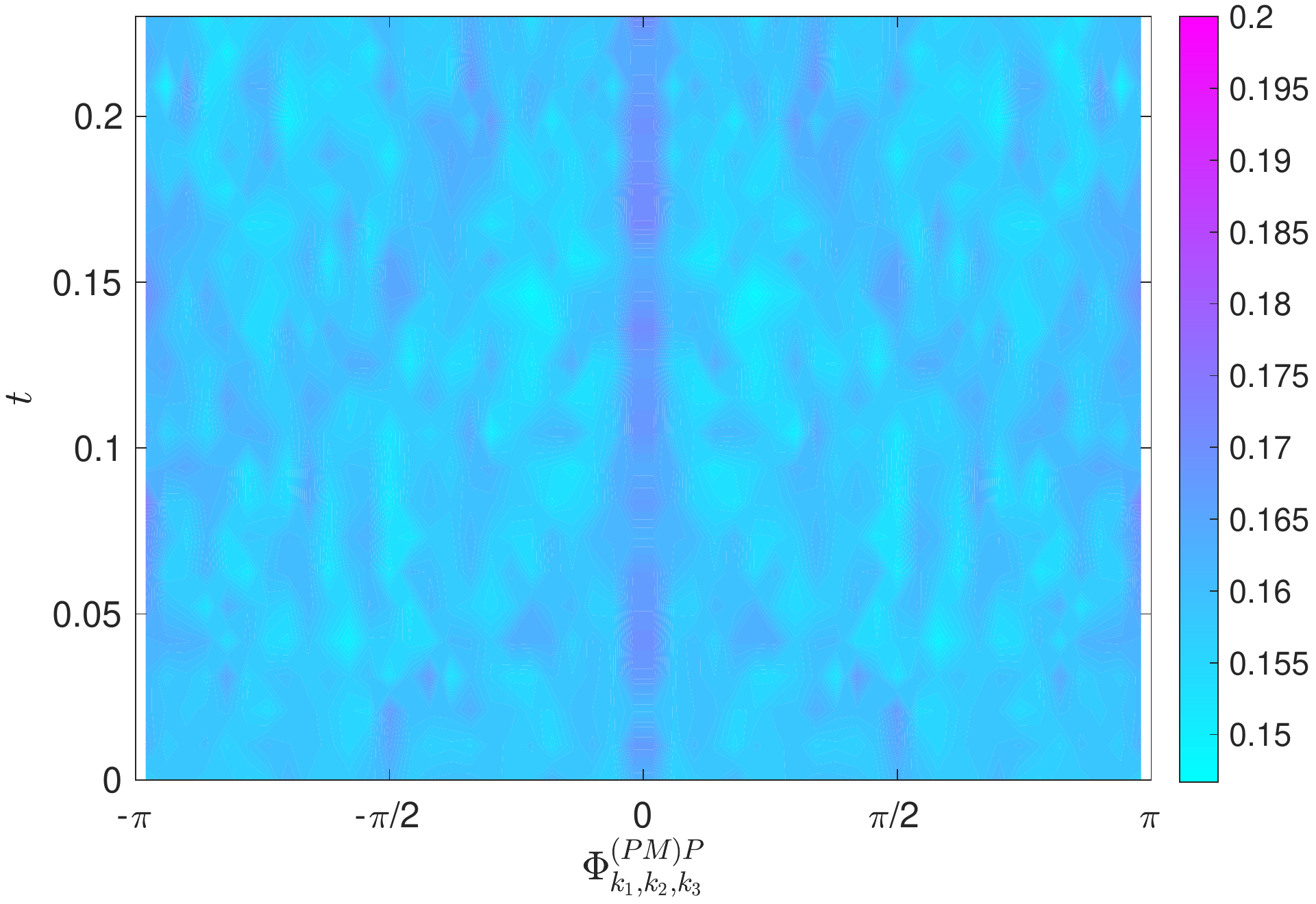}}}
\caption{{[$k>10$:]} Time evolution of the PDFs of the phase triad
  angle ${\mathcal{P}}_{\mathcal{C}_{10}}^{(PM)P}(\Phi)$ in the
  solution of the Navier-Stokes system with the (a) extreme, (b)
  generic, and (c) Taylor-Green initial conditions. {The
    ``(PM)P'' triad is the only triad type exhibiting a PDF with
    variability exceeding $\pm5\%$ with respect to the uniform
    distribution} equal to $1/2\pi$ ($\approx 0.16$, which
  corresponds to light blue colour in the plots). The PDFs for all
  the other five triad types (``PPP'', ``PPM'', ``PMP'', ``PMM'' and
  ``P(PM)'') are not shown, as they are essentially uniform (see
  tables \ref{tab:PDFbounds_extreme}, \ref{tab:PDFbounds_generic} and
  \ref{tab:PDFbounds_TG}).}
\label{fig:NSpdfkb10}
\end{center}
\end{figure}

\begin{figure}
\begin{center}
\mbox{\subfigure[${\mathcal{W}}_{\mathcal{C}_{10}}^{PPP}(\Phi)(t)$]{\includegraphics[width=0.45\textwidth]{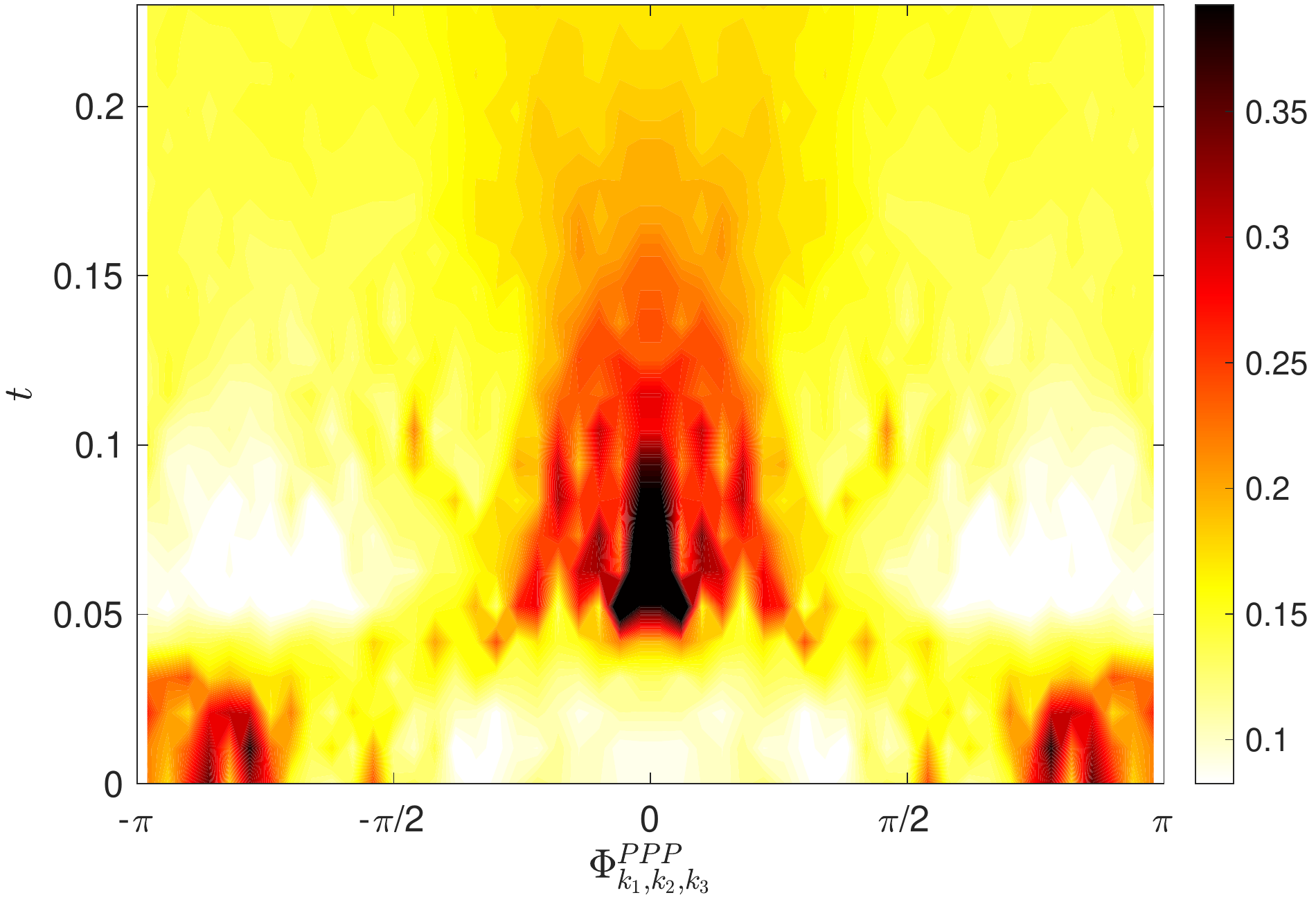}}\qquad
\subfigure[${\mathcal{W}}_{\mathcal{C}_{10}}^{PPM}(\Phi)(t)$]{\includegraphics[width=0.45\textwidth]{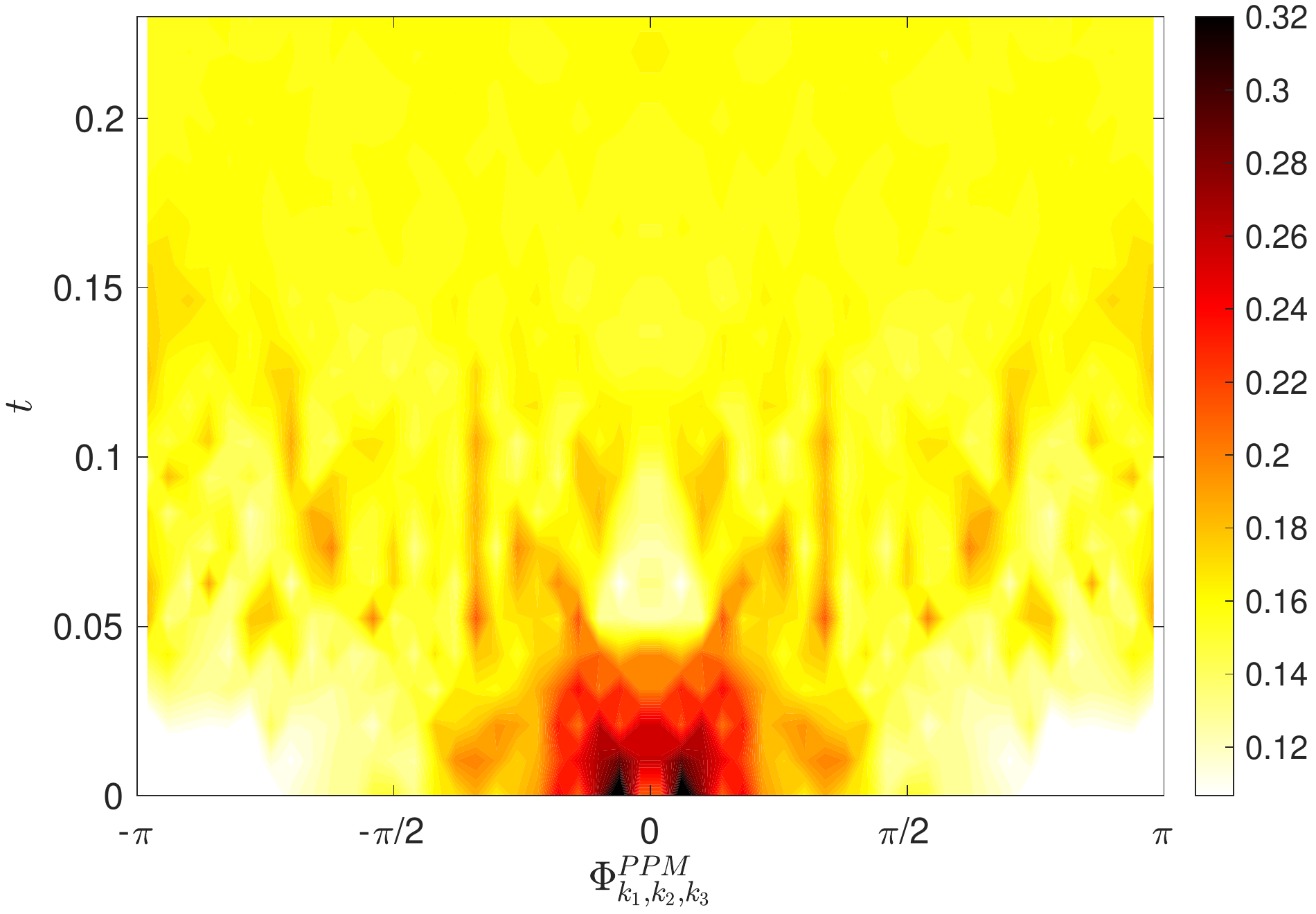}}}\\
\mbox{\subfigure[${\mathcal{W}}_{\mathcal{C}_{10}}^{PMP}(\Phi)(t)$]{\includegraphics[width=0.45\textwidth]{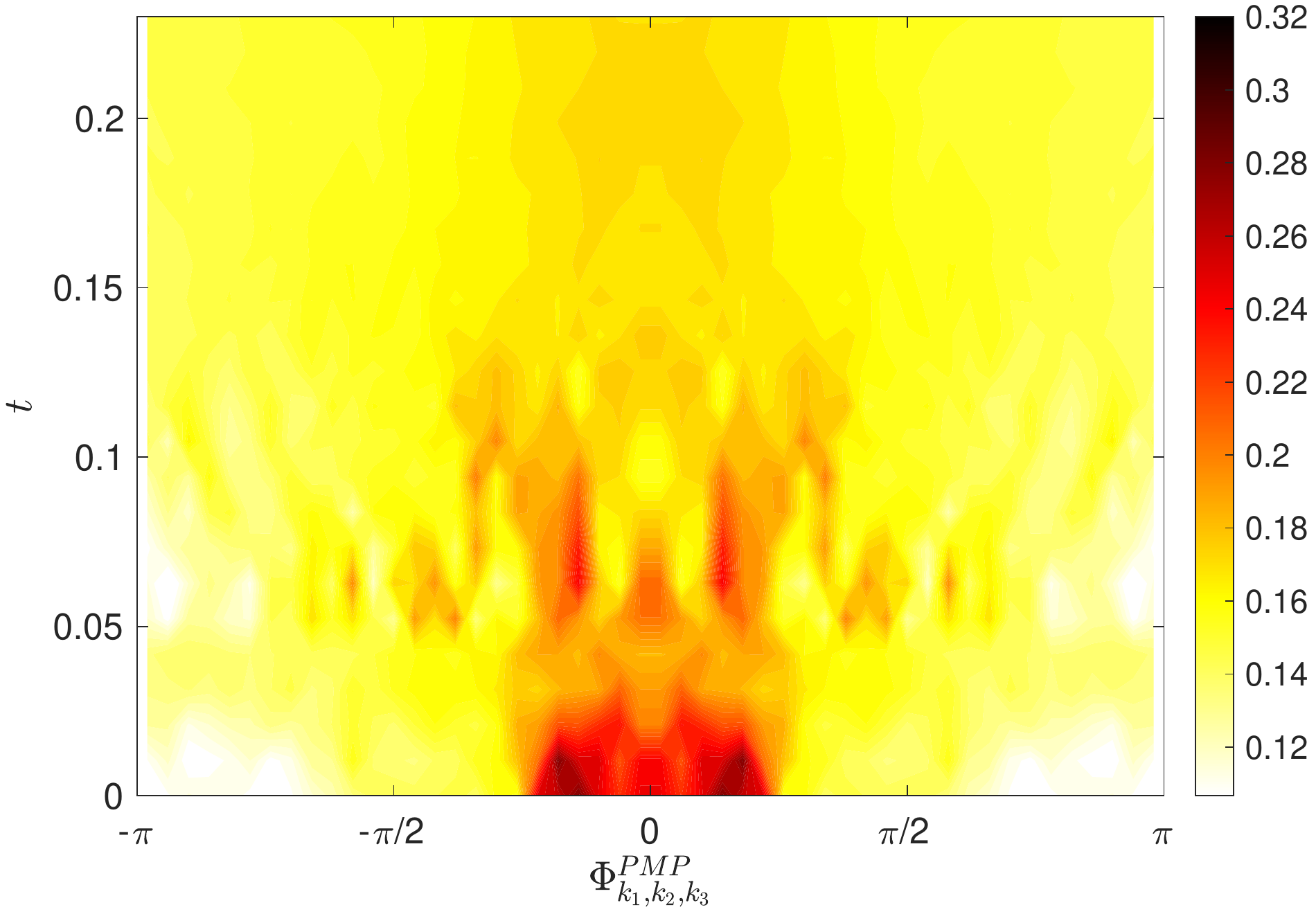}}\qquad
\subfigure[${\mathcal{W}}_{\mathcal{C}_{10}}^{PMM}(\Phi)(t)$]{\includegraphics[width=0.45\textwidth]{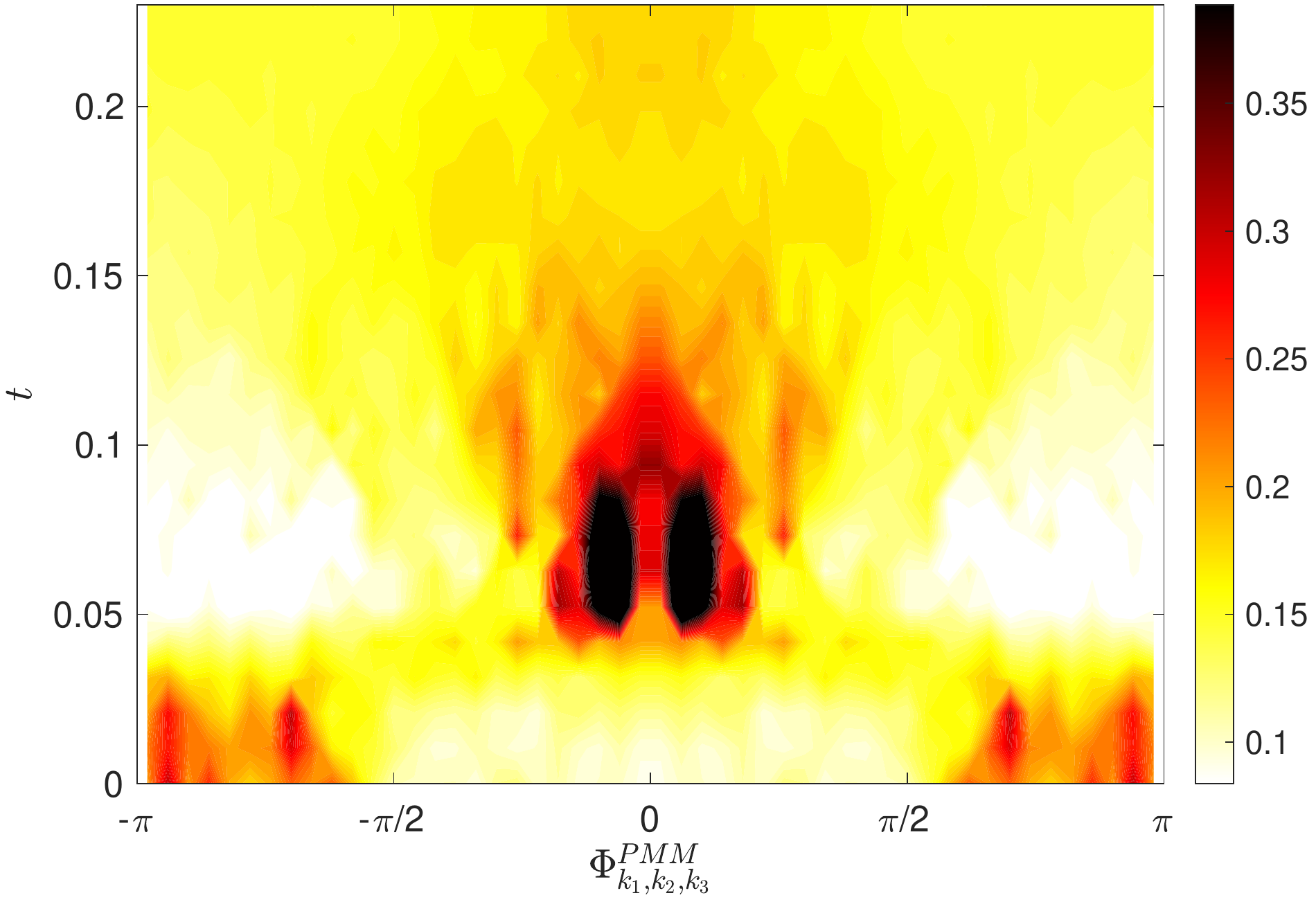}}}
\mbox{\subfigure[${\mathcal{W}}_{\mathcal{C}_{10}}^{P(PM)}(\Phi)(t)$]{\includegraphics[width=0.45\textwidth]{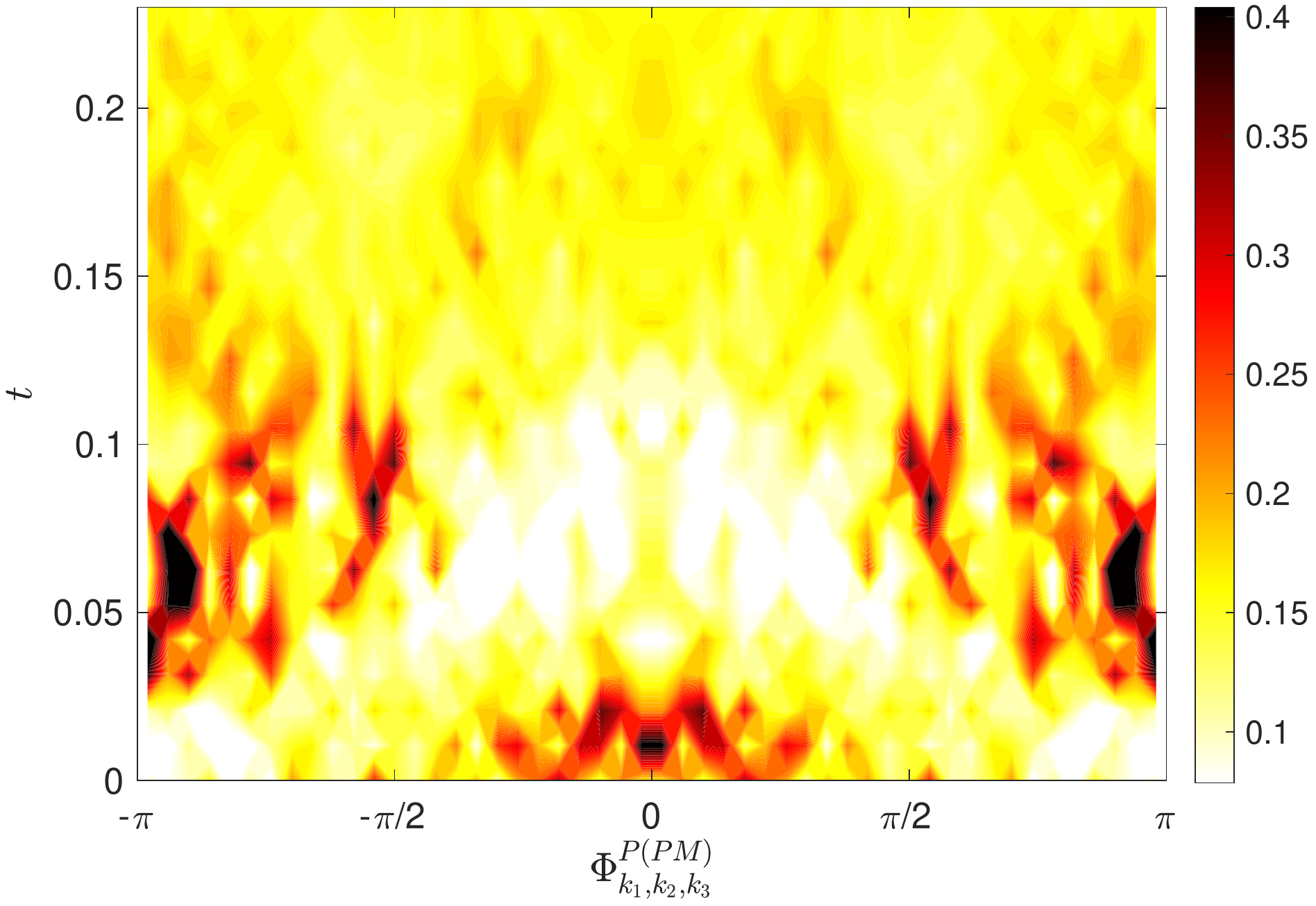}}\qquad
\subfigure[${\mathcal{W}}_{\mathcal{C}_{10}}^{(PM)P}(\Phi)(t)$]{\includegraphics[width=0.45\textwidth]{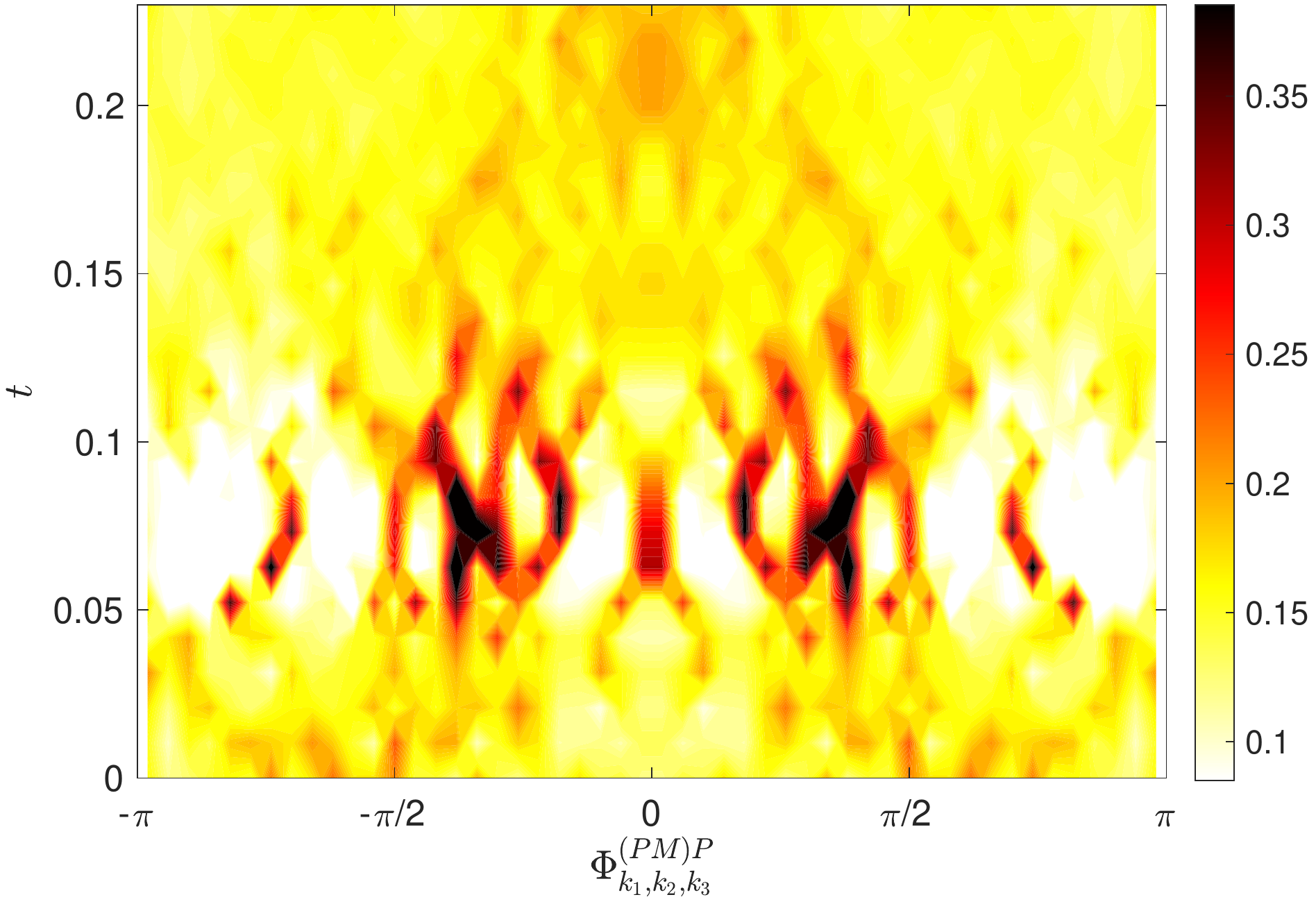}}}

\caption{{[Extreme case, $k>10$:]} Time evolution of the weighted
  PDF of the triad phase angle ${\mathcal{W}}_{\mathcal{C}_{10}}^{s_1
    s_2 s_3}(\Phi)$, cf.~expression \eqref{eq:WC}, in the
  Navier-Stokes flow with the extreme initial data for the triad types
  (a) ``PPP'', (b) ``PPM'', (c) ``PMP'' , (d) ``PMM'', (e) ``P(PM)''
  and (f) ``(PM)P''. In these plots the uniform distribution
  corresponding to $1/2\pi \approx 0.16$ is depicted by yellow ($25\%$
  of the colour bars).}
\label{fig:NSWpdfkb10}
\end{center}
\end{figure}

We now turn to the case of the generic initial condition,
cf.~\S\,\ref{sec:generic}, and compare the fluxes and triads analyses
against the extreme case. Let us recall that in the generic case the
initial amplitudes of the Fourier coefficients are the same as in the
extreme case, while {their phases are initially set as random
  variables with uniform distributions} over $[0,2\pi)$. Therefore,
the generic case offers a method to test whether or not the observed
phase coherence in the extreme case is due to a special initial
condition on the triad phases.

Interestingly, the results for the generic case are hardly
distinguishable from those for the extreme case. To begin with, as
regards the flux towards the wavenumber region $k>2$, table
\ref{tab:PDFbounds_generic} (second column) shows again that the same
four triad types as in the extreme case have PDFs with variability of
$\pm 5\%$ or more with respect to the uniform case. Figure
\ref{fig:NSpdfkb2_generic} shows these PDFs and it is evident that,
apart from details, the generic case displays {essentially} the
same global features as the extreme case, including the coherent
patterns in panel (d) corresponding to (PM)P helical triads. Next, the
time evolution of the fluxes for each of the six triad types towards
the wavenumber region $k>2$ is plotted in figure
\ref{fig:NSFlux_cases}(c). Comparing this with the extreme case in
figure \ref{fig:NSFlux_cases}(a), it is evident that each triad type
shows a similar trend, with small differences related to the timing
and value of flux maxima. In particular, the (PM)P boundary triad type
is again seen to contribute {well} over $20\%$ of the total flux
near $t=0.11$, despite the fact that (PM)P triads constitute only less
than $0.0001\%$ of the total number of triads {participating in
  energy transfer}.  Again, the triad types that contribute the most
to the flux are, in order of importance, PMP, (PM)P and PMM. The next
piece of analysis concerns the weighted PDFs for the generic case
shown in figure \ref{fig:NSWpdfkb2_generic}. A quick qualitative
comparison with the extreme case, cf.~figure \ref{fig:NSWpdfkb2},
demonstrates the same overall patterns of coherence. Focusing now our
attention on the main flux contributors: PMP in panel (c), (PM)P in
panel (f) and PMM in panel (d), we clearly observe an early
{burst} of positive flux at $t=0.05$ in the weighted PDF for the
PMM triads, followed by a {pronounced accumulation of the
  weighted PDF around} $\Phi=0$ at $t=0.05$---$0.1$ and then another
{such event at $t=0.07$---$0.11$ for the (PM)P triads}. This is
qualitatively very similar to the extreme case. However, from the
quantitative point of view, we can state that {in the present
  case} the three main flux contributors show a less extreme behaviour
in their weighted PDFs: comparing the ranges of values of the weighted
PDFs in the the extreme case (table \ref{tab:PDFbounds_extreme},
fourth column) with the same ranges for the generic case (table
\ref{tab:PDFbounds_generic}, third column) for the triad types PMP,
(PM)P and PMM, it is evident that the range in the generic case is
about 30\% to 35\% smaller than in the extreme case.  This could be
attributed to an initial lack of coherence in the phases, which
translates into a less coherent time evolution.

\begin{table}
  \begin{center}
    \begin{tabular}{|l|c|c|c|c|} \hline
      \backslashbox{triad}{cases}   & $k=2$, PDF  & $k=2$, wPDF  & $k=10$, PDF &  $k=10$, wPDF \\ \hline
     PPP   & [0.145,0.171]   &  [0.022,0.770]  & \textcolor[rgb]{0.7,0.7,0.7}{[0.158,0.161]}  &  [0.039,0.898] \\ 
      PPM  &  \textcolor[rgb]{0.7,0.7,0.7}{[0.157,0.161]}  &  [0.031,0.497]  & \textcolor[rgb]{0.7,0.7,0.7}{[0.159,0.160]}  &  [0.089,0.361] \\    
      PMP    & \textcolor[rgb]{0.7,0.7,0.7}{[0.156,0.162]} &  [0.033,0.537] &  \textcolor[rgb]{0.7,0.7,0.7}{[0.159,0.160]} &   $[0.104,0.281]$ \\
      PMM   & [0.148,0.169] & [0.016,0.505]  & \textcolor[rgb]{0.7,0.7,0.7}{[0.157,0.162]} &  $[0.054,0.806]$  \\ 
	 P(PM)    & [0.148,0.168] &  [0.002,0.747] &  \textcolor[rgb]{0.7,0.7,0.7}{[0.158,0.160]} &   $[0.027,0.660]$ \\
	 (PM)P    & [0.000,0.424] &  [0.000,1.103] &  [0.149,0.169] &   $[0.051,0.520]$ \\\hline
    \end{tabular}
  \end{center}
  \caption{{[Generic case:]} Upper and lower bounds on the PDFs and wPDFs of triad phases of the different types in the Navier-Stokes flow with the generic initial data, cf.~figures  \ref{fig:NSpdfkb2_generic}--\ref{fig:NSWpdfkb10_generic}. {Shaded} intervals (light gray) represent PDFs that are very close (within $\pm5\%$) to the uniform distribution $1/2\pi \approx 0.16$.}
  \label{tab:PDFbounds_generic}
\end{table}


\begin{figure}
\begin{center}
\mbox{\subfigure[${\mathcal{P}}_{\mathcal{C}_2}^{PPP}(\Phi)(t)$]{\includegraphics[width=0.45\textwidth]{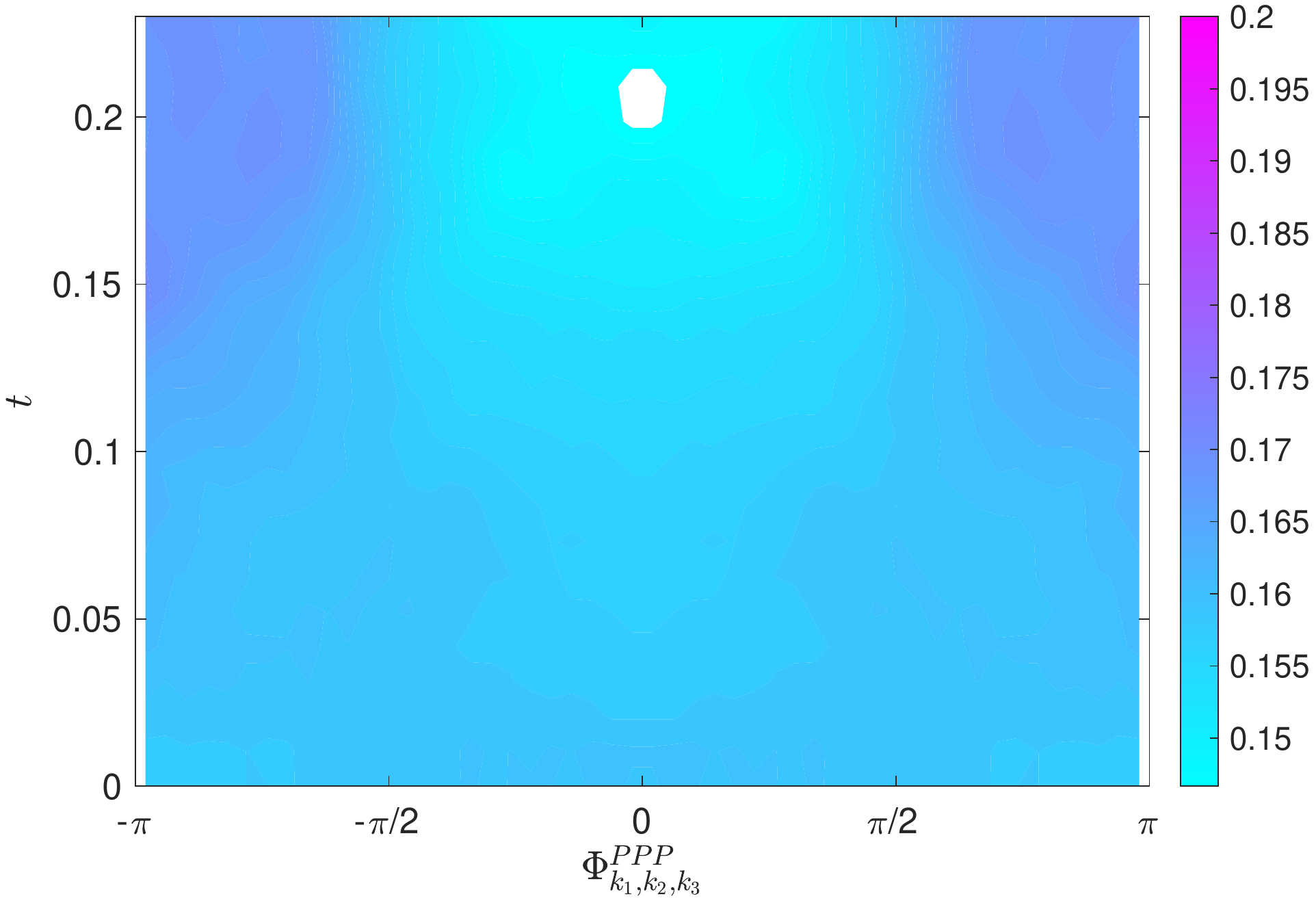}}\qquad
\subfigure[${\mathcal{P}}_{\mathcal{C}_2}^{PMM}(\Phi)(t)$]{\includegraphics[width=0.45\textwidth]{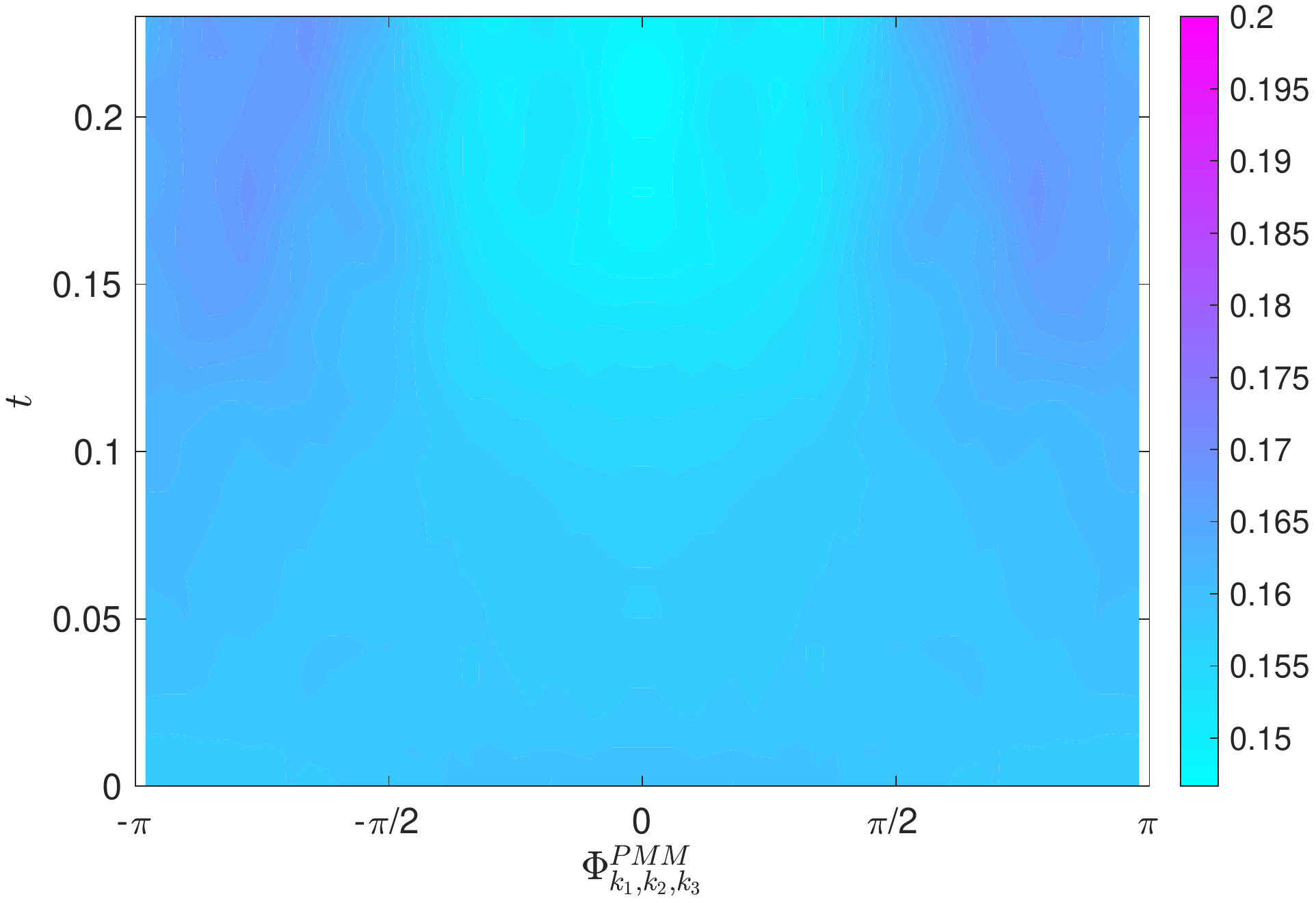}}}
\mbox{\subfigure[${\mathcal{P}}_{\mathcal{C}_2}^{P(PM)}(\Phi)(t)$]{\includegraphics[width=0.45\textwidth]{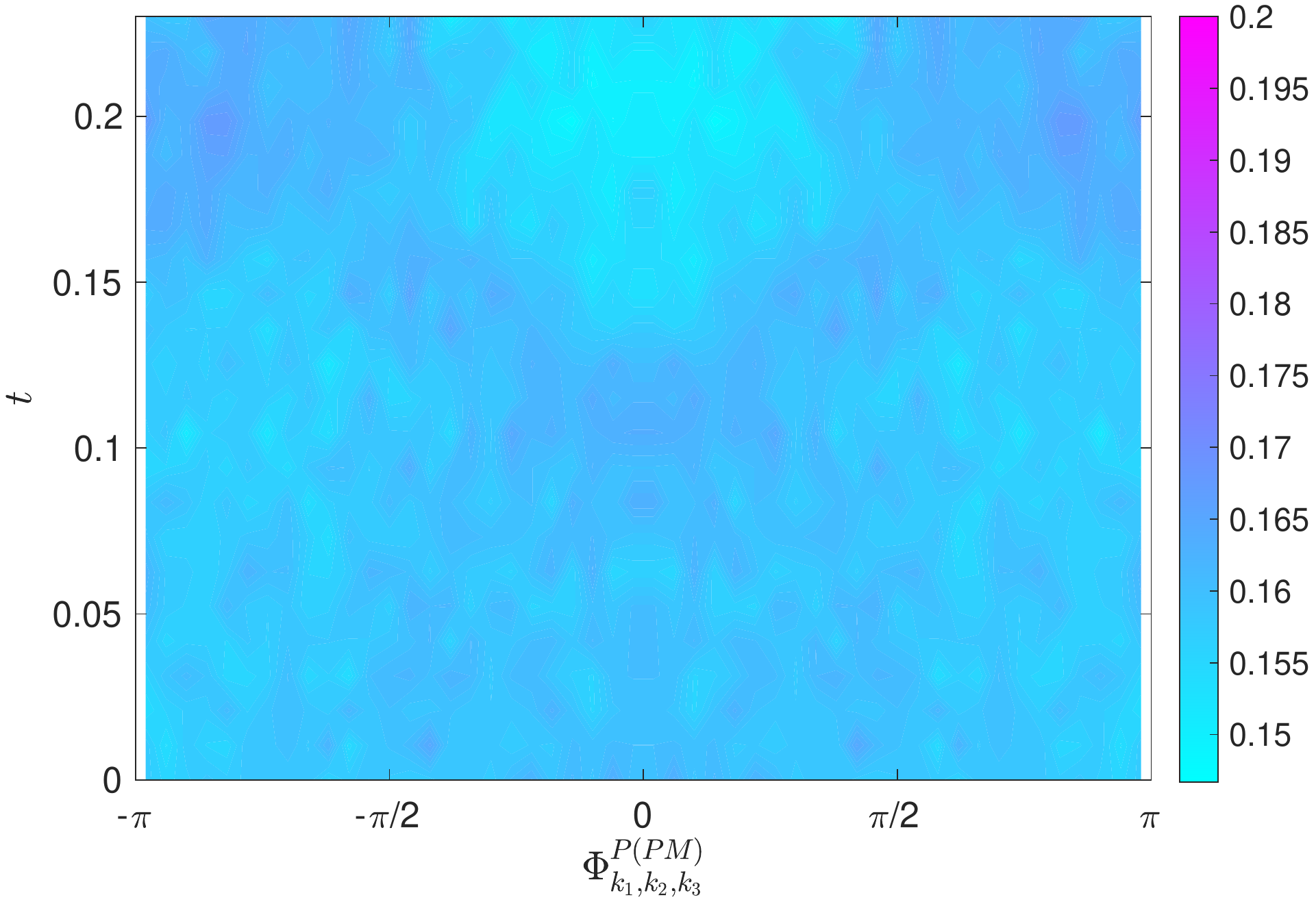}}\qquad
\subfigure[${\mathcal{P}}_{\mathcal{C}_2}^{(PM)P}(\Phi)(t)$]{\includegraphics[width=0.45\textwidth]{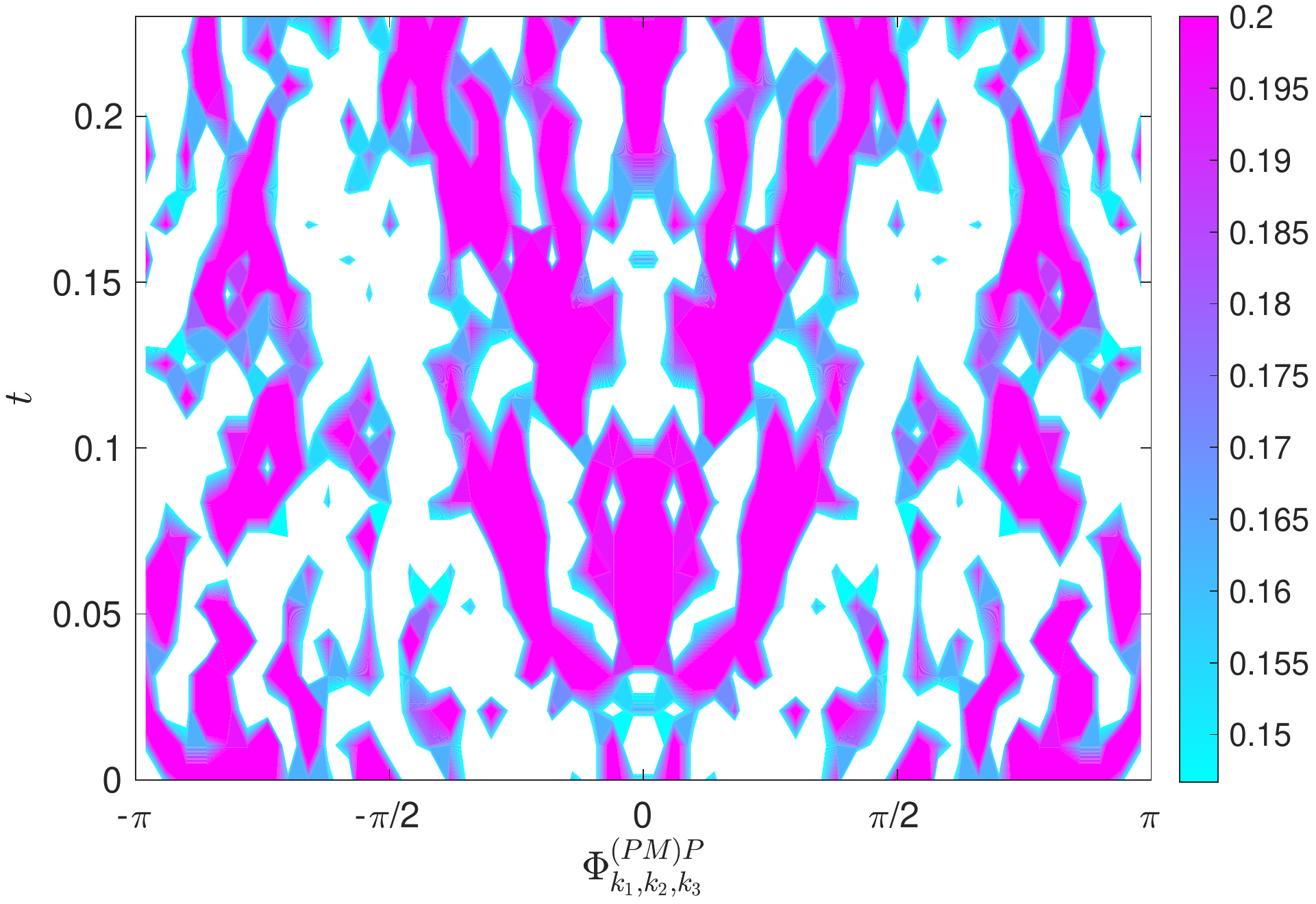}}}
\caption{{[Generic case, $k>2$:]} Time evolution of the PDFs of
  the triad phase angle ${\mathcal{P}}_{\mathcal{C}_2}^{s_1 s_2
    s_3}(\Phi)$ {for triads of different types in the
    Navier-Stokes flow with the generic initial data.  Only triad
    types with PDFs revealing variability of at least $\pm5\%$ with
    respect to the uniform distribution $1/2\pi \approx 0.16$
    (corresponding to light blue colour in the plots), are shown:} (a)
  ``PPP'', (b) ``PMM'', (c) ``P(PM)'' and (d) ``(PM)P''.}
\label{fig:NSpdfkb2_generic}
\end{center}
\end{figure}


\begin{figure}
\begin{center}
\mbox{\subfigure[${\mathcal{W}}_{\mathcal{C}_2}^{PPP}(\Phi)(t)$]{\includegraphics[width=0.45\textwidth]{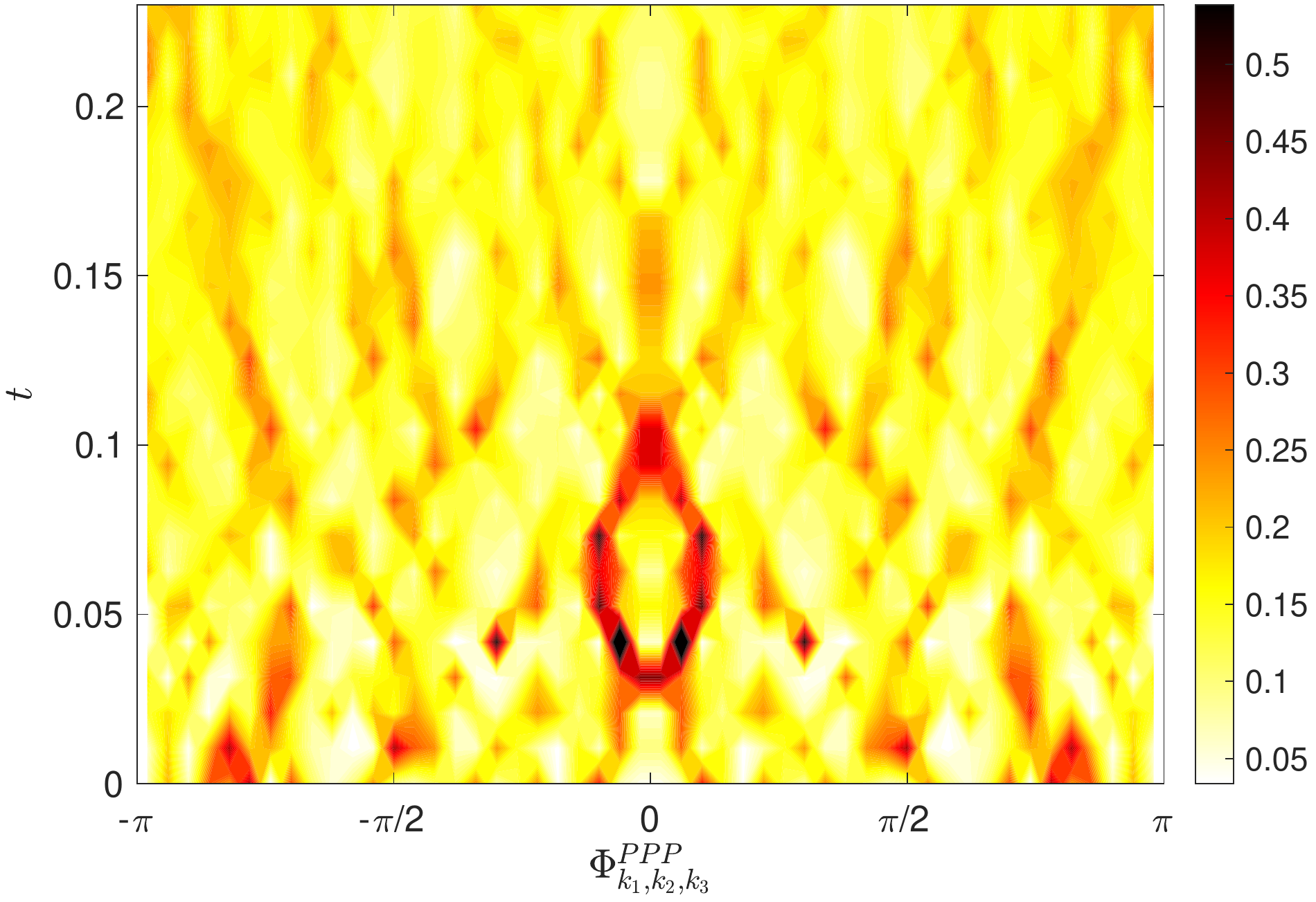}}\qquad
\subfigure[${\mathcal{W}}_{\mathcal{C}_2}^{PPM}(\Phi)(t)$]{\includegraphics[width=0.45\textwidth]{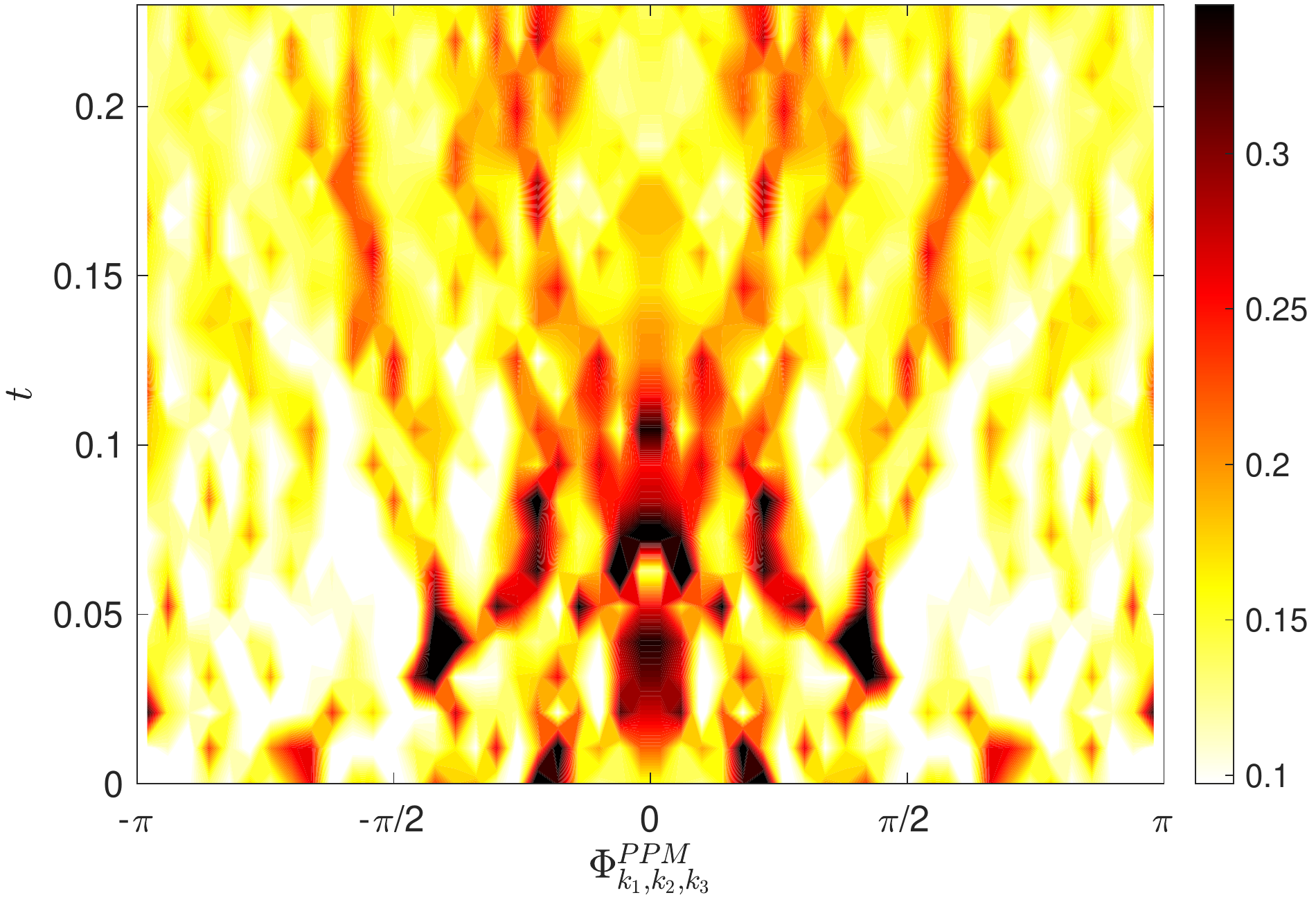}}}\\
\mbox{\subfigure[${\mathcal{W}}_{\mathcal{C}_2}^{PMP}(\Phi)(t)$]{\includegraphics[width=0.45\textwidth]{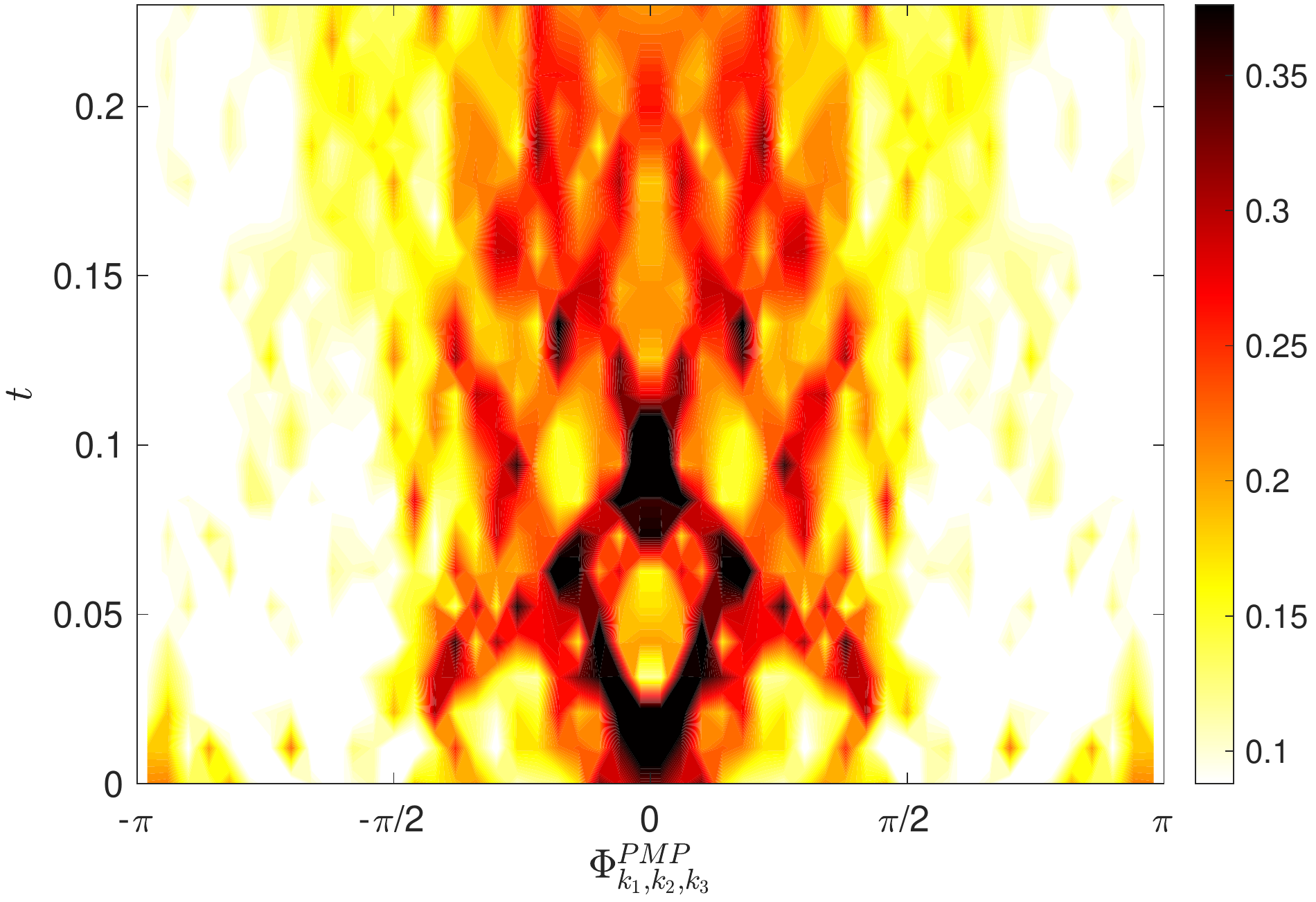}}\qquad
\subfigure[${\mathcal{W}}_{\mathcal{C}_2}^{PMM}(\Phi)(t)$]{\includegraphics[width=0.45\textwidth]{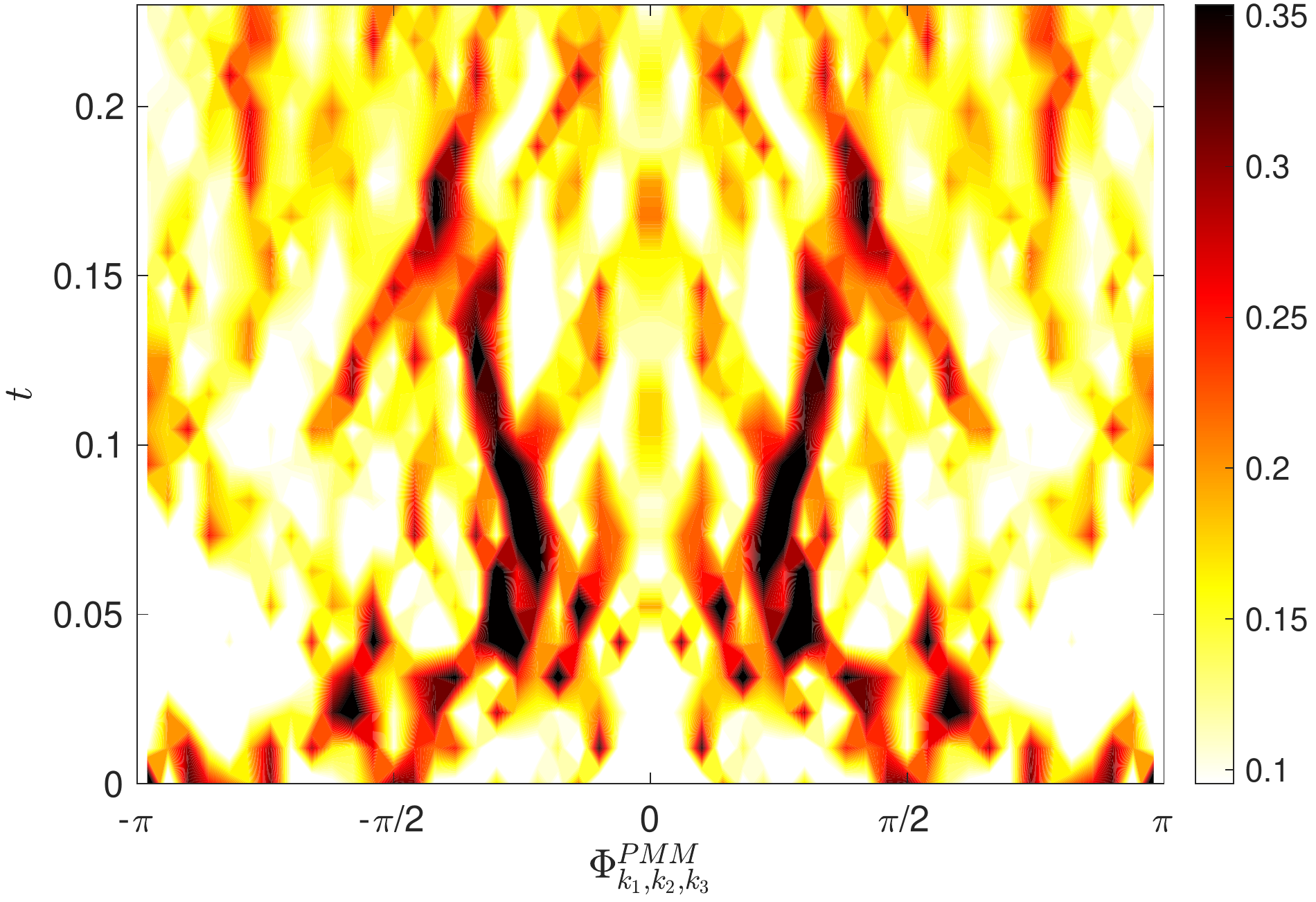}}}
\mbox{\subfigure[${\mathcal{W}}_{\mathcal{C}_2}^{P(PM)}(\Phi)(t)$]{\includegraphics[width=0.45\textwidth]{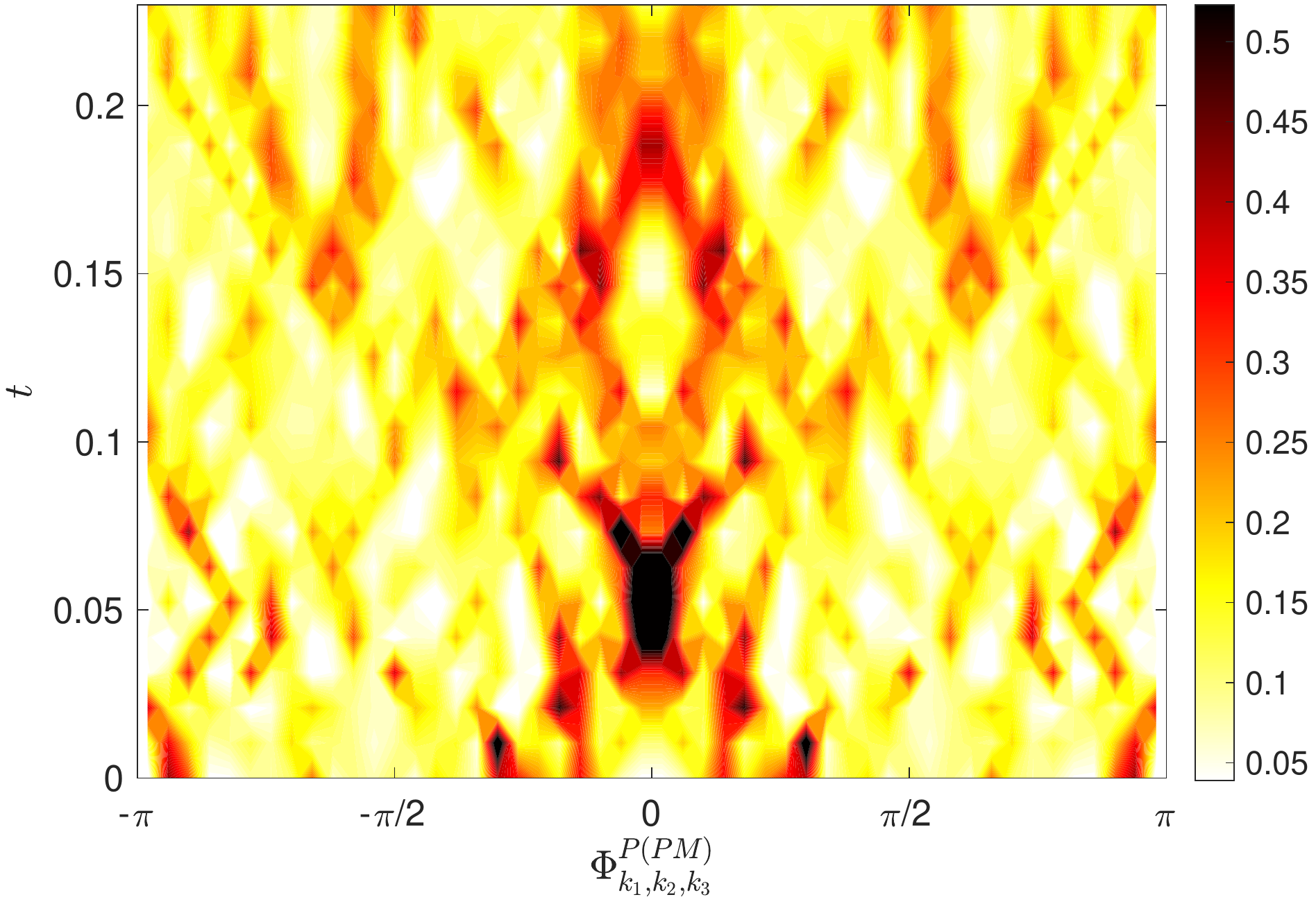}}\qquad
\subfigure[${\mathcal{W}}_{\mathcal{C}_2}^{(PM)P}(\Phi)(t)$]{\includegraphics[width=0.45\textwidth]{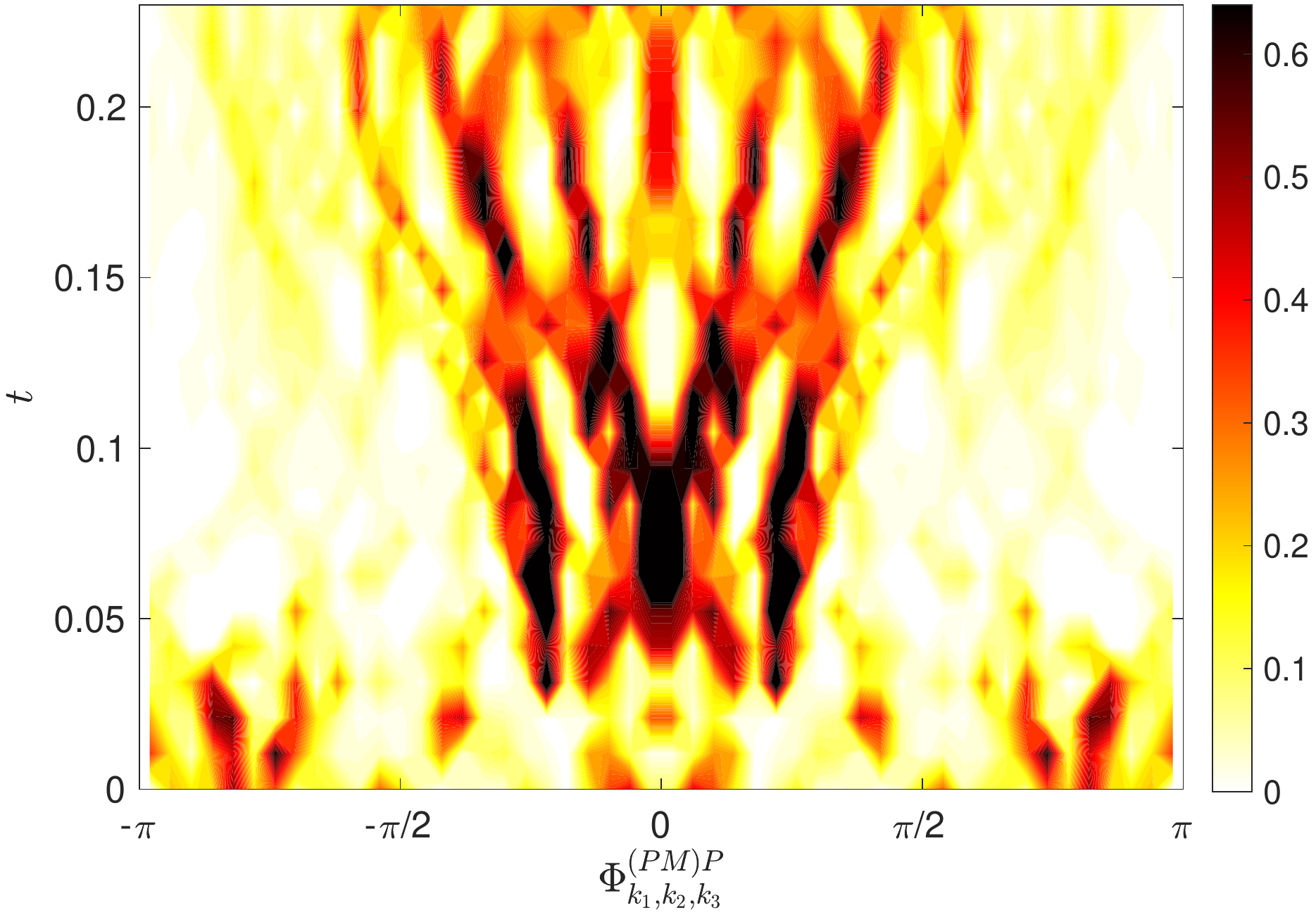}}}
\caption{{[Generic case, $k>2$:]} Time evolution of the weighted
  PDF of the triad phase angle ${\mathcal{W}}_{\mathcal{C}_2}^{s_1 s_2
    s_3}(\Phi)$, cf.~expression \eqref{eq:WC}, in the Navier-Stokes
  flow with the generic initial data for the triad types (a) ``PPP'',
  (b) ``PPM'', (c) ``PMP'' , (d) ``PMM'', (e) ``P(PM)'' and (f)
  ``(PM)P''. In these plots the uniform distribution corresponding to
  $1/2\pi \approx 0.16$ is depicted by yellow ($25\%$ of the colour
  bars).}
\label{fig:NSWpdfkb2_generic}
\end{center}
\end{figure}

\begin{figure}
\begin{center}
\mbox{\subfigure[${\mathcal{W}}_{\mathcal{C}_{10}}^{PPP}(\Phi)(t)$]{\includegraphics[width=0.45\textwidth]{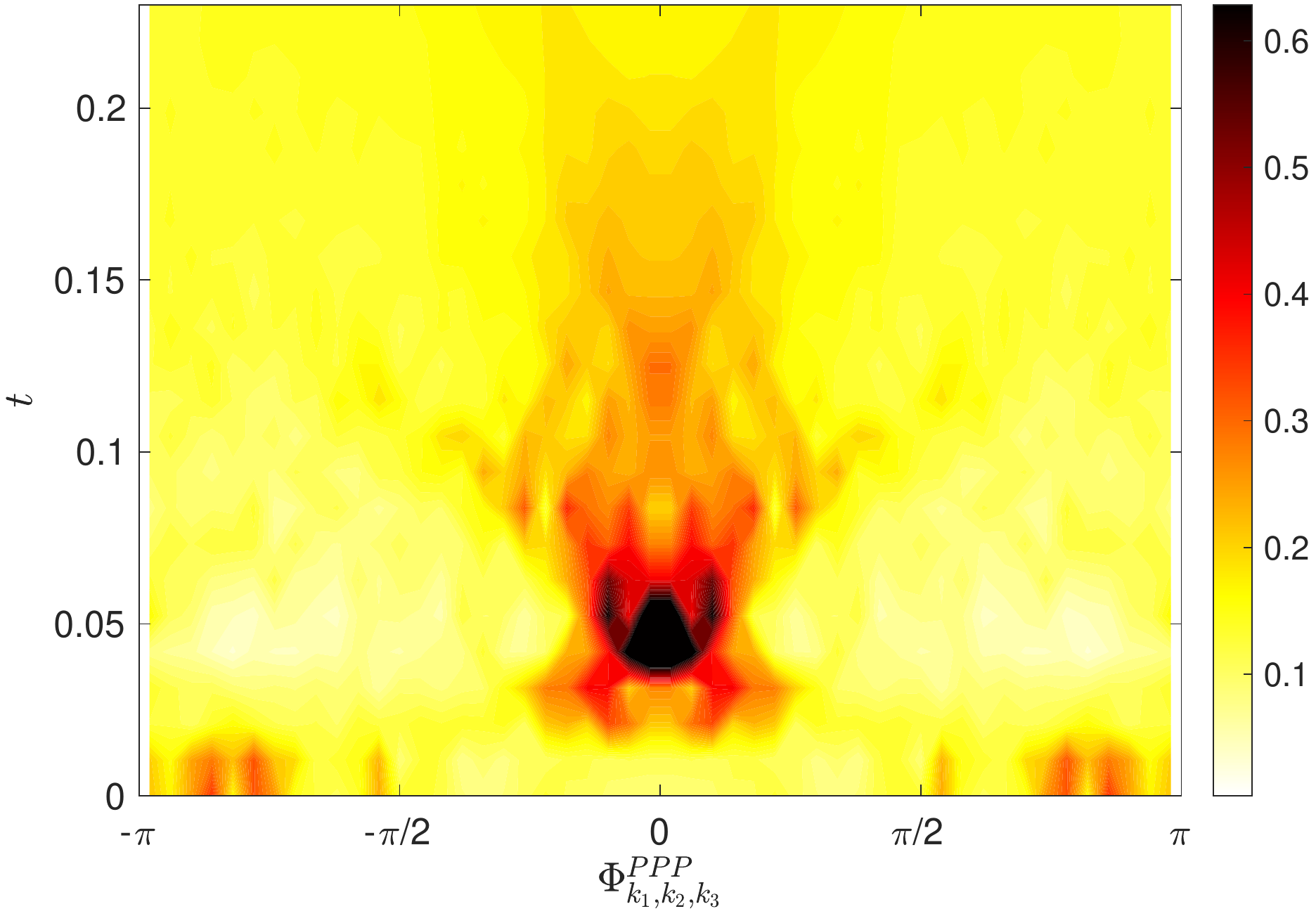}}\qquad
\subfigure[${\mathcal{W}}_{\mathcal{C}_{10}}^{PPM}(\Phi)(t)$]{\includegraphics[width=0.45\textwidth]{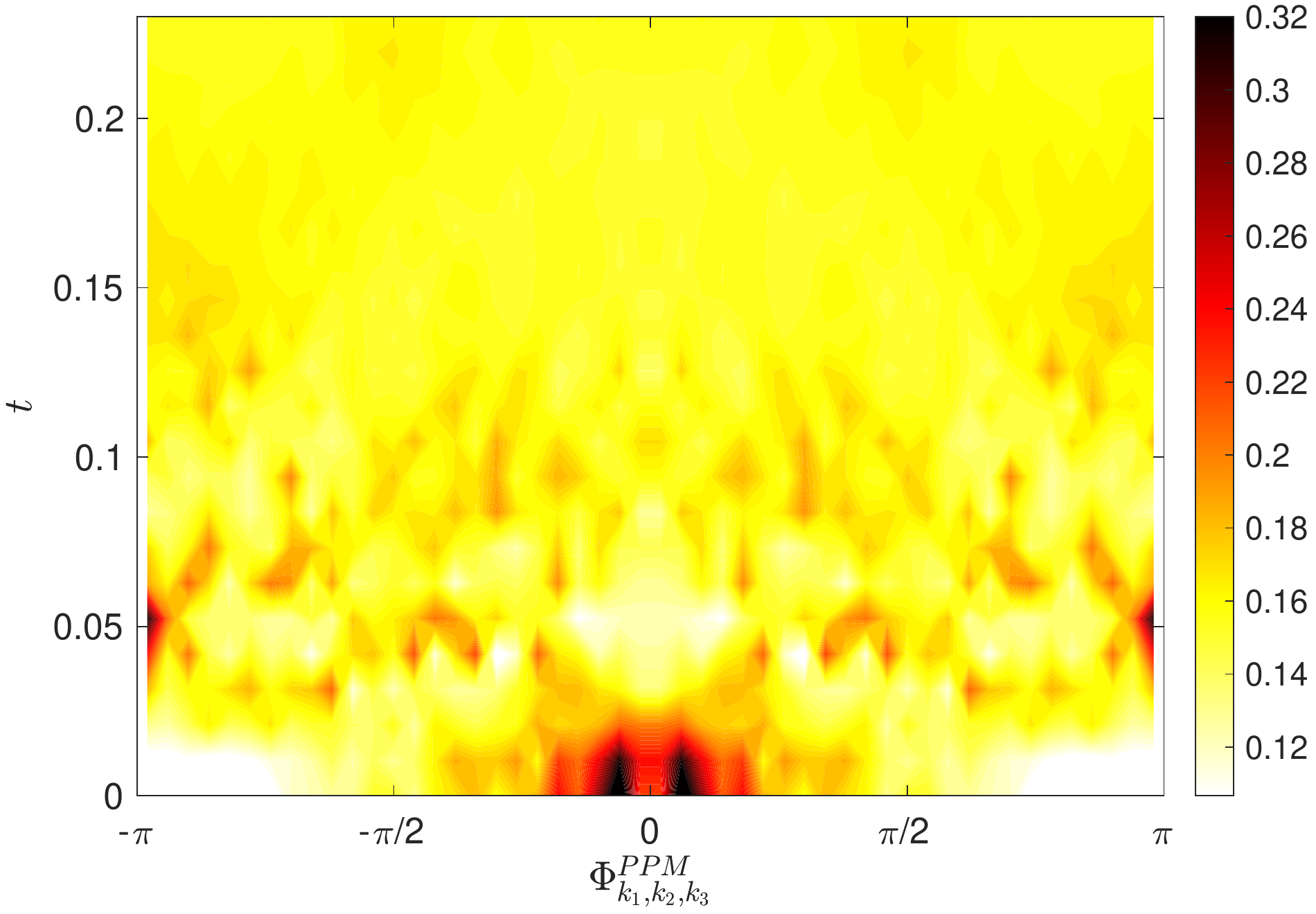}}}\\
\mbox{\subfigure[${\mathcal{W}}_{\mathcal{C}_{10}}^{PMP}(\Phi)(t)$]{\includegraphics[width=0.45\textwidth]{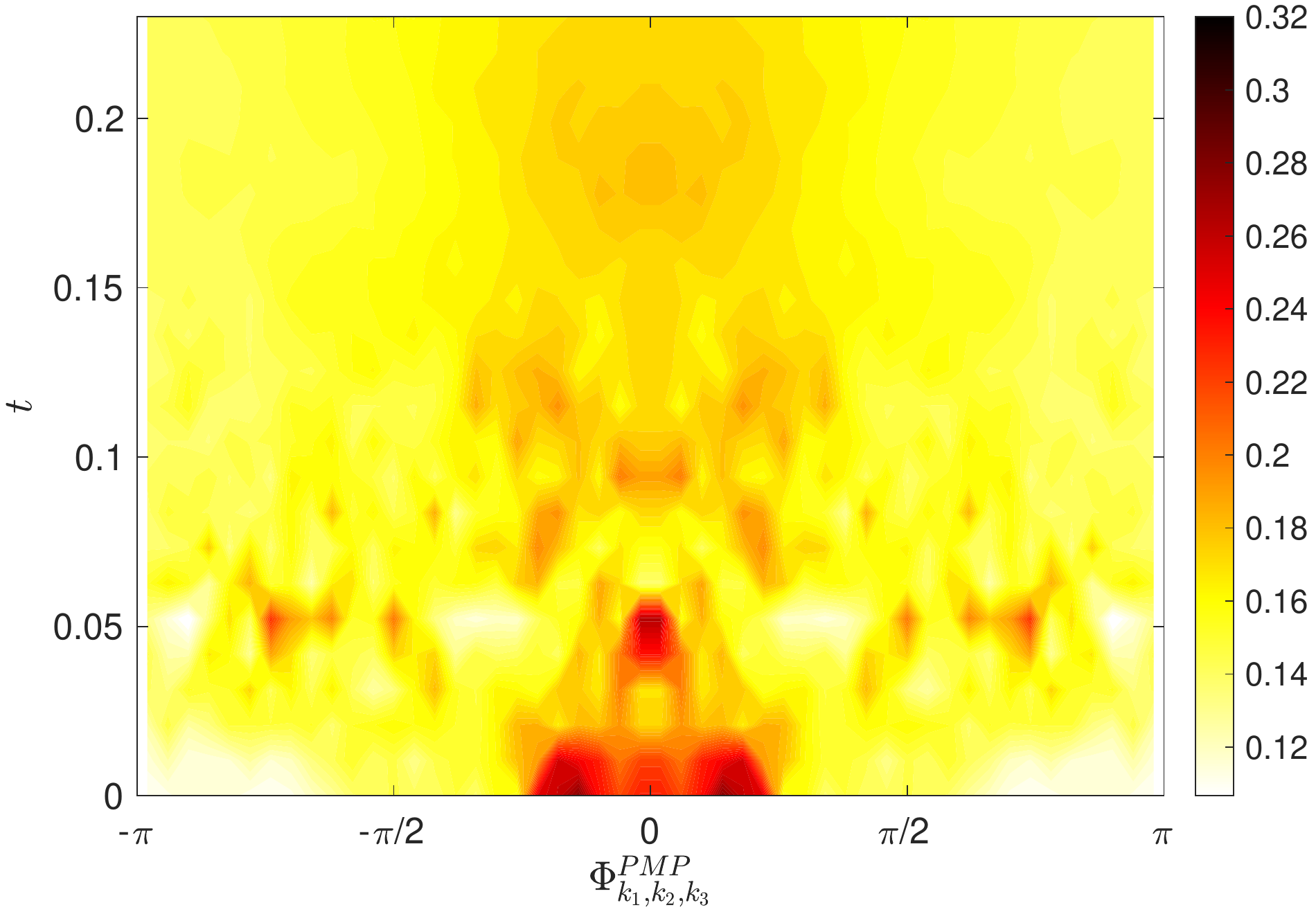}}\qquad
\subfigure[${\mathcal{W}}_{\mathcal{C}_{10}}^{PMM}(\Phi)(t)$]{\includegraphics[width=0.45\textwidth]{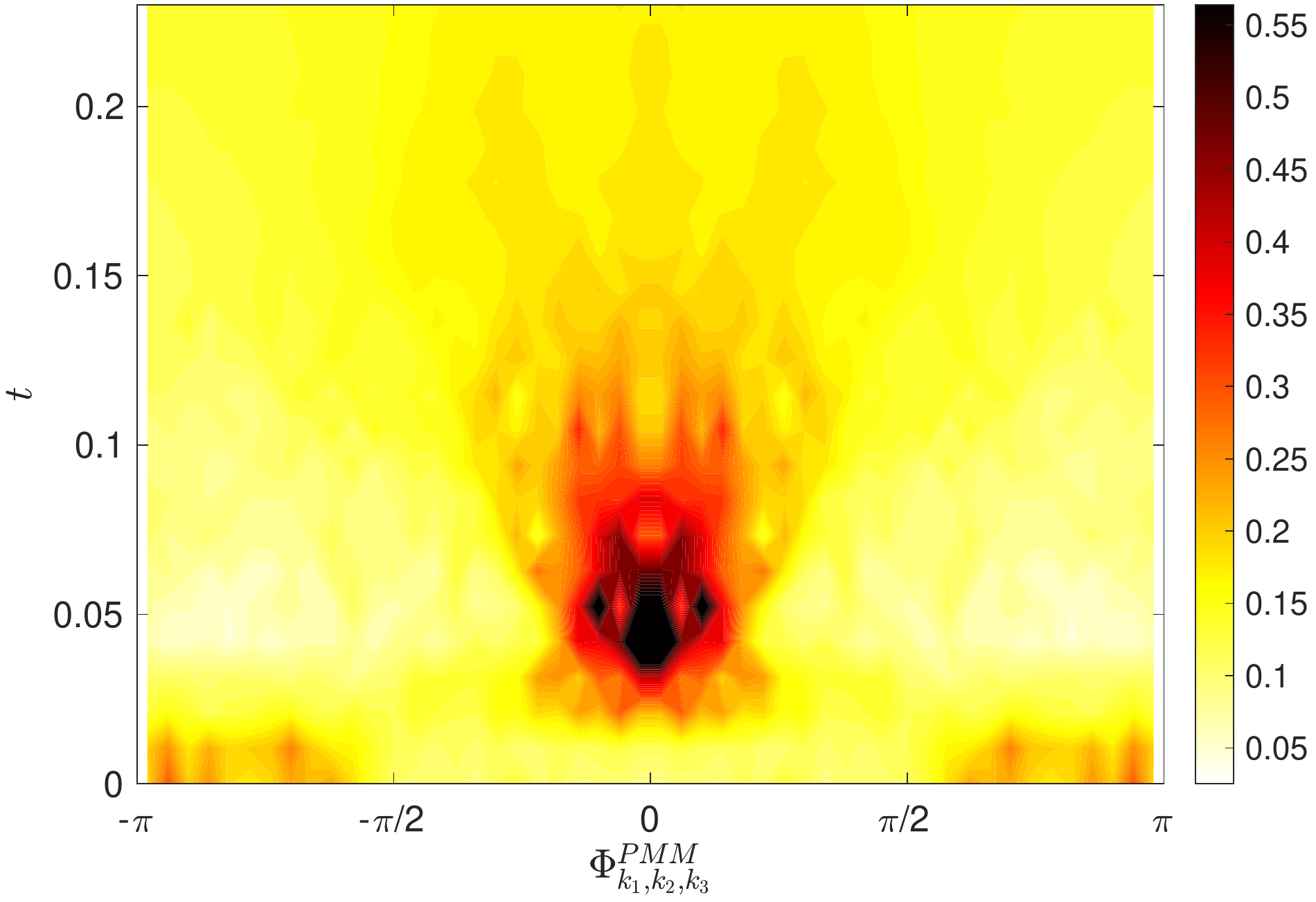}}}
\mbox{\subfigure[${\mathcal{W}}_{\mathcal{C}_{10}}^{P(PM)}(\Phi)(t)$]{\includegraphics[width=0.45\textwidth]{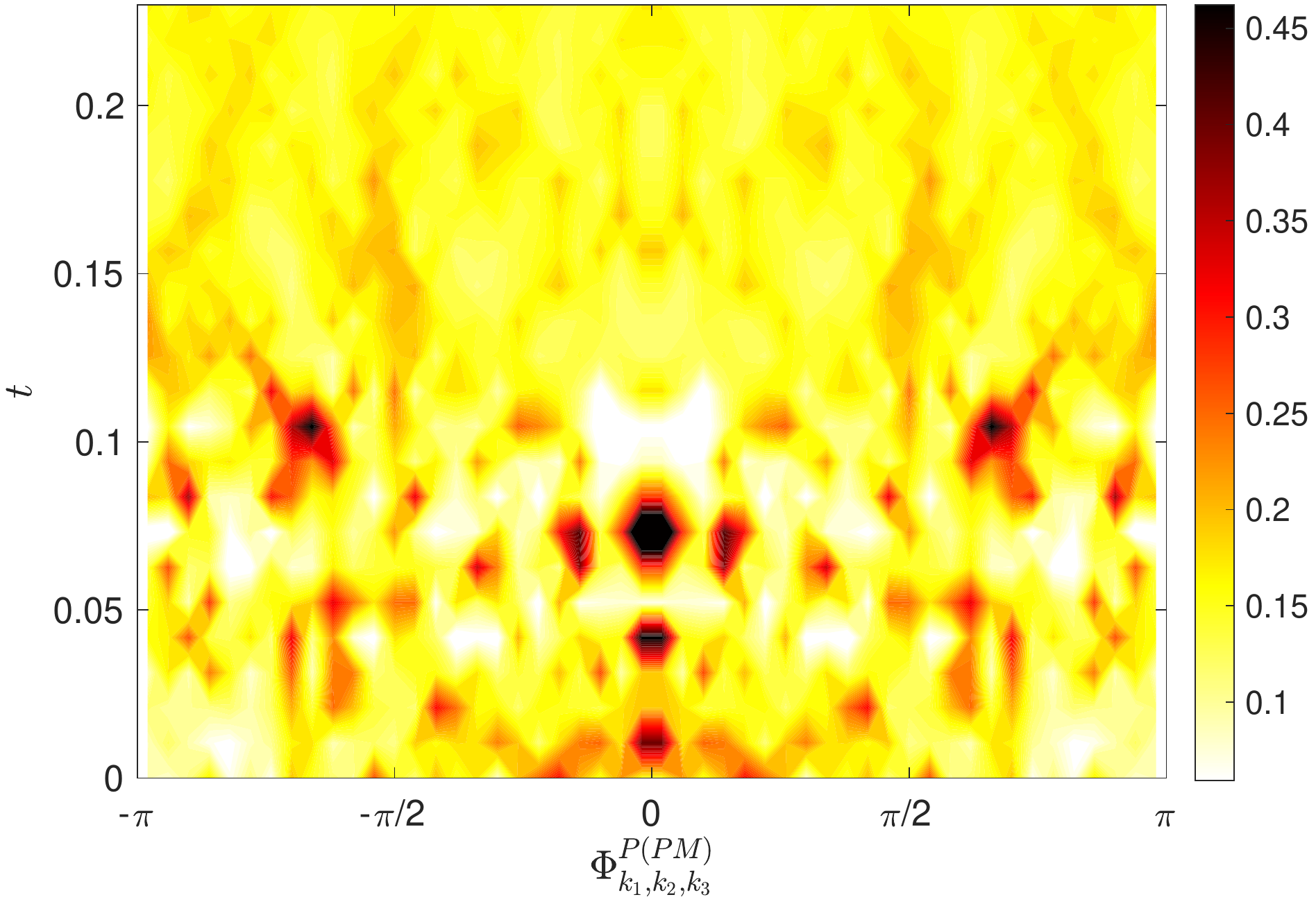}}\qquad
\subfigure[${\mathcal{W}}_{\mathcal{C}_{10}}^{(PM)P}(\Phi)(t)$]{\includegraphics[width=0.45\textwidth]{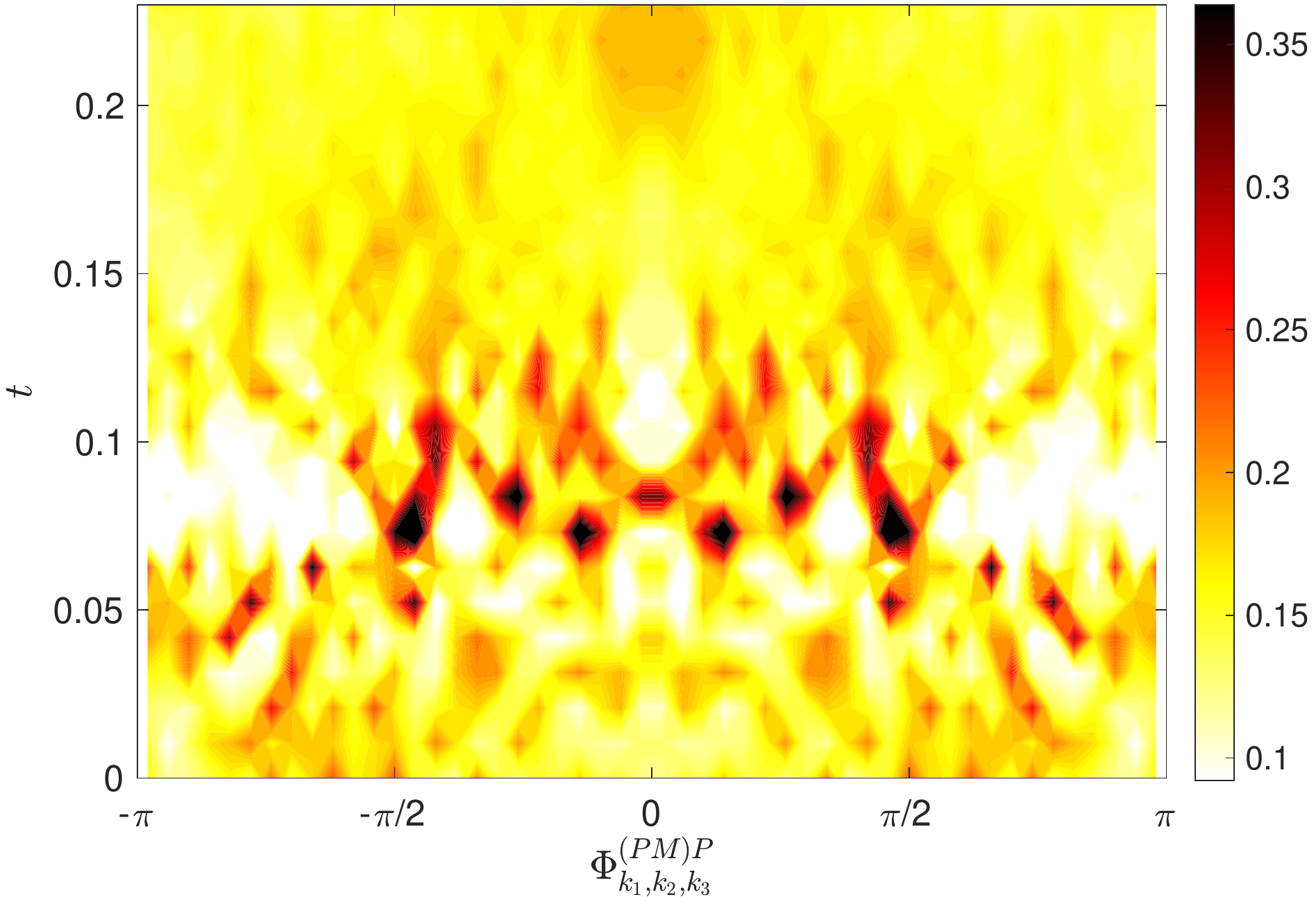}}}

\caption{{[Generic case, $k>10$:]} Time evolution of the weighted
  PDF of the triad phase angle ${\mathcal{W}}_{\mathcal{C}_{10}}^{s_1
    s_2 s_3}(\Phi)$, cf.~expression \eqref{eq:WC}, in the
  Navier-Stokes flow with the generic initial data for the triad types
  (a) ``PPP'', (b) ``PPM'', (c) ``PMP'' , (d) ``PMM'', (e) ``P(PM)''
  and (f) ``(PM)P''. In these plots the uniform distribution
  corresponding to $1/2\pi \approx 0.16$ is depicted by yellow ($25\%$
  of the colour bars).}
\label{fig:NSWpdfkb10_generic}
\end{center}
\end{figure}

Having shown that the generic and extreme cases display comparable
behaviours in terms of the helical-triad assessment of the flux
towards wavenumbers $k>2$, we now turn our attention to the
corresponding study of the flux towards wavenumbers $k>10$. Firstly,
we look at the time evolution of the fluxes for each triad type, shown
in figure \ref{fig:NSFlux_cases}(d) for the generic case. As in the
extreme case, in the generic case the main contributors to the flux
are, in order of importance, the triad types PMM, PPP and PMP. The
time evolution of the fluxes for these triad types is qualitatively
similar to the corresponding evolution in the extreme case, cf.~figure
\ref{fig:NSFlux_cases}(d), with again small differences in the timing
and values of the flux maxima. Importantly, in these plots it is
always the case that a flux maximum is attained earlier in the generic
case, with the extreme case attaining a higher value of the maximum
flux. This difference of the timing and maximum value attained could
be attributed to the lack of coherence in the generic case. Secondly,
we {assess the ranges of the weighted PDFs characterizing the
  main contributors to the flux.}  Comparing the range of values in
the extreme case (table \ref{tab:PDFbounds_extreme}, last column)
against those of the generic case (table \ref{tab:PDFbounds_generic},
last column), we find that in contrast to the case with $k>2$, when
$k>10$ it is the generic case that exhibits a broader range: for the
PMM and PPP triads this range is about 50\% to 70\% larger than in the
extreme case, while the range in the generic case for the PMP triads
shows just a 3\% reduction with respect to the extreme case. Does this
contradict our {earlier observations} that the generic case is
less coherent than the extreme case?  Not necessarily, because even
though the generic case shows a stronger coherence peak at early
times, the extreme case is designed in order to maximize the late-time
flux, cf.~\S\,\ref{sec:ext}.  Comparing panels (b) and (d) of figure
\ref{fig:NSFlux_cases} and focusing on the main flux contributor---the
PMM triads (magenta dash-dotted lines)---the extreme case indeed
presents a slightly lower flux during early and {intermediate
  times, but by the end of the time window the flux in the extreme
  case is larger} and still growing, while in the generic case the
flux is smaller and is already reaching a plateau.

A more detailed comparison between the generic and extreme cases
regarding flux towards wavenumbers $k>10$ involves analyzing the
weighted PDFs {shown in figures \ref{fig:NSWpdfkb10} and
  \ref{fig:NSWpdfkb10_generic} for the extreme and the generic case,
  respectively}. Focusing on the main flux contributors---the
  triads PMM, PPP and PMP in panels (d), (a) and (c) of these
  figures---we see that the generic case exhibits stronger coherence
peaks, but they occur earlier and are less sustained than in the
extreme case. This is consistent with the analysis given in the
preceding paragraph: while the generic case shows stronger phase
coherence at early times, it fails to provide the delicate balance of
phases required in order to {sustain} a large late-time flux.
Maximizing this flux is what the extreme case is designed for.

Finally, we turn to the study of the flow corresponding to the
Taylor-Green initial condition, cf.~\S\,\ref{sec:unimodal}, from the
point of view of helical triads. In this case, the evolution is
strikingly different from the extreme and generic cases. Let us recall
that this initial condition contains Fourier components with only a
few wavevectors, all with wavenumber $k=1$. Therefore, {at the
  beginning it is expected to see low-dimensional dynamics confined
  to} the low-wavenumber region.  We first look at the flux towards
the wavenumber region $k>2$. Data in table \ref{tab:PDFbounds_TG},
second column, shows little difference with respect to the extreme and
generic cases: again, only three triad types {exhibit PDFs with
  variability of at least $\pm5\%$ with respect to the uniform
  distribution}, and again the (PM)P boundary triad type is the one
showing the most variability. The PDFs of these three triad types are
shown in figure \ref{fig:NSpdfkb2_TG}.  In comparison against the
corresponding PDFs in the extreme and generic cases, here we see a
clear difference regarding the (PM)P boundary triad type: on top of
the typical coherent patterns also encountered in the other two cases,
in the Taylor-Green case there is an extra band of densely populated
phase values which persists throughout the time evolution. This band
is clearly correlated with the band that dominates the weighted PDF
for the (PM)P triad type shown in figure \ref{fig:NSWpdfkb2_TG}(f).
Moreover, the actual flux contribution from the (PM)P triads,
cf.~figure \ref{fig:NSFlux_cases}(e) (cyan dotted line), shows that
this triad type contributes over $80\%$ of the total flux, while
representing less than $0.0001\%$ of the total number of triads. Thus,
we are led to the conclusion that the band structure of the PDFs and
weighted PDFs is directly responsible for the observed {robust
  flux behaviour for this triad type}.  Looking at the weighted PDFs of
the four {dominating triad types shown in figures
  \ref{fig:NSWpdfkb2_TG}(a--d),} we again observe the presence of
persistent bands, some of which seem to interact and evolve while
keeping a strong coherence.  However, the corresponding fluxes shown
in figure \ref{fig:NSFlux_cases}(e) are not particularly strong,
suggesting that the coherence observed for these triad types is
perhaps a secondary effect associated with their interaction with the
flux-carrying (PM)P triads.  At this stage it is relevant to note an
important difference between the extreme and generic cases versus the
Taylor-Green case evident in figures \ref{fig:NSFlux_cases}(a,c,e):
while both the extreme and the generic cases show a gradual
development of their most important flux contributor, the triad type
PMP, with a late-time maximum, the Taylor-Green case reveals an
immediate growth of its most important flux contributor, the boundary
triad type (PM)P, with an early-time maximum at half the strength as
compared to the extreme and generic cases.  Finally, a remarkable
aspect of the flux to the wavenumber region $k>2$ in the Taylor-Green
case is that the weighted PDF of the P(PM) triad type shown in figure
\ref{fig:NSWpdfkb2_TG}(e) displays an extra symmetry about
$\Phi=\pi/2$ (on top of the usual symmetry about $\Phi=0$), which
implies that the flux contribution from this triad type is in fact
identically zero. This extra symmetry could be related to a low
effective dimension of the dynamical system associated with the
Fourier modes with $k\leq 2$; {the flux to the wavenumber region
  $k>10$,} to be discussed next, does not exhibit this symmetry
anymore.
  
We now discuss the flux towards wavenumbers $k>10$ in the flow with
the Taylor-Green initial condition. Comparing the data in table
\ref{tab:PDFbounds_TG}, last two columns, with the corresponding
columns for the other two cases in tables \ref{tab:PDFbounds_extreme}
and \ref{tab:PDFbounds_generic}, we see no significant differences in
the PDFs for all triad types. However, for the weighted PDFs, the
Taylor-Green case displays a slightly larger range of values than in
the extreme or generic cases for the four {dominating} triad
types, while for the two boundary triad types, the range in the
Taylor-Green case is several times larger than in the extreme or
generic cases. Looking at the weighted PDFs for each triad type in
figure \ref{fig:NSWpdfkb10_TG}, it is evident that the two boundary
triad types show the band structure characteristic of the flux to
wavenumbers $k>2$ discussed earlier. This band structure is
responsible for an increased range in the weighted PDFs for these
triad types as compared against the extreme and generic cases
(however, as we shall see, in terms of total fluxes these triads do
not contribute much). We also notice that the extra symmetry of the
weighted PDF for the triad P(PM) is now broken. As for the four
{dominating} triad types, cf.~figure
\ref{fig:NSWpdfkb10_TG}(a--d), these display a behaviour that is
closer to the extreme and generic cases, in that the maxima of the
weighted PDFs are more localized in time and the coherent patterns
exhibit more complex dynamics, with no obviously persistent bands, a
fact that is probably related to an increased number of degrees of
freedom involved in the {flux towards modes with wavenumbers
  $k>10$.}  An important difference with respect to the extreme and
generic cases is that local maxima appear in the weighted PDFs at
early times ($t=0.01$---$0.05$). This is further reflected in the
{evolution of the total flux shown in} figure
\ref{fig:NSFlux_cases}(f) which reveals an early growth phase of the
fluxes (starting at about $t=0.05$), much earlier than in the extreme
and generic cases {shown in figures \ref{fig:NSFlux_cases}(b,d).
  These figures} also show that the flux {towards modes with
  wavenumbers $k > 10$} for each triad type in the Taylor-Green case
{attains maxima at earlier times with} peak values about $7$
times smaller than in the corresponding fluxes in the extreme and
generic cases. Perhaps the only obvious similarity we could infer from
a comparison of the three cases {in figures
  \ref{fig:NSFlux_cases}(b,d,f)} is that the three {dominating}
flux contributors are the same in these cases, namely, PMM, then PPP,
followed by the PMP helical triads, with similar peak ratios. Another
similarity---looking at the same figures---is that the three cases show a small negative flux
contribution from the PPM triads near the end of the time window. This
is also evident from the weighted PDFs shown for each case in figures
\ref{fig:NSWpdfkb10}(b), \ref{fig:NSWpdfkb10_generic}(b) and
\ref{fig:NSWpdfkb10_TG}(b) where it can be seen that at late times the
distributions become more concentrated near $\Phi=\pm \pi$. With an
even tinier (but positive) flux contribution, we find the boundary
triad type (PM)P, {for which the time evolutions of the PDFs} are
compared across the three cases in figure \ref{fig:NSpdfkb10},
{to be qualitatively similar,} except for a persistent band
around $\Phi=0$ in the Taylor-Green case.



\begin{table}
  \begin{center}
    \begin{tabular}{|l|c|c|c|c|} \hline
      \backslashbox{triad}{cases}   & $k=2$, PDF  & $k=2$, wPDF  & $k=10$, PDF &  $k=10$, wPDF \\ \hline
     PPP   & \textcolor[rgb]{0.7,0.7,0.7}{[0.158,0.161]}   &  [0,1.785]  & \textcolor[rgb]{0.7,0.7,0.7}{[0.159,0.160]}  &  [0.004,0.859] \\ 
      PPM  &  \textcolor[rgb]{0.7,0.7,0.7}{[0.158,0.161]}  &  [0,2.281]  & \textcolor[rgb]{0.7,0.7,0.7}{[0.159,0.160]}  &  [0.019,0.876] \\    
      PMP    & [0.153,0.168] &  [0,1.831] &  \textcolor[rgb]{0.7,0.7,0.7}{[0.159,0.161]} &   $[0.012,0.616]$ \\
      PMM   & \textcolor[rgb]{0.7,0.7,0.7}{[0.158,0.162]} & [0,1.740]  & \textcolor[rgb]{0.7,0.7,0.7}{[0.159,0.160]} &  $[0.003,0.918]$  \\ 
	 P(PM)    & [0.149,0.173] &  [0,0.673] &  \textcolor[rgb]{0.7,0.7,0.7}{[0.157,0.163]} &   $[0.000,2.230]$ \\
	 (PM)P    & [0.000,0.750] &  [0.000,3.979] &  [0.148,0.175] &   $[0.000,2.239]$ \\\hline
    \end{tabular}
  \end{center}
  \caption{{[Taylor-Green case:] Upper and lower bounds on the PDFs and wPDFs of triad phases of the different types in the Navier-Stokes flow with the Taylor-Green initial data, cf.~figures \ref{fig:NSpdfkb2_TG}--\ref{fig:NSWpdfkb10_TG}. {Shaded} intervals (light gray) represent PDFs that are very close (within $\pm5\%$) to the uniform distribution $1/2\pi \approx 0.16$.}}
  \label{tab:PDFbounds_TG}
\end{table}

\begin{figure}
\begin{center}
\mbox{\subfigure[${\mathcal{P}}_{\mathcal{C}_2}^{PMP}(\Phi)(t)$]{\includegraphics[width=0.45\textwidth]{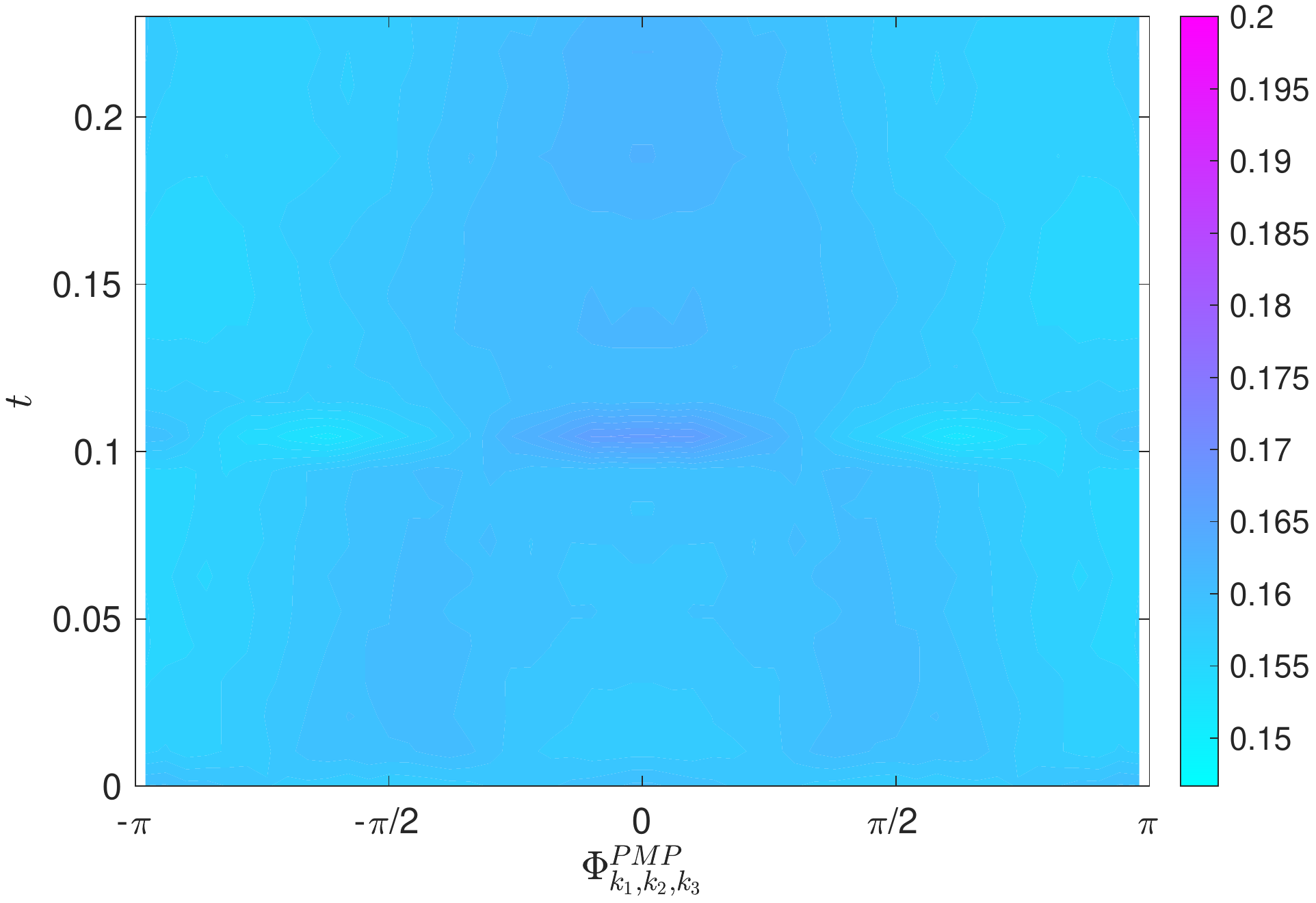}}}
\mbox{\subfigure[${\mathcal{P}}_{\mathcal{C}_2}^{P(PM)}(\Phi)(t)$]{\includegraphics[width=0.45\textwidth]{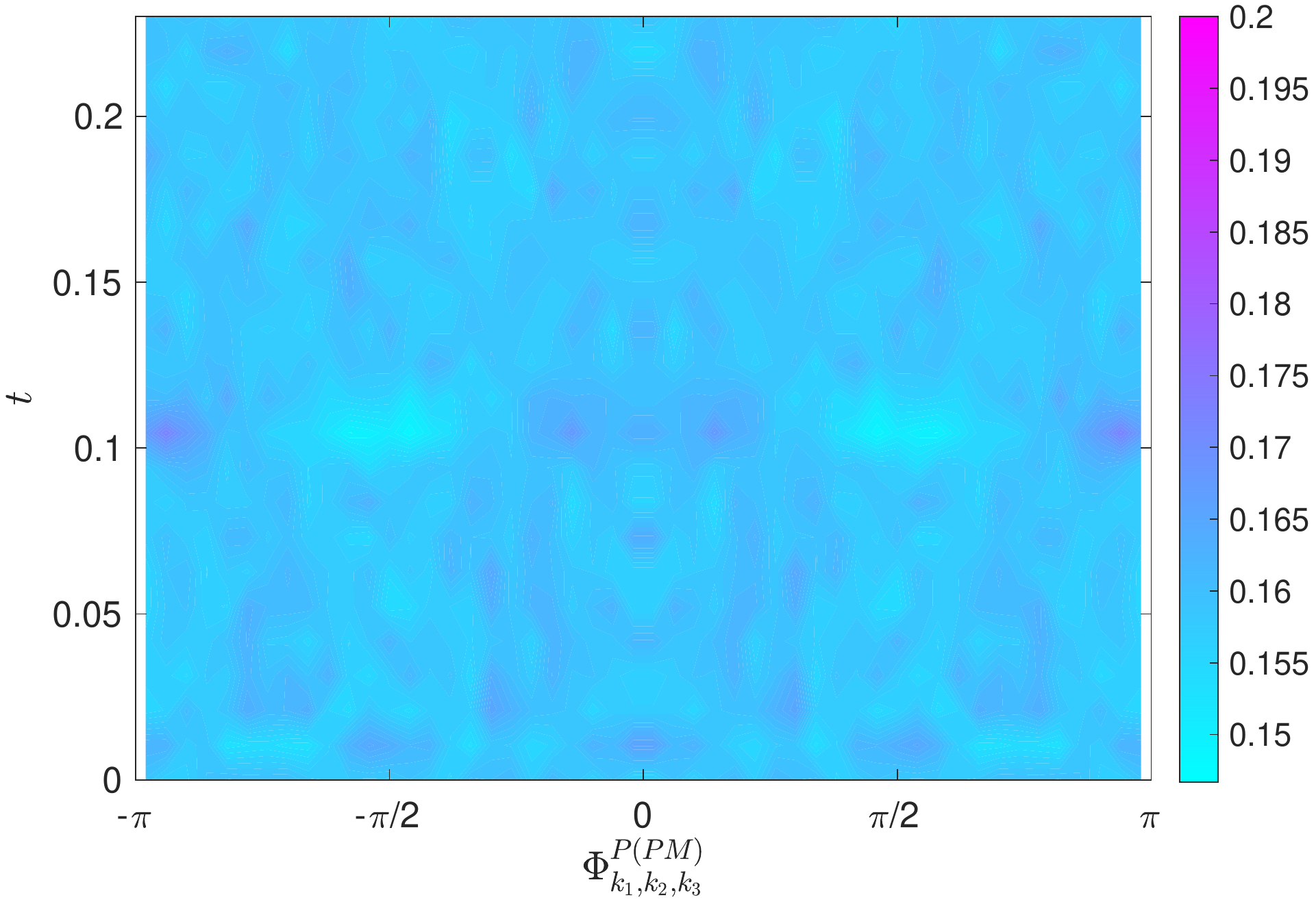}}\qquad
\subfigure[${\mathcal{P}}_{\mathcal{C}_2}^{(PM)P}(\Phi)(t)$]{\includegraphics[width=0.45\textwidth]{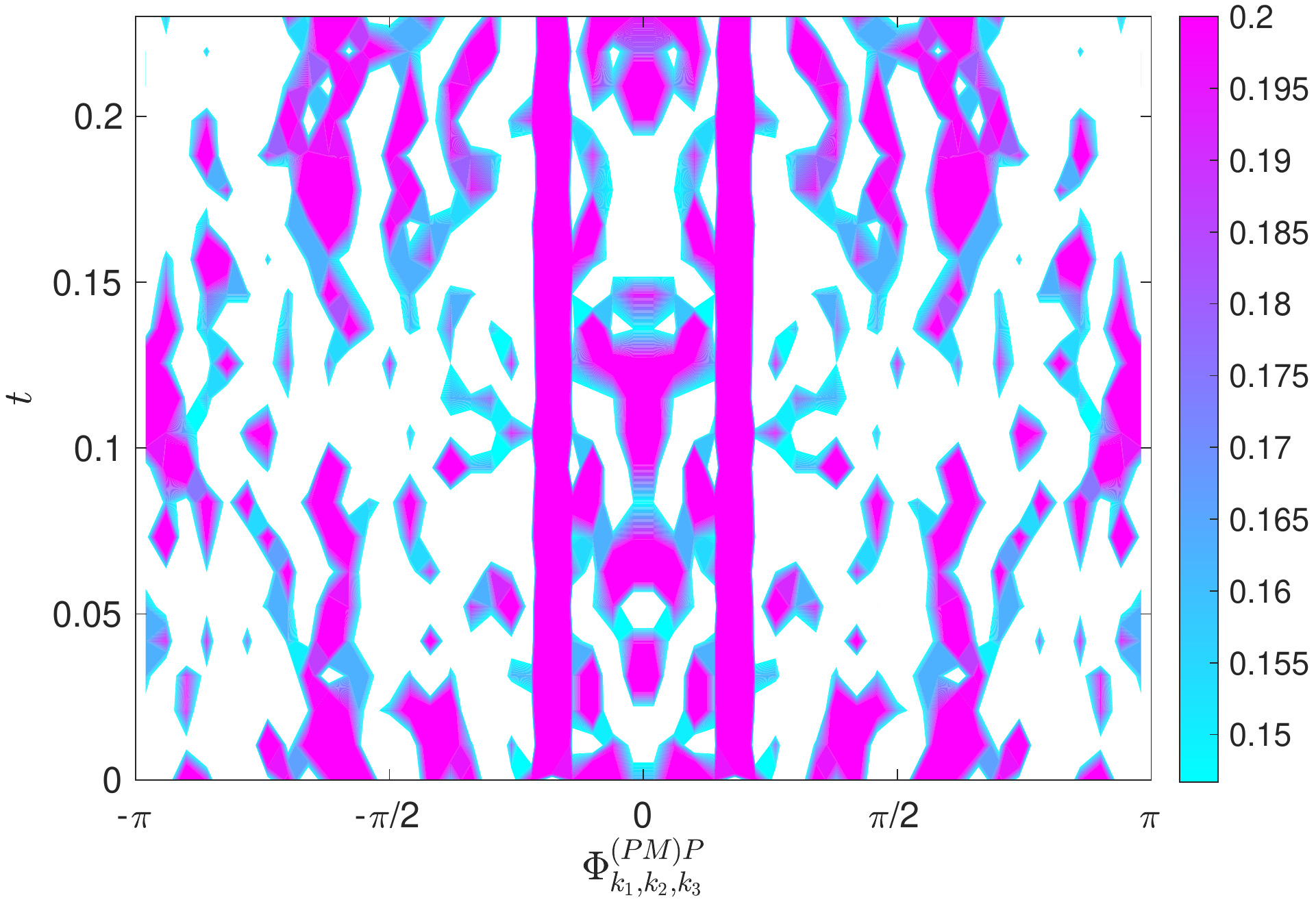}}}
\caption{{[Taylor-Green case, $k>2$:]} Time evolution of the PDFs
  of the triad phase angle ${\mathcal{P}}_{\mathcal{C}_2}^{s_1 s_2
    s_3}(\Phi)$ {for triads of different types in the
    Navier-Stokes flow with the Taylor-Green initial condition.  Only
    triad types with PDFs revealing variability of at least $\pm5\%$
    with respect to the uniform distribution equal to $1/2\pi \approx
    0.16$ (corresponding to light blue colour in the plots), are
    shown:} (a) ``PMP'', (b) ``P(PM)'' and (c) ``(PM)P ''. The PDFs
  for the triad types ``PPP'', ``PPM'' and ``PMM'' are not shown, as
  they are essentially uniform (see table \ref{tab:PDFbounds_TG}).}
\label{fig:NSpdfkb2_TG}
\end{center}
\end{figure}

%

\begin{figure}
\begin{center}
\mbox{\subfigure[${\mathcal{W}}_{\mathcal{C}_2}^{PPP}(\Phi)(t)$]{\includegraphics[width=0.45\textwidth]{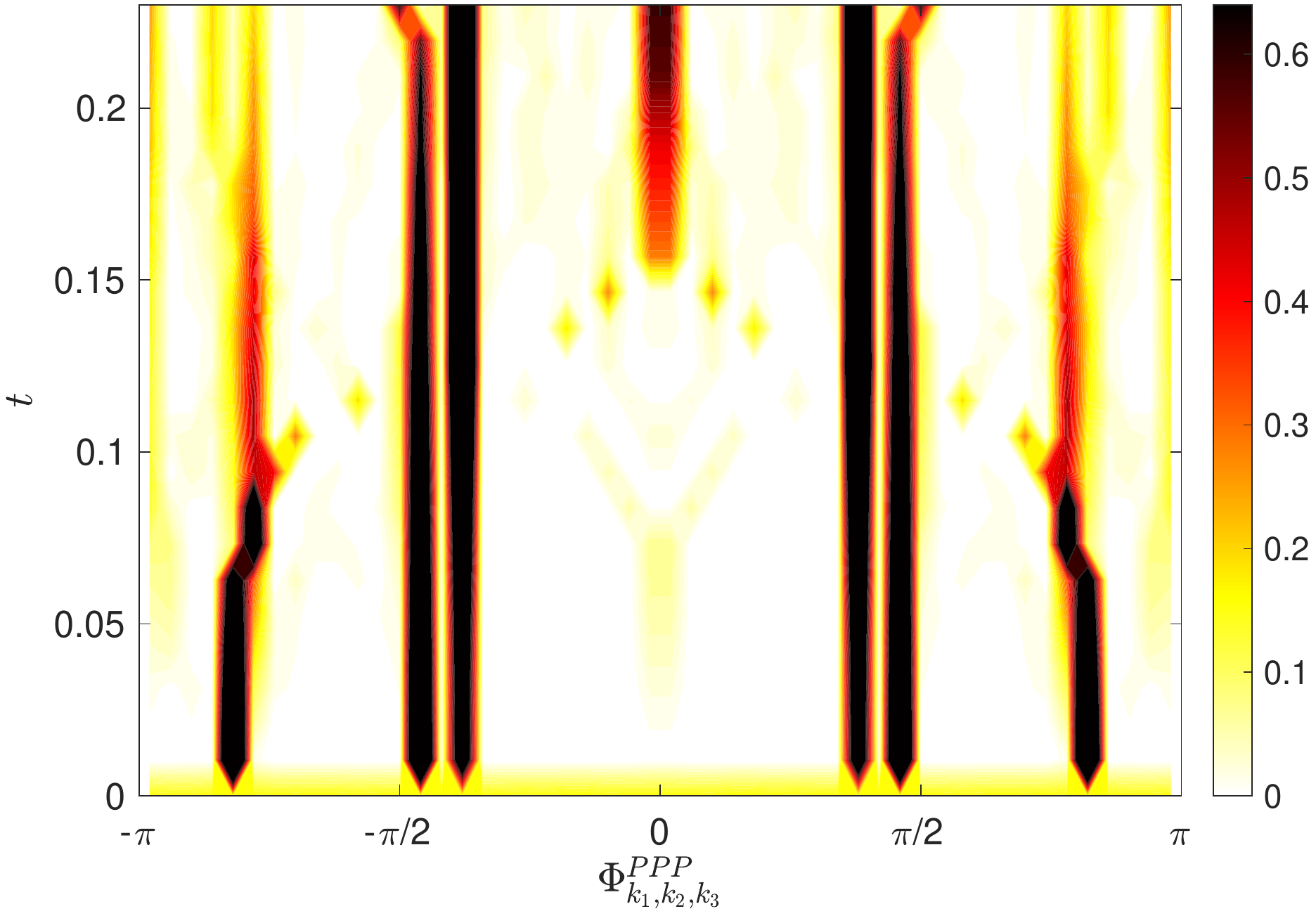}}\qquad
\subfigure[${\mathcal{W}}_{\mathcal{C}_2}^{PPM}(\Phi)(t)$]{\includegraphics[width=0.45\textwidth]{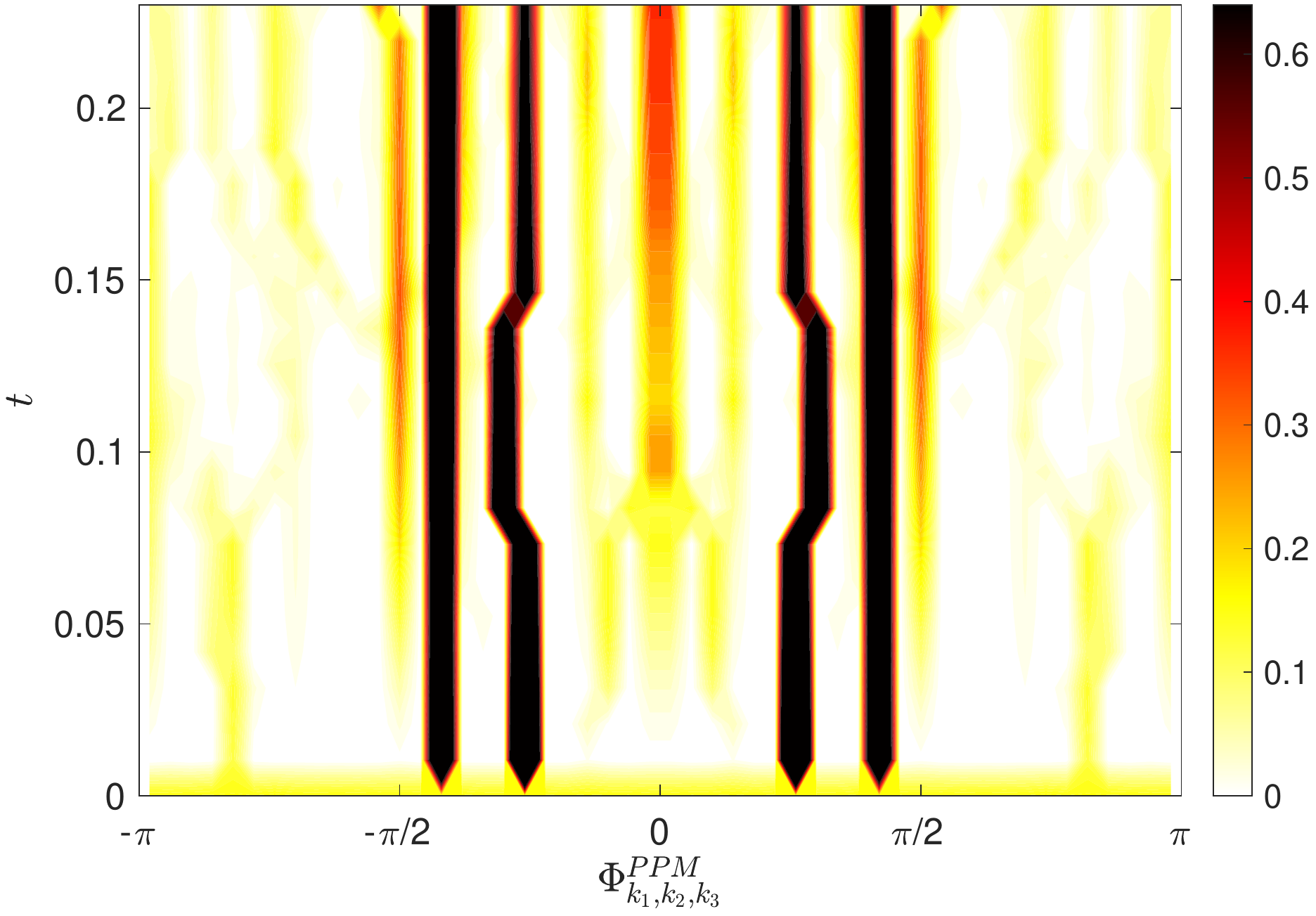}}}\\
\mbox{\subfigure[${\mathcal{W}}_{\mathcal{C}_2}^{PMP}(\Phi)(t)$]{\includegraphics[width=0.45\textwidth]{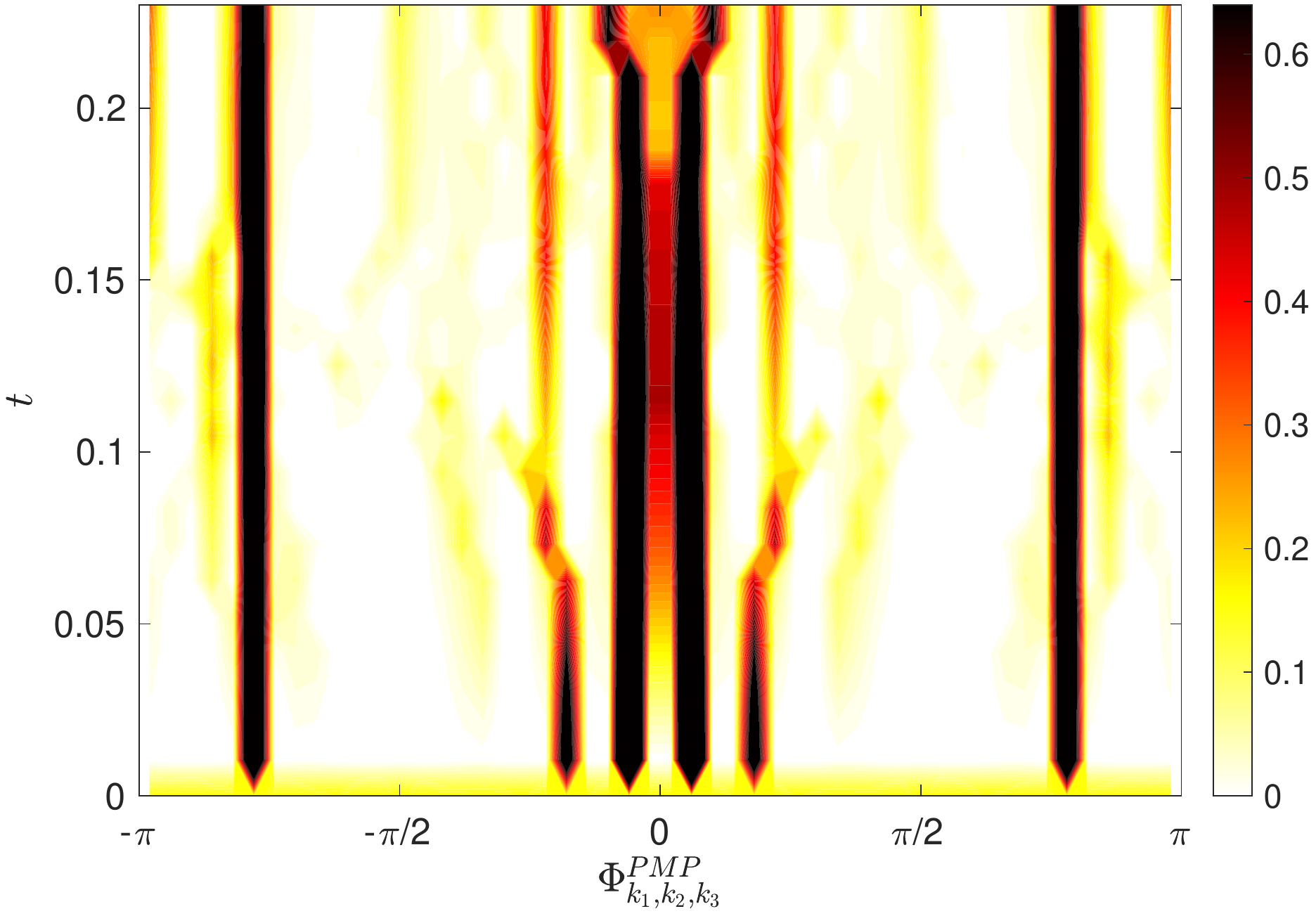}}\qquad
\subfigure[${\mathcal{W}}_{\mathcal{C}_2}^{PMM}(\Phi)(t)$]{\includegraphics[width=0.45\textwidth]{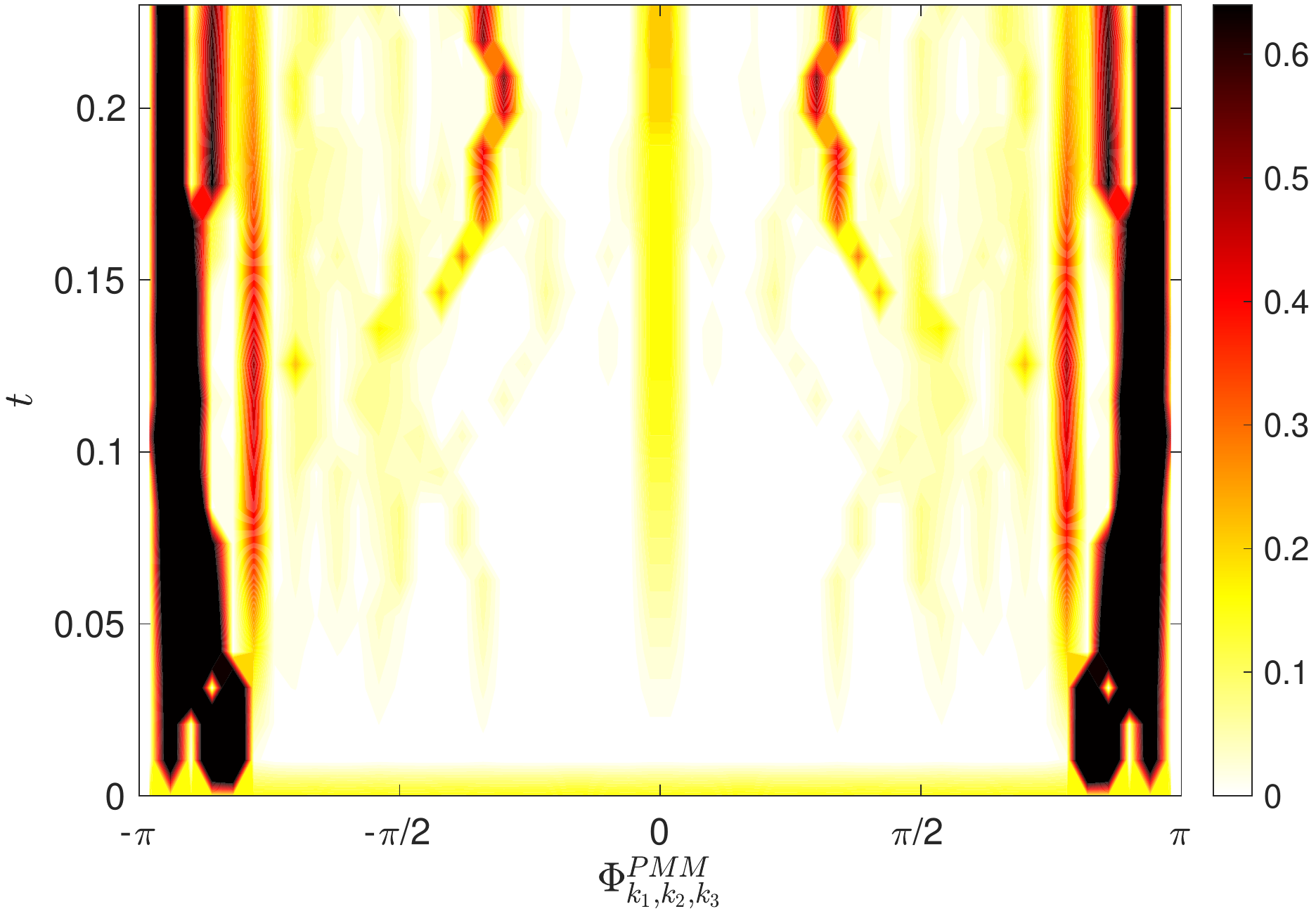}}}
\mbox{\subfigure[${\mathcal{W}}_{\mathcal{C}_2}^{P(PM)}(\Phi)(t)$]{\includegraphics[width=0.45\textwidth]{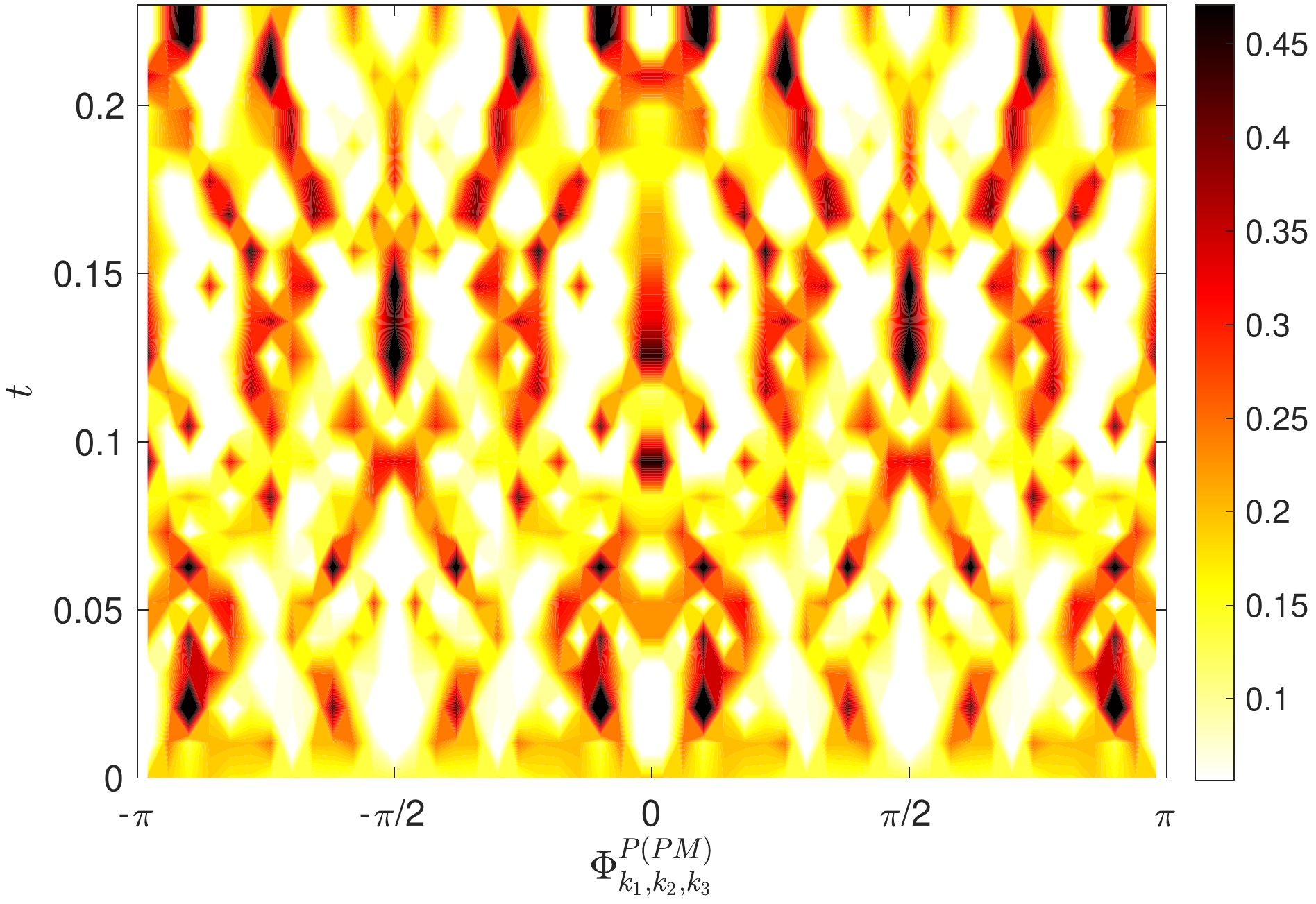}}\qquad
\subfigure[${\mathcal{W}}_{\mathcal{C}_2}^{(PM)P}(\Phi)(t)$]{\includegraphics[width=0.45\textwidth]{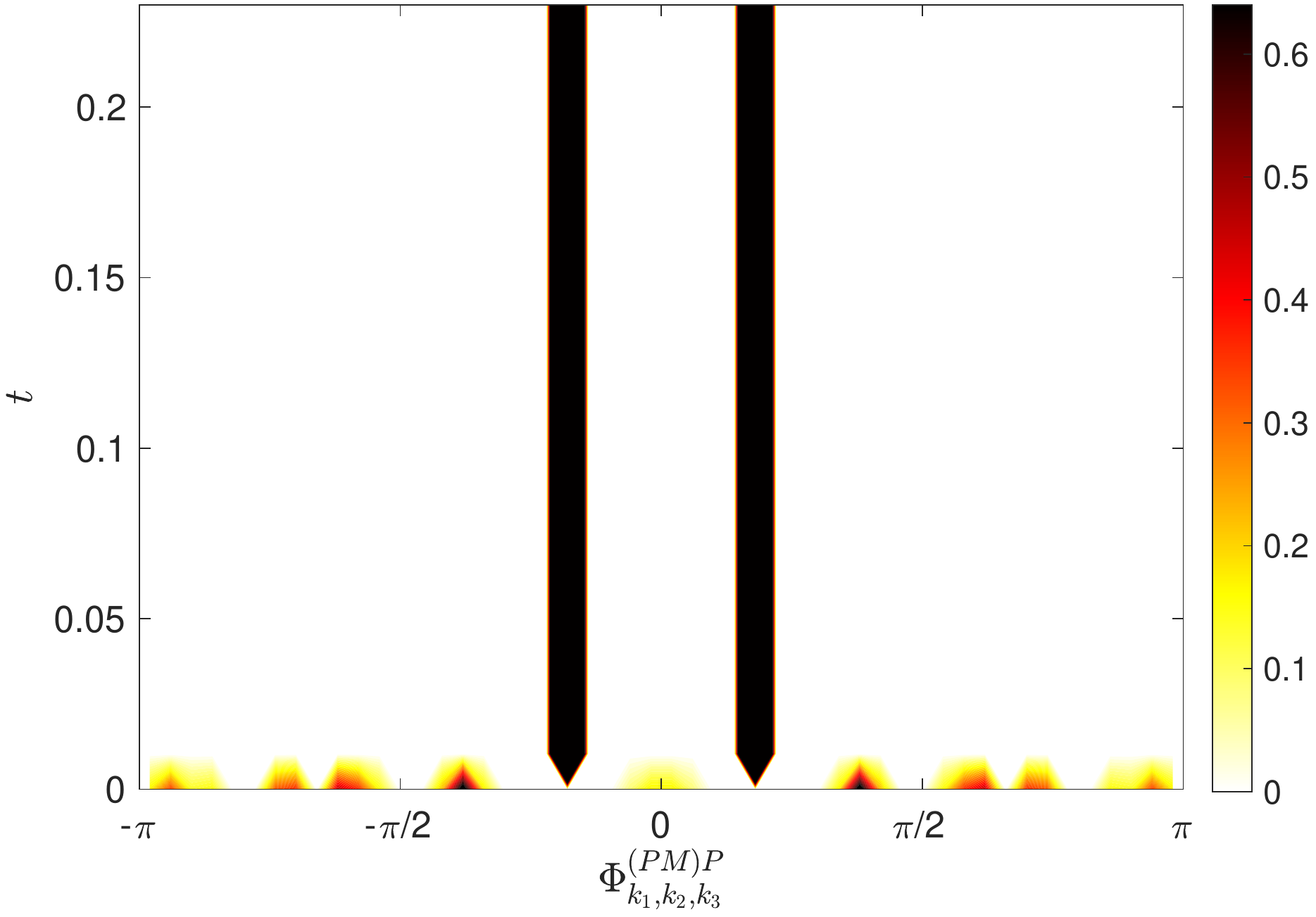}}}

\caption{{[Taylor-Green case, $k>2$:]} Time evolution of the
  weighted PDF of the triad phase angle
  ${\mathcal{W}}_{\mathcal{C}_{2}}^{s_1 s_2 s_3}(\Phi)$,
  cf.~expression \eqref{eq:WC}, in the Navier-Stokes flow with the
  Taylor-Green initial condition for the triad types (a) ``PPP'', (b)
  ``PPM'', (c) ``PMP'' , (d) ``PMM'', (e) ``P(PM)'' and (f) ``(PM)P''.
  In these plots the uniform distribution corresponding to $1/2\pi
  \approx 0.16$ is depicted by yellow ($25\%$ of the colour bars).}
\label{fig:NSWpdfkb2_TG}
\end{center}
\end{figure}

\begin{figure}
\begin{center}
\mbox{\subfigure[${\mathcal{W}}_{\mathcal{C}_{10}}^{PPP}(\Phi)(t)$]{\includegraphics[width=0.45\textwidth]{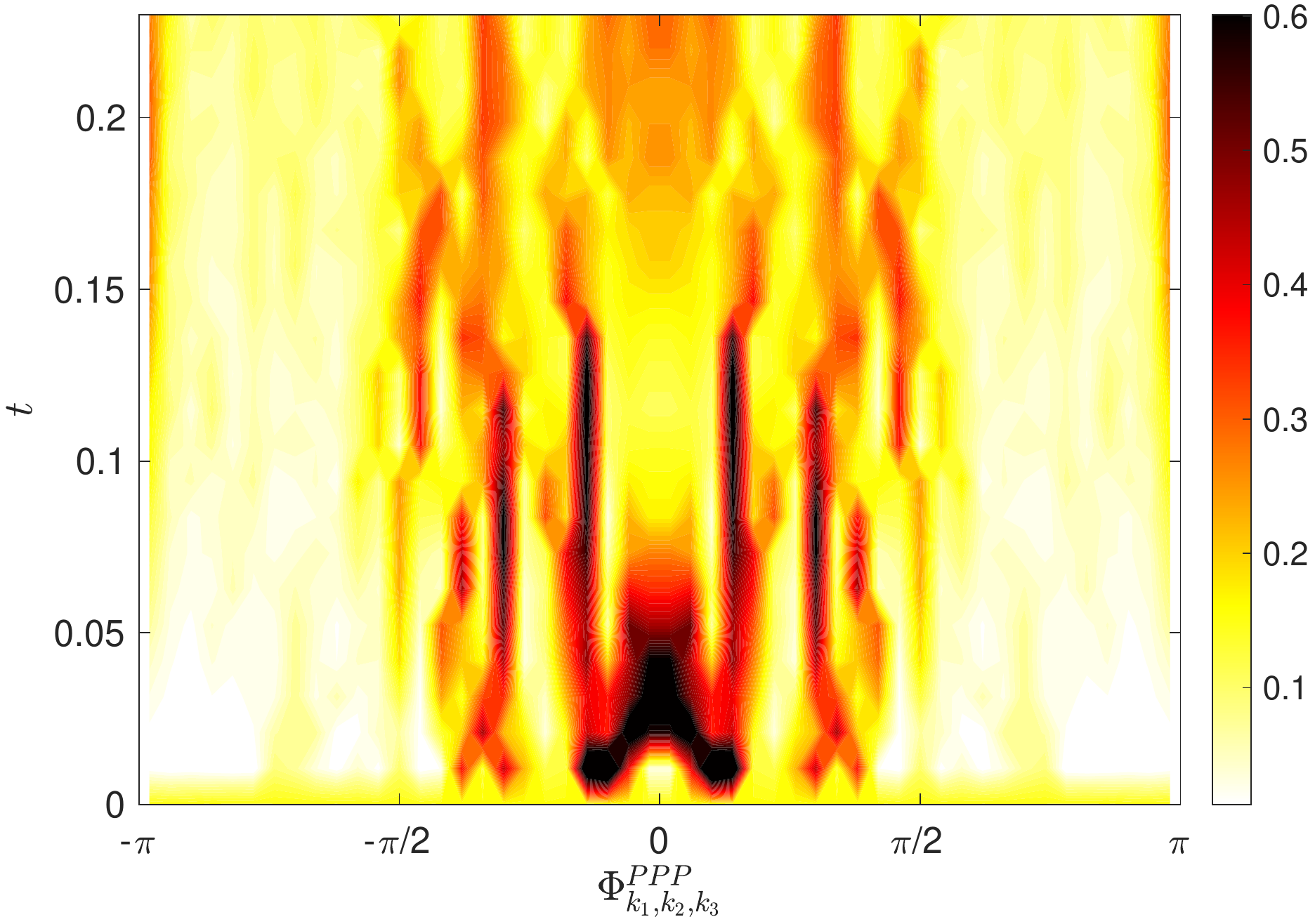}}\qquad
\subfigure[${\mathcal{W}}_{\mathcal{C}_{10}}^{PPM}(\Phi)(t)$]{\includegraphics[width=0.45\textwidth]{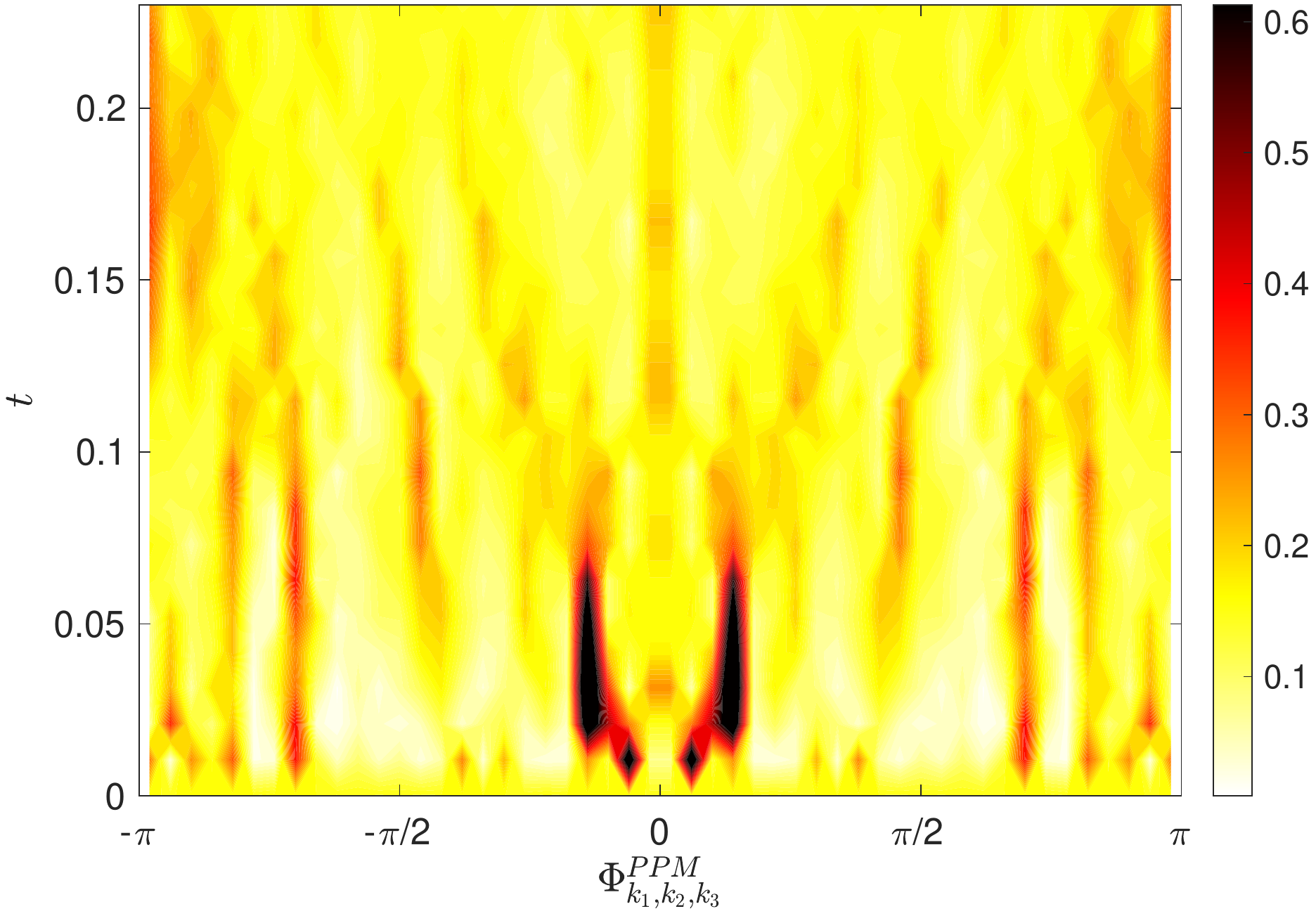}}}\\
\mbox{\subfigure[${\mathcal{W}}_{\mathcal{C}_{10}}^{PMP}(\Phi)(t)$]{\includegraphics[width=0.45\textwidth]{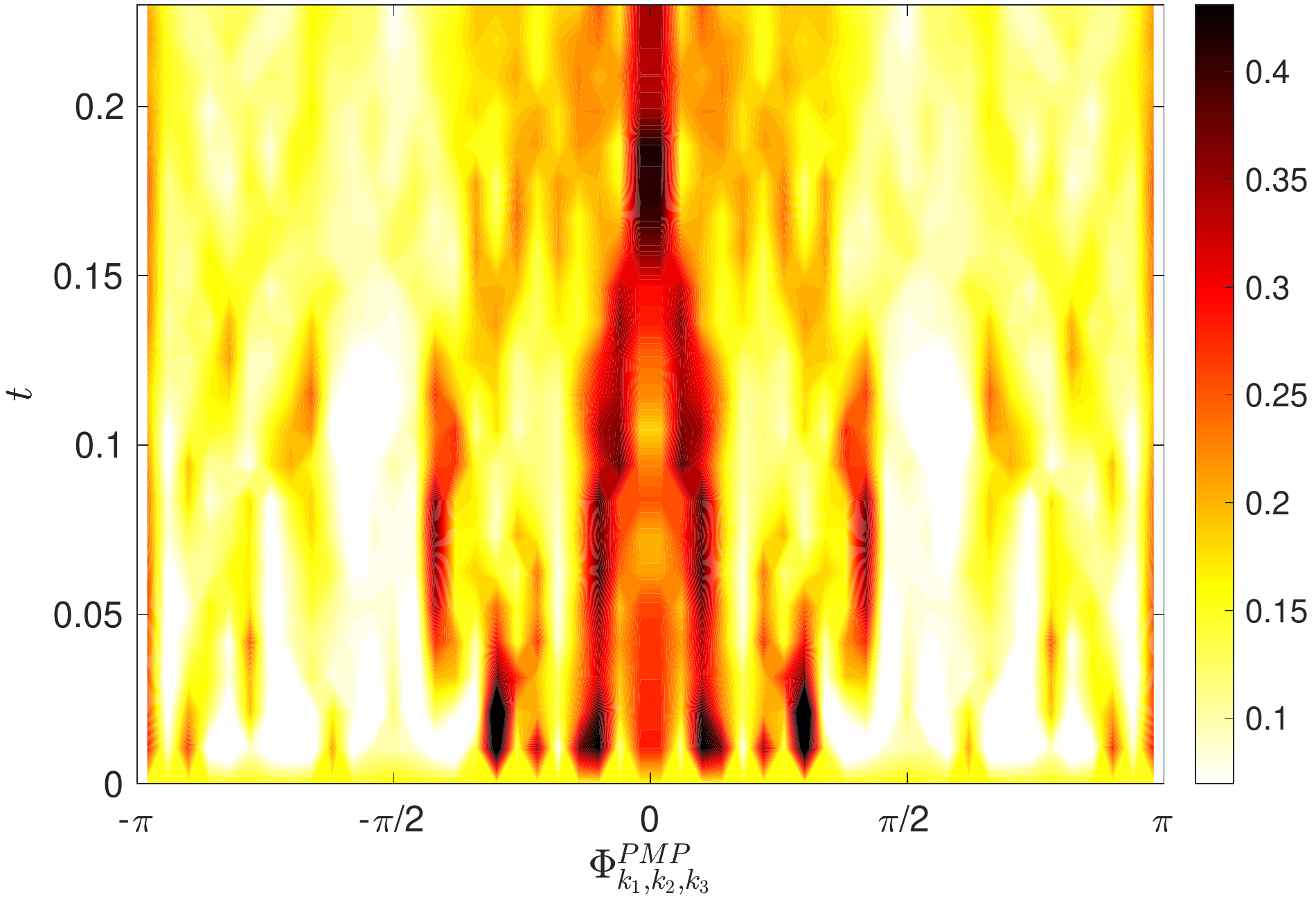}}\qquad
\subfigure[${\mathcal{W}}_{\mathcal{C}_{10}}^{PMM}(\Phi)(t)$]{\includegraphics[width=0.45\textwidth]{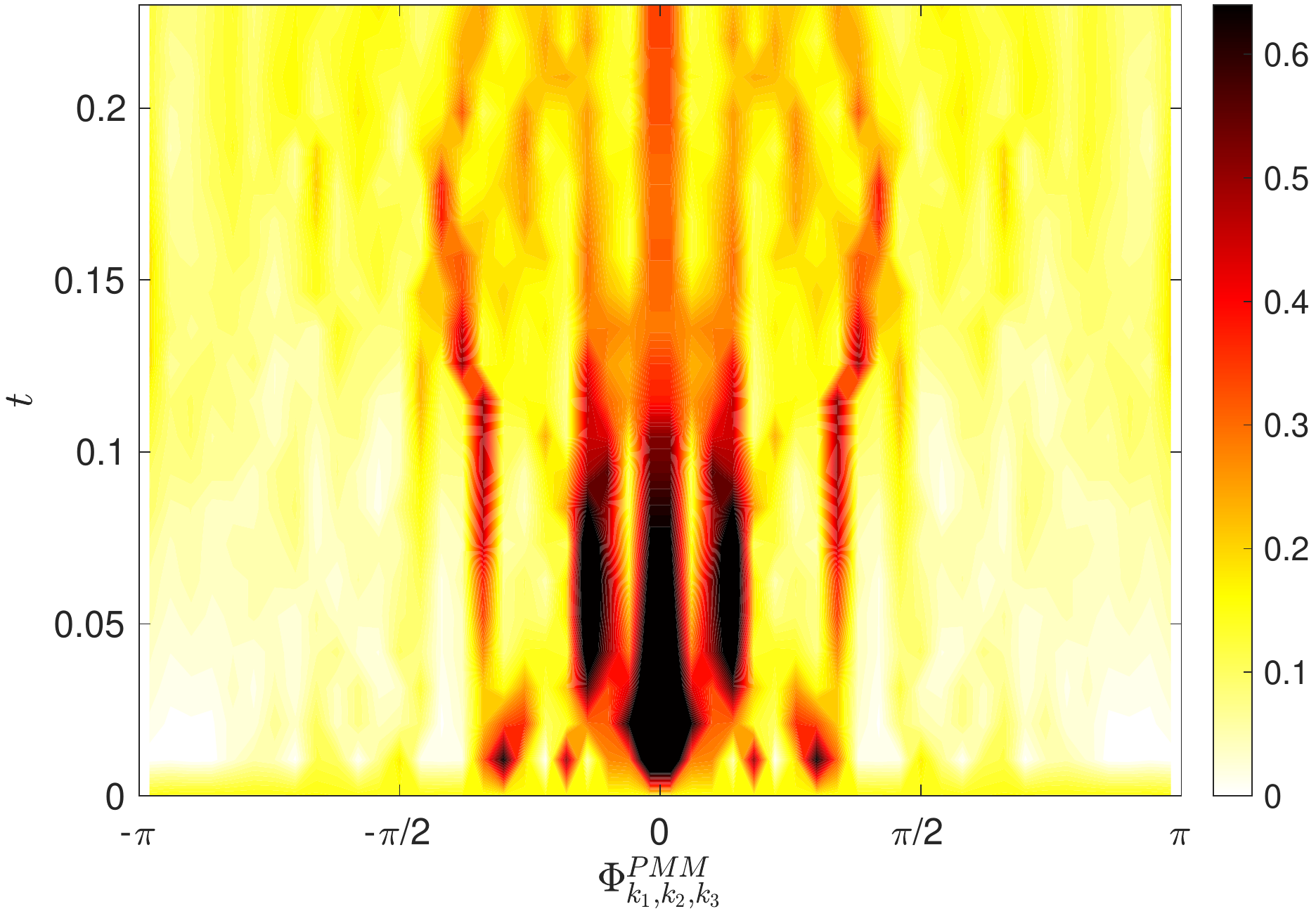}}}
\mbox{\subfigure[${\mathcal{W}}_{\mathcal{C}_{10}}^{P(PM)}(\Phi)(t)$]{\includegraphics[width=0.45\textwidth]{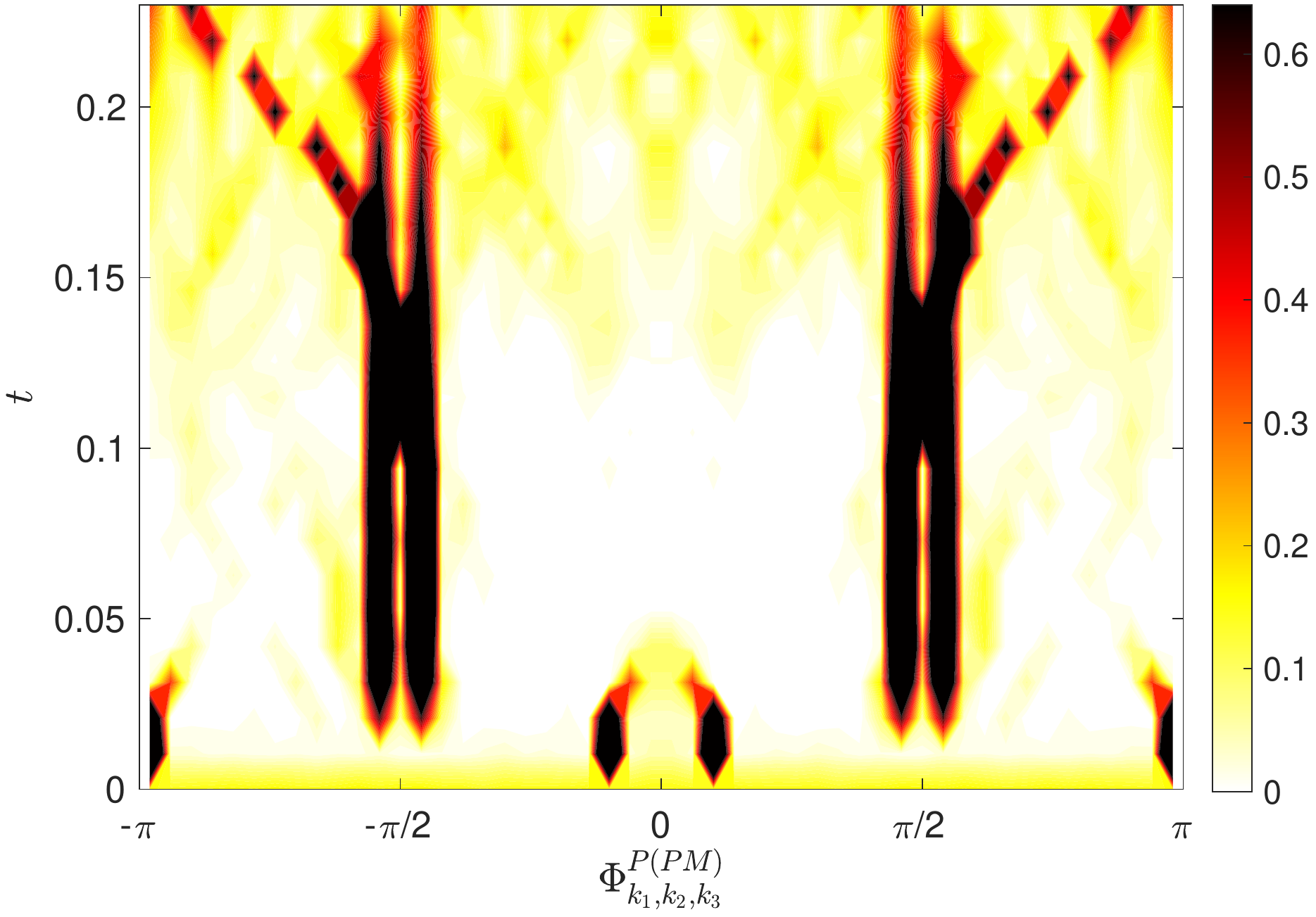}}\qquad
\subfigure[${\mathcal{W}}_{\mathcal{C}_{10}}^{(PM)P}(\Phi)(t)$]{\includegraphics[width=0.45\textwidth]{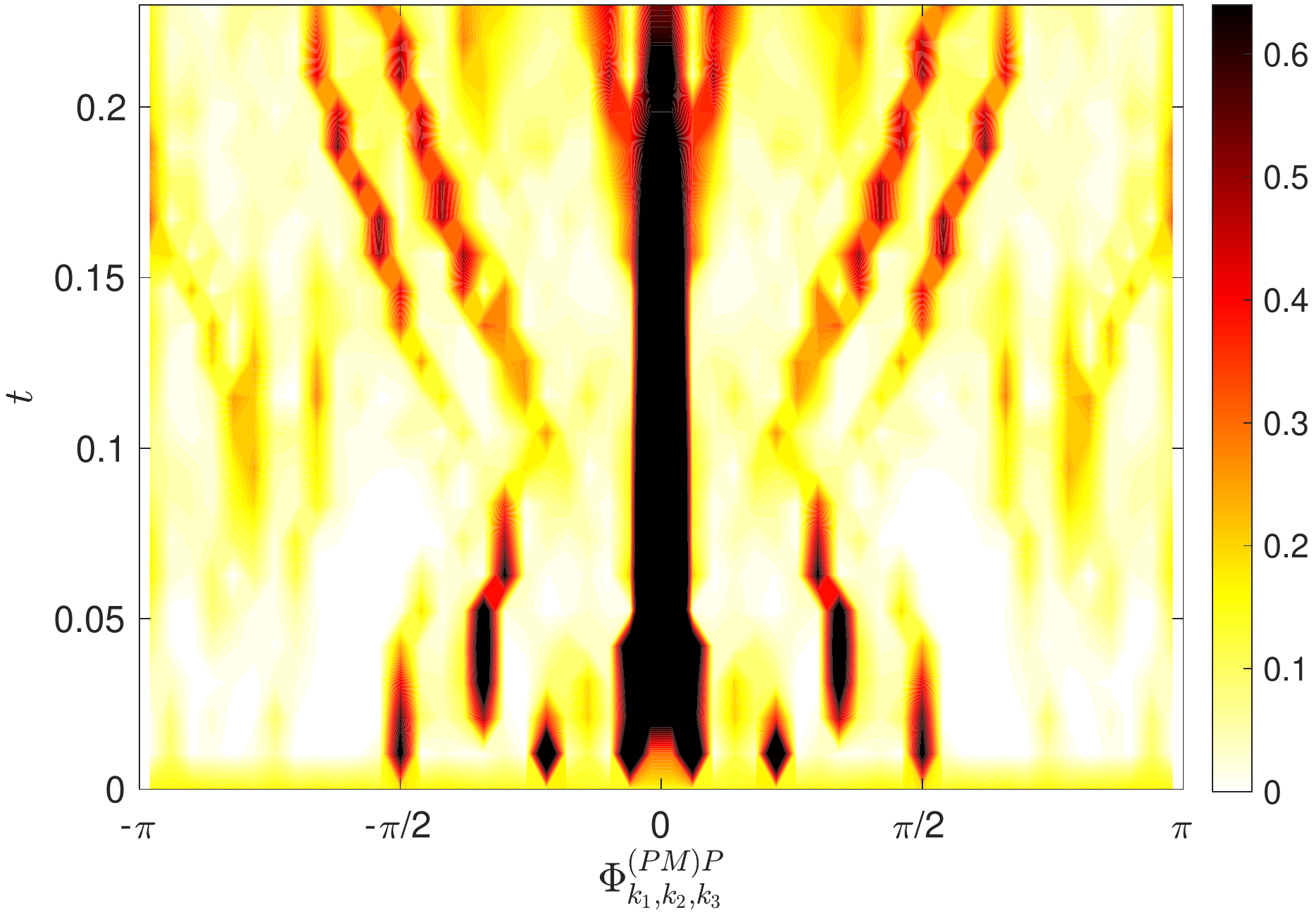}}}

\caption{{[Taylor-Green case, $k>10$:]} Time evolution of the
  weighted PDF of the triad phase angle
  ${\mathcal{W}}_{\mathcal{C}_{10}}^{s_1 s_2 s_3}(\Phi)$,
  cf.~expression \eqref{eq:WC}, in the Navier-Stokes flow with the
  Taylor-Green initial condition for the triad types (a) ``PPP'', (b)
  ``PPM'', (c) ``PMP'' , (d) ``PMM'', (e) ``P(PM)'' and (f) ``(PM)P''.
  In these plots the uniform distribution corresponding to $1/2\pi
  \approx 0.16$ is depicted by yellow ($25\%$ of the colour bars).}
\label{fig:NSWpdfkb10_TG}
\end{center}
\end{figure}

\FloatBarrier
\section{Discussion and Conclusions}
\label{sec:final}

The goal of our study is to shed light on the nature of nonlinear
modal interactions realizing the most extreme (in terms of
amplification of the enstrophy) scenarios achievable in 1D Burgers and
3D Navier-Stokes flows. To provide context, we compare these
interactions against the modal interactions occurring under generic
and unimodal initial conditions in such flows. To address these
questions we introduce analysis tools specifically designed to probe
elementary triadic interactions responsible for the transfer of energy
across scales and deploy these diagnostic tools to investigate Burgers
and Navier-Stokes flows with different initial conditions given in
table \ref{tab:cases}. The extreme initial conditions are constructed
to maximize the finite-time growth of enstrophy, which serves as a
measure of the regularity of the solution, by solving PDE-constrained
optimization problems of the type \eqref{eq:maxE}
\citep{ap11a,KangYumProtas2020}. A remarkable feature of the 1D
viscous Burgers and 3D Navier-Stokes flows obtained form such extreme
initial conditions is that they are characterized by the same
power-law dependence of the maximum attained enstrophy on the initial
enstrophy $\E_0$ given by relation \eqref{eq:maxT_vs_E0}. The generic
initial conditions are obtained from the extreme initial data by
removing the spatial coherence while retaining its energy spectrum.
The unimodal initial data has a small number of Fourier components and
is taken in the form of the Taylor-Green vortex \citep{tg37} in the
case of the 3D Navier-Stokes flow.

The first main finding is that while under the extreme initial data 1D
viscous Burgers flows and 3D Navier-Stokes flows reveal the same
relative level of enstrophy amplification by nonlinear effects, this
behaviour is realized by entirely different forms of modal
interactions. More precisely, 1D Burgers flows, both inviscid and
viscous, reveal highly coherent behaviour with triad phases displaying
preferential alignment at two values, namely, $\pm \pi/2$, cf.~figures
\ref{fig:B0pdf}(b) and \ref{fig:Bpdf}(b). However, the {\em
  flux-carrying} triads have phase values predominantly aligned 
at $\pi/2$, cf.~figures \ref{fig:B0Wpdf}(b) and \ref{fig:BWpdf}(b),
which saturates the nonlinearity and therefore maximizes the energy
flux towards small scales. On the other hand, in 3D Navier-Stokes
flows with the extreme initial data the level of coherence is
significantly lower: the various classes of helical triads
{exhibit phase distributions} much closer to uniform, cf.~table
\ref{tab:PDFbounds_extreme} and figure \ref{fig:NSpdfkb2}.  Again, as
in the 1D Burgers case, the {\em flux-carrying} triads in 3D
Navier-Stokes flows do show a high level of coherence, but their phase
values take a broader range of preferred values forming complicated
time-dependent patterns, cf.~figures \ref{fig:NSWpdfkb2} and
\ref{fig:NSWpdfkb10}. Thus, in 3D Navier-Stokes flows the energy flux
to small scales is realized by a small only subset of helical triads.
Interestingly, the solutions of both the 1D viscous Burgers system and
the 3D Navier-Stokes system with the extreme initial data exhibit
negative flux corresponding to the inverse energy cascade at early
times (figures \ref{fig:BPin}(b) and \ref{fig:NSflux}(b)). We refer to
this as the ``slingshot'' effect which allows the flow to reorganize
in order to reduce enstrophy production and conserve energy at early
stages before a burst of flux is released towards small scales in
anticipation of the instant of time when the enstrophy is to be
maximized. The triadic interactions realizing this early-time negative
flux have again quite different properties in the two cases, with the
1D Burgers flow exhibiting a significantly higher level of coherence.

The second main finding concerns the role of initial coherence.
Comparison of the results for the 1D viscous Burgers flows and 3D
Navier-Stokes flows corresponding to the extreme versus the generic
initial conditions {(i.e., the latter having the initial phases
  uniformly randomized)}, shows striking similarities between these
two types of flows. The maximum attained enstrophy is reduced only
slightly in the flows with the generic initial condition, cf.~figures
\ref{fig:BsEt}(b) and \ref{fig:NSEt}(b), which also reveal the
``slingshot'' effect, albeit weaker than the case with the extreme
initial data. Contributions to the total flux from helical triads of
different types are also remarkably similar in 3D Navier-Stokes flows
with the extreme and generic initial conditions, cf.~figures
\ref{fig:NSFlux_cases}(a--f), as are the corresponding time evolutions
of the helical triad phases.

The third main finding, focusing now on the 3D Navier-Stokes flows,
concerns the comparison between the flows {with the extreme or generic
  initial condition on the one hand and the flow with the Taylor-Green
  initial condition on the other.} As expected, when compared with the
extreme or generic cases, the Taylor-Green flow shows a smaller level
of flux {in the inertial range (i.e., flux towards $k>2$)}, and an
even smaller proportion of flux {in the dissipative range} (i.e., flux
towards $k>10$). These differences are consistent with the fact that
the flow with the extreme initial condition is designed to maximize
the flux towards small scales: energy does not spend too long
transitioning from the inertial range to the dissipative range {so as
  not to be} dissipated too early. In contrast, a salient feature of
the Taylor-Green flow, observed at all spatial scales, is that the
flux-carrying helical triads display a less dynamical, more rigid
pattern of phase coherence, with very persistent narrow {bands} of
preferred phase values, cf.~figures \ref{fig:NSWpdfkb2_TG} and
\ref{fig:NSWpdfkb10_TG}. Moreover, in the inertial range, represented
by the flux towards wavenumbers $k>2$, the Taylor-Green flow shows
quite a different distribution of the main helical triad contributors
to the direct cascade (i.e., the flux towards small scales), with the
boundary triad type (PM)P being the most important contributor by far,
cf.~figures \ref{fig:NSFlux_cases}(a,c,e). With reference to the same
figures, in the Taylor-Green case the main contributor to the inverse
cascade (i.e., the flux towards large scales) is the PMM helical triad
type, and not the PPP type, which is the main contributor in the
extreme and generic cases.  In contrast, in the dissipative range,
represented by the flux towards wavenumbers $k>10$, the behaviour of
the Taylor-Green flow is qualitatively similar to the extreme and
generic cases, and even quantitatively similar in terms of the
relative contribution from the different helical triad types to the
direct cascade, cf.~figures \ref{fig:NSFlux_cases}(b,d,f).

We provide a quick comparison of the results for the {flows} with
the extreme and generic initial conditions against the paradigm
suggested by Waleffe's instability assumption and the classification
of helical triads \citep{Waleffe1992}. First, we focus on the inertial
range, where most of the flux occurs. In particular, regarding the
flux towards {the region with} $k>2$, cf.~figures \ref{fig:NSFlux_cases}(a,c,e), its main contributor is the
set of Class III helical triads, or PMP in our notation; this is quite
in line with Waleffe's paradigm, according to which the strength of
the transfers towards small scales is the largest in that class (see
the thick arrows in figure \ref{fig:Waleffe}). The second main
contributor is the new set of helical triads we introduced, the (PM)P
triads, corresponding to the intersection
$|\mathbf{k}_1|=|\mathbf{k}_2|$ between Class II and Class III helical
triads. Again, {referring to} figure \ref{fig:Waleffe}, this is
quite in line with Waleffe's paradigm.

The contribution by the PPP helical triads (Class I in Waleffe's
notation) to the flux is, fortunately, a feature that can be compared
against all quantitative works on the subject (the other triad types
are more difficult to compare due to differences in {their
  definitions used by different} research groups). It has been
recently found in studies of forced statistically stationary
turbulence \citep{sahoo2018energy, alexakis2017helically,
  alexakis2018cascades} that the PPP helical triads contribute with an
inverse-cascade flux at wavenumbers in the inertial range and with a
direct-cascade flux in the dissipative range. Our results, being based
on freely decaying turbulence from extreme or generic initial
conditions, confirm this, even quantitatively, if one assumes (see,
for example, \cite{frisch1995turbulence}) that the state of the
maximum dissipation in a transient evolution is a proxy for a
statistically stationary turbulence in a forced evolution, at least
considering energy spectra and energy fluxes. Regarding the flux
towards {the region with} $k>2$, which sits in the inertial
range, we find that the PPP contribution is negative and of the order
of 10\% of the total flux in absolute terms near the time of the
maximum dissipation, {which is} in quantitative agreement with
the works cited above. Regarding the flux towards {the region
  with} $k>10$, which sits in the dissipative range, we find that the
PPP contribution is positive and of the order of 30\% of the total
flux in absolute terms near the time of the maximum dissipation, again
in quantitative agreement with the works cited above. It is worth
mentioning that the case of the flow with the Taylor-Green initial
condition leads to qualitatively different results in the
inertial-range fluxes (namely, the fluxes towards {the region
  with} $k>2$): the PPP contribution to the inverse cascade is there
but is small and the main contribution comes from the PMM helical
triad type (Class II in Waleffe's notation), cf.~figures
\ref{fig:NSFlux_cases}(a,c,e).

It is well known that appearance of small scales and singularity
formation in solutions of time-dependent PDE problems can be
conveniently characterized in terms of the evolution of singularities
in the complex extensions of these fields
\citep{Weideman2003,mbf08,sc09}. In particular, blow-up of the
solution is signaled by the collapse of some of these complex-plane
singularities onto the real axis (i.e., when the width of the
analyticity strip shrinks to zero). As regards the diagnostics
employed in the present study, it is an interesting open question how
properties of triad phases in a flow can be deduced from the evolution
of complex-plane singularities characterizing these solutions, and
vice versa.

\section*{Acknowledgments}

DK, BP and MDB acknowledge financial support from University College
Dublin via Seed Funding project SF1568, and the hospitality of this
institution where a part of this research was carried out. DK and BP
were also partially funded through an NSERC (Canada) Discovery Grant.
Computational resources were provided by Compute Canada under its
Resource Allocation Competition.

\bigskip
\noindent
{\bf Declaration of Interests.} The authors report no conflict of interest.

\FloatBarrier 
\appendix

\section{Minimizing the Blow-up Time in the 1D Inviscid Burgers Equation}
\label{sec:Bu0ext0}

In this appendix we demonstrate that the problem of finding an initial
condition $u_0 \in H^1(\Omega)$ for the inviscid Burgers system
\eqref{eq:Burgers0} with a fixed enstrophy $\E_0> 0$ such that the
corresponding solution blows up in the shortest time $t^*(u_0)$ does
not in fact admit a solution. In other words, it is possible to find
initial data $u_0$ with $\E(u_0) = \E_0$ such that the corresponding
solution of \eqref{eq:Burgers0} blows up arbitrarily quickly. We thus
have
\begin{theorem}
The minimization problem 
\begin{equation}
\min_{u_0 \in H^1(\Omega)} t^*(u_0) \quad \text{subject to} \quad \E(u_0) = \E_0
\label{eq:mint}
\end{equation}
has no solution.
\label{thm:mint}
\end{theorem}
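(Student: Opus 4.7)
The key observation is that the blow-up time for the inviscid Burgers equation with Lipschitz initial data is governed by a single pointwise value through the classical characteristics formula $t^*(u_0) = -1/\min_x u_0'(x)$, whereas the enstrophy $\E(u_0) = \tfrac12 \int_\Omega (u_0')^2\,dx$ is an integral quantity. These two quantities are genuinely decoupled: one can concentrate a very large negative slope on an interval of very small measure, keeping its $L^2$ contribution bounded while driving $\min u_0'$ to $-\infty$. Exploiting this, the plan is to construct an explicit family $\{u_0^{(\epsilon)}\}_{\epsilon \in (0,1)} \subset H^1(\Omega)$ of Lipschitz, $\Omega$-periodic initial data, all satisfying the enstrophy constraint $\E(u_0^{(\epsilon)}) = \E_0$, such that $t^*(u_0^{(\epsilon)}) \to 0$ as $\epsilon \to 0^+$. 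This forces $\inf t^* = 0$ in \eqref{eq:mint}, and since $t^*(u_0) > 0$ for every admissible $u_0$, the infimum cannot be attained.

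For $\epsilon \in (0,1)$ I take $u_0^{(\epsilon)}$ to be the unique continuous, $\Omega$-periodic, piecewise-linear function that decreases with slope $-M_\epsilon$ on $(0,\epsilon)$ and increases with slope $s_\epsilon$ on $(\epsilon,1)$; continuity and periodicity force $s_\epsilon = M_\epsilon\,\epsilon/(1-\epsilon)$. A direct integration yields
\[
\E(u_0^{(\epsilon)}) = \tfrac12\bigl[M_\epsilon^2\,\epsilon + s_\epsilon^2(1-\epsilon)\bigr] = \frac{M_\epsilon^2\,\epsilon}{2(1-\epsilon)},
\]
so the constraint $\E(u_0^{(\epsilon)}) = \E_0$ uniquely determines $M_\epsilon = \sqrt{2\E_0(1-\epsilon)/\epsilon}$. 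On the steep segment the characteristics all converge simultaneously at the time $1/M_\epsilon$ to form a shock, and continuity of the velocity at $x=0$ and $x=\epsilon$ precludes any earlier interaction with characteristics from the mild segment. Hence
\[
t^*(u_0^{(\epsilon)}) = \frac{1}{M_\epsilon} = \sqrt{\frac{\epsilon}{2\E_0(1-\epsilon)}} \;\xrightarrow[\epsilon \to 0^+]{}\; 0,
\]
establishing the claim.

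The principal technical subtlety I foresee is justifying the identification $t^*(u_0^{(\epsilon)}) = 1/M_\epsilon$ for data that is only Lipschitz rather than $C^1$, since textbook derivations of the characteristic formula typically assume classical regularity. This is nonetheless routine: on each open linear segment the solution is classical via the method of characteristics, and the corner points where $(u_0^{(\epsilon)})'$ jumps correspond to rarefaction waves (the velocity $u_0^{(\epsilon)}$ itself being continuous there), which cannot trigger an earlier gradient blow-up. If preferred, one may alternatively mollify each $u_0^{(\epsilon)}$ by a kernel of width $o(\epsilon)$ to obtain $C^\infty$ data with the same asymptotic enstrophy and blow-up time, reducing the argument entirely to the classical smooth setting.
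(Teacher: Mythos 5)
Your proposal is correct and follows essentially the same route as the paper: the paper also uses the explicit formula $t^*(u_0) = -1/\inf_x \partial u_0/\partial x$ to recast the problem as maximizing $\|g\|_{L^\infty}$ over $g = \partial u_0/\partial x$ with $\tfrac12\|g\|_{L^2}^2 = \E_0$ fixed, and then notes that a sequence with fixed $L^2$ norm, shrinking support and unbounded $L^\infty$ norm shows the extremum is not attained. Your piecewise-linear family is simply an explicit instance of that concentration argument (with the compensating slope handling periodicity), so the two proofs coincide in substance.
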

\begin{proof}
  Since for the inviscid Burgers equation \eqref{eq:Burgers0} we have
  an explicit expression for the blow-up time $t^* = -1 / \left[ \inf_{x \in
    \Omega} \Dpartial{u_0}{x}(x) \right]$, problem \eqref{eq:mint} is
  equivalent to
\begin{equation}
\max_{g \in L^2(\Omega)} \| g \|_{L^{\infty}} \quad \text{subject to} \quad \frac{1}{2} \|g\|_{L^2}^2 = \E_0,
\label{eq:maxg}
\end{equation}
where $g := \Dpartial{u_0}{x}$. It is easy to construct a sequence of
functions $g_n$ with a fixed $L^2$ norm and vanishing support $\lim_{n
  \rightarrow \infty} |\supp_{x \in \Omega} g_n(x)| = 0$ such that
their $L^{\infty}$ norms increase without bound, $\lim_{n \rightarrow
  \infty} \| g_n \|_{L^{\infty}} = \infty$, thus demonstrating that a
maximum is not attained in problem \eqref{eq:maxg}.
\end{proof}

\section{Proof of the Symmetry of Conjugate Triads under Zero Helicity
  Assumption}
\label{sec:conjugate_triads}

In this appendix we state and prove the following theorem implying the
symmetry of conjugate triads.

\begin{theorem}
  If the incompressible velocity field $\u$ satisfies the condition of
  oddity under the  parity transformation, namely,
 \begin{equation}
\label{eq:oddity}
\mathbf{u}(-\mathbf{x},t) = -\mathbf{u}(\mathbf{x},t) \qquad \text{for all} \qquad \mathbf{x} \in \mathbb{R}^3, \quad t \in \mathbb{R},
 \end{equation}
then:
\begin{enumerate}
\item the helical modes $u_\mathbf{k}^+, u_\mathbf{k}^-$ satisfy the identity
$$u_{-\mathbf{k}}^s = - u_{\mathbf{k}}^{-s},\qquad \text{for all} \qquad \mathbf{k} \in \mathbb{Z}^3 \setminus \{\mathbf{0}\}, \quad s = \pm,$$
\item the helicity spectrum is zero: $\H_{\k} = 0, \quad \text{for all} \quad \k \in \mathbb{Z}^3 \setminus \{\mathbf{0}\}$, cf.~\eqref{eq:heli_H_k}, and
\item the generalized helical triad phases \eqref{eq:GHTP} exhibit the following invariance with respect to the transformations $s_j \rightarrow -s_j$, $j=1,2,3$:
\begin{equation}
\Phi_{{\mathbf{k}_1} {\mathbf{k}_2} {\mathbf{k}_3}}^{-s_{1} \, -s_{2}  \, -s_{3}} = 
-\Phi_{{\mathbf{k}_1} {\mathbf{k}_2} {\mathbf{k}_3}}^{s_{1} \,\, s_{2} \,\, s_{3}}, \qquad \text{whenever} \qquad  \k_1 + \k_2 + \k_3 = \0.
\label{eq:GHTP0}
\end{equation}
\end{enumerate}
\end{theorem}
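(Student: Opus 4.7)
The proof will naturally split along the three claims, with each part feeding into the next. For Part 1, I plan to translate the oddity condition \eqref{eq:oddity} into Fourier space, which gives $\widehat{\mathbf{u}}_{-\mathbf{k}} = -\widehat{\mathbf{u}}_{\mathbf{k}}$. Substituting the helical decomposition \eqref{eq:heli_decomp} on both sides, then applying the identity $\mathbf{h}_{-\mathbf{k}}^s = \mathbf{h}_{\mathbf{k}}^{-s}$ from \eqref{eq:hks_sym} and relabeling the summation index $s \to -s$ on the left-hand side, I obtain $\sum_s \mathbf{h}_{\mathbf{k}}^{s} u_{-\mathbf{k}}^{-s} = -\sum_s \mathbf{h}_{\mathbf{k}}^s u_{\mathbf{k}}^s$. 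Linear independence of $\{\mathbf{h}_{\mathbf{k}}^+,\mathbf{h}_{\mathbf{k}}^-\}$ (guaranteed by the second relation in \eqref{eq:hks_sym}) then yields $u_{-\mathbf{k}}^s = -u_{\mathbf{k}}^{-s}$.

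For Part 2, I will combine the identity from Part 1 with the reality condition $u_{-\mathbf{k}}^s = [u_{\mathbf{k}}^s]^*$ to get $[u_{\mathbf{k}}^s]^* = -u_{\mathbf{k}}^{-s}$. Taking modulus squared immediately gives $|u_{\mathbf{k}}^+|^2 = |u_{\mathbf{k}}^-|^2$, so $\mathcal{H}_{\mathbf{k}} = k(|u_{\mathbf{k}}^+|^2 - |u_{\mathbf{k}}^-|^2) = 0$ by \eqref{eq:heli_H_k}. As a byproduct, passing to amplitude-phase representation in $[u_{\mathbf{k}}^s]^* = -u_{\mathbf{k}}^{-s}$ and matching modulus and argument separately yields $a_{\mathbf{k}}^s = a_{\mathbf{k}}^{-s}$ and $\phi_{\mathbf{k}}^{-s} = -\phi_{\mathbf{k}}^s - \pi \pmod{2\pi}$, a relation that feeds into Part 3.

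Part 3 is the main technical step, and it reduces to careful bookkeeping of $\pi$ shifts. Using the phase identity just derived, the bare triad phase transforms as
\begin{equation*}
\varphi_{\mathbf{k}_1\mathbf{k}_2\mathbf{k}_3}^{-s_1\,-s_2\,-s_3} \;=\; -\varphi_{\mathbf{k}_1\mathbf{k}_2\mathbf{k}_3}^{s_1\,s_2\,s_3} - 3\pi \;\equiv\; -\varphi_{\mathbf{k}_1\mathbf{k}_2\mathbf{k}_3}^{s_1\,s_2\,s_3} + \pi \pmod{2\pi}.
\end{equation*}
For the correction $\Delta$ in \eqref{eq:delta_ENS}, I will use that the prefactor $(s_2 k_2 - s_1 k_1)$ (or $(s_2 k_2 - s_3 k_3)$) simply changes sign, while $\mathbf{h}_{\mathbf{k}_j}^{-s_j} = [\mathbf{h}_{\mathbf{k}_j}^{s_j}]^*$ means the triple product becomes its complex conjugate. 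Hence the entire complex number inside the $\arg$ is mapped to $-z^*$, and since $\arg(-z^*) \equiv -\arg(z) + \pi \pmod{2\pi}$, I obtain $\Delta^{-s_1\,-s_2\,-s_3} = -\Delta^{s_1\,s_2\,s_3} + \pi \pmod{2\pi}$. Adding the two contributions, the two $+\pi$ offsets combine to $2\pi \equiv 0$, leaving precisely $\Phi_{\mathbf{k}_1\mathbf{k}_2\mathbf{k}_3}^{-s_1\,-s_2\,-s_3} = -\Phi_{\mathbf{k}_1\mathbf{k}_2\mathbf{k}_3}^{s_1\,s_2\,s_3}$ as required.

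The main obstacle is precisely this cancellation of $\pi$-offsets in Part 3: three phase flips contribute $-3\pi$, and the sign flip of the real prefactor combined with conjugation of the triple product contributes another $+\pi$, so the claim rests on $-3\pi + \pi \equiv 0 \pmod{2\pi}$. Verifying the formula $\arg(-z^*) \equiv -\arg(z) + \pi$ (with careful attention to which representative of the argument is used, and the fact that \eqref{eq:GHTP0} is understood modulo $2\pi$) is what makes the accounting work; the rest is straightforward substitution.
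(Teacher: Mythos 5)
Your proposal is correct and follows essentially the same route as the paper: Fourier-space oddity plus uniqueness of the helical decomposition for Part 1, the reality condition for Part 2, and the same $\pi$-offset bookkeeping (three phase flips contributing $3\pi$, conjugation of the triple product and sign flip of the real prefactor contributing another $\pi$, all cancelling mod $2\pi$) for Part 3. Treating the correction $\Delta$ as $\arg(-z^*)$ of the single complex coefficient rather than splitting it into the argument of the triple product plus that of the real prefactor, as the paper does, is only a cosmetic difference.
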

\begin{proof}
Assuming condition (\ref{eq:oddity}) is satisfied, let us first establish the corresponding relation for the Fourier components of the velocity field. From equation (\ref{eq:Fourier3D}) we have
$$\mathbf{u}(\mathbf{x},t) = \sum_{\mathbf{k}\in \mathbb{Z}^3\setminus \mathbf{0}} \mathbf{\widehat{u}}_{\mathbf{k}}(t) \exp(2\pi i \mathbf{k}\cdot\mathbf{x})\, ,
$$
so
$$\mathbf{u}(-\mathbf{x},t) = \sum_{\mathbf{k}\in \mathbb{Z}^3\setminus \mathbf{0}} \mathbf{\widehat{u}}_{\mathbf{k}}(t) \exp(-2\pi i \mathbf{k}\cdot\mathbf{x}) = \sum_{\mathbf{k}\in \mathbb{Z}^3\setminus \mathbf{0}} \mathbf{\widehat{u}}_{-\mathbf{k}}(t) \exp(2\pi i \mathbf{k}\cdot\mathbf{x})\, ,
$$
where in the last equality we relabelled the wavevectors from $\mathbf{k}$ to $-\mathbf{k}$. Equating this expression with the Fourier expansion of $-\mathbf{u}(\mathbf{x},t)$ we obtain, from uniqueness of the expansion,
\begin{equation}
\label{eq:oddityFourier}
\mathbf{\widehat{u}}_{-\mathbf{k}}(t) = - \mathbf{\widehat{u}}_{\mathbf{k}}(t), \qquad \text{for \,\, all} \qquad \mathbf{k} \in \mathbb{Z}^3 \setminus \{\mathbf{0}\}\,.
\end{equation}
Now we prove each of the claims in the theorem:

\begin{enumerate}
\item 
\label{part:1}
The helical mode decomposition (\ref{eq:heli_decomp}) gives 
$$\mathbf{\widehat{u}}_{\mathbf{k}}(t) = \mathbf{h}_{\mathbf{k}}^{+} u_{\mathbf{k}}^{+}(t)+ \mathbf{h}_{\mathbf{k}}^- u_{\mathbf{k}}^-(t)\,$$
and noting that $\mathbf{h}_{-\mathbf{k}}^{s} = \mathbf{h}_{\mathbf{k}}^{-s}$ we obtain
$$\mathbf{\widehat{u}}_{-\mathbf{k}}(t)  = \mathbf{h}_{-\mathbf{k}}^{+} u_{-\mathbf{k}}^{+}(t)+ \mathbf{h}_{-\mathbf{k}}^- u_{-\mathbf{k}}^-(t) = \mathbf{h}_{\mathbf{k}}^{-} u_{-\mathbf{k}}^{+}(t)+ \mathbf{h}_{\mathbf{k}}^+ u_{-\mathbf{k}}^-(t).$$
Invoking uniqueness of the helical decomposition, relation \eqref{eq:oddityFourier} is equivalent to
$$u_{-\mathbf{k}}^{-}(t) = -u_{\mathbf{k}}^{+}(t) \,, \qquad \mathbf{k}\in \mathbb{Z}^3\setminus  \{\mathbf{0}\}\,.$$
This of course also implies that  $u_{-\mathbf{k}}^{+}(t) = -u_{\mathbf{k}}^{-}(t)$.

\item The helicity spectrum is defined as
$$\mathcal{H}_{\mathbf{k}}(t) = |u_{\mathbf{k}}^{+}(t)|^2 - |u_{\mathbf{k}}^{-}(t)|^2\,.$$
Since the original velocity field is real-valued, we obtain
$u_{-\mathbf{k}}^{s}(t) = [u_{\mathbf{k}}^{s}(t)]^*$, such that
$$|u_{\mathbf{k}}^{-}(t)| = |u_{\mathbf{-k}}^{-}(t)| = |{-u_{\mathbf{k}}^{+}(t)}| = |u_{\mathbf{k}}^{+}(t)|,$$
implies $\mathcal{H}_{\mathbf{k}}(t) = 0\,, \quad \mathbf{k}\in \mathbb{Z}^3\setminus  \{\mathbf{0}\}$.

\item The generalized helical triad phases are obtained, mod $2\pi$, from combining equations \eqref{eq:GHTP} and \eqref{eq:delta_ENS}:
$$\Phi_{{\mathbf{k}_1} {\mathbf{k}_2} {\mathbf{k}_3}}^{s_{1} \,\, s_{2} \,\, s_{3}} =  \phi_{\mathbf{k}_1}^{s_{1}} + \phi_{\mathbf{k}_2}^{s_{2}} + \phi_{\mathbf{k}_3}^{s_{3}} + \arg[\mathbf{h}_{\mathbf{k}_1}^{s_{1}}\times \mathbf{h}_{\mathbf{k}_2}^{s_{2}} \cdot \mathbf{h}_{\mathbf{k}_3}^{s_{3}}] + \begin{cases}
\arg(s_{2} k_2 - s_{1} k_1), & \text{if} \quad \mathbf{k}_3 \in \mathcal{C}; \mathbf{k}_1, \mathbf{k}_2 \in \mathcal{U}\setminus \mathcal{C}  \\
 \arg(s_{2} k_2 - s_{3} k_3), & \text{if} \quad \mathbf{k}_3, \mathbf{k}_2 \in \mathcal{C}; \mathbf{k}_1 \in \mathcal{U}\setminus \mathcal{C}
 \end{cases}\, ,
$$
where $\phi_{\mathbf{k}}^{s} = \arg u_{\mathbf{k}}^{s}(t)$. We now analyze the behaviour of this expression under the transformation $s_j \to -s_j, \,\, j=1,2,3$. First, using part \ref{part:1} of the theorem, we get:
$$\phi_{\mathbf{k}}^{-s} = \arg u_{\mathbf{k}}^{-s}(t) = \arg (-u_{-\mathbf{k}}^{s}(t))  = \pi +  \phi_{-\mathbf{k}}^{s}=  \pi -  \phi_{\mathbf{k}}^{s}\,,$$
where the last equality follows from the reality condition. Second, from the definition of the helical basis vectors we have
$$\mathbf{h}_{\mathbf{k}}^{-s} =[\mathbf{h}_{\mathbf{k}}^{s}]^*\,.$$
Third, we have, mod $2\pi$, $\arg(-s_{2} k_2 - (-s_{1}) k_1) = \pi - \arg(s_{2} k_2 - s_{1} k_1)$,  and so on. Therefore, we obtain
$$\hspace{-15mm}\Phi_{{\mathbf{k}_1} {\mathbf{k}_2} {\mathbf{k}_3}}^{-s_{1} \,\, -s_{2} \,\, -s_{3}} =  \phi_{\mathbf{k}_1}^{-s_{1}} + \phi_{\mathbf{k}_2}^{-s_{2}} + \phi_{\mathbf{k}_3}^{-s_{3}} + \arg[\mathbf{h}_{\mathbf{k}_1}^{-s_{1}}\times \mathbf{h}_{\mathbf{k}_2}^{-s_{2}} \cdot \mathbf{h}_{\mathbf{k}_3}^{-s_{3}}] + \begin{cases}
\arg(-s_{2} k_2 + s_{1} k_1), & \text{if} \quad \mathbf{k}_3 \in \mathcal{C}; \mathbf{k}_1, \mathbf{k}_2 \in \mathcal{U}\setminus \mathcal{C}\,,  \\
 \arg(-s_{2} k_2 + s_{3} k_3), & \text{if} \quad \mathbf{k}_3, \mathbf{k}_2 \in \mathcal{C}; \mathbf{k}_1 \in \mathcal{U}\setminus \mathcal{C} \,,
 \end{cases}
$$
$$= 3\pi -\left(\phi_{\mathbf{k}_1}^{s_{1}} + \phi_{\mathbf{k}_2}^{s_{2}} + \phi_{\mathbf{k}_3}^{s_{3}}\right) + \arg[\mathbf{h}_{\mathbf{k}_1}^{s_{1}}\times \mathbf{h}_{\mathbf{k}_2}^{s_{2}} \cdot \mathbf{h}_{\mathbf{k}_3}^{s_{3}}]^* + \pi - \begin{cases}
\arg(s_{2} k_2 - s_{1} k_1), & \text{if} \quad \mathbf{k}_3 \in \mathcal{C}; \mathbf{k}_1, \mathbf{k}_2 \in \mathcal{U}\setminus \mathcal{C}\,,  \\
 \arg(s_{2} k_2 - s_{3} k_3), & \text{if} \quad \mathbf{k}_3, \mathbf{k}_2 \in \mathcal{C}; \mathbf{k}_1 \in \mathcal{U}\setminus \mathcal{C} \,,
 \end{cases}
$$
$$= 4\pi -\left(\phi_{\mathbf{k}_1}^{s_{1}} + \phi_{\mathbf{k}_2}^{s_{2}} + \phi_{\mathbf{k}_3}^{s_{3}}\right) - \arg[\mathbf{h}_{\mathbf{k}_1}^{s_{1}}\times \mathbf{h}_{\mathbf{k}_2}^{s_{2}} \cdot \mathbf{h}_{\mathbf{k}_3}^{s_{3}}] - \begin{cases}
\arg(s_{2} k_2 - s_{1} k_1), & \text{if} \quad \mathbf{k}_3 \in \mathcal{C}; \mathbf{k}_1, \mathbf{k}_2 \in \mathcal{U}\setminus \mathcal{C}\,,  \\
 \arg(s_{2} k_2 - s_{3} k_3), & \text{if} \quad \mathbf{k}_3, \mathbf{k}_2 \in \mathcal{C}; \mathbf{k}_1 \in \mathcal{U}\setminus \mathcal{C} \,,
 \end{cases}
$$
so that, mod $2\pi$, we finally arrive at
$$\Phi_{{\mathbf{k}_1} {\mathbf{k}_2} {\mathbf{k}_3}}^{-s_{1} \,\, -s_{2} \,\, -s_{3}}= - \Phi_{{\mathbf{k}_1} {\mathbf{k}_2} {\mathbf{k}_3}}^{s_{1} \,\, s_{2} \,\, s_{3}}\,.$$

\end{enumerate}

\end{proof}

\section{Fourier-Lagrange Formula for the Inviscid Burgers Equation}
\label{sec:FL}

In order to determine the solutions of the inviscid Burgers system
\eqref{eq:Burgers0} with initial conditions \eqref{eq:Bu0sin} and
\eqref{eq:Bu0ext0}, we will use an adaptation of the Fourier-Lagrange
formula originally proposed by \citet{FournierFrisch1983} which
provides an exact expression for the Fourier coefficients
$\widehat{u}_{k}(t)$ of the solution for $0 \le t < t^*$. One
difference with respect to the approach in \citet{FournierFrisch1983}
is that here we work on a periodic, rather than unbounded, spatial
domain. As a starting point, we write the solution of system
\eqref{eq:Burgers0} in terms of the flowmap $\xi = \xi(a,t)$, where $a
\in [0,1]$ is the Lagrangian coordinate, as \cite{kl04}
\begin{equation}
u(\xi(a,t),t) = u_0(a), \quad \xi(a,t) = a + t u_0(a), \quad a \in [0,1], \quad 0 \le t < t^*.
\label{eq:Blag}
\end{equation}
Then, the Fourier coefficients of the solution can be expressed as,
cf.~\eqref{eq:Fourier1D},
\begin{align*}
\widehat{u}_k(t) &= \int_0^1 e^{-2 \pi i k x} u(x,t) \, dx = \int_0^1 e^{-2 \pi i k \xi(a,t)} u(\xi(a,t),t) \, d\xi \\
& = \int_0^1 \frac{d}{d\xi} \left( \frac{-1}{2 \pi i k} e^{-2 \pi i k \xi(a,t)} \right) u(\xi(a,t),t) \, d\xi 
= \frac{1}{2 \pi i k} \int_0^1 e^{-2 \pi i k \xi(a,t)} \frac{d u}{d\xi} (\xi(a,t),t) \, d\xi \\
& = \frac{1}{2 \pi i k} \int_0^1 e^{-2 \pi i k \left[a + t u_0(a)\right]} \frac{d u_0}{da} (a) \, da,
\end{align*}
where we used \eqref{eq:Blag}, performed integration by parts with
respect to $\xi$ and employed the identity $\frac{du}{da}(\xi(a,t),t)
= \frac{du_0}{da}(a)$ which follows from \eqref{eq:Blag} together
with the chain rule. Next, defining $\eta(a,t) := \xi(a,t) - a = t
u_0(a)$, we have
\begin{align}
\widehat{u}_k(t) &= \frac{1}{2 \pi i k t} \int_0^1 \Dpartial{\eta}{a}(a,t)  e^{-2 \pi i k \left[a + \eta(a,t)\right]} \, da \nonumber \\
& = \frac{1}{2 \pi i k t} \int_0^1 \left( \frac{-1}{2 \pi i k} \right) \Dpartial{}{a} \left[
e^{-2 \pi i k  \eta(a,t) } \right] e^{-2 \pi i k a}  \, da \nonumber \\
& = \frac{-1}{2 \pi i k t} \int_0^1 e^{-2 \pi i k \left[ a +  t u_0(a) \right]} \, da, \quad k \neq 0, \quad 0 < t < t^*,
\label{eq:huk}
\end{align}
where we again performed integration by parts, this time with respect
to $a$. 

For a given initial condition $u_0$, relation \eqref{eq:huk} allows
for a straightforward evaluation of the Fourier coefficients of the
solution of \eqref{eq:Burgers0} by performing an integral. For
example, using the unimodal initial condition \eqref{eq:Bu0sin}, we
obtain
\begin{align}
\widehat{u}_k(t) &= \frac{-1}{2 \pi i k t} \int_0^1 e^{-2 \pi i k \left[ a +  t A \sin(2\pi a) \right]} \, da, \nonumber \\ 
& = \frac{-1}{2 \pi i k t} J_k(- kAt), \quad k \neq 0, \quad 0 < t < t^*,
\label{eq:huk0}
\end{align}
where we have used the integral representation of the Bessel functions
of the first kind $J_k(x) = \frac{1}{2 \pi} \int_{-\pi}^{\pi} e^{i [
  x \sin(\tau) - n \tau]}\, d\tau$ \citep{NIST:DLMF}. An expression for
the Fourier coefficients of the solution corresponding to the extreme
initial condition \eqref{eq:Bu0ext0} can be obtained in a similar
manner.

%

\end{document}